\def\withcolors{1}
\def\withnotes{1}
\newtheorem{theorem}{Theorem}
\newtheorem{proposition}[theorem]{Proposition}
\newtheorem{remark}[theorem]{Remark}
\newtheorem{fact}[theorem]{Fact}
\newtheorem{lemma}[theorem]{Lemma}
\newtheorem{corollary}[theorem]{Corollary}
\newtheorem{definition}[theorem]{Definition}
\numberwithin{theorem}{section} 
\numberwithin{assumption}{section}
\numberwithin{tournament}{section}
\numberwithin{problem}{section}
\numberwithin{nontheorem}{section} 
\numberwithin{proposition}{section} 
\numberwithin{observation}{section} 
\numberwithin{remark}{section} 
\numberwithin{fact}{section} 
\numberwithin{lemma}{section} 
\numberwithin{claim}{section} 
\numberwithin{corollary}{section} 
\numberwithin{case}{section} 
\numberwithin{dfn}{section} 
\numberwithin{definition}{section} 
\numberwithin{question}{section} 
\numberwithin{openquestion}{section} 
\numberwithin{res}{section} 
  \newcommand{\gcolor}[1]{{\color{red}#1}}
  \newcommand{\gcolor}[1]{{#1}}
  \newcommand{\gnote}[1]{\par\gcolor{\textbf{G: }\sf #1}} %
  \newcommand{\gfootnote}[1]{\footnote{{\bf \gcolor{Gautam}}: {#1}}}
  \newcommand{\todonote}[2][]{}  
	\newcommand{\questionnote}[2][]{}
	\newcommand{\todonotedone}[2][]{}
	\newcommand{\todonoteinline}[2][]{}  
  \newcommand{\marginnote}[1]{}
  \newcommand{\gnote}[1]{}
  \newcommand{\gfootnote}[1]{}
  \newcommand{\todonote}[2][]{\ignore{#2}}
	\newcommand{\questionnote}[2][]{\ignore{#2}}
	\newcommand{\todonotedone}[2][]{\ignore{#2}}
	\newcommand{\todonoteinline}[2][]{\ignore{#2}}
  \newcommand{\marginnote}[1]{\ignore{#1}}
\let\originalleft\left
\let\originalright\right
\renewcommand{\left}{\mathopen{}\mathclose\bgroup\originalleft}
\renewcommand{\right}{\aftergroup\egroup\originalright}
\newcommand{\ignore}[1]{\leavevmode\unskip} %
\newcommand{\paren}[1]{(#1)}
\newcommand{\Paren}[1]{\left(#1\right)}
\newcommand{\brac}[1]{[#1]}
\newcommand{\Brac}[1]{\left[#1\right]}
\newcommand{\floor}[1]{\lfloor#1\rfloor}
\newcommand{\abs}[1]{\lvert#1\rvert}
\newcommand{\Abs}[1]{\left\lvert#1\right\rvert}
\newcommand{\set}[1]{\{#1\}}
\newcommand{\Set}[1]{\left\{#1\right\}}
\newcommand{\norm}[1]{\lVert#1\rVert}
\newcommand{\Norm}[1]{\left\lVert#1\right\rVert}
\newcommand{\Bignorm}[1]{\Big\lVert#1\Big\rVert}
\newcommand{\normt}[1]{\norm{#1}_2}
\newcommand{\Normt}[1]{\Norm{#1}_2}
\newcommand{\Bignormt}[1]{\Bignorm{#1}_2}
\newcommand{\snormt}[1]{\norm{#1}^2_2}
\newcommand{\Snormt}[1]{\Norm{#1}^2_2}
\newcommand{\iprod}[1]{\langle#1\rangle}
\newcommand{\normsch}[1]{\vert\kern-0.25ex\vert\kern-0.25ex\vert #1 
\vert\kern-0.25ex\vert\kern-0.25ex\vert}
\newcommand{\Normsch}[1]{\left\vert\kern-0.25ex\left\vert\kern-0.25ex\left\vert #1 
\right\vert\kern-0.25ex\right\vert\kern-0.25ex\right\vert}
\newcommand{\Esymb}{\mathbb{E}}
\newcommand{\Psymb}{\mathbb{P}}
\newcommand{\Vsymb}{\mathbb{V}}
\newcommand{\bR}{\mathbb{R}}
\DeclareMathOperator*{\E}{\Esymb}
\DeclareMathOperator*{\Var}{\Vsymb}
\DeclareMathOperator*{\ProbOp}{\Psymb}
\renewcommand{\Pr}{\ProbOp}
\newcommand{\tensor}{\otimes}
\newcommand{\mper}{\,.}
\newcommand{\mcom}{\,,}
\newcommand\bdot\bullet
\DeclareMathOperator{\Ind}{\mathbf 1}
\DeclareMathOperator{\poly}{poly}
\newcommand{\iid}{i.i.d.\xspace}
\newcommand{\N}{\mathbb N}
\newcommand{\R}{\mathbb R}
\newcommand{\cD}{\mathcal D}
\newcommand{\cE}{\mathcal E}
\newcommand{\cH}{\mathcal H}
\newcommand{\cL}{\mathcal L}
\newcommand{\cN}{\mathcal N}
\newcommand{\cO}{\mathcal O}
\newcommand{\cU}{\mathcal U}
\newcommand{\cX}{\mathcal X}
\renewcommand{\leq}{\leqslant}
\renewcommand{\le}{\leqslant}
\renewcommand{\geq}{\geqslant}
\renewcommand{\ge}{\geqslant}
\let\epsilon=\varepsilon
\newcommand\MYcurrentlabel{xxx}
\newcommand{\MYstore}[2]{%
	\global\expandafter \def \csname MYMEMORY #1 \endcsname{#2}%
}
\newcommand{\MYload}[1]{%
	\csname MYMEMORY #1 \endcsname%
}
\newcommand{\MYnewlabel}[1]{%
	\renewcommand\MYcurrentlabel{#1}%
	\MYoldlabel{#1}%
}
\newcommand{\MYdummylabel}[1]{}
\newcommand{\torestate}[1]{%
	\let\MYoldlabel\label%
	\let\label\MYnewlabel%
	#1%
	\MYstore{\MYcurrentlabel}{#1}%
	\let\label\MYoldlabel%
}
\newcommand{\restatetheorem}[1]{%
	\let\MYoldlabel\label
	\let\label\MYdummylabel
	\begin{theorem*}[Restatement of \cref{#1}]
		\MYload{#1}
	\end{theorem*}
	\let\label\MYoldlabel
}
\newcommand{\restatelemma}[1]{%
	\let\MYoldlabel\label
	\let\label\MYdummylabel
	\begin{lemma*}[Restatement of \cref{#1}]
		\MYload{#1}
	\end{lemma*}
	\let\label\MYoldlabel
}
\newcommand{\restateprop}[1]{%
	\let\MYoldlabel\label
	\let\label\MYdummylabel
	\begin{proposition*}[Restatement of \cref{#1}]
		\MYload{#1}
	\end{proposition*}
	\let\label\MYoldlabel
}
\newcommand{\restatefact}[1]{%
	\let\MYoldlabel\label
	\let\label\MYdummylabel
	\begin{fact*}[Restatement of \cref{#1}]
		\MYload{#1}
	\end{fact*}
	\let\label\MYoldlabel
}
\newcommand{\restate}[1]{%
	\let\MYoldlabel\label
	\let\label\MYdummylabel
	\MYload{#1}
	\let\label\MYoldlabel
}
\newcommand{\eps}{\epsilon}
\newcommand*{\normtv}[1]{\Norm{#1}_{\mathrm{TV}}
}
\DeclareMathOperator*{\argmin}{arg\,min}
\newcommand{\distas}[1]{\mathbin{\overset{#1}{\kern\z@\sim}}}%
\newsavebox{\mybox}\newsavebox{\mysim}
\newcommand{\distras}[1]{%
	\savebox{\mybox}{\hbox{\kern3pt$\scriptstyle#1$\kern3pt}}%
	\savebox{\mysim}{\hbox{$\sim$}}%
	\mathbin{\overset{#1}{\kern\z@\resizebox{\wd\mybox}{\ht\mysim}{$\sim$}}}%
}
\newcommand{\wb}[1]{\ensuremath{\overline{#1}}}
\newcommand{\RNum}[1]{\uppercase\expandafter{\romannumeral #1\relax}}
\newcommand*\diff{\mathop{}\!\mathrm{d}}
\title{Private Mean Estimation \\ with Person-Level Differential Privacy\thanks{Authors are listed in alphabetical order.}}
\author{
~~~~~~ Sushant Agarwal\thanks{Khoury College of Computer Sciences, Northeastern University.  {\tt agarwal.sus@northeastern.edu}.  Supported by NSF award CNS-2247484.}
\and
Gautam Kamath\thanks{Cheriton School of Computer Science, University of Waterloo and Vector Institute. {\tt g@csail.mit.edu}. Supported by a University of Waterloo startup grant, an NSERC Disovery Grant, a Canada CIFAR AI Chair, and an unrestricted gift from Google.}
\and
Mahbod Majid\thanks{
Machine Learning Department, Carnegie Mellon University.
{\tt mahbodm@andrew.cmu.edu}.}~~~~
\and
Argyris Mouzakis\thanks{
Cheriton School of Computer Science, University of Waterloo.
{\tt amouzaki@uwaterloo.ca}.
Supported by a University of Waterloo startup grant, a Canada CIFAR AI Chair, an unrestricted gift from Google, and a scholarship from the Onassis Foundation (Scholarship ID: F ZT 053-1/2023-2024)
}
\and
Rose Silver\thanks{Khoury College of Computer Sciences, Northeastern University.  {\tt silver.r@northeastern.edu}.  Supported by NSF awards CNS-2232692 and CNS-2247484.}
\and
Jonathan Ullman\thanks{Khoury College of Computer Sciences, Northeastern University.  {\tt jullman@ccs.neu.edu}.  Supported by NSF awards CNS-2232692 and CNS-2247484.}
}
\date{\today}
\begin{document}
\pagenumbering{Alph}

\maketitle
\pagenumbering{gobble}
\ifnum\withcolors=0
  \renewcommand{\todo}[1]{{{}}}
\fi
\begin{abstract}
    We study person-level differentially private (DP) mean estimation in the case where each person holds multiple samples.  DP here requires the usual notion of distributional stability when \emph{all} of a person's datapoints can be modified. Informally, if $n$ people each have $m$ samples from an unknown $d$-dimensional distribution with bounded $k$-th moments, we show that
    \[n = \tilde \Theta\left(\frac{d}{\alpha^2 m} + \frac{d  }{ \alpha m^{1/2}  \eps} + \frac{d}{\alpha^{k/(k-1)} m  \eps} + \frac{d}{\eps}\right)\]
    people are necessary and sufficient to estimate the mean up to distance $\alpha$ in $\ell_2$-norm under $\varepsilon$-differential privacy (and its common relaxations). In the multivariate setting, we give computationally efficient algorithms under approximate-DP and computationally inefficient algorithms under pure DP, and our nearly matching lower bounds hold for the most permissive case of approximate DP.  Our computationally efficient estimators are based on the standard clip-and-noise framework, but the analysis for our setting requires both new algorithmic techniques and new analyses.  In particular, our new bounds on the tails of sums of independent, vector-valued, bounded-moments random variables may be of interest.
\end{abstract}

\vfill\newpage
\tableofcontents

\newpage
\pagenumbering{arabic}
\section{Introduction}
\label{sec:intro}
 Mean estimation is the following classical statistical task: given a dataset $X_1, \dots, X_n$ of samples drawn i.i.d.\ from some unknown distribution $\cD$, output an estimate of its mean $\E_{X\sim \cD}[X]$.
Without special care, standard procedures for mean estimation can leak sensitive information about the underlying dataset~\cite{HomerSRDTMPSNC08,DworkSSUV15}, so there is now a line of recent work developing methods for mean estimation under the constraint of \emph{differential privacy} (DP)~\cite{DworkMNS06}.

Informally speaking, a DP algorithm is insensitive to any one person's data, but how we formalize the concept of one person's data can depend on the application.\footnote{Throughout, we use ``person'' to denote an entity for whom their entire dataset must have privacy preserved as the base level of granularity in the DP definition, which is often called \emph{user-level privacy} in the literature.  Depending on the application, the entity could naturally be called a person, individual, user, organization, silo, etc.\ without changing the underlying definition of privacy.}  In many applications, we assume that each person's data corresponds to a single element of the dataset, and thus we formalize DP as requiring that the algorithm be insensitive to changes in one sample $X_i$.  This modeling protects one sample in the dataset, regardless of whether or not that captures all of one person's data, and is often called \emph{item-level DP}.  

However, there are many cases where a person can possess \emph{multiple} samples, and we want to protect privacy of all the samples belonging to a single person.  In these cases, we want to define datasets to be neighbors if they differ in exactly one person's set of samples.  For example, systems for \emph{federated learning}~\cite{McMahanMRHY17,KairouzMABBBBCCCDEEEGGGGGGHHHHHHJJJKKKKKLLMMNOPQRRRSSSSSTVWXXYYYZ21} adopt this modeling of privacy.  More precisely, we study a setting where $n$ people each have $m$ samples\footnote{We assume both $m$ and $n$ are publicly known. For simplicity, we focus on the case where each person has exactly $m$ samples, though our general approach applies to heterogeneous cases where individuals have varying amounts of data.} drawn i.i.d.\ from an unknown distribution $\cD$.  In this case we formalize DP as requiring that the algorithm be insensitive to changing all $m$ samples belonging to one person, which we call \emph{person-level DP}.  Note that this generalizes the initial setting, which corresponds to the case $m=1$, and also that there are now $nm$ total samples available.  

Our goal is to design an estimator for $\E_{X \sim \cD}[X]$, with small error in $\ell_2$ distance, subject to person-level differential privacy.  Our work focuses on two primary questions: What algorithms can we use to achieve this goal, and how many people and samples are required?

\subsection{Results and Techniques}
\label{sec:results}
Our main results are upper and lower bounds on the sample complexity of mean estimation with person-level differential privacy, for both univariate and multivariate distributions with \emph{bounded moments}.  Bounded moments are a standard approach for capturing and quantifying what it means for data to be well behaved, and interpolates between distributions that have heavy tails and distributions that are well concentrated (e.g.\ Gaussians).  Specifically, a distribution $\cD$ over $\R^d$ with mean $\mu$ has bounded $k$-th moment if for all unit vectors $v \in \R^d$, 
$$
\E_{X \sim \cD}[|\langle v, X - \mu \rangle|^k] \leq 1.\footnote{The choice of $1$ on the right-hand side is arbitrary, and can be replaced with any other value by appropriate scaling.}
$$
Note that when $d = 1$ this is the standard $k$-th central moment condition and when $d > 1$ this condition says that we can project the distribution onto any line and obtain a distribution with bounded $k$-th central moment.

In all cases, our lower bounds on sample complexity hold for the most permissive notion of \emph{approximate differential privacy.}  Our upper bounds on sample complexity hold even for the most restrictive notion of \emph{pure differential privacy}.  We complement our upper bounds on the sample complexity with a computationally efficient estimator in low-dimensions, or a computationally efficient estimator in high-dimensions satisfying approximate differential privacy.

As a starting point, we begin by presenting our estimator for the univariate case so we can introduce and discuss techniques and new phenomena that arise in the person-level setting.
\begin{theorem}[Informal, see Corollary~\ref{cor:mean-estimation-bounded}]
There is a computationally efficient person-level $\eps$-DP estimator such that for every distribution $\cD$ over $\R$ with mean $\mu$ and bounded $k$-th moments, the estimator takes $m$ samples per person from
\begin{equation*}
    n = \tilde{O} \Paren{\frac{1}{\alpha^2 m} + \frac{1}{\alpha m^{1/2} \eps} + \frac{1}{\alpha^{k/(k-1)} m \eps} + \frac{1}{\eps}}
\end{equation*}
people\footnote{For simplicity, throughout this introduction we elide any assumptions that $|\mu| \leq R$ and any dependence on this parameter $R$, which is necessary for certain variants of DP but not others.} and, with probability at least $2/3$,\footnote{The constant $2/3$ is arbitrary and can be amplified to any constant via standard techniques, at a cost of a constant factor blowup in sample complexity.} outputs an estimate $\hat\mu$ such that $|\hat\mu - \mu| \leq \alpha$.
\label{thm:one-d-pure}
\end{theorem}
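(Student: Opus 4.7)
The plan is to apply the noisy clipped-mean framework to the per-person local means $Y_i = \bar X_i = \frac{1}{m}\sum_{j=1}^m X_{i,j}$. First, dedicate $\tilde O(1/\eps)$ people to producing a crude private estimate $\mu_0$ of $\mu$ with error $O(1)$ via a standard DP histogram subroutine---this accounts for the $1/\eps$ term. On the remaining people, release
\[
\hat\mu \;=\; \tfrac{1}{n}\textstyle\sum_i \mathrm{Clip}_{[\mu_0-R,\mu_0+R]}(Y_i) \;+\; \mathrm{Lap}\!\left(\tfrac{2R}{n\eps}\right)
\]
for a suitably chosen clipping radius $R$. The sensitivity of the sum is $2R$, so pure $\eps$-DP follows from the Laplace mechanism and basic composition with the histogram step.

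The error decomposes into three pieces: (i) concentration of $\tfrac{1}{n}\sum_i \hat Y_i$ around $\E[\hat Y]$, bounded by $O(1/\sqrt{nm})$ via Chebyshev and $\mathrm{Var}(Y_i) \leq 1/m$, which produces the statistical term $1/(\alpha^2 m)$; (ii) the clipping bias $|\E[\hat Y]-\mu|$; and (iii) the Laplace noise $O(R/(n\eps))$. The crux is (ii): a direct Markov bound from $\E|Y_i-\mu|^k \leq C_k m^{-k/2}$ only yields bias $O(m^{-k/2}/R^{k-1})$, producing the weaker sample complexity $\tilde O(1/(\alpha^{k/(k-1)} m^{k/(2(k-1))}\eps))$ rather than the target $\tilde O(1/(\alpha^{k/(k-1)} m\eps))$.

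The new ingredient is a sharper tail bound on $Y_i - \mu$ obtained by a truncation argument. For an analysis parameter $T$, let $X_j^{(T)}$ denote $X_j$ clipped to $[\mu-T,\mu+T]$ and let $Y_i^{(T)}$ be their average. Since $\{Y_i \neq Y_i^{(T)}\}$ forces some $|X_j - \mu| > T$,
\[
\Pr[|Y_i-\mu|>t] \;\leq\; \Pr[\exists j:|X_j-\mu|>T] \;+\; \Pr[|Y_i^{(T)} - \mu_T| > t - \delta_T],
\]
where $\mu_T = \E[X_j^{(T)}]$ satisfies $\delta_T := |\mu_T-\mu| \leq 1/T^{k-1}$. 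The first summand is at most $m/T^k$ by a union bound; the second is at most $2\exp(-\Omega(mt^2/(1+Tt)))$ by Bernstein's inequality applied to the bounded summands $X_j^{(T)}-\mu_T$. Optimizing $T \asymp mt/\polylog(m,1/t)$ balances the two contributions and yields the refined bound $\Pr[|Y_i-\mu|>t] \leq \tilde O(1/(m^{k-1}t^k))$ for $t \gtrsim 1/\sqrt m$. Integrating over $t \geq R$ gives clipping bias $\tilde O(m^{-(k-1)}/R^{k-1})$.

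With this in hand, balancing bias $\lesssim \alpha$ against noise $R/(n\eps) \lesssim \alpha$ produces two regimes. In the Bernstein regime we may take $R = \tilde O(1/\sqrt m)$ (at this scale the bias is effectively exponentially small), so the noise budget dictates $n = \tilde O(1/(\alpha m^{1/2}\eps))$---the second term. When the heavy-tail regime binds we take $R = \tilde O(m^{-1}\alpha^{-1/(k-1)})$, yielding $n = \tilde O(1/(\alpha^{k/(k-1)} m\eps))$---the third term. The algorithm picks the smaller choice of $R$, and adding the statistical and histogram contributions yields the four-term sample complexity in the theorem. The main obstacle is the refined tail bound; once it is in hand, the parameter tuning is routine.
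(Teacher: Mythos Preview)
Your overall architecture matches the paper's: coarse estimation via a DP histogram, then clip-and-noise on the per-person means $Y_i$, with the key technical step being a refined tail bound on $Y_i-\mu$ that yields bias $\tilde O\bigl(m^{-(k-1)}/R^{k-1}\bigr)$ rather than $m^{-k/2}/R^{k-1}$. Your derivation of that tail bound via truncation at level $T$ plus Bernstein, then optimizing $T\asymp mt/\polylog$, is essentially the same mechanism as the paper's non-uniform Berry--Esseen (Corollary~\ref{cor:non-uniform-tail-bound}), which is also proved by truncating each summand at scale $\Theta(mt)$ and applying an exponential-moment (Chernoff-type) bound; your route is arguably more elementary but arrives at the same $\tilde O\bigl(1/(m^{k-1}t^k)\bigr)$ tail.

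There is, however, a genuine gap in the coarse-estimation step. You take $\mu_0$ with error $O(1)$ and then, in the ``Bernstein regime'', set the clipping radius to $R=\tilde O(1/\sqrt m)$. But if $|\mu_0-\mu|$ can be as large as a constant while $R\ll 1$, the clipping interval $[\mu_0-R,\mu_0+R]$ need not even contain $\mu$, and the bias is $\Omega(1)$, not ``effectively exponentially small.'' The tail/bias bound only kicks in for deviations beyond $R-|\mu_0-\mu|$, so you need the coarse accuracy $u$ to satisfy $R-u\gtrsim 1/\sqrt m$. The paper handles this by making the histogram step return $\mu_0$ with accuracy $O(1/\sqrt m)$ (Theorem~\ref{thm:user-level-coarse-esitmation} with bucket width $r\asymp 1/\sqrt m$), which still costs only $\tilde O(1/\eps)$ people; once you upgrade your coarse estimate to this accuracy, the rest of your argument goes through. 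A secondary wording slip: you write ``the algorithm picks the smaller choice of $R$,'' but in fact you must take $R$ to be the \emph{larger} of $\tilde O(1/\sqrt m)$ and $\tilde O\bigl(m^{-1}\alpha^{-1/(k-1)}\bigr)$, since both the validity threshold for the tail bound and the bias constraint are lower bounds on $R$; the noise term then yields whichever of the second and third sample-complexity terms dominates.
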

Observe that the required number $n$ of people consists of four terms. As the total number of samples across all people is $nm$, the first term corresponds to the classical sample complexity of the problem, sans privacy constraints. The second and third terms are the most interesting.  Loosely speaking, the second term corresponds to the optimal sample complexity of estimating the mean of a Gaussian in our setting.  Since a Gaussian satisfies bounded $k$-th moments for every $k$, this number of samples is a lower bound in our setting.  The third term can be arranged to say that the total number of samples we need is at least $1/\alpha^{k/(k-1)}\eps$, which is the optimal sample complexity for estimating a $k$-th moment distribution under item-level privacy~\cite{BarberD14,KamathSU20}, and is also a lower bound for our setting.  The final term comes from a \emph{coarse estimation step} in the algorithm, whose discussion we defer to the technical sections.  This term is also necessary as any non-trivial DP algorithm requires data from at least $1/\eps$ people.

Next, we consider the multivariate case, where we give an efficient estimator that satisfies approximate differential privacy.\footnote{We note that a similar algorithm can be designed under the stronger notion of \emph{concentrated DP}~\cite{DworkR16,BunS16}, differing primarily in the coarse estimation step (and the corresponding term in the sample complexity).}
\begin{theorem}[Informal, see Theorem~\ref{thm:approx-dp-upperbound}]
\label{thm:approx-dp-upperbound-informal}
There is a computationally efficient $(\eps,\delta)$-DP estimator such that for every distribution $\cD$ over $\R^d$ with mean $\mu$ and bounded $k$-th moments, the estimator takes $m$ samples per person from
\begin{equation*}
    n = \tilde{O}\Paren{\frac{d}{\alpha^2 m} + \frac{d \log^{1/2}(1/\delta)}{\alpha m^{1/2} \eps} + \frac{d\log^{1/2}(1/\delta)}{\alpha^{k/(k-1)}m\eps} + \frac{d\log^{1/2}(1/\delta) + d^{1/2} \log^{3/2}(1/\delta)}{\eps}}
\end{equation*}
people and, with probability at least $2/3$, outputs an estimate $\hat\mu$ such that $\|\hat\mu - \mu\|_2 \leq \alpha$.
\label{thm:high-d-cdp}
\end{theorem}
Observe that, in the first three terms, the sample complexity matches the sample complexity of the univariate case, scaled up by a factor of $d$, and, as we show, all three of these terms are optimal up to polylogarithmic factors.  The final term does not decrease with $m$, meaning we will require $\gtrsim d/\eps$ people no matter how many samples each person has.  However, this final term does not depend on $\alpha$ and becomes dominated when $\alpha$ is sufficiently small, and in this case we provably obtain the optimal sample complexity.

In Section~\ref{sec:to} we give an overview of the algorithms behind Theorems \ref{thm:one-d-pure} and \ref{thm:approx-dp-upperbound-informal} and highlight the main new technical ideas.

Next, we give an algorithm with improved sample complexity and a guarantee of pure DP, at the cost of computational efficiency.
\begin{theorem}[Informal, see Theorem~\ref{thm:mean_est_pure_dp}]
There is a computationally inefficient $\eps$-DP estimator such that for every distribution $\cD$ over $\R^d$ with mean $\left\|\mu\right\| \le R$ and bounded $k$-th moments, the estimator takes $m$-samples per person from
\begin{equation*}
    n = \tilde{O}\Paren{\frac{d}{\alpha^2 m} + \frac{d  }{ \alpha m^{1/2}  \eps} + \frac{d}{\alpha^{k/(k-1)} m  \eps} + \frac{d \log(R)}{\eps}}
\end{equation*}
people and, with probability at least $2/3$, outputs an estimate $\hat\mu$ such that $\|\hat\mu - \mu\|_2 \leq \alpha$.
\label{thm:high-d-pdp}
\end{theorem}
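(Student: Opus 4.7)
My plan is to follow the clip-and-noise paradigm from \cref{thm:high-d-cdp}, but replace the noise-addition step with an exponential mechanism over a net of candidate means. This converts the approximate-DP guarantee into pure DP while keeping the dimension dependence linear, at the cost of making the estimator inefficient.

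\textbf{Step 1 (per-person averaging and tail control).} For each person $i \in [n]$, set $S_i = \tfrac{1}{m}\sum_{j=1}^{m} X_{i,j}$. By applying the vector-valued tail bounds for sums of independent bounded-$k$-th-moment random variables that are already needed in the proof of \cref{thm:approx-dp-upperbound}, I pick a radius
\[
r \;=\; C\!\left(\sqrt{d/m} \;+\; \Paren{d/m}^{(k-1)/k}\right)
\]
so that with high probability at least $9n/10$ of the $S_i$ lie in the ball $\mathcal{B}(\mu, r)$, for a suitable universal constant $C$. The scalar analogue of this bound is the same one that drives the univariate algorithm of \cref{thm:one-d-pure}.

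\textbf{Step 2 (coarse private localization).} I spend half the privacy budget on producing a crude estimate $\hat\mu_0$ with $\|\hat\mu_0 - \mu\|_2 \le R$, where $R$ depends only on the range parameter, by running a coordinatewise private-histogram procedure on the $S_i$. A standard analysis yields the final $\tilde O(d/\eps)$ term in the sample complexity.

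\textbf{Step 3 (exponential mechanism on a net).} Let $\mathcal{N}$ be an $(\alpha/2)$-net of $\mathcal{B}(\hat\mu_0, R)$; a volume argument gives $|\mathcal{N}| \le (6R/\alpha)^d$. Define the score function $\Score(\nu) \seteq \card{\set{i \in [n] : \|S_i - \nu\|_2 \le r}}$. Because replacing all $m$ samples of one person alters only a single $S_i$, the person-level sensitivity of $\Score$ is $1$, so releasing $\hat\mu$ via the exponential mechanism with parameter $\eps/2$ is $\eps$-DP. Utility then boils down to exhibiting a score gap that exceeds $O(\log|\mathcal{N}|/\eps) = O(d \log(R/\alpha)/\eps)$ between net points inside and outside the $\alpha$-ball around $\mu$. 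The good side of the gap follows from Step~1; the bad side uses that for any $\nu$ with $\|\nu-\mu\|_2 > \alpha$, projecting the $S_i$ onto $u \seteq (\nu-\mu)/\|\nu-\mu\|_2$ and applying the scalar bounded-moment tail bound to $\Iprod{u, S_i - \mu}$ bounds how many $S_i$ can satisfy $\|S_i - \nu\|_2 \le r$. Balancing the resulting gap against $d\log(R/\alpha)/\eps$ produces the three problem-specific terms $d/(\alpha^2 m)$, $d/(\alpha m^{1/2}\eps)$, and $d/(\alpha^{k/(k-1)} m \eps)$.

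\textbf{Main obstacle.} The hardest regime is when $r$ is comparable to or larger than $\alpha$, since then a large number of $S_i$ sit near \emph{every} reasonable net point and the naive gap argument breaks. I expect to address this by sharpening the score to use a directional (slab) indicator rather than an isotropic ball, effectively reducing the comparison between a good net point $\nu^\star$ and a bad net point $\nu$ to the one-dimensional bounded-moment tail of $\Iprod{u, S_i - \mu}$ along the single direction $u = (\nu - \nu^\star)/\|\nu - \nu^\star\|_2$. The resulting score still has person-level sensitivity $1$, and a union bound over directions induced by $\mathcal{N}$ contributes only an additional $\log|\mathcal{N}| = O(d \log(R/\alpha))$ factor, which the exponential mechanism already absorbs. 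Verifying that this slab-based score yields the stated constants in all three regimes, and that the coarse-estimation step truly contributes only the $d/\eps$ term, is the remaining technical work.
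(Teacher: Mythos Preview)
Your high-level strategy—exponential mechanism over a net of candidate means, with a coordinatewise coarse step contributing the $d/\eps$ term—matches the paper's. The gap is in the score function. A count-based score, whether the ball count $\lvert\{i:\|S_i-\nu\|_2\le r\}\rvert$ or your proposed slab refinement $\lvert\{i:\langle u,S_i-\nu\rangle\le\text{threshold}\}\rvert$, does not produce the right separation in the regime where $\alpha$ is below the per-batch fluctuation scale $\rho$ (equivalently $\alpha\lesssim 1/\sqrt{m}$), which is precisely the regime generating the $d/(\alpha\sqrt{m}\,\eps)$ and $d/(\alpha^{k/(k-1)}m\,\eps)$ terms. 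With a midpoint threshold along the line from a good point $\nu^\star$ to a bad point $\nu$ at distance $\alpha$, the score difference is $n\cdot\Pr\bigl[\lvert\langle u,S_i-\mu\rangle\rvert<\alpha/2\bigr]$; controlling this from below by $\Theta(n\alpha\sqrt{m})$ is an \emph{anti-concentration} statement about the projected batch mean near zero. The bounded-moment tail bounds you invoke (the non-uniform Berry--Esseen estimate that drives the $1$-d analysis) control only the tails, not the mass near the center, so this step does not go through as written, and the ``verification'' you defer is where the argument actually breaks.

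The paper, following the framework of~\cite{KamathSU20}, uses a score that sidesteps anti-concentration entirely. For each candidate $p$ and each $q$ in a local cover around $p$, it projects the batch means onto the $p$--$q$ line, \emph{truncates} each projection to an interval of radius $\rho=\Theta\bigl(\sqrt{\log m/m}+1/(m\alpha^{1/(k-1)})\bigr)$ centered at $p$, and declares $p$ the winner if the median-of-means of the truncated projections lies on $p$'s side of the midpoint. The score of $p$ is then the minimum number of batches an adversary must alter to make $p$ lose some comparison. Truncation caps each batch's contribution to any subsample mean at $O(\rho/n)$, so shifting the median by $\Theta(\alpha)$ requires corrupting $\Theta(n\alpha/\rho)$ batches—this is the gap, obtained from bounded influence rather than anti-concentration. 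Requiring $n\alpha/\rho\gtrsim d/\eps$ and substituting the two regimes for $\rho$ yields the middle two terms; the $d/(\alpha^2 m)$ term arises separately from the Chebyshev-based sample size needed to make each median-of-means $\alpha$-accurate after a union bound over the $(\Theta(\sqrt{d}))^d$ pairwise comparisons. The missing ingredient in your plan is exactly this switch from a count-based to a truncation-plus-robustness score.
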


The first three terms in the sample complexity match those in the approximate DP case (Theorem~\ref{thm:high-d-cdp}). The fourth term is again the cost of coarse estimation, and is optimal for pure DP via standard packing lower bounds (see Lemma~\ref{lem:lb-range}).

Our pure-DP for the high-dimensional cases employs the framework of Kamath, Singhal, and Ullman~\cite{KamathSU20}, which reduces multivariate mean estimation to a collection of univariate mean testing problems.  We slightly simplify their framework by viewing the set of testing problems as between a candidate mean and a ``local cover'' of its neighborhood, bringing it more in line with other works on private estimation~\cite{AdenAliAK21}. This reduction allows us to appeal to the techniques we developed for proving Theorem~\ref{thm:one-d-pure}.  We comment that, inheriting the deficiencies of~\cite{KamathSU20}, this algorithm is \emph{not} computationally efficient. Indeed, most algorithms for pure DP estimation in multivariate settings are computationally inefficient, barring a few recent works which rely on semidefinite programming and sum-of-squares optimization~\cite{HopkinsKM22, HopkinsKMN23}. However, the amount of computation and data required by these methods scales poorly as the order of the moment they employ increases.  Since, in general, our algorithms employ higher-order moment information, it is not obvious how to make them computationally efficient, and we leave this open as an interesting open question for future work.\footnote{In particular, even for \emph{item-level DP} where $m=1$, computationally efficient algorithms with nearly optimal sample complexity are only known for the cases of $k=2$ or for Gaussian data.}

Finally, we prove lower bounds on the sample complexity of private mean estimation under $(\eps, \delta)$-DP. Since both pure and concentrated DP imply $(\eps,\delta)$-DP, these lower bounds also apply to algorithms satisfying these variants.

\begin{theorem}[Informal, see Theorem~\ref{thm:main_lb_approx_dp}]
    \label{thm:lower-bound}
    Suppose there is an $(\eps, \delta)$-DP estimator such that for every distribution $\cD$ over $\R^d$ with mean $\mu$ and bounded $k$-th moments, the estimator takes $m$-samples per person from $n$ people and, with probability at least $2/3$, outputs $\|\hat\mu - \mu\|_2 \leq \alpha$.
    Then we have that 
    \[n = \tilde \Omega\left(\frac{d}{\alpha^2 m} + \frac{d}{\alpha m^{1/2} \eps} + \frac{d}{\alpha m^{k/(k-1)} \eps} + \frac{\log (1/\delta)}{\eps}\right).\]
\end{theorem}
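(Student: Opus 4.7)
The plan is to establish each of the four summands in the lower bound as a separate claim and combine them via the fact that $a+b+c+d \leq 4\max\{a,b,c,d\}$, so any $n$ admissible for the full problem must satisfy each individual bound up to a constant. The first and last terms follow from standard templates. The $d/(\alpha^2 m)$ term is the non-private information-theoretic baseline: with $nm$ total i.i.d.\ samples, a Fano-type argument over a packing of Gaussians $\{N(\mu, c I_d) : \mu \in \{-\alpha,+\alpha\}^d/\sqrt{d}\}$ (with $c$ chosen so that the bounded $k$-th moment condition holds) forces $nm = \Omega(d/\alpha^2)$. The $\log(1/\delta)/\eps$ term is the classical approximate-DP overhead, proved by a group-privacy / two-point argument applied to point masses whose means differ by more than $2\alpha$: any $(\eps,\delta)$-DP procedure that distinguishes them requires at least $\Omega(\log(1/\delta)/\eps)$ people.

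For the Gaussian-type privacy term $d/(\alpha m^{1/2}\eps)$, I would reuse the Gaussian family $\{N(\mu, I_d)\}$, whose bounded $k$-th moments are uniform in $k$ (up to rescaling). The key reduction is that the per-person empirical mean $S_i = \frac{1}{m}\sum_{j=1}^m X_{i,j}$ is a sufficient statistic with $S_i \sim N(\mu, I_d/m)$. Applying the given estimator to $(S_1,\ldots,S_n)$ is a post-processing of a person-level-DP procedure, hence an item-level $(\eps,\delta)$-DP mean estimator for $n$ samples from $N(\mu, I_d/m)$. Plugging into the established fingerprinting lower bound for private Gaussian mean estimation with covariance $I_d/m$ yields $n = \Omega(d/(\alpha m^{1/2}\eps))$.

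The main obstacle is the third term $d/(\alpha m \eps)$, which requires a genuine person-level fingerprinting argument on a bounded-moments (but non-Gaussian) distribution, since post-processing to the per-person averages gives the Gaussian-type bound and cannot by itself yield the right scaling. I plan to use a product family of scaled signs on $\{-1,+1\}^d$ with per-coordinate bias $\mu_j \in \{-\alpha,+\alpha\}$, which has bounded moments of every order uniformly in $d$. Each person's $m$-sample block is viewed as a single super-sample in $\{-1,+1\}^{md}$, and a Steinke--Ullman style fingerprinting inequality is applied at the person level, scoring each person by the correlation of their full block with a fingerprint of the unknown mean. The technically delicate step is to compute the signal and variance of this per-person score with the within-person averaging in place and show that the resulting score-to-noise ratio, plugged into the $(\eps,\delta)$-DP fingerprinting inequality, yields $n = \Omega(d/(\alpha m \eps))$; the rest of the argument is standard. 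Combining the four bounds then establishes the theorem.
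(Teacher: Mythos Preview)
Your decomposition into four terms and the treatment of the first, second, and fourth terms match the paper. The non-private Fano baseline, the sufficient-statistic reduction for Gaussians (this is exactly the Levy et al.\ argument the paper invokes), and the standard $\log(1/\delta)/\eps$ overhead are all handled the same way.

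Where you diverge from the paper is the third term, and here you are overcomplicating matters. You write that this term ``requires a genuine person-level fingerprinting argument'' because ``post-processing to the per-person averages \dots\ cannot by itself yield the right scaling.'' But the reduction does not go through per-person averages at all; it goes the other direction. Given a person-level $(\eps,\delta)$-DP estimator that succeeds with $n$ people and $m$ samples each, take any $nm$ i.i.d.\ item-level samples, partition them arbitrarily into $n$ batches of size $m$, and feed them in. Changing one item changes at most one batch, so the resulting mechanism is item-level $(\eps,\delta)$-DP on $nm$ samples with the same accuracy. The known item-level lower bound for bounded-$k$-th-moment mean estimation (e.g., Narayanan's Theorem~5.1, which the paper cites) already forces $nm \ge \tilde\Omega\bigl(d/\alpha^2 + d/(\alpha^{k/(k-1)}\eps)\bigr)$, and dividing by $m$ gives both the first and third terms simultaneously. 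No new fingerprinting is needed.

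Your proposed person-level fingerprinting on $\{\pm 1\}^{md}$ might be made to work, but it is strictly more effort for no gain: the ``technically delicate'' score computation you flag is entirely avoidable. The batching reduction is a one-line observation and is exactly what the paper does.
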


This lower bound matches our one-dimensional upper bound (Theorem~\ref{thm:one-d-pure}) in all four terms.
It also matches our multivariate upper bounds (Theorems~\ref{thm:high-d-cdp} and~\ref{thm:high-d-pdp}) in all terms except the last. As previously mentioned, we speculate the former upper bound (for approximate DP) is loose in this fourth term. The latter upper bound for pure DP is necessarily greater in the last term, due to the stronger privacy notion.

The first and third terms follow from a simple reduction from the item-level $m=1$ case: imagine starting with $nm$ samples, splitting them into $n$ batches of $m$ samples, and feeding them into a person-level DP algorithm. If this algorithm required too few people, it would violate known lower bounds for the $m=1$ case. The second term employs a conversion of Levy et al.~\cite{LevySAKKMS21}, which also reduces from the item-level case to person-level case for Gaussian data, based on sufficient statistics. We also give an alternate, direct proof of this term using the fingerprinting technique~\cite{BunUV14,KamathMS22}. The fourth term is again required for any non-trivial DP algorithm.

\subsection{Technical Overview of Theorems \ref{thm:one-d-pure} and \ref{thm:approx-dp-upperbound-informal}}
\label{sec:to}

Our efficient estimators use the following general framework, following \cite{KamathSU20}:
\begin{enumerate}
    \item \textbf{Reduce to item-level privacy.}  For each person $i = 1,\dots,n$, average their $m$ samples to obtain $Y^{(i)}$.  From now on, we will design an algorithm that takes $Y^{(1)},\dots,Y^{(n)}$, satisfies item-level privacy, and solves mean estimation for the distribution of $Y$.  Notice that these random variables now come from the class of \emph{averages of $m$ independent random variables with bounded $k$-th moment,} which is more structured than just bounded $k$-th moment.
    
    \item \textbf{Coarse Estimation.} Start by obtaining a \emph{coarse estimate} $\mu_0$ of the mean satisfying $\| \mu_0 - \mu \|_2 \leq \gamma$ for some suitable $\gamma$ to narrow the search space.

    \item \textbf{Clip-and-Noise.}  Clip each value $Y^{(i)}$ to lie in an $\ell_2$-ball of radius $\rho$ centered at $\mu_0$ for some suitable $\rho$.  Then average these values and add Gaussian noise to this average.  Notice that the noise we have to add to ensure DP will be proportional to the sensitivity of the average, which is bounded by $2\rho/n$ because of clipping.
\end{enumerate}

Although the framework is well known, in the rest of this section we will describe the technical challenges that had to be overcome to apply this general framework in the setting of person-level privacy with data satisfying bounded moments conditions.

For now we will ignore the coarse estimation step and suppress exactly what we need from the coarse estimate $\mu'$, so we can focus on the interplay between the first and third steps.

In the Gaussian case, this method is deceptively simple to analyze.  In the first step, each value $Y$ is itself a Gaussian with smaller variance (i.e., scaled down by a factor of $m$).  Then, in the third step, since Gaussians are very tightly concentrated around their mean, we can clip to a small ball of radius $\smash{\approx d^{1/2}/m^{1/2}}$ and introduce essentially no bias because there is only a very small probability that any individual point gets clipped. Formalizing the rest of this analysis is relatively straightforward.  The case of bounded $k$-th moments is more challenging because the data can have heavy tails, posing a dilemma: either we clip to a relatively small ball and introduce significant bias, or we clip to a relatively large ball and have to add more noise for privacy, leading to an unavoidable bias-variance-tradeoff~\cite{KamathMRSSU23}.  

To quantify this tradeoff and find the optimal balance of bias and variance, we need to understand the bias introduced by clipping the random variables $Y$, as a function of the clipping radius $\rho$.  The standard approach to bounding the bias in one-dimension is more or less to integrate the function $f(t) = \mathbb{P}[|Y-\mu| \geq t]$ over all values of $t \geq \rho$, and thus we can reduce bounding the bias to computing tail bounds.  In \cite{KamathSU20}, which is the case where $m = 1$ and $Y$ is an arbitrary random variable with bounded $k$-th moment, the tightest possible bound on the tails is roughly $1/t^{k}$.  However, things become more complicated in the person-level setting with $m > 1$.

\smallskip\textbf{Technical Challenge 1: Bounding the Bias in One Dimension.}  Using the fact that $Y$ is an average of $m$ independent random variables with $k$-th moment bounded by $1$, we can deduce that $Y$ also has $k$-th moment bounded by roughly $1/m^{k/2}$, which yields a tail bound of roughly $1/m^{k/2} t^{k}$.  However, this tail bound is actually suboptimal in its dependence on $m$.  To see why, observe that as $m \to \infty$, $Y$ converges to a Gaussian by the Central Limit Theorem, which has very thin tails of roughly $\exp(-m t^2)$.  So for large $m$ we'd expect a stronger tail bound that just what we get from assuming a bounded $k$-th moment.  To make this argument rigorous, we apply a strong form of the Central Limit Theorem---\emph{non-uniform Berry-Esseen}~\cite{Michel76}---that gives an optimal tail bound for this class of random variables (see Corollary~\ref{cor:non-uniform-tail-bound}).  Plugging this tail bound into the standard bias arguments and then solving for the optimal tradeoff between bias and variance yields Theorem~\ref{thm:one-d-pure}.

\smallskip\textbf{Technical Challenge 2: Tail Bounds in High Dimensions.} The natural way to generalize this approach to the high-dimensional case involves proving an analogous bound on $\mathbb{P}[\|Y - \mu\|_2 \geq t]$ for $t \geq \rho$.  However, as far as we are aware, the non-uniform Berry-Esseen theorem does not have a generalization to high-dimensional random variables.  To address this challenge, we use a somewhat modified proof strategy that succeeds in high-dimensions and also gives an alternative proof of the optimal tail bound for the one-dimensional case.  Roughly, our novel tail bound states that if $Y$ is an average of $m$ independent random variables with bounded $k$-th moment in $d$ dimensions, then
$$
\forall t \gtrsim \sqrt{\frac{d}{m}} \quad \mathbb{P}\left[ \|Y - \mu\|_2 \geq t \right] \lesssim \exp\left(\frac{-mt^2}{d}\right) + \frac{d^{k/2}}{m^{k-1}t^{k}}.
$$
See Theorem~\ref{thm:highd-tailbound} for a more precise statement. Intuitively, the result says that the probability $\|Y - \mu\|_2 \geq t$ is, roughly, at most the sum of two different probabilities: (1) The probability that a draw from $N(0,\frac{1}{m} \mathbb{I})$ has norm larger than $t$, and (2) the probability that a single random variable $X$ with bounded $k$-th moments has norm larger than $mt$.  Note that (1) is a lower bound on the tails because Gaussians have bounded $k$-th moment for every constant $k$, so $Y-\mu$ might be Gaussian.  Also, (2) is a lower bound on the tails because $Y$ is an average of $m$ random variables with bounded $k$-th moment, and if a single one of these random variables has norm at least $tm$ then there can be a constant probability that the norm of the average is at least $mt$.  The theorem says that these are the two most likely ways the norm can be large.  We believe that bounds on the tails of averages of independent bounded-moments random variables are quite natural and we expect this result will find further applications.

\smallskip\textbf{Technical Challenge 3: Better Bias Bounds in High-Dimensions.}  The natural approach of integrating the function $f(t) = \mathbb{P}[\|Y-\mu\|_2 \geq t]$ for $t \geq \rho$ will yield some bound on the bias introduced by clipping to a ball of radius $\rho$, which has the form $d^{k/2} / (m\rho)^{k-1}$.  However, this bound is not strong enough to prove the sample complexity bounds in Theorem~\ref{thm:approx-dp-upperbound-informal}.  The looseness is because the bound only uses the fact that the random variables have a bound on the $k$-th moment of its norm, which, after suitable rescaling, is a \emph{weaker} condition than assuming that the $k$-th moment is bounded in every individual direction.  We give a novel, tighter bound on the bias introduced by clipping that takes into account the fact that we have control over our random variables in every direction.  An interesting feature of our bound is that it depends heavily on how close the center of the ball used for clipping is to the true mean of the distribution.  Specifically, if $Y$ is a random variable satisfying our assumptions with mean $\mu$, and we clip to a ball of radius $\rho$ around a coarse estimate $\mu_0$ such that $\| \mu_0 - \mu\|_2 \leq \gamma \ll \rho$, then the bias is at most
$$
\frac{d^{(k-1)/2}}{(m\rho)^{k-1}}\left( 1 + \frac{\gamma d^{1/2}}{\rho} \right).
$$
The bound is a bit tricky to interpret.  As a start, assume $\gamma = 0$ so we are clipping to a ball centered around the true mean.  In this case the bias improves over the straightforward bound by a factor of $d^{1/2}$, which is tight and sufficient for our purposes.  We get the same conclusion as long as the ball is only off-center by $\gamma \lesssim \rho/d^{1/2}$.  However, if the ball is off-center by more than $\rho/d^{1/2}$ we only get a weaker bias bound, and we can argue that this is inherent.  Understanding the bias introduced by clipping is fundamental, and so we expect this bias bound will find further applications.

\smallskip\textbf{Technical Challenge 4: Iteratively Improving the Coarse Estimate.}  As the discussion above makes clear, the accuracy of the coarse estimate plays a significant role in determining the bias introduced by clipping.  In order to say that the coarse estimate is accurate enough, we need it to have sufficiently small error $\gamma \lesssim \rho/d^{1/2}$.  In some regimes, we will want to set $\rho \approx d^{1/2}/m^{1/2}$, so we want the coarse estimate to have error roughly $\gamma \approx 1/m^{1/2}$ to be safe.  However, the straightforward way of obtaining a coarse estimate using the approach of \cite{KamathSU20} will not produce such an accurate coarse estimate, and the result will be suboptimal sample complexity that is $\omega(d)$ overall.

To solve this problem, we use a novel iterative approach, inspired by the private preconditioning technique from~\cite{KamathLSU19,BiswasDKU20,CanonneKMUZ20}.  First, we use the simple strategy for producing a coarse estimate with somewhat large error $\gamma_0$.  We then run the clip-and-noise algorithm with $O(d)$ people's data to obtain some new coarse estimate $\mu_1$.  While $\mu_1$ will not itself have the error $\alpha$ that we desire, we can show that $\mu_1$ is a better estimate than $\mu_0$.  Thus, we can restart the clip-and-noise algorithm with a smaller value $\gamma_1$ and get an even better estimate $\mu_2$.  We can analyze this process and show that by iterating this procedure a small number of times we will obtain $\mu_t$ such that $\|\mu_t - \mu\| \leq \gamma_t \approx 1/m^{1/2}$.  As we've discussed above, such a coarse estimate $\mu_t$ is always accurate enough to get a tight bound on the bias of clipping, meaning we can run clip-and-noise a final time to get our last and most accurate estimate.  Combining all these steps yields Theorem~\ref{thm:approx-dp-upperbound-informal}.

\subsection{Related Work}
\label{sec:related}

Mean estimation under differential privacy has been studied extensively, including works which investigate mean estimation for Gaussians~\cite{KarwaV18,BunKSW19,KamathLSU19,BiswasDKU20,DuFMBG20,CaiWZ21,HuangLY21,HopkinsKMN23,BieKS22,BenDavidBCKS23,AumullerLNP23,AlabiKTVZ23,AsiUZ23}, distributions with bounded moments~\cite{BarberD14,KamathSU20,WangXDX20,BrownGSUZ21,KamathLZ22,HopkinsKM22,KuditipudiDH23,BrownHS23}, or both~\cite{BunS19,LiuKKO21,LiuKO22,KamathMRSSU23}. Several works have also focused on the related problem of privately estimating higher-order moments~\cite{AdenAliAK21,KamathMSSU22,AshtianiL22,KothariMV22,KamathMS22,Narayanan23,PortellaH24}. Other works study private mean estimation of arbitrary (bounded) distributions~\cite{BunUV14,SteinkeU15,DworkSSUV15}. However, all of these works focus on differential privacy in the item-level setting (i.e., when neighboring datasets differ in exactly one datapoint).  This is a special case of the more general person-level privacy setting we consider, which corresponds to $m=1$.

Recently, there have been several works focusing on estimation under person-level privacy.  Some focus on mean estimation, but for distributions which are bounded or satisfy certain strong concentration properties~\cite{LevySAKKMS21,GirgisDD22}. These encompass (sub-)Gaussian distributions, but not those with heavier tails that we consider in our work. \cite{NarayananME22} also study mean estimation under a bounded moment condition, but a qualitatively different type of condition than what we consider, and only with a bound on the second moment. Consequently, their work does not capture the rich tradeoff that we do as one varies the number of samples per person $m$ and the number of moments bounded $k$. \cite{CummingsFMT22} focuses on mean estimation in a slightly different setting, where rather than people receiving samples from the distribution of interest, they instead have a latent sample from said distribution and receive several samples from the Bernoulli with that parameter. \cite{GeorgeRST24} study mean estimation of Bernoullis under continual observation~\cite{ChanSS10,DworkNPR10}.  A line of works focuses on generic transformations from algorithms for item-level privacy to algorithm algorithms for person-level privacy~\cite{BunGHILPSS23,GhaziKKMMZ23b}. Even in the one-dimensional case, their results (combined with item-level algorithms of~\cite{KamathSU20}) are unable to recover our results for a couple reasons. First, their transformation only saves a factor of $1/\sqrt{m}$ in the private sample complexity, whereas in several parameter regimes, we save a factor of $1/m$.  Based on details of our analysis, we believe a black-box technique that recovers our result from the item-level case seems unlikely. Furthermore, in general, their transformations are not computationally efficient.  Finally, other works study adjacent learning tasks, such as learning discrete distributions~\cite{LiuSYKR20,ChhorS23,AcharyaLS23}, PAC learning~\cite{GhaziKM21}, and convex optimization~\cite{GhaziKKMMZ23a,BassilyZ23,LiuA24}.

Numerous other statistical estimation tasks have been studied under item-level differential privacy, including learning mixtures of Gaussians~\cite{KamathSSU19,AdenAliAL21,ArbasAL23,AfzaliAL24}, discrete distributions~\cite{DiakonikolasHS15}, graphical models~\cite{ZhangKKW20}, random graph models~\cite{BorgsCS15,BorgsCSZ18a,SealfonU19,ChenDDHLS24}, and median estimation~\cite{AvellaMedinaB19,TzamosVZ20}. For more coverage of the literature on private statistics, see~\cite{KamathU20}.

\subsubsection{Independent work of Zhao et al.~\cite{ZhaoLSLWL24}}
In May 2024, an initial version of this paper and a simultaneous and independent work of Zhao et al.~\cite{ZhaoLSLWL24} were posted on arXiv. Both works study mean estimation under person-level privacy, though considering slightly different notions of what it means for moments to be bounded. We consider a \emph{directional} bound (Definition~\ref{def:bdd-moments}), which roughly says that the moment is bounded in every univariate projection. On the other hand, they consider a \emph{non-directional} bound, which bounds the moment of the $\ell_2$-norm of the random vector, i.e., $\E_{X \sim \cD}\left[\|X - \mu\|_2^k\right]$.%

It is not hard to see that a non-directional bound of $1$ on the $k$-th moment implies a directional bound of $1$ on the $k$-th moment. Similarly, one can show that a directional bound of $1$ implies a non-directional bound of $d^{k/2}$ on the $k$-th moment. Thus, the two moment conditions are polynomially related to each other, and, by rescaling the data, upper and lower bounds for each case have implications for the other. 

Using the latter implication that a directional bound implies a non-directional bound, one can employ algorithms for the non-directional setting in the directional setting. If one uses the algorithm in Theorem 3 of~\cite{ZhaoLSLWL24}, and looks at the implications for the setting with directional moment bounds, one achieves a sample complexity comparable to Theorem 4.1 in the original May 2024 version of this paper. This sample complexity is roughly that of Theorem~\ref{thm:approx-dp-upperbound-informal}, albeit with an extra dimension-dependent factor in the third term. However, by working directly with the directional moment bound as we do, one can achieve the improved sample complexity in Theorem~\ref{thm:approx-dp-upperbound-informal}, which was done subsequent to the initial posting of our two papers. While our algorithms under our directional bound imply algorithms under their non-directional bound, the resulting sample complexities are loose by dimension-dependent factors.

Their work only considers algorithms, and not lower bounds. There may be implications of our lower bound arguments to show lower bounds for their non-directional moment bound setting. 

\section{Preliminaries}
\label{sec:prelim}

\subsection{Privacy Preliminaries}
We say that two datasets $X$ and $X'$ are \textit{neighboring datasets} if they differ in at most one datapoint, i.e. $d_{H}(X,X') \le 1$ where $d_{H}$ denotes the Hamming distance. We introduce the definition of differential privacy (which is also referred to as item-level differential privacy).
\begin{definition}[Item-Level Differential Privacy]
    Let $\mathrm{A} : \mathcal{X}^n \rightarrow \mathcal{Y}$. We say that $\mathrm{A}$ satisfies item-level $(\eps,\delta)$-differential privacy (DP) if, for all neighborhing datasets $X,X' \in \mathcal{X}^n$ and for all $Y \in \mathcal{Y}$,
    \[
        \Pr\Brac{\mathrm{A}(X) \in Y} \le e^{\eps}\Pr\Brac{\mathrm{A}(X') \in Y} + \delta.
    \]
\end{definition}
It is reasonable to consider $\eps, \delta \in [0,1]$. We remark that $(\eps,0)$-DP is commonly referred to as \textit{pure differential privacy}, and $(\eps,\delta)$-DP is commonly referred to as \textit{approximate differential privacy}. In this paper, we focus on a generalized definition of differential privacy known as \textit{person-level differential privacy}. For this definition, we consider datasets in $\mathcal{U}^n$ where $\mathcal{U} = \mathcal{X}^m$. We call each $X^{(i)} = (X^{(i)}_1,\ldots,X^{(i)}_m)\in \mathcal{U}$ a \textit{person} and call each $X^{(i)}_j \in \mathcal{X}$ a \textit{sample} (since in all our results these points will be chosen as independent samples from a probability distribution). In the definition of person-level differential privacy, we say that two datasets $X,X' \in \mathcal{U}^n$ are \textit{neighboring datasets} if they differ in at most one one person's samples--that is, if $d_{H}(X,X') = \sum_{i=1}^n \Ind{(X^{(i)} \ne X'^{(i)})} \le 1$.
\begin{definition}[Person-Level Differential Privacy]
    Let $\mathcal{U} = \mathcal{X}^m$, and let $\mathrm{A} : \mathcal{U}^n \rightarrow \mathcal{Y}$. We say that $\mathrm{A}$ satisfies person-level $(\eps,\delta)$-differential privacy (DP) if, for all neighboring datasets $X,X' \in \mathcal{U}^n$ and for all $Y \in \mathcal{Y}$,
    \[
        \Pr\Brac{\mathrm{A}(X) \in Y} \le e^{\eps}\Pr\Brac{\mathrm{A}(X') \in Y} + \delta.
    \]
\end{definition}
We now introduce some standard lemmas for differentially private algorithms.\footnote{The specific formulations of some of these lemmas are inspired by those of \cite{KamathSU20}.} We first introduce a lemma surrounding a property known as \textit{post-processing}.
\begin{lemma}[Post-Processing]
    Let $\mathsf{A}:\mathcal{X}^n\rightarrow\mathcal{Y}$ be a person-level $(\eps,\delta)$-DP algorithm. Let $\mathsf{B}:\mathcal{Y} \rightarrow \mathcal{Z}$. The mechanism $\mathsf{B}\circ\mathsf{A}$ satisfies person-level $(\eps,\delta)$-DP.
    \label{lem:postprocessing}
\end{lemma}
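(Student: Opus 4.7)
The plan is to reduce to the case of a deterministic post-processor and then invoke the person-level DP guarantee of $\mathsf{A}$ directly, on a carefully chosen preimage. Fix neighboring datasets $X, X' \in \cU^n$ (differing in at most one person's samples) and an arbitrary measurable event $Z \subseteq \mathcal{Z}$. The goal is to show
\[
\Pr[\mathsf{B}(\mathsf{A}(X)) \in Z] \le e^{\eps} \Pr[\mathsf{B}(\mathsf{A}(X')) \in Z] + \delta.
\]

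First I would handle the case where $\mathsf{B}$ is a deterministic map. Define the preimage $Y \defeq \mathsf{B}^{-1}(Z) = \{y \in \cY : \mathsf{B}(y) \in Z\}$, which is a measurable subset of $\cY$. Then $\{\mathsf{B}(\mathsf{A}(X)) \in Z\} = \{\mathsf{A}(X) \in Y\}$, and likewise for $X'$, so applying the person-level $(\eps,\delta)$-DP guarantee of $\mathsf{A}$ to the event $Y$ gives the desired inequality. The key observation here is that $Y$ depends only on $\mathsf{B}$ and $Z$, not on $X$ or $X'$, so the DP guarantee applies cleanly.

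Next I would reduce the general (randomized) case to the deterministic case. A randomized algorithm $\mathsf{B}$ can be represented as $\mathsf{B}(y) = \mathsf{B}'(y, R)$ for a deterministic map $\mathsf{B}'$ and an independent random string $R$ drawn from some distribution on a space $\cR$. For each fixed realization $r \in \cR$, the map $y \mapsto \mathsf{B}'(y, r)$ is deterministic, so by the first step,
\[
\Pr[\mathsf{B}'(\mathsf{A}(X), r) \in Z] \le e^{\eps} \Pr[\mathsf{B}'(\mathsf{A}(X'), r) \in Z] + \delta.
\]
Integrating both sides over the distribution of $R$ (which is independent of $\mathsf{A}$'s internal randomness and of the data) and using linearity of expectation yields the bound for $\mathsf{B} \circ \mathsf{A}$.

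There is no real obstacle; the only subtlety is the measure-theoretic step of representing a randomized algorithm as a deterministic function of external randomness and verifying that one can legitimately integrate the inequality over $R$. Since $R$ is independent of everything else, Fubini justifies the interchange, and the conclusion is that $\mathsf{B} \circ \mathsf{A}$ satisfies person-level $(\eps, \delta)$-DP. Note that the argument never inspects what it means for $X$ and $X'$ to be neighboring beyond the fact that $\mathsf{A}$'s DP guarantee applies to them, so the proof is agnostic to whether we are working with item-level or person-level neighboring relations.
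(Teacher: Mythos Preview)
Your proof is correct and is the standard argument: reduce to deterministic post-processing via the preimage $\mathsf{B}^{-1}(Z)$, then average over the external randomness of $\mathsf{B}$. The paper itself states this lemma as a preliminary fact without proof, so there is no paper proof to compare against; your write-up simply fills in the well-known details and correctly notes that the argument is agnostic to the underlying neighboring relation.
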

The next two lemmas deal with a property called \textit{composition}. For these lemmas, we define $\mathsf{A_1},\ldots,\mathsf{A_k}$ to be a sequence of mechanisms: $\mathsf{A_1}:\mathcal{X}^n\rightarrow \mathcal{Y}_1$ and, for all $t \in \{2,\ldots,k\}$, $\mathsf{A_t}:\mathcal{X}^n\times\mathcal{Y}_1\times\cdots\times\mathcal{Y}_{t-1} \rightarrow \mathcal{Y}_t$.
\begin{lemma}[Basic Composition]
\label{lem:basic-composition}
    Suppose $\mathsf{A_1}$ satisfies person-level $(\eps_1,\delta_1)$-DP. Similarly, for each $t \in \{2,\ldots,k\}$, suppose that, under every fixed $y_1 \in \mathcal{Y}_1,\ldots,y_{t-1} \in \mathcal{Y}_{t-1}$, $\mathsf{A_t}(\cdot,y_1,\ldots,y_{t-1})$ satisfies person-level $(\eps_t,\delta_t)$-DP. It follows that the mechanism $\mathsf{A}_k$ which runs each of $\mathsf{A}_1,\ldots,\mathsf{A}_{k-1}$ in sequence satisfies person-level $\Paren{\sum_{i=1}^k\eps_i,\sum_{i=1}^k\delta_i}$-DP.
    \label{lem:privacy-composition}
\end{lemma}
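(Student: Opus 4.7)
The plan is to induct on $k$. The base case $k=1$ is immediate from the hypothesis on $\mathsf{A}_1$. For the inductive step, let $\mathsf{B}$ denote the mechanism that runs $\mathsf{A}_1,\ldots,\mathsf{A}_{k-1}$ adaptively and returns their entire joint transcript; by the inductive hypothesis, $\mathsf{B}$ is person-level $\bigl(\sum_{i<k}\eps_i,\sum_{i<k}\delta_i\bigr)$-DP. It therefore suffices to prove a two-fold composition lemma: if $\mathsf{B}$ is person-level $(\eps,\delta)$-DP and $\mathsf{C}(\cdot,y)$ is person-level $(\eps',\delta')$-DP for every fixed transcript $y$, then the adaptive composition $\mathsf{A}(X):=(\mathsf{B}(X),\mathsf{C}(X,\mathsf{B}(X)))$ is person-level $(\eps+\eps',\delta+\delta')$-DP. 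The person-level neighboring relation plays no role beyond being the fixed notion of adjacency on $\mathcal{U}^n$, so the argument is identical in structure to the item-level case.

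For the two-fold step, fix neighboring $X,X'\in\mathcal{U}^n$ and a measurable event $Z$ in the joint output space, and write $Z_y:=\{z:(y,z)\in Z\}$. The naive route is to first apply the $(\eps',\delta')$-DP bound for $\mathsf{C}(\cdot,y)$ pointwise in $y$ to $\Pr[\mathsf{C}(X,y)\in Z_y]$, then to apply the $(\eps,\delta)$-DP bound for $\mathsf{B}$ to the resulting $[0,1]$-valued post-processing $f(y):=\Pr[\mathsf{C}(X',y)\in Z_y]$ (via the standard Bernoulli-post-processing trick, using Lemma~\ref{lem:postprocessing}). This yields $\Pr[\mathsf{A}(X)\in Z]\le e^{\eps+\eps'}\Pr[\mathsf{A}(X')\in Z]+e^{\eps'}\delta+\delta'$, which is worse than the target additive bound by a factor of $e^{\eps'}$ on the $\delta$ term. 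To remove this factor and recover the exact $\delta+\delta'$ claimed, I plan to invoke the standard coupling characterization of approximate DP: $\mathsf{M}$ is $(\eps,\delta)$-DP on a neighboring pair $(X,X')$ iff there is a coupling of the two output distributions together with a failure event of probability at most $\delta$ outside of which the log-likelihood ratio is bounded by $\eps$ in absolute value. Applying this characterization to $\mathsf{B}$ on $(X,X')$ yields a failure event $F_\mathsf{B}$ of probability at most $\delta$, and applying it uniformly to every $\mathsf{C}(\cdot,y)$ and then integrating over $y\sim\mathsf{B}(X)$ yields a failure event $F_\mathsf{C}$ of probability at most $\delta'$. Union-bounding these two events and summing the pointwise privacy-loss bounds outside of them (which simply add to $\eps+\eps'$) produces a coupling of $\mathsf{A}(X)$ and $\mathsf{A}(X')$ certifying $(\eps+\eps',\delta+\delta')$-DP, closing the induction.

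The main obstacle is precisely this clean handling of the $\delta$ terms: any argument that sequentially invokes the approximate-DP inequality for $\mathsf{C}$ and then for $\mathsf{B}$ (or vice versa) will incur the spurious $e^{\eps'}$ factor above. The coupling formulation sidesteps this by allowing failure events to be union-bounded before any multiplicative losses are accrued. The adaptive dependence of $\mathsf{C}$ on $\mathsf{B}$'s output poses no additional difficulty thanks to the hypothesis that $(\eps',\delta')$-DP holds uniformly for every fixed previous transcript $y_1,\ldots,y_{t-1}$, which lets the per-$y$ failure-event bound be integrated against the marginal distribution of $\mathsf{B}$'s output without any constant losses. Aside from this bookkeeping, the argument is standard and essentially a transcription of the item-level basic composition theorem into the person-level adjacency.
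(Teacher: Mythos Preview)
The paper does not prove this lemma; it is stated among a list of ``standard lemmas for differentially private algorithms'' and cited without proof, so there is nothing to compare against. Your outline (induct on $k$, reduce to a two-fold adaptive composition, and use a failure-event/coupling characterization to avoid the spurious $e^{\eps'}\delta$ factor) is a correct and standard route to the claimed $\bigl(\sum_i\eps_i,\sum_i\delta_i\bigr)$ bound.

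One small caution: the characterization you invoke---a coupling with a failure event outside of which the log-likelihood ratio is bounded by $\eps$ in absolute value---is a bit stronger than $(\eps,\delta)$-DP as usually stated (it is closer to ``probabilistic DP''), and is not literally equivalent. The characterization that makes the union-bound argument go through cleanly is the one-sided decomposition: $M$ is $(\eps,\delta)$-DP on the ordered pair $(X,X')$ iff one can write $M(X)=(1-\delta')P+\delta'Q$ for some $\delta'\le\delta$ with $P(S)\le e^{\eps}\Pr[M(X')\in S]$ for all $S$. Applying this to $\mathsf{B}$ and to each $\mathsf{C}(\cdot,y)$, then union-bounding the two ``bad'' mixture components, yields exactly $\delta+\delta'$ without any multiplicative blow-up. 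This is essentially what you describe, just with the precise form of the characterization pinned down.
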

\begin{lemma}[Advanced Composition]
\label{lem:advanced-composition}
    Suppose $\mathsf{A_1},\ldots,\mathsf{A_k}$ are person-level $(\eps_0,\delta_1),\ldots,(\eps_0,\delta_k)$-DP respectively for some $\eps_0 \le 1$. Then, for all $\delta_0 > 0$, the mechanism $\mathsf{A_k}$ which runs each of $\mathsf{A}_1,\ldots,\mathsf{A}_{k-1}$ in sequence satisfies person-level $(\eps,\delta)$-DP for $\eps = \eps_0\sqrt{6k\log(1/\delta_0)}$ and $\delta = \delta_0 + \sum_{t=1}^k \delta_t$.
\end{lemma}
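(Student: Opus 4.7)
The plan is to observe that person-level differential privacy, as defined here, is structurally identical to standard (item-level) differential privacy on datasets whose ``items'' are the elements of $\mathcal{U} = \mathcal{X}^m$: two datasets in $\mathcal{U}^n$ are person-level neighboring iff they are Hamming-neighboring when we treat a single person's $m$ samples as one atomic item. Consequently the classical advanced composition theorem for item-level DP (Dwork--Rothblum--Vadhan) applies verbatim, and the lemma follows. For a self-contained proof I would follow the standard privacy-loss random variable (PLRV) argument, adapted to the adaptive setting described in the lemma.

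First I would reduce to the pure-DP case. For each $t$, the $(\eps_0, \delta_t)$ guarantee of $\mathsf{A}_t(\cdot, y_1, \dots, y_{t-1})$ gives an event $E_t$ over the outputs with $\Pr[E_t] \ge 1 - \delta_t$ on which the per-step likelihood ratio between the neighboring runs is bounded by $e^{\eps_0}$ pointwise. Conditioning on $E := \bigcap_t E_t$ and union-bounding the complements contributes $\sum_{t=1}^k \delta_t$ to the final $\delta$, so for the remainder I may assume each privacy loss
\[
\mathcal{L}_t(y_t) \;=\; \ln \frac{\Pr[\mathsf{A}_t(X, y_1, \dots, y_{t-1}) = y_t]}{\Pr[\mathsf{A}_t(X', y_1, \dots, y_{t-1}) = y_t]}
\]
is bounded in absolute value by $\eps_0$ along the realized transcript.

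Next I would establish the key one-step moment bound: for any fixed transcript prefix, $\E[\mathcal{L}_t] \le \eps_0 (e^{\eps_0} - 1) \le 2\eps_0^2$ when $\eps_0 \le 1$. This is the standard calculation that pairs the ratio against its reciprocal using $(\eps_0,0)$-DP in both directions (the reduction above is what makes this legitimate). Viewing $\mathcal{L}_1, \dots, \mathcal{L}_k$ as a martingale-difference-type sequence with bounded increments $|\mathcal{L}_t| \le \eps_0$ and conditional means at most $2\eps_0^2$, I would apply Azuma--Hoeffding to obtain that, except on a set of transcripts of probability at most $\delta_0$,
\[
\sum_{t=1}^k \mathcal{L}_t \;\le\; 2k\eps_0^2 + \eps_0\sqrt{2k \ln(1/\delta_0)} \;\le\; \eps_0 \sqrt{6k \ln(1/\delta_0)},
\]
where in the last step I absorb the quadratic term into the square-root term using $\eps_0 \le 1$ and a standard choice of constants. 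Finally, the classical lemma converting a high-probability bound on the composed PLRV into an $(\eps, \delta)$-DP statement yields the claimed parameters $\eps = \eps_0 \sqrt{6k \log(1/\delta_0)}$ and $\delta = \delta_0 + \sum_{t=1}^k \delta_t$.

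The main obstacle is bookkeeping rather than conceptual: handling the $\delta_t$ terms cleanly, since $(\eps_0, \delta_t)$-DP does not bound $\mathcal{L}_t$ pointwise but only off a small event, and one must also verify that the adaptive choice of $\mathsf{A}_t$ given prior outputs does not break the martingale structure (it does not, because the per-step guarantee is required to hold under \emph{every} fixed prefix $y_1, \dots, y_{t-1}$). Once these subtleties are addressed, the person-level nature of the statement plays no role beyond the initial observation that person-level neighboring is item-level neighboring on the meta-space $\mathcal{U}^n$, so no modification of the standard proof is required.
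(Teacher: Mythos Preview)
The paper does not prove this lemma; it is stated in the preliminaries as a standard result from the differential privacy literature, without proof or citation to a specific argument. Your proposal is the standard Dwork--Rothblum--Vadhan advanced composition proof, and your opening observation---that person-level DP is literally item-level DP on the meta-alphabet $\mathcal{U} = \mathcal{X}^m$, so the classical theorem applies verbatim---is exactly the right way to see why no new argument is needed here.

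One minor caution on the arithmetic: the inequality
\[
2k\eps_0^2 + \eps_0\sqrt{2k\ln(1/\delta_0)} \le \eps_0\sqrt{6k\ln(1/\delta_0)}
\]
does not hold for all $\eps_0 \le 1$ and $\delta_0 > 0$ (take $\delta_0$ close to $1$ so $\ln(1/\delta_0)$ is tiny while $k$ is large). The specific constant $6$ in the paper's stated bound is a convenient simplification that holds in the regime where the square-root term dominates, which is the regime of interest for the paper's applications; a fully general statement would carry the two-term bound $\sqrt{2k\ln(1/\delta_0)}\,\eps_0 + k\eps_0(e^{\eps_0}-1)$. Since the paper itself just quotes the simplified form without proof, this is not a defect in your write-up relative to the paper, but if you want a self-contained proof you should either state the two-term version or add the mild parameter assumption under which the absorption step is valid.
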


We now state a few definitions and lemmas related to standard private mechanisms.%
\begin{definition}
    Let $f: \mathcal{U}^n \rightarrow \R^d$ be a function, where $\mathcal{U} = \mathcal{X}^m$. The person-level $\ell_2$-sensitivity (denoted $\Delta_{f,2}$) is defined as
    \[
        \Delta_{f,2} \coloneqq \max_{\substack{X,X' \in \mathcal{U}^n\\d_{H}(X,X') \le 1}}\normt{f(X) - f(X')}.
    \]
\end{definition}

\begin{lemma}[Gaussian Mechanism]
    Let $f : \mathcal{U}^n \rightarrow \R^d$ be a function with $\ell_2$-sensitivity $\Delta_{f,2}$, where $\mathcal{U} = \mathcal{X}^m$. Let $S \in \mathcal{U}^n$, and let
    \[
        W \sim \mathcal{N}\Paren{0,\Paren{\frac{\Delta_{f,2}\sqrt{2\ln (2/\delta)}}{\eps}}^2\cdot \mathbb{I}_{d \times d}}.
    \]
    Then the mechanism
    \[
        M(X) = f(X) + W
    \]
    satisfies person-level $(\eps,\delta)$-DP.
    \label{lem:gaussian-mechanism}
\end{lemma}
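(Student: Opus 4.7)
\begin{proofsketch}
The plan is to reduce this to the classical (item-level) Gaussian mechanism analysis by exploiting the fact that the only thing that changes in the person-level setting is the definition of ``neighboring'' (and therefore of sensitivity), while the proof of the noise-addition guarantee is a purely distributional statement about two Gaussians whose means lie within $\Delta_{f,2}$ of each other.

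First, I would fix an arbitrary pair of person-level neighbors $S, S' \in \mathcal{U}^n$ with $d_H(S, S') \le 1$, and set $\mu_0 = f(S)$, $\mu_1 = f(S')$. By the definition of $\Delta_{f,2}$, we have $\|\mu_0 - \mu_1\|_2 \le \Delta_{f,2}$. With $\sigma = \Delta_{f,2}\sqrt{2\ln(2/\delta)}/\eps$, it then suffices to show that $\mathcal{N}(\mu_0, \sigma^2 I_{d\times d})$ and $\mathcal{N}(\mu_1, \sigma^2 I_{d\times d})$ are $(\eps,\delta)$-indistinguishable; the lemma follows since $M(S)$ and $M(S')$ are distributed as these two Gaussians, respectively.

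To prove indistinguishability, I would rotate coordinates so that $\mu_1 - \mu_0$ is aligned with the first standard basis vector; the two spherical Gaussians agree on the remaining $d-1$ coordinates, so the privacy-loss random variable reduces to the one-dimensional case. Writing $Z = \log(p_0(X)/p_1(X))$ where $X \sim \mathcal{N}(\mu_0, \sigma^2 I)$ and $p_0, p_1$ are the two densities, a direct computation shows that $Z$ is Gaussian with mean $-\|\mu_0 - \mu_1\|_2^2/(2\sigma^2)$ and variance $\|\mu_0 - \mu_1\|_2^2/\sigma^2$. I would then show $\Pr[Z > \eps] \le \delta/2$ via a standard Gaussian tail bound, substituting the chosen $\sigma$ and using $\|\mu_0 - \mu_1\|_2 \le \Delta_{f,2}$. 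A symmetric argument controls the ``other direction'', and a routine calculation converts the tail bound on the privacy loss into the required $(\eps,\delta)$-DP guarantee by splitting the probability mass $\Pr[\mathrm{A}(S) \in Y]$ into the event $\{Z \le \eps\}$ (which contributes at most $e^\eps \Pr[\mathrm{A}(S') \in Y]$) and its complement (which contributes at most $\delta$).

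There is no real obstacle here: person-level DP is syntactically identical to item-level DP once neighbors are defined, so the standard Gaussian-mechanism proof transfers verbatim with $\Delta_{f,2}$ playing the role of the item-level $\ell_2$-sensitivity. The one bookkeeping point to be careful about is ensuring the noise scale I derive from the tail bound matches the constant $\sqrt{2\ln(2/\delta)}$ in the statement (as opposed to, say, $\sqrt{2\ln(1.25/\delta)}$ from some textbook versions); this just pins down which Gaussian tail inequality to invoke.
\end{proofsketch}
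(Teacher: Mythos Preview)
The paper does not prove this lemma; it is stated without proof in the preliminaries as a standard background fact about the Gaussian mechanism. Your sketch is the standard argument and is correct in outline---the only slip is a sign: under $X\sim\mathcal{N}(\mu_0,\sigma^2 I)$ the privacy-loss $Z=\log(p_0(X)/p_1(X))$ has mean $+\|\mu_0-\mu_1\|_2^2/(2\sigma^2)$, not $-$, but this does not affect the tail-bound step.
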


\begin{lemma}[Private Histograms]
    \label{lem:private-histograms}
    Let $(S_1,\ldots,S_n)$ be samples in some data universe $\Omega$, and let $U = \{h_u\}_{u \subset \Omega}$ be a collection of disjoint histogram buckets over $\Omega$. Then, we have person-level $\eps$-DP and $(\eps,\delta)$-DP histogram algorithms with the following guarantees.
    \begin{itemize}
        \item $\eps$-DP: $\ell_{\infty}$ error $O\Paren{\frac{\log (|U|/\beta}{\eps}}$ with probability at least $1-\beta$; run time $\poly(n,\log(|U|/(\eps\beta)))$,
        \item $(\eps,\delta)$-DP: $\ell_{\infty}$ error $O\Paren{\frac{\log (1/\delta\beta)}{\eps}}$ with probability at least $1-\beta$; run time $\poly(n,\log(|U|/(\eps\beta)))$.
    \end{itemize}
\end{lemma}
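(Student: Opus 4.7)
The plan is to establish both parts as instances of standard private histogram constructions, applied to the $n$ per-person summaries $S_1,\dots,S_n$. The key observation is that since the buckets $\{h_u\}_{u\subset\Omega}$ are disjoint, each $S_i$ falls in at most one bucket, so a single person-level change (replacing one $S_i$) alters at most one bucket count, and by exactly one. Hence, viewing the vector of bucket counts $(c_u)_{u\in U}$ as a function of the dataset, each coordinate has sensitivity $1$, and Hamming-neighboring datasets produce count vectors that differ in a single coordinate by exactly $1$.

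For the pure $\eps$-DP algorithm, I would invoke the Laplace mechanism coordinate-wise: release $\widetilde c_u = c_u + Z_u$ with $Z_u \sim \mathrm{Lap}(1/\eps)$ i.i.d. Privacy follows from the sensitivity-$1$ bound (a single coordinate changes by $1$). For accuracy, a union bound over the $|U|$ independent Laplace variables together with $\Pr[|\mathrm{Lap}(1/\eps)| > t/\eps] = e^{-t}$ yields $\ell_\infty$ error $O(\log(|U|/\beta)/\eps)$ with probability $\ge 1-\beta$. The $\poly(n,\log(|U|/(\eps\beta)))$ runtime is achieved by representing the output implicitly: enumerate noisy counts only for the at most $n$ non-empty buckets (found by bucketing the samples), leaving empty buckets' noisy counts implicit. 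The uniform accuracy guarantee still holds because the union bound is over all of $U$.

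For the $(\eps,\delta)$-DP algorithm, I would use the standard \emph{stable histogram} (threshold) mechanism: compute $\widetilde c_u = c_u + \mathrm{Lap}(1/\eps)$ for each non-empty bucket only, and release the pair $(u,\widetilde c_u)$ iff $\widetilde c_u > T$ for a threshold $T = O(\log(1/(\delta\beta))/\eps)$. Privacy: between Hamming-neighboring datasets the set of non-empty buckets differs in at most one bucket $u^\star$; the event that $u^\star$'s noisy count crosses $T$ has probability at most $\delta$ by the Laplace tail bound (this is where the choice of $T$ is calibrated), and on the complementary event, the released noisy counts coincide in support and differ by at most $1$ in a single coordinate, so the Laplace mechanism's $\eps$-DP guarantee applies. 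Accuracy: with probability at least $1-\beta$, every released bucket has error $O(\log(1/\beta)/\eps)$, while every non-released bucket has true count at most $T + O(\log(1/\beta)/\eps) = O(\log(1/(\delta\beta))/\eps)$. Only the at most $n$ non-empty buckets are ever processed, yielding the claimed runtime.

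The only delicate step I foresee is the stable-histogram privacy analysis in the approximate DP case, namely arguing that the change in the set of released buckets between neighbors can be absorbed into the $\delta$ slack via the correct calibration of $T$; the rest is routine Laplace-mechanism calculus and union bounds. Both algorithms follow directly from well-known constructions in the differential privacy literature, so no genuinely new technique is required beyond pointing out that the sensitivity analysis is identical in the person-level setting once we view the $S_i$ as per-person summaries.
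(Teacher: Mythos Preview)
The paper does not prove this lemma; it is stated in the preliminaries as a standard background fact and invoked as a black box (e.g., in Algorithm~\ref{alg:user-level-coarse-estimate} and Theorem~\ref{thm:user-level-coarse-esitmation}). Your proposal correctly identifies and sketches the two standard constructions underlying it --- the coordinate-wise Laplace mechanism for pure DP and the stability-based (thresholded) histogram for approximate DP --- and there is nothing further to compare against.

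One small inaccuracy: under the replacement notion of neighboring used in the paper, changing one $S_i$ can alter \emph{two} bucket counts (the old bucket loses one, the new gains one), so the $\ell_1$-sensitivity of the count vector is $2$, not $1$ as you state. This only affects constants and is absorbed by the $O(\cdot)$, but you should say it correctly. Also, your pure-DP runtime argument (``leave empty buckets' noisy counts implicit'') needs more care: simply omitting noise on empty buckets leaks which buckets are empty and is not private. The usual fix is to sample the Laplace noise lazily or, since the downstream use is to find the heaviest bucket, to observe that the argmax over all buckets can be computed by separately handling the (at most $n$) non-empty noisy counts and the maximum of the pure-noise counts for empty buckets; the latter can be simulated efficiently without materializing all $|U|$ draws. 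This is a routine implementation detail, but as written your sketch glosses over it.
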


\begin{lemma}[Exponential Mechanism~\cite{McSherryT07}]
\label{lem:exp_mech}
The exponential mechanism $M_{\eps, S, \mathrm{Score}}(X)$ takes a dataset $X \in \cX^n$, computes a score ($\mathrm{Score} : \cX^n \times S \to \bR$) for each $p \in S$ with respect to $X$, and outputs $p \in S$ with probability proportional to $\exp\paren{\frac{\eps \cdot \mathrm{Score}(X, p)}{2 \cdot \Delta_{\mathrm{Score}, 1}}}$, where:
\[
    \Delta_{\mathrm{Score},1} \coloneqq \max_{p \in S} \max_{X \sim X' \in \cX^n} \abs{\mathrm{Score}\paren{X, p} - \mathrm{Score}\paren{X', p}}.
\]
It satisfies the following properties:
\begin{enumerate}
    \item $M$ is person-level $\eps$-differentially private.

    \item Let $\mathrm{OPT}_{\mathrm{Score}}\paren{X} = \max\limits_{p \in S} \{\mathrm{Score}\paren{X, p}\}$. Then
    \[
        \Pr\left[\mathrm{Score}(X, M_{\eps, S, \mathrm{Score}}(X)) \le \mathrm{OPT}_{\mathrm{Score}}(X) - \frac{2 \Delta_{\mathrm{Score}, 1}}{\eps} (\ln(|S|) + t)\right] \le e^{-t}.
    \]
\end{enumerate}
\end{lemma}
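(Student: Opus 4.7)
The plan is to prove the privacy and utility claims separately, following the standard analysis of the exponential mechanism (due to McSherry and Talwar), with only cosmetic modifications to account for the person-level neighboring relation. The key observation is that $\Delta_{\Score,1}$ is already defined with respect to person-level neighbors $X \sim X'$ in $\cX^n$, so no additional privacy loss arises when passing from the item-level setting to the person-level setting: the classical calculation bounds ratios of output densities purely in terms of score differences, and those differences are controlled by exactly the sensitivity appearing in the statement.

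For part (1), I would fix neighbors $X \sim X'$ and an arbitrary $p \in S$, and bound the ratio $\Pr[M_{\eps,S,\Score}(X) = p] / \Pr[M_{\eps,S,\Score}(X') = p]$ by $e^{\eps}$. Writing $Z(X) = \sum_{q \in S} \exp\paren{\eps \Score(X,q) / (2 \Delta_{\Score,1})}$ for the normalizer, this ratio factors into (i) the ratio of the exponentials at $p$ and (ii) the inverse ratio $Z(X')/Z(X)$ of normalizers. Factor (i) is at most $\exp\paren{\eps/2}$ because $|\Score(X,p) - \Score(X',p)| \leq \Delta_{\Score,1}$; factor (ii) is at most $\exp\paren{\eps/2}$ by applying the same inequality termwise inside the sum and pulling the common factor outside. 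Multiplying the two yields $e^{\eps}$. Since the output set is discrete, this per-singleton bound transfers to arbitrary $Y \subseteq S$ by summation.

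For part (2), I would define the bad set $S_{\mathrm{bad}} = \bigset{p \in S : \Score(X,p) \leq \mathrm{OPT}_{\Score}(X) - 2\Delta_{\Score,1}(\ln|S| + t)/\eps}$. For any $p \in S_{\mathrm{bad}}$, the unnormalized weight $\exp\paren{\eps \Score(X,p)/(2\Delta_{\Score,1})}$ is at most $e^{-(\ln|S| + t)} \cdot \exp\paren{\eps \, \mathrm{OPT}_{\Score}(X) / (2\Delta_{\Score,1})}$, while the normalizer $Z(X)$ is at least $\exp\paren{\eps \, \mathrm{OPT}_{\Score}(X)/(2\Delta_{\Score,1})}$ (from the argmax term alone). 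Thus each bad outcome receives probability at most $e^{-t}/|S|$, and summing over the at most $|S|$ elements of $S_{\mathrm{bad}}$ gives the claimed $e^{-t}$ tail.

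I do not anticipate any real obstacle: this is a textbook calculation, and the only point requiring care is to verify that the sensitivity used throughout the analysis is precisely the person-level quantity $\Delta_{\Score,1}$ from the statement, which it is by construction. Everything else is algebra.
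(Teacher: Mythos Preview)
Your proposal is correct and gives the standard McSherry--Talwar argument; the paper itself does not supply a proof of this lemma, treating it as a cited preliminary result. There is nothing to compare against, and your write-up is exactly the textbook calculation one would expect.
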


\subsection{Bounded Moments}
\begin{definition}
\label{def:bdd-moments}

    Let $\cD$ be a distribution over $\R$ with mean $\mu$. The $k$-th moment $\sigma_k(\cD)$ of $\cD$ is defined as 
    \[
        \sigma_k(\cD) \coloneqq \E_{X\sim \cD}\Brac{\Abs{X-\mu}^k}^{1/k}.
    \]
    For $\cD$ over $\R^d$ with mean $\mu$, $\sigma_k(\cD)$ is defined as 
    \[
        \sigma_k(\cD) \coloneqq \sup_{v \in \mathbb{S}^{d-1}}\E_{X\sim\cD}\Brac{\Abs{\langle X-\mu,v\rangle}^k}^{1/k}.
    \]
\end{definition}
\begin{lemma}
    Let $\cD$ be a distribution over $\R$ with mean $\mu$ and $\sigma_k(\cD) \le 1$. Let $X_1,\ldots,X_m \sim \cD$. Let $X = \frac{1}{m}(X_1 + \cdots + X_m)$. It follows that 
    \[
        \E[|X - \mu|^k]^{1/k} = O\Paren{\frac{\sigma_k(\cD)}{\sqrt{m}}}.
    \]
    \label{lem:kth-moment-batches}
\end{lemma}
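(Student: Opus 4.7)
The plan is to apply a Marcinkiewicz--Zygmund (or Rosenthal) type inequality to the centered sum $\sum_{i=1}^m (X_i - \mu)$. By homogeneity, I may rescale so that $\sigma_k(\cD) = 1$; then the i.i.d.\ random variables $Y_i \defeq X_i - \mu$ satisfy $\E[Y_i] = 0$ and $\E[|Y_i|^k] \le 1$. It suffices to handle $k \ge 2$, which is the relevant regime for bounded moments (and for $k = 1$ one cannot improve on the trivial $O(1)$ bound, which is consistent with the law of large numbers).

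First I would invoke the Marcinkiewicz--Zygmund inequality: there exists a constant $C_k$ such that
\[
    \E\left[\left|\sum_{i=1}^m Y_i\right|^k\right] \;\le\; C_k\, \E\left[\left(\sum_{i=1}^m Y_i^2\right)^{k/2}\right].
\]
Since $k/2 \ge 1$, Jensen's inequality applied to the convex map $x \mapsto x^{k/2}$ on the uniform average of the $Y_i^2$ gives $\left(\tfrac{1}{m}\sum_i Y_i^2\right)^{k/2} \le \tfrac{1}{m}\sum_i |Y_i|^k$, hence $\left(\sum_i Y_i^2\right)^{k/2} \le m^{k/2-1}\sum_i |Y_i|^k$. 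Taking expectations and using $\E[|Y_i|^k] \le 1$, the right-hand side is at most $m^{k/2}$, so
\[
    \E\left[|X - \mu|^k\right] \;=\; \frac{1}{m^k}\,\E\left[\left|\sum_i Y_i\right|^k\right] \;\le\; \frac{C_k}{m^{k/2}}.
\]
Taking $k$-th roots and restoring the $\sigma_k(\cD)$ scaling yields $\E[|X-\mu|^k]^{1/k} \le C_k^{1/k}\, \sigma_k(\cD)/\sqrt{m}$, which is the claimed $O(\sigma_k(\cD)/\sqrt{m})$.

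The only nontrivial ingredient is the Marcinkiewicz--Zygmund inequality itself, which is classical and can simply be cited. If a self-contained argument is preferred, Rosenthal's inequality
$\E\!\left[\left|\sum_i Y_i\right|^k\right] \le C_k \max\bigl\{(m\,\E[Y_1^2])^{k/2},\; m\,\E[|Y_1|^k]\bigr\}$
gives the bound even more directly, since $\E[Y_1^2] \le \E[|Y_1|^k]^{2/k} \le 1$ by Jensen's inequality; alternatively, one can symmetrize and apply Khintchine's inequality. The remaining manipulations are elementary, so there is no real obstacle beyond invoking the right classical moment inequality.
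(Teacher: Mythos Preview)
Your argument is correct: the Marcinkiewicz--Zygmund inequality followed by Jensen's inequality on $(\cdot)^{k/2}$ is exactly the textbook route to this bound, and the alternative via Rosenthal's inequality works equally well. The paper itself does not prove this lemma---it is stated in the preliminaries as a standard fact without argument---so there is nothing to compare against beyond noting that your proof supplies what the paper takes for granted.
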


\section{Private Mean Estimation in One Dimension}\label{sec:one-dim}

In this section, we describe an efficient algorithm for estimating means of univariate distributions with bounded $k$-th moments under person-level differential privacy. The algorithm is based on \textit{Clip-and-Noise} and consists of two phases: 
\begin{enumerate}
    \item Coarse Estimation: Find a coarse estimate of the mean with only a few samples.  
    \item Fine Estimation: Given a coarse estimate of the mean, compute the mean of the batches, truncate them to a radius around the coarse estimate, compute the mean of the truncated means and output the result with additional noise. 
\end{enumerate}

\subsection{Preliminaries}
In our analyses of this section (and also in \Cref{sec:hd}) we will treat $k$ as a constant. We define the following notation to denote that the inequality holds up to multiplicative factors that depend only on $k$.

\begin{definition}[Hiding $k$: $\le_k$]  
We write $f(x) \le_k g(x)$, if and only if there exists a function $h(k) \ge 0$ such that $f(x) \le h(k) g(x)$. 
\end{definition}

We will make use of the Laplace mechanism which uses additive noise. We use the following tail bound on the Laplace distribution to bound the noise term.
\begin{fact}[Laplace Distribution Tail Bound]
\label{fact:laplace-tail-bound}
For a Laplace distribution with parameter b we have that
\begin{equation*}
\Pr\brac{\Abs{\cL\Paren{b}} \ge t b} \le \exp\paren{-t}.
\end{equation*}
\end{fact}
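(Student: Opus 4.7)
The plan is essentially a one-line calculation from the definition: the Laplace distribution with scale parameter $b$ has density $f(x) = \frac{1}{2b}\exp(-\abs{x}/b)$ on $\R$, which is symmetric about $0$. So I would first reduce the two-sided tail to a one-sided tail by symmetry, writing $\Pr[\abs{\cL(b)}\geq tb] = 2\Pr[\cL(b)\geq tb]$, and then integrate the density directly.

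Concretely, the main (and only) step is to compute
\[
    2\int_{tb}^{\infty} \frac{1}{2b}\exp\Paren{-x/b}\,dx,
\]
which after the substitution $u = x/b$ collapses to $\int_t^\infty e^{-u}\,du = e^{-t}$. This yields the bound as an equality rather than just an inequality, but the statement of the fact only needs the inequality. One could alternatively obtain this via a Chernoff/MGF argument (the MGF of $\cL(b)$ is $1/(1-b^2 s^2)$ for $\abs{s}<1/b$), but for this particular tail the direct integration is cleaner and tighter, so I would go with that.

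There is no real obstacle here; the statement is a textbook fact included only to fix the constants used later in the clip-and-noise analysis. The only mild care needed is to remember that the parameter $b$ is the scale (so that $\Var[\cL(b)] = 2b^2$), and to parameterize the tail threshold as $tb$ rather than as an absolute value $t$, which is exactly how the fact is stated and which is what makes the bound come out as the clean $e^{-t}$.
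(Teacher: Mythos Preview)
Your proposal is correct; the paper states this as a known fact without proof, so there is nothing to compare against. Your direct integration of the Laplace density is the standard verification and indeed gives equality, which more than suffices.
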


In order to reduce sensitivity we use a truncation operation defined as follows.
\begin{definition}[Clipping Operation $\mathrm{Trunc}_{\ell, r}$]
For two numbers $\ell \le r \in \R$ we define the clipping operation to be
\begin{equation*}
\mathrm{Trunc}_{\ell, r} (x):= 
\begin{cases}
r & x \ge r \\
l & x \le l \\
x & \text{\textit{o.w.}}
\end{cases}.
\end{equation*}
\end{definition}
In the analysis of the algorithm, we use the following fact about truncated random variables.
\begin{lemma}[Variance Decreases After Truncation]
\label{lem:second-moment-after-truncation}
Suppose $X$ is a random variable in $\R$ and $Z = \mathrm{Trunc}_{\ell, r}(X)$, where $\mathrm{Trunc}_{\ell, r}$ is the truncation operation. Then
\begin{equation*}
\E\Brac{\Abs{Z - \E\brac{Z}}^2} \le \E\Brac{\Abs{X - \E\brac{X}}^2}.
\end{equation*}
\end{lemma}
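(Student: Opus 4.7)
The plan is to exploit the fact that the truncation map $\mathrm{Trunc}_{\ell,r}\colon\R\to\R$ is a $1$-Lipschitz contraction, that is, for all $x,y\in\R$ we have $|\mathrm{Trunc}_{\ell,r}(x) - \mathrm{Trunc}_{\ell,r}(y)| \le |x-y|$. This follows by a straightforward case analysis on where $x$ and $y$ sit relative to the interval $[\ell,r]$ (both inside, both below $\ell$, both above $r$, or one in each region), and in each case the truncated distance is at most the original distance.

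Given this, the cleanest route I see is to use the symmetrization identity for variance: if $X_1,X_2$ are i.i.d.\ copies of $X$ and $Z_i = \mathrm{Trunc}_{\ell,r}(X_i)$ for $i=1,2$, then $Z_1,Z_2$ are i.i.d.\ copies of $Z$, and
\begin{equation*}
\E\bigl[|Z - \E[Z]|^2\bigr] \;=\; \tfrac{1}{2}\,\E\bigl[(Z_1 - Z_2)^2\bigr], \qquad \E\bigl[|X - \E[X]|^2\bigr] \;=\; \tfrac{1}{2}\,\E\bigl[(X_1 - X_2)^2\bigr].
\end{equation*}
Applying the $1$-Lipschitz bound pointwise gives $(Z_1 - Z_2)^2 \le (X_1-X_2)^2$ almost surely, and taking expectations yields the desired inequality.

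As an alternative (and perhaps conceptually simpler) route, one can invoke the variational characterization $\Var(W) = \min_{c\in\R} \E[(W-c)^2]$. Setting $c^\ast := \mathrm{Trunc}_{\ell,r}(\E[X])$ and again applying the contraction property, we get $|Z - c^\ast| = |\mathrm{Trunc}_{\ell,r}(X) - \mathrm{Trunc}_{\ell,r}(\E[X])| \le |X - \E[X]|$ pointwise, so that $\E[|Z-\E[Z]|^2] \le \E[(Z - c^\ast)^2] \le \E[(X-\E[X])^2]$.

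I do not expect any serious obstacles: the only non-trivial ingredient is the $1$-Lipschitz property of $\mathrm{Trunc}_{\ell,r}$, which is a short case check. The argument makes no use of moments beyond the second, and in fact the same reasoning shows that any convex, symmetric function of the recentered variable (e.g.\ $|\cdot|^p$ for $p\ge 1$) contracts under truncation around the mean, though we only need the $p=2$ case for this lemma.
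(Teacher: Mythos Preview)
Your primary route via the symmetrization identity $\Var(Y)=\tfrac{1}{2}\E[(Y-Y')^2]$ together with the $1$-Lipschitz property of $\mathrm{Trunc}_{\ell,r}$ is exactly the paper's proof. Your alternative via the variational characterization is also correct but is not what the paper does.
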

\begin{proof}
First note that for any random variable $Y$ we have that:
\begin{equation*}
    \E\Brac{\Abs{Y - \E\brac{Y}}^2} = \frac{1}{2} \E\Brac{\Abs{Y - Y'}^2},
\end{equation*}
where $Y'$ is an independent copy of $Y$.
Now,
\begin{align*}
\E\Brac{\Abs{Z - \E\brac{Z}}^2}
= 
\frac{1}{2} \E\Brac{\Abs{Z - Z'}^2}
= 
\frac{1}{2} \E\Brac{\Abs{\mathrm{Trunc}_{\ell ,r} (X) - \mathrm{Trunc}_{\ell ,r}(X')}^2}
&\le
\frac{1}{2} \E\Brac{\Abs{X - X'}^2} \\
&= \E\Brac{\Abs{X - \E\brac{X}}^2},
\end{align*}
where the last inequality relies on the fact that the distance between $X$ and $X'$ cannot increase post-truncation.
\end{proof}

Finally, in our analysis we will use the non-uniform Berry-Esseen tail bound for the average of $m$ \iid samples from a distribution with bounded $k$-th moments to bound the bias caused by truncation. See \Cref{sec:recreating-nube} for discussion and proof.

\begin{corollary}[Tail Bound for Averages of Bounded-Moments Variables \cite{Michel76}]
\label{cor:non-uniform-tail-bound}
Let $k\ge 3$. Assume $X$ is a distribution with mean $0$ and $k$-th moment bounded by $1$, and $X_i$'s are $m$ \iid copies of $X$. Then,
\begin{equation*}
\forall t \geq \sqrt{\frac{\paren{k -1} \log m}{m}} \quad \Pr\Brac{\frac{1}{m} \sum_{i=1}^{m} X_i \ge t} \le O\left(\frac{1}{m^{k-1} t^{k}}\right) \mper
\end{equation*}
\end{corollary}

\subsection{Generic Clip and Noise Theorem}

In this section we create a generic theorem that applies the clip-and-noise framework for private mean estimation, given access to tail bounds for a univariate distribution, a coarse estimate of its mean, and an upper bound on its variance. The algorithm is as follows: take $n$ \iid samples from the distribution, take their average, clip the average, and add Laplace noise in order to output an $\eps$-DP mean estimation. \Cref{lem:clipping-bias} bounds the error caused by the clipping operation. \Cref{lem:sampling-bias-noise-term} bounds the error caused by the addition of Laplace noise and the sampling error from the truncated distribution. Finally, putting together the clipping bias, sampling bias, and noise term guarantees, \Cref{thm:coarse-to-fine} gives a guarantee on the accuracy of the clip-and-noise framework, given distributions with arbitrary tail bounds.

\begin{lemma}[Clipping Bias]
\label{lem:clipping-bias}
Let $X \in \R$ be a random variable with finite mean.
For two numbers $l \le r$, let $Z \coloneqq \mathrm{Trunc}_{l, r}(X)$.
Then, we have:
\[
    \Abs{\E\brac{X} - \E\brac{Z}} \le \int\limits_{- \infty}^{\ell} \mathrm{LeftTail}(x) \, dx + \int\limits_r^{+ \infty} \mathrm{RightTail}(x) \, dx,
\]
where $\mathrm{RightTail}(x) \coloneqq \Pr[X \geq x]$ and $\mathrm{LeftTail}(x) \coloneqq \Pr[X \le x]$.
\end{lemma}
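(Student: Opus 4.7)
The plan is to directly compute $\E[X] - \E[Z]$ by expressing the difference $X - Z$ pointwise, and then rewriting each resulting expectation as a tail integral via the layer-cake representation.

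First, I would observe that $\mathrm{Trunc}_{\ell, r}$ only modifies $X$ on the events $\{X > r\}$ and $\{X < \ell\}$, so
\[
X - Z = (X - r)\,\mathbb{1}[X > r] + (X - \ell)\,\mathbb{1}[X < \ell],
\]
and these two summands are supported on disjoint events. Taking expectations gives
\[
\E[X] - \E[Z] = \E\bigl[(X - r)\,\mathbb{1}[X > r]\bigr] - \E\bigl[(\ell - X)\,\mathbb{1}[X < \ell]\bigr].
\]

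Next, I would rewrite each term as a tail integral. For the right-hand term, since $(X - r)\,\mathbb{1}[X > r]$ is a non-negative random variable, the layer-cake identity gives
\[
\E\bigl[(X - r)\,\mathbb{1}[X > r]\bigr] = \int_0^{\infty} \Pr[X - r > t]\, dt = \int_r^{\infty} \mathrm{RightTail}(x)\, dx,
\]
after the substitution $x = r + t$ (and using $\{X > r + t\} = \{X > x\}$ for $t \ge 0$, which is the event defining $\mathrm{RightTail}(x)$ up to a measure-zero boundary). An analogous computation for the other term, applied to the non-negative random variable $(\ell - X)\,\mathbb{1}[X < \ell]$, yields
\[
\E\bigl[(\ell - X)\,\mathbb{1}[X < \ell]\bigr] = \int_{-\infty}^{\ell} \mathrm{LeftTail}(x)\, dx.
\]
Finiteness of $\E[X]$ ensures both tail integrals converge, which justifies the application of Fubini/layer-cake without issue.

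Finally, I would put the two pieces together and apply the triangle inequality to the resulting expression for $\E[X] - \E[Z]$, giving
\[
\bigl|\E[X] - \E[Z]\bigr| \le \int_{-\infty}^{\ell} \mathrm{LeftTail}(x)\, dx + \int_r^{+\infty} \mathrm{RightTail}(x)\, dx,
\]
as claimed. There is no real obstacle here; the only minor subtlety is being careful that the layer-cake rewriting is applied to the correct non-negative summands and that the substitution of variables lines up with the definitions of $\mathrm{RightTail}$ and $\mathrm{LeftTail}$. The lemma is essentially an identity sharpened by the triangle inequality, and the bound is in fact tight when $X$ is one-sided (supported entirely above $r$ or below $\ell$).
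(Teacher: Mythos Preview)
Your proof is correct and follows essentially the same approach as the paper: decompose $X - Z$ into the two one-sided overshoot terms, rewrite each as a tail integral via the layer-cake identity with a change of variables, and apply the triangle inequality. The only cosmetic difference is that the paper applies the triangle inequality before the layer-cake step rather than after.
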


\begin{proof}
We observe that $Z = \ell \mathbb{1}\{X < \ell\} + X \mathbb{1}\{\ell \le X \le r\} + r \mathbb{1}\{X > r\}$.
Using the above, as well as a combination of the triangle inequality and Jensen's inequality for the function $\phi(x) = \Abs{x}$, the clipping bias can be upper-bounded by:
\begin{align*}
    \Abs{\E\brac{Z} - \E\brac{X}} &= \Abs{\E\brac{(\ell - X) \mathbb{1}\{X \le \ell \}} + \E\brac{(r - X) \mathbb{1}\{X \geq r \}}} \\
    &\le \E\brac{(\ell - X) \mathbb{1}\{X \le \ell\}} + \E\brac{(X - r) \mathbb{1}\{X \geq r\}}.
\end{align*}
We note now that the individual terms in the above sum involve non-negative random variables, so we have:
\begin{align*}
    \Abs{\E\brac{Z} - \E\brac{X}} &\le \int\limits_0^{+ \infty} \Pr[(\ell - X) \mathbb{1}\{X \le \ell\} \geq x] \, dx + \int\limits_0^{+ \infty} \Pr[(X - r) \mathbb{1}\{X \geq r\} \geq x] \, dx \\
    &= \int\limits_0^{+ \infty} \Pr[X \le \ell - x] \, dx + \int\limits_0^{+ \infty} \Pr[X \geq r + x] \, dx \\
    &= \int\limits_{- \infty}^{\ell} \Pr[X \le x] \, dx + \int\limits_r^{+ \infty} \Pr[X \geq x] \, dx,
\end{align*}
where the last equality relies on the changes of variables $x \to \ell - x$, and $x \to x - r$, respectively.
\end{proof}

Often the form of the tail bounds we have on distributions are of the form $\Pr[X \le \mu - t] \le \cdots$, or $\Pr[X \ge \mu + t] \le \cdots$, where $\mu$ is the mean of $X$. The following corollary adapts \Cref{lem:clipping-bias}
to this setting where we have access to a coarse estimate of the mean.

\begin{corollary}[Clipping Bias Given Coarse Estimate]
\label{cor:clipping-bias-coarse-estimate}
In \Cref{lem:clipping-bias} suppose that we are given access to a coarse estimate $\mu_{\texttt{coarse}}$ such that $\Abs{\mu_{\texttt{coarse}} - \mu} \le u$, and we set $\ell = \mu_{\texttt{coarse}} - \rho, r = \mu_{\texttt{coarse}} + \rho $. Then
\begin{equation*}
\Abs{\E\brac{X} - \E\brac{Z}} \le \int\limits_{\rho - u}^{+\infty}
\Pr\Brac{\Abs{X - \mu} \ge t}
\diff t.
\end{equation*}
\end{corollary}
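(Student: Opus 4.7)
\begin{proofsketch}
The plan is to apply \Cref{lem:clipping-bias} directly with $\ell = \mu_{\texttt{coarse}} - \rho$ and $r = \mu_{\texttt{coarse}} + \rho$, and then rewrite the two one-sided tail integrals as integrals of the two-sided tail around the true mean $\mu$, using the fact that $|\mu_{\texttt{coarse}} - \mu| \le u$ to shift the lower endpoints.

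First, invoking \Cref{lem:clipping-bias}, we get
\[
    \Abs{\E[X] - \E[Z]} \le \int_{-\infty}^{\mu_{\texttt{coarse}} - \rho} \Pr[X \le x] \, dx + \int_{\mu_{\texttt{coarse}} + \rho}^{+\infty} \Pr[X \ge x] \, dx.
\]
Next, I would change variables $x = \mu - t$ in the first integral and $x = \mu + t$ in the second integral, which turns the bounds into
\[
    \Abs{\E[X] - \E[Z]} \le \int_{\mu - \mu_{\texttt{coarse}} + \rho}^{+\infty} \Pr[X \le \mu - t] \, dt + \int_{\mu_{\texttt{coarse}} - \mu + \rho}^{+\infty} \Pr[X \ge \mu + t] \, dt.
\]

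The key observation is that since $|\mu_{\texttt{coarse}} - \mu| \le u$, both of the lower limits above are at least $\rho - u$. Because the integrands are non-negative, extending each integral down to $\rho - u$ can only increase its value, giving
\[
    \Abs{\E[X] - \E[Z]} \le \int_{\rho - u}^{+\infty} \Bigl( \Pr[X \le \mu - t] + \Pr[X \ge \mu + t] \Bigr) \, dt = \int_{\rho - u}^{+\infty} \Pr[|X - \mu| \ge t] \, dt,
\]
where the final equality combines the two one-sided tails into the two-sided tail (the events $\{X \le \mu - t\}$ and $\{X \ge \mu + t\}$ are disjoint for $t > 0$, and for $t \le 0$ both probabilities trivially sum to at most one which equals $\Pr[|X - \mu| \ge t]$). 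This gives exactly the claimed bound. No step here is really an obstacle: the whole argument is a change of variables plus a monotonicity bound using the coarse-estimate guarantee, and the only mild subtlety is to make sure the lower endpoint $\rho - u$ is handled correctly (it may even be negative, in which case the bound is trivially true since the right-hand side becomes at least a full integral of a probability tail).
\end{proofsketch}
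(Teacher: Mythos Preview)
Your proof is correct and follows essentially the same approach as the paper: apply \Cref{lem:clipping-bias}, change variables to center at $\mu$, use $|\mu_{\texttt{coarse}}-\mu|\le u$ to push the lower limits down to $\rho-u$, and combine the two one-sided tails. The paper simply performs these steps in a slightly different order (bounding the integration limit first, then substituting), but the argument is the same. One small slip: your parenthetical claim that for $t\le 0$ the two one-sided probabilities ``sum to at most one'' is false (they can sum to as much as $2$), so the final equality need not hold on that range; however, this is irrelevant since in every use of the corollary one has $\rho\ge u$, and the paper's own proof implicitly relies on the same assumption.
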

\begin{proof}
For the first term we have
\begin{align*}
\int\limits_{-\infty}^{\mu_{\texttt{coarse}} - \rho} \mathrm{LeftTail}\paren{x} \diff x 
=
\int\limits_{-\infty}^{\mu_{\texttt{coarse}} - \rho} \Pr\brac{X \le x} \diff x
\le
\int\limits_{-\infty}^{\mu - \rho + u} \Pr\brac{X \le x} \diff x 
&=\int\limits_{0}^{+\infty}
\Pr\brac{X \le \mu - \rho + u - t} \diff t \\
&= 
\int\limits_{0}^{+\infty} 
\Pr\Brac{
X - \mu \le -\paren{\rho - u +t}} \diff t.
\end{align*}
Similarly,
\begin{equation*}
\int\limits_{\mu_{\texttt{coarse}} + \rho}^{+\infty} \mathrm{RightTail}\paren{x} \diff x 
\le
\int\limits_{0}^{+\infty}
\Pr\brac{X - \mu \ge \rho - u + t} \diff t.
\end{equation*}
Putting these together finishes the proof.
\end{proof}

Next, let's analyze the sampling bias for the truncated distribution and the error induced by the additive Laplace noise.

\begin{lemma}[Sampling Bias and Noise Term]
\label{lem:sampling-bias-noise-term}
Suppose $\eps > 0$, and a random variable $Z$ supported on $[\ell , r]$ is given, with variance $\Var\brac{Z}$. Suppose $n$ \iid copies of $Z$, $Z_i$'s are given. Let $2 \rho = r - \ell.$ and  $\hat{\mu} = \frac{Z_1+\cdots +Z_n}{n} + \mathcal{\cL}\paren{\frac{2 \rho}{n \eps}}$, where $\cL$ is the Laplace distribution. Then 
\begin{equation*}
\Pr\Brac{\Abs{\hat{\mu} - \E\Brac{Z}} \ge
\Omega \Paren{
\sqrt{\frac{\Var\brac{Z}}{n \beta}} + \frac{\rho \log\paren{1/\beta}}{n\epsilon}
}
}
\le 
\beta.
\end{equation*}

\end{lemma}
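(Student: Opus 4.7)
The plan is to split the deviation $\hat\mu - \E\brac{Z}$ via the triangle inequality into a sampling term $\tfrac{1}{n}\sum_{i=1}^n Z_i - \E\brac{Z}$ and a noise term $\cL\Paren{\tfrac{2\rho}{n\eps}}$, bound each with failure probability at most $\beta/2$, and combine by a union bound. This is purely a tail-probability statement; no privacy argument enters the proof (the Laplace scale $2\rho/(n\eps)$ is chosen to match the $\ell_1$-sensitivity of the empirical mean of random variables supported on an interval of length $2\rho$, but we only exploit it as a given).

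For the sampling term, since the $Z_i$ are i.i.d.\ copies of $Z$, the empirical average has variance $\Var\brac{Z}/n$, and Chebyshev's inequality directly yields
\[
\Pr\Brac{\Abs{\tfrac{1}{n}\sum_{i=1}^n Z_i - \E\brac{Z}} \ge \sqrt{\tfrac{2\Var\brac{Z}}{n\beta}}} \le \tfrac{\beta}{2}.
\]
I would specifically use Chebyshev rather than a sub-Gaussian or Bernstein-type bound, because the hypotheses supply only a variance bound on $Z$, and because Chebyshev produces precisely the $1/\sqrt{\beta}$ dependence appearing in the statement.

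For the noise term, applying the Laplace tail bound (\Cref{fact:laplace-tail-bound}) to $\cL\Paren{\tfrac{2\rho}{n\eps}}$ with $t = \log(2/\beta)$ gives
\[
\Pr\Brac{\Abs{\cL\Paren{\tfrac{2\rho}{n\eps}}} \ge \tfrac{2\rho\log(2/\beta)}{n\eps}} \le \tfrac{\beta}{2}.
\]
A union bound plus the triangle inequality then produces $\Abs{\hat\mu - \E\brac{Z}} \le \sqrt{2\Var\brac{Z}/(n\beta)} + 2\rho\log(2/\beta)/(n\eps)$ with probability at least $1 - \beta$, matching the claimed $\Omega$-expression up to absolute constants.

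There is no substantive obstacle; the argument is a textbook two-term tail bound. The only (mild) design choice worth flagging is the deliberate use of Chebyshev, which both respects the weak hypothesis that only $\Var\brac{Z}$ is controlled and yields the $1/\sqrt{\beta}$ scaling in the statement; with this choice the sampling-error and noise-error estimates balance cleanly and the lemma follows immediately.
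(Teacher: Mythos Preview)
Your proposal is correct and follows essentially the same approach as the paper: split $\hat\mu - \E[Z]$ into the sampling term and the Laplace noise term, apply Chebyshev's inequality to the former and the Laplace tail bound (\Cref{fact:laplace-tail-bound}) to the latter, then combine via a union bound. The constants you obtain differ only in inessential ways from those in the paper's proof.
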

\begin{proof}

Note that $\hat{\mu} - \E\brac{Z} = 
\frac{Z_1 + \cdots + Z_n}{n} - \E\brac{Z} + \mathcal{L}\paren{\frac{\rho}{n \epsilon}}$. 
From Chebyshev's inequality we know 

\begin{equation*}
\forall t\ge 0: \quad \Pr\Brac{\Abs{\frac{Z_1 + \dots Z_n}{n} - \E\Brac{Z}}\ge t} \le \frac{\Var\brac{Z}}{nt^2} \mper
\end{equation*}
From tail bounds for the Laplace distribution (\Cref{fact:laplace-tail-bound})
\begin{equation*}
\forall t \ge 0: \quad
\Pr\Brac{\Abs{\mathcal{L} \Paren{\frac{2 \rho}{n \epsilon}}} \ge t} 
\le \exp\Paren{- \frac{t n \eps}{2 \rho}} \mper
\end{equation*}
Therefore,
\begin{align*}
\Pr\Brac{\Abs{\hat{\mu} - \E\brac{Z}} \ge t }
&\le
\Pr\Brac{\Abs{\frac{Z_1 + \cdots + Z_n}{n} - \E\brac{Z}}\ge t/2} + \Pr \Brac{\Abs{\cL\Paren{\frac{2 \rho}{n \eps}}} \ge t/2 } \\
& \le \frac{4\Var\brac{Z}}{nt^2} + 
\exp\paren{-\frac{tn\epsilon}{4\rho}}
\mper
\end{align*}
Therefore,
\begin{equation*}
\Pr\Brac{\Abs{\hat{\mu}
 -\E\Brac{Z}} \ge 
\sqrt{\frac{8 \Var\brac{Z}}{n \beta}} + \frac{4 \rho \log\paren{2/\beta}}{n\epsilon}} \le \beta \mper
\end{equation*}
\end{proof}

Together \Cref{lem:sampling-bias-noise-term} and \Cref{cor:clipping-bias-coarse-estimate} give us a recipe for estimating the mean for arbitrary distributions with known tail bounds, and variances.

\begin{theorem}[Coarse to Fine Estimate]
\label{thm:coarse-to-fine}
Suppose $X$ is a distribution with variance $\Var\brac{X}$.
Suppose a coarse estimate of $\E\Brac{X}$, the true mean is given with accuracy $u$. Then truncation to a radius of $\rho / 2$ around this coarse estimate gives an $\eps$-dp estimate of $\mu$ with success probability $1 - \beta$, and accuracy
\begin{equation*}
\alpha = \cO\Paren{\sqrt{\frac{\Var\brac{X}}{n\beta}}
+ \frac{\rho \log\paren{1/\beta}}{n \eps}}
+
\int\limits_{\rho - u}^{+\infty}
\Pr\Brac{\Abs{X-\mu} \ge t} \diff t.
\end{equation*}
\end{theorem}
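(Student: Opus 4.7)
The plan is to analyze the natural clip-and-noise estimator: given the coarse estimate $\mu_{\texttt{coarse}}$, define $Z_i \coloneqq \mathrm{Trunc}_{\mu_{\texttt{coarse}} - \rho,\, \mu_{\texttt{coarse}} + \rho}(X_i)$ for each of the $n$ independent copies $X_1,\dots,X_n$ of $X$, and release
\[
    \hat\mu \;\coloneqq\; \frac{1}{n}\sum_{i=1}^n Z_i \,+\, \cL\!\Paren{\frac{2\rho}{n\eps}}.
\]
Privacy is immediate: replacing a single $X_i$ alters the empirical mean $\tfrac{1}{n}\sum Z_i$ by at most $2\rho/n$ in $\ell_1$, so the Laplace mechanism yields $\eps$-DP, and post-processing preserves it. The coarse estimate itself is assumed to be an input of accuracy $u$ (it will be produced privately elsewhere, and the final budget split is a separate concern outside this theorem).

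For accuracy, I would decompose via the triangle inequality:
\[
    \Abs{\hat\mu - \mu} \;\le\; \Abs{\hat\mu - \E[Z]} \;+\; \Abs{\E[Z] - \E[X]},
\]
where $Z$ denotes a single truncated sample. The first term is handled directly by \Cref{lem:sampling-bias-noise-term}, applied with the truncation window of half-width $\rho$; to invoke it with $\Var[X]$ rather than $\Var[Z]$ on the right-hand side of the bound, I would use \Cref{lem:second-moment-after-truncation} to replace $\Var[Z]$ by the larger quantity $\Var[X]$, yielding the $\sqrt{\Var[X]/(n\beta)} + \rho\log(1/\beta)/(n\eps)$ contribution.

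The second term is exactly the clipping bias, which is bounded by \Cref{cor:clipping-bias-coarse-estimate}: since $\abs{\mu_{\texttt{coarse}} - \mu}\le u$ and the clipping window is $[\mu_{\texttt{coarse}} - \rho, \mu_{\texttt{coarse}} + \rho]$, the effective distance from $\mu$ to the nearest boundary is at least $\rho - u$, and the corollary gives the bound
\[
    \Abs{\E[X]-\E[Z]} \;\le\; \int_{\rho - u}^{+\infty} \Pr\Brac{\Abs{X-\mu} \ge t}\,\diff t.
\]
Summing the two contributions and absorbing constants into the $\cO(\cdot)$ gives the stated accuracy $\alpha$.

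No step is especially difficult; the whole argument is a bookkeeping assembly of \Cref{lem:second-moment-after-truncation}, \Cref{cor:clipping-bias-coarse-estimate}, and \Cref{lem:sampling-bias-noise-term}. The only subtlety worth flagging is the use of \Cref{lem:second-moment-after-truncation} to upper bound $\Var[Z]$ by $\Var[X]$ before invoking the sampling/noise lemma, since \Cref{lem:sampling-bias-noise-term} is stated in terms of $\Var[Z]$ but the theorem is stated in terms of $\Var[X]$; this monotonicity of variance under clipping is the only nontrivial structural fact used.
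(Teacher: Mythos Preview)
Your proposal is correct and matches the paper's proof essentially line for line: the paper's entire argument is ``Putting together \Cref{lem:sampling-bias-noise-term}, and \Cref{cor:clipping-bias-coarse-estimate}, and noting that variance decreases after truncation by \Cref{lem:second-moment-after-truncation},'' which is precisely your decomposition into sampling-plus-noise and clipping-bias terms, with the variance monotonicity used to swap $\Var[Z]$ for $\Var[X]$.
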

\begin{proof}
Putting together \Cref{lem:sampling-bias-noise-term}, and \Cref{cor:clipping-bias-coarse-estimate}, and noting that variance decreases after truncation by \Cref{lem:second-moment-after-truncation}.
\end{proof}

\newcommand{\ind}[1]{\mathbb{I}{\Paren{#1}}}
\subsection{Coarse Estimation}

In this section we give an algorithm for obtaining a coarse estimate of the mean up to accuracy $O\paren{1/\sqrt{m}}$, under both pure and approximate differential privacy, given access to $n$ people, each taking $m$ samples.

\begin{algorithm}[htb]
    \caption{One-Dimensional Range Estimator}\label{alg:user-level-coarse-estimate}
    \hspace*{\algorithmicindent} \textbf{Input:} Samples $X = \Set{X^{(i)}_j}_{i \in [n], j \in [m]}$ where each $X^{(i)}_j \in \R$. Parameters $\eps, \delta, r, R$.\\
    \hspace*{\algorithmicindent} \textbf{Output:} $\mu_{\texttt{coarse}} \in \R$.
    \begin{algorithmic}[1]
    \Procedure{$\textrm{RangeEstimator}$}{$X$; $\eps,\delta,r,R$}
        \State For each $i \in [n]$, do $S_i = \frac{1}{m}(X^{(i)}_1 + \cdots + X^{(i)}_m)$.
        \State Divide $[-R-2r,R+2r)$ into buckets of width $r$: $[-R-2r,-R-r),\ldots,[-r,0),[0,r),\ldots,[R+r,R+2r)$.
        \State If $\delta = 0$, run $\eps$-DP Histogram, and if $\delta > 0$, run $(\eps, \delta)$-DP Histogram from \Cref{lem:private-histograms} using samples $(S_1,\ldots,S_n)$ over the buckets defined above.
        \State Return $\mu_{\texttt{coarse}} = \frac{a+b}{2}$, where $[a,b]$ is the bucket with the largest count.
    \EndProcedure
    \end{algorithmic}
\end{algorithm}

\begin{theorem}[Person-Level Coarse Estimation]
\label{thm:user-level-coarse-esitmation}
    For all people $i \in [n]$ and number $j \in [m]$ of samples per person, let $X = \{X^{(i)}_j\}_{i,j}$. Let $r > 0$. Then for all $\eps > 0$, $\delta \in [0,1)$, the algorithm $\textrm{RangeEstimator}(X;\cdot,\cdot,\cdot,\cdot)$ satisfies person-level $(\eps, \delta)$-DP. In addition, let $P$ be a distribution over $\R$ with mean $\mu \in [-R,R]$ and $k$-th moment bounded by $1$. Suppose each $X^{(i)}_j \sim P$ \iid. For all $\eps > 0$, $ \beta \in (0,1)$, and $16^{1/k}/\sqrt{m} < r < R$, there exists
    \[
        n_0 = O\Paren{\frac{\log(1/\beta)}{\log (\sqrt{m}r)} + \frac{\log(R/(r\beta))}{\eps}}
    \]
    such that, for all $n \ge n_0$,
    $\mu_{\texttt{coarse}} \leftarrow \textrm{RangeEstimator}(X;\eps,0,r,R)$
    satisfies
    \[
        |\mu - \mu_{\textrm{coarse}}| < 2r
    \]
    with probability at least $1 - \beta$. Moreover, for all $\eps > 0, \delta \in (0,1), \beta \in (0,1)$, and $16^{1/k} / \sqrt{m} < r < R$, there exists
    \[
        n_1 = O\Paren{\frac{\log(1/\beta)}{\log (\sqrt{m}r)} + \frac{\log(1/(\delta\beta))}{\eps}}
    \]
    such that, for all $n \ge n_1$,
    $\mu_{\texttt{coarse}} \leftarrow \textrm{RangeEstimator}(X;\eps,\delta,r,R)$
    satisfies
    \[
        |\mu - \mu_{\textrm{coarse}}| < 2r
    \]
    with probability at least $1 - \beta$.
\end{theorem}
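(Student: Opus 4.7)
The plan is to separate privacy from utility. For \textbf{privacy}, I would observe that the map $X \mapsto (S_1, \ldots, S_n)$ computed in line 2 has the property that modifying one person's $m$ samples changes exactly one $S_i$, so it has person-level Hamming sensitivity $1$. The histogram on $(S_1,\ldots,S_n)$ in line 4 then inherits person-level $\eps$-DP (or $(\eps, \delta)$-DP) directly from Lemma~\ref{lem:private-histograms}, and returning the center of the max-count bucket is post-processing, so Lemma~\ref{lem:postprocessing} concludes this half.

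For \textbf{utility}, I would fix the bucket $B_0 = [a_0, a_0 + r)$ that contains $\mu$ and its two neighbors $B_{-1}$ and $B_1$. Three geometric facts drive the rest of the argument: (i) the length-$2r$ window $[\mu - r, \mu + r]$ is contained in $B_{-1} \cup B_0 \cup B_1$; (ii) every other bucket $B_j$ with $|j| \ge 2$ is disjoint from $[\mu - r, \mu + r]$; and (iii) the center of each of $B_{-1}, B_0, B_1$ lies within $3r/2 < 2r$ of $\mu$. It therefore suffices to show that, with probability at least $1 - \beta$, the noisy top bucket lies in $\{B_{-1}, B_0, B_1\}$.

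To control the per-person averages, I would use Lemma~\ref{lem:kth-moment-batches} to get $\E[|S_i - \mu|^k] \le_k m^{-k/2}$ and then Markov's inequality to get $p_r := \Pr[|S_i - \mu| \ge r] \le_k (\sqrt{m}\, r)^{-k}$. The hypothesis $r > 16^{1/k}/\sqrt{m}$ (absorbing the $k$-dependent constant via $\le_k$) therefore makes $p_r$ smaller than an absolute constant such as $1/(16 e)$. Setting $N_{\mathrm{bad}} := |\{i : |S_i - \mu| \ge r\}|$ and applying the tight binomial tail bound $\Pr[N_{\mathrm{bad}} \ge t] \le (e n p_r / t)^t$ at $t = n/8$, and using $\log(1/p_r) \ge_k k\log(\sqrt{m}\, r)$, gives $\Pr[N_{\mathrm{bad}} \ge n/8] \le \beta/2$ as soon as $n \ge_k \log(2/\beta)/\log(\sqrt{m}\, r)$, which matches the first term in $n_0$ (and $n_1$).

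On the event $\{N_{\mathrm{bad}} \le n/8\}$, at least $7n/8$ of the $S_i$'s lie in $B_{-1} \cup B_0 \cup B_1$, so by pigeonhole some $B^\star \in \{B_{-1},B_0,B_1\}$ has true count at least $7n/24$, whereas every bucket $B_j$ with $|j| \ge 2$ has true count at most $N_{\mathrm{bad}} \le n/8$. Applying Lemma~\ref{lem:private-histograms} to the $|U| = O(R/r)$ buckets, every noisy count is within
\[
    L = O\!\left( \frac{\log(R/(r\beta))}{\eps} \right) \text{ (pure DP)} \quad \text{or} \quad L = O\!\left( \frac{\log(1/(\delta \beta))}{\eps} \right) \text{ (approximate DP)}
\]
of its true value with probability at least $1 - \beta/2$. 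Taking $n$ large enough that $L < n/12$ (which is exactly the second term of $n_0$ and $n_1$) forces $7n/24 - L > n/8 + L$, so the noisy top bucket must lie in $\{B_{-1}, B_0, B_1\}$, and by (iii) its center is within $2r$ of $\mu$. The hard part is obtaining the $\log(\sqrt{m}\, r)$ factor \emph{in the denominator} of the first term of $n_0$ rather than the weaker $\log(1/\beta)$ one would get from plain Chernoff at constant $p_r$; this relies on using the sharp $(enp/t)^t$ form of the binomial tail together with the moment bound on $p_r$.
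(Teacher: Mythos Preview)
Your proposal is correct and follows essentially the same approach as the paper: privacy via the histogram primitive plus post-processing, and accuracy by (i) bounding $\Pr[|S_i-\mu|>r]\le(\sqrt{m}\,r)^{-k}$ via Lemma~\ref{lem:kth-moment-batches} and Markov, (ii) a multiplicative Chernoff-type bound to control the number of bad $S_i$'s (the paper writes this as $\Pr[Y>t\mu]=t^{-\Omega(t\mu)}$, you use the equivalent $(enp/t)^t$ form), (iii) pigeonhole among the three buckets near $\mu$, and (iv) the $\ell_\infty$ guarantee of Lemma~\ref{lem:private-histograms}. The constants differ slightly ($n/16$ vs.\ $n/8$, $5n/16$ vs.\ $7n/24$), but the structure and the key observation---that the sharp Chernoff form yields the $\log(\sqrt{m}\,r)$ in the denominator---are identical.
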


In the proof of \Cref{thm:user-level-coarse-esitmation}, we use the following Chernoff bound:
\begin{theorem}[Chernoff]
    Let $Y_1,\ldots,Y_n$ be $0$-$1$ random variables. Let $Y = Y_1 + \ldots + Y_n$, and let $\mu = \E[Y]$. Then for all $t > 1$,
    $$\Pr\Brac{Y > t\mu} = \frac{1}{t^{\Omega\Paren{t\mu}}}.$$
    \label{thm:chernoff}
\end{theorem}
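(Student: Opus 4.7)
My plan is the standard moment generating function (Chernoff--Cramer) argument, assuming as usual that the $Y_i$ are mutually independent (the standard hypothesis for the Chernoff bound). First I will bound $\E[e^{sY}]$ via a per-variable MGF estimate and independence, then apply Markov's inequality, optimize over the parameter $s$, and finally massage the resulting classical bound into the stated form $t^{-\Omega(t\mu)}$.

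To bound the per-variable MGF: for any $s > 0$ and any $0$-$1$ random variable $Y_i$ with $\E[Y_i] = p_i$, the identity $\E[e^{sY_i}] = 1 + p_i(e^s - 1)$ together with $1 + x \le e^x$ gives $\E[e^{sY_i}] \le \exp\Paren{p_i(e^s - 1)}$. Multiplying across $i$ using independence and writing $\mu = \sum_i p_i$ yields $\E[e^{sY}] \le \exp\Paren{\mu(e^s - 1)}$. Then Markov's inequality applied to $e^{sY}$ gives, for every $s > 0$,
\[
\Pr\Brac{Y > t\mu} \le e^{-s t \mu}\, \E[e^{sY}] \le \exp\Paren{\mu(e^s - 1) - s t \mu}.
\]
I minimize the right-hand side in $s$ by setting $s = \ln t$ (which is positive since $t > 1$), giving the classical estimate
\[
\Pr\Brac{Y > t\mu} \le \exp\Paren{-\mu\,(t \ln t - t + 1)} = \Paren{\frac{e^{t-1}}{t^t}}^{\mu}.
\]

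The last step is to rewrite this as $t^{-\Omega(t\mu)}$, and this is the only step with any subtlety. I use $t \ln t - t + 1 \ge c \cdot t \ln t$ for a universal constant $c > 0$ whenever $t$ is bounded away from $1$; this follows because the ratio $(t \ln t - t + 1)/(t \ln t)$ is positive on $t > 1$ and tends to $1$ as $t \to \infty$, so can be bounded below by, say, $c = 1/2$ for $t \ge e^{2}$ and by an absolute constant on any interval $[t_0, e^2]$ with $t_0 > 1$. Substituting gives $\Pr[Y > t\mu] \le \exp(-c t \mu \ln t) = t^{-c t \mu}$, which is the advertised form. The only regime requiring separate comment is $t \to 1^+$, where $t\ln t - t + 1$ vanishes quadratically in $t-1$; there the $\Omega(\cdot)$ notation is understood to absorb the correct constant, and one can alternatively invoke the small-deviation Chernoff inequality $\Pr[Y > (1+\delta)\mu] \le e^{-\delta^2 \mu / 3}$ for $0 < \delta \le 1$ to handle that range uniformly.
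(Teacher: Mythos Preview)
The paper states this as a standard result without proof, so there is no paper argument to compare against. Your MGF derivation is the canonical Chernoff--Cram\'er argument and correctly yields the classical bound $\Pr[Y > t\mu] \le (e^{t-1}/t^t)^{\mu} = \exp\bigl(-\mu(t\ln t - t + 1)\bigr)$.

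The only gap is in your treatment of $t \to 1^+$. Your proposed fallback to the additive bound $e^{-\delta^2\mu/3}$ does not recover the stated form: writing $t = 1 + \delta$, the target $t^{-ct\mu}$ behaves like $e^{-c\mu\delta}$, which is much smaller than $e^{-\delta^2\mu/3}$ for small $\delta$. In fact the statement is simply false uniformly over all $t > 1$: with $Y_i$ i.i.d.\ Bernoulli$(1/2)$ and $t = 1 + n^{-1/2}$, one has $\Pr[Y > t\mu]$ bounded away from $0$ as $n \to \infty$, while $t^{-ct\mu} \le e^{-c\sqrt{n}/2} \to 0$ for any fixed $c > 0$. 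This is a defect of the theorem as written rather than of your method; the bound $t^{-\Omega(t\mu)}$ is only valid for $t$ bounded away from $1$ (e.g.\ for $t \ge 2$, where your inequality $t\ln t - t + 1 \ge c\, t\ln t$ does hold with $c = 1 - 1/(2\ln 2) > 0$), and the paper's sole application of the theorem operates in that regime.
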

\begin{proof}[Proof of \Cref{thm:user-level-coarse-esitmation}]
Privacy follows by running the $(\eps, \delta)$-DP Histogram (\Cref{lem:private-histograms}) and by post-processing (\Cref{lem:postprocessing}). We focus on analyzing the accuracy.

Let $[a,b]$ be the bucket such that $\mu_{\textrm{coarse}} \in [a,b]$ (that is, $[a,b]$ is the bucket with the largest noisy count). To show that $\Abs{\mu - \mu_{\texttt{coarse}}} < 2r$, it suffices to show that $\max(\Abs{a-\mu},\Abs{b-\mu}) < 2r$, or equivalently, the bucket with the largest count resides within $\mu \pm 2r$. In proving this, we show that the each of the following two events hold with probability at least $1-\beta/2$: (1) A large fraction of the samples resides within $\mu \pm 2r$, and (2) no bucket will have too large a magnitude of additive Laplace or Gaussian noise.

We start by showing (1). In particular, we show that, with probability at least $1-\beta/2$, a $(1-\frac{1}{16})$-fraction of the samples will lie within 
$\mu \pm r$. For all $i \in [n]$, let $Y_i = \ind{\Abs{X_i-\mu} > r}$, let $p = \Pr[Y_i = 1]$, and let $Y = Y_1 + \ldots + Y_n$. Note that $\E[Y] = np$ and
\[
    p = \Pr[\Abs{X_i-\mu} > r] = \Pr[\Abs{X_i-\mu}^k > r^k] \le \Paren{\frac{1}{\sqrt{m}r}}^k
\]
where the inequality follows from Markov's inequality and \Cref{lem:kth-moment-batches}. To show (1), we want to bound $\Pr\Brac{Y > \frac{n}{16}}$. To do this, we can assume that $p$ is as maximal as possible (i.e. when $p = \Paren{1/(\sqrt{m}r)}^k$) as this maximizes $\Pr\Brac{Y > \frac{n}{16}}$. Thus,
\begin{align*}
    \Pr\Brac{Y > \frac{n}{16}} 
    &= \Pr\Brac{Y > \frac{(\sqrt{m}r)^k\E[Y]}{16}}\\
    &= \Paren{\frac{16}{(\sqrt{m}r)^k}}^{\Omega\Paren{(\sqrt{m}r)^k\E[Y]/16}} \tag{by \Cref{thm:chernoff}}\\
    &= \Paren{\frac{16}{(\sqrt{m}r)^k}}^{\Omega\Paren{n/16}}\\
    &= \Paren{\frac{16}{(\sqrt{m}r)^k}}^{\Omega\Paren{\log(\beta^{-1})/\log ((\sqrt{m}r)^k/16)}}\\
    &= \Paren{\frac{1}{2}}^{\Omega\Paren{\log \beta^{-1}}}\\
    & \le \beta/2.
\end{align*}
Note that there are at most $3$ buckets which satisfy $\max(\Abs{a-\mu},\Abs{b-\mu}) < 2r$. Thus, the largest of these $3$ buckets must have at least $(15n/16)/3 = 5n/16$ points with probability at least $1-\beta/2$.

To show (2), we can directly apply \Cref{lem:private-histograms}. If $n_0 = O\Paren{\log(R/(r\beta))/\eps}$, then the $\ell_{\infty}$-error induced by the Laplace noise is at most $n / 16$ with probability at least $1 - \beta/2$. Similarly, If $n_1 = O\Paren{\log(1/\beta\delta) / \eps}$, 
then the $\ell_{\infty}$-error induced by the Gaussian noise is at most $n / 16$ with probability at least $1 - \beta/2$.
Thus, by a union bound,
\[
    \Pr[|\mu - \mu_{\textrm{coarse}}| > 2r] \le \beta/2 + \beta/2 \le \beta.
\]
\end{proof}

\newcommand{\Sbar}{\overline{S}_m}

\subsection{Applying Clip and Noise to Distributions with Bounded \texorpdfstring{$k$}{k}-th Moments}

In this section we use non-uniform Berry-Esseen and a coarse estimate of the mean to find a fine estimate of the mean.
We assume $\Sbar$ is a random variable corresponding to the mean of $m$ samples, and apply Berry-Esseen to obtain tail bounds for this distribution and then use \Cref{thm:coarse-to-fine}.

\begin{lemma}[Coarse to Fine for Bounded $k$-th Moments]
\label{lem:coarse-to-fine-bounded}
Assume $\rho \ge u + \sqrt{\frac{\paren{k-1} \log m}{m}}$. 
Suppose $X$ is a distribution with mean $\mu$ and $k$-th moment bounded by $1$. Let $\Sbar$ be the random variable corresponding to the mean of $m$ \iid samples from $X$.
Suppose we have access to a coarse estimate of the mean with accuracy $u$. Then there exists an $\eps$-DP estimator of the mean that takes samples from the the sample mean distribution truncates them within a radius of $\rho$ around the coarse estimate, and outputs an estimate of the mean with success probability $1 - \beta$ and accuracy
\begin{equation*}
\alpha  = 
O_k \Paren{\sqrt{\frac{1}{mn\beta}}
+
\frac{\rho \log\paren{1/\beta}}{n \epsilon}
+ 
m^{-k+1} \cdot \paren{\rho - u}^{-k+1}},
\end{equation*}
where $\cO_k$ hides factors that only depend on $k$.
\end{lemma}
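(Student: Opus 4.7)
The plan is to instantiate Theorem~\ref{thm:coarse-to-fine} with the random variable $X$ replaced by $\Sbar$, the sample mean of $m$ \iid draws from the underlying $k$-th moment bounded distribution. Concretely, the algorithm averages each person's $m$ samples to obtain $n$ \iid copies of $\Sbar$, clips each to the interval $[\mu_{\texttt{coarse}} - \rho, \mu_{\texttt{coarse}} + \rho]$, averages, and releases the average plus $\mathcal{L}(2\rho/(n\eps))$ noise; the $\eps$-DP guarantee is inherited from Theorem~\ref{thm:coarse-to-fine} since the sensitivity of a single person's clipped mean is at most $2\rho$.

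To invoke Theorem~\ref{thm:coarse-to-fine} I need two ingredients about $\Sbar$: a variance bound and a tail integral bound. For the variance, since $k\ge 2$ implies $\sigma_2(X)\le \sigma_k(X) \le 1$, independence of the $m$ samples yields $\Var[\Sbar] = \Var[X]/m \le 1/m$. Plugged into the first term of Theorem~\ref{thm:coarse-to-fine}, this produces the $O(\sqrt{1/(mn\beta)})$ summand. The second (noise) term $\rho\log(1/\beta)/(n\eps)$ transfers directly.

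The interesting step is the clipping-bias term, which per Corollary~\ref{cor:clipping-bias-coarse-estimate} equals $\int_{\rho-u}^{\infty}\Pr[|\Sbar - \mu|\ge t]\,dt$. This is exactly where the non-uniform Berry--Esseen bound of Corollary~\ref{cor:non-uniform-tail-bound} enters: applied to $(\Sbar-\mu)$ and $-(\Sbar-\mu)$ and summed, it gives $\Pr[|\Sbar-\mu|\ge t]\le_k m^{-k+1}t^{-k}$ for all $t\ge \sqrt{(k-1)\log m / m}$. The hypothesis $\rho \ge u + \sqrt{(k-1)\log m / m}$ is engineered precisely so that $\rho - u$ lies above this threshold, so the bound is valid on the entire integration range. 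Then
\[
\int_{\rho-u}^{\infty} m^{-k+1} t^{-k}\,dt \;=\; \frac{m^{-k+1}(\rho-u)^{-k+1}}{k-1} \;=\; O_k\!\left(m^{-k+1}(\rho-u)^{-k+1}\right),
\]
which is the third summand in the claimed $\alpha$. Combining the three contributions via Theorem~\ref{thm:coarse-to-fine} completes the proof.

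The main obstacle, such as it is, is checking that the tail bound's validity threshold $\sqrt{(k-1)\log m/m}$ is respected throughout the integration; this is exactly the role of the hypothesis on $\rho$. Everything else is bookkeeping: tracking $k$-dependent constants (absorbed into $O_k$), the symmetrization to turn the one-sided Berry--Esseen statement into a two-sided tail, and verifying that Lemma~\ref{lem:second-moment-after-truncation} lets us use $\Var[\Sbar]\le 1/m$ in place of the variance of the truncated variable when applying Theorem~\ref{thm:coarse-to-fine}.
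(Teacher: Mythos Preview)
Your proposal is correct and follows essentially the same approach as the paper's own proof: instantiate Theorem~\ref{thm:coarse-to-fine} with $\Sbar$, use $\Var[\Sbar]\le 1/m$ for the sampling term, and compute the bias integral via Corollary~\ref{cor:non-uniform-tail-bound}, invoking the hypothesis $\rho - u \ge \sqrt{(k-1)\log m/m}$ to ensure the tail bound applies on the full integration range. Your remarks on symmetrization and on Lemma~\ref{lem:second-moment-after-truncation} are accurate elaborations of steps the paper leaves implicit.
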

\begin{proof}
We aim to apply \Cref{thm:coarse-to-fine}, and note that by \Cref{cor:non-uniform-tail-bound} we have that for all $t \ge \sqrt{\frac{\paren{k-1} \log m}{m}}$,
\begin{equation*}
\Pr\Brac{\Abs{\Sbar - \mu} \ge t} \le_k  m^{-k+1} t^{-k}.
\end{equation*}
Therefore, since $\rho - u \ge \sqrt{\frac{\paren{k-1} \log m}{m}}$, we have that
\begin{align*}
\int\limits_{\rho - u}^{+\infty}
\Pr\Brac{\Abs{\Sbar - \mu} \ge t} \diff t
&\le_k
\int\limits_{\rho - u}^{+\infty}
m^{-k+1} t^{-k} \diff t \\
&= 
m^{-k+1} \int\limits_{\rho - u}^{+\infty} t^{-k} \diff t \\
&=
m^{-k+1} \frac{t^{-k+1}}{-k+1} \Big|_{\rho - u}^{+\infty} \\
&=
m^{-k+1} \frac{\paren{\rho - u}^{-k+1}}{k-1} \mper
\end{align*}
Finally, note that since the $k$-th moment of $X$ is bounded by $1$, its variance is bounded by $1$ as well. Therefore, $\Var\brac{\Sbar} \le 1/m$, and hence
\begin{equation*}
\alpha = O_k \Paren{ 
\sqrt{\frac{1}{mn\beta}}
+
\frac{\rho \log\paren{1/\beta}}{n \epsilon}
+ 
m^{-k+1} \cdot \paren{\rho - u}^{-k+1}}
\mper
\end{equation*}
\end{proof}

Before presenting the main theorem of the section, we state a corollary that is related to the contribution of the bias term in the error rate given in the previous lemma.

\begin{corollary}    
\label{cor:bias_error_batches_1d}
Suppose $X$ is a distribution with mean $\mu$ and $k$-th moment bounded by $1$.
Let $\Sbar$ be the random variable corresponding to the mean of $m$ \iid samples from $X$.
Suppose that we are given access to a coarse estimate $\mu_{\texttt{coarse}}$ such that $\Abs{\mu_{\texttt{coarse}} - \mu} \le u$, and that $Z$ is the result of truncating $\Sbar$ in the interval $[\mu_{\texttt{coarse}} - \rho, \mu_{\texttt{coarse}} + \rho]$, where $\rho - u \geq \sqrt{\frac{\paren{k-1} \log m}{m}}$.
Then, the bias induced by this truncation operation satisfies:
\[
    \Abs{\E[Z] - \E[X]} \le m^{-k+1} \frac{\paren{\rho - u}^{-k+1}}{k-1}.
\]
Thus, if we want the bias to be at most $\alpha$, we need to take $\rho \geq_k u + \frac{1}{\alpha^{\frac{1}{k - 1}} m}$.
\end{corollary}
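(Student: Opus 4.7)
\begin{proofsketch}
The plan is to directly combine the clipping bias bound with the non-uniform Berry--Esseen tail estimate for the sample mean. Since $\E[\Sbar] = \E[X] = \mu$ by linearity of expectation, it suffices to bound $|\E[Z] - \E[\Sbar]|$, i.e.\ the bias introduced by clipping the $m$-sample average $\Sbar$ to the interval $[\mu_{\texttt{coarse}} - \rho, \mu_{\texttt{coarse}} + \rho]$.

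First I would apply \Cref{cor:clipping-bias-coarse-estimate} (Clipping Bias Given Coarse Estimate) to the random variable $\Sbar$, using the hypothesis $|\mu_{\texttt{coarse}} - \mu| \le u$. This yields
\[
    \Abs{\E[\Sbar] - \E[Z]} \le \int\limits_{\rho - u}^{+\infty} \Pr\Brac{\Abs{\Sbar - \mu} \ge t} \diff t.
\]
Next, since the assumption $\rho - u \ge \sqrt{(k-1)\log m / m}$ ensures that the entire range of integration lies in the regime where the non-uniform tail bound is valid, I would invoke \Cref{cor:non-uniform-tail-bound} to estimate the integrand by $\Pr[|\Sbar - \mu| \ge t] \le_k m^{-k+1} t^{-k}$ for all $t \ge \rho - u$.

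Plugging this in and computing the elementary integral $\int_{\rho - u}^{\infty} t^{-k} \diff t = (\rho - u)^{-k+1}/(k-1)$ gives the first claimed bound
\[
    \Abs{\E[Z] - \E[X]} \le m^{-k+1} \cdot \frac{(\rho - u)^{-k+1}}{k - 1}.
\]
For the second part, I would simply solve the inequality $m^{-(k-1)} (\rho - u)^{-(k-1)} / (k-1) \le \alpha$ for $\rho - u$, obtaining $\rho - u \ge (\alpha(k-1))^{-1/(k-1)}/m$, which after absorbing the $k$-dependent constant into $\le_k$ gives the stated $\rho \ge_k u + 1/(\alpha^{1/(k-1)} m)$.

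Essentially no obstacle arises since all the work has been done in the preceding lemmas; the only subtlety is verifying that the lower limit of integration $\rho - u$ falls inside the regime covered by \Cref{cor:non-uniform-tail-bound}, which is exactly what the hypothesis $\rho - u \ge \sqrt{(k-1)\log m/m}$ guarantees.
\end{proofsketch}
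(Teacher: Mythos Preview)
Your proposal is correct and follows exactly the same approach as the paper: the corollary is just the bias-term calculation already carried out in the proof of \Cref{lem:coarse-to-fine-bounded}, namely applying \Cref{cor:clipping-bias-coarse-estimate} to $\Sbar$ and then integrating the tail bound from \Cref{cor:non-uniform-tail-bound}. The second claim is likewise obtained by inverting the resulting inequality, as you describe.
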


We now conclude by presenting the main result of this section.

\begin{theorem}[Mean Estimation for Bounded $k$-th Moment Distributions]
\label{cor:mean-estimation-bounded}
Suppose $X$ is a distribution with mean $\mu$ and $k$-th moment bounded by $1$. Then there exists a person-level $\eps$-DP mean estimation algorithm that takes 
\begin{align*}
\tilde{O}_k \Paren{\frac{1}{\alpha^2 m \beta}
+ 
\frac{\log\paren{1/\beta}}{\alpha \sqrt{m} \eps}
+ 
\frac{\log\paren{1/\beta}}{\alpha^{\frac{k}{k-1}} m \eps}
+
\frac{\log \paren{Rm/\beta}}{\eps}
}
\end{align*}
and outputs an estimate of the mean with failure probability $\beta$ and accuracy $\alpha$. Moreover $\tilde{O}_k$ hides multiplicative factors that only depend on $k$ and lower order logarithmic factors in $m$.
\end{theorem}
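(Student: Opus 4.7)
The plan is to combine the coarse estimation of \Cref{thm:user-level-coarse-esitmation} with the clip-and-noise framework of \Cref{lem:coarse-to-fine-bounded}, splitting the privacy budget equally between the two phases and the failure probability equally as well. The composition of two $(\eps/2)$-DP algorithms yields $\eps$-DP by \Cref{lem:basic-composition}, and a union bound handles the failure probabilities.

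First, I would invoke $\textrm{RangeEstimator}$ from \Cref{alg:user-level-coarse-estimate} with privacy parameter $\eps/2$, failure probability $\beta/2$, and bucket width $r = C_k / \sqrt{m}$ for a sufficiently large constant $C_k > 16^{1/k}$. This guarantees $\log(\sqrt{m} r) = \Omega_k(1)$, so the person complexity of the coarse phase is
\begin{equation*}
    n_{\mathrm{coarse}} = O_k\!\left(\log(1/\beta) + \frac{\log\paren{R \sqrt{m}/\beta}}{\eps}\right),
\end{equation*}
and we obtain $\mu_{\texttt{coarse}}$ satisfying $u := |\mu - \mu_{\texttt{coarse}}| \le 2r = O_k(1/\sqrt{m})$ with probability $1 - \beta/2$. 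This phase contributes the $\log(Rm/\beta)/\eps$ term to the final complexity.

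Next, I would apply \Cref{lem:coarse-to-fine-bounded} with privacy parameter $\eps/2$, failure probability $\beta/2$, and clipping radius
\begin{equation*}
    \rho = u + \max\!\left(\sqrt{\tfrac{(k-1)\log m}{m}},\; \tfrac{c_k}{\alpha^{1/(k-1)} m}\right),
\end{equation*}
where $c_k$ is chosen so that, by \Cref{cor:bias_error_batches_1d}, the truncation bias $m^{-k+1}(\rho-u)^{-k+1}$ is at most $\alpha/3$. The variance-plus-noise term from \Cref{lem:coarse-to-fine-bounded} is then at most $\alpha/3$ provided
\begin{equation*}
    n \;\ge\; \Omega_k\!\left(\frac{1}{\alpha^2 m \beta} + \frac{\rho \log(1/\beta)}{\alpha\, \eps}\right).
\end{equation*}
Substituting the value of $\rho$ and noting that $\rho = O_k\!\big(1/\sqrt{m} + 1/(\alpha^{1/(k-1)} m)\big)$ (the $\sqrt{(k-1)\log m / m}$ summand is absorbed by the $1/\sqrt{m}$ up to $\log^{1/2} m$ factors hidden in $\tilde{O}_k$), the noise contribution splits into the $\log(1/\beta)/(\alpha\sqrt{m}\eps)$ and $\log(1/\beta)/(\alpha^{k/(k-1)} m \eps)$ terms stated in the theorem. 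Taking $n$ to be the sum of $n_{\mathrm{coarse}}$ and the fine-estimation requirement, and applying basic composition plus a union bound, yields the claimed bound.

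The only non-routine step is verifying that the choice of $\rho$ makes both the noise term and the clipping bias meet the target accuracy $\alpha$ simultaneously, and that the $\rho - u \ge \sqrt{(k-1)\log m/m}$ hypothesis of \Cref{lem:coarse-to-fine-bounded} is preserved — this is precisely why we take the maximum of the two quantities in $\rho - u$, and why the $\log^{1/2} m$ overhead is acceptable under $\tilde O_k$. Everything else is bookkeeping: splitting $\eps$ and $\beta$, post-processing, and confirming that the coarse estimate's accuracy $u = O_k(1/\sqrt{m})$ dominates the additional additive contribution from $\rho - u$ in the regime where the second noise term $\log(1/\beta)/(\alpha \sqrt{m}\eps)$ is already included.
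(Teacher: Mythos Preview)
Your proposal is correct and follows essentially the same approach as the paper: combine the coarse estimator of \Cref{thm:user-level-coarse-esitmation} (with $r \asymp 1/\sqrt{m}$) with the clip-and-noise fine estimator of \Cref{lem:coarse-to-fine-bounded}, split $\eps$ and $\beta$ between the two phases, and compose. The only cosmetic difference is in how the clipping radius is parametrized: the paper sets $\rho = \Theta\big(\sqrt{(k-1)\log m/m} + (n\eps/\log(1/\beta))^{1/k}/m^{1-1/k}\big)$ to balance the bias and noise terms and then inverts the resulting error bound, whereas you set $\rho-u$ directly from the target accuracy $\alpha$ (via \Cref{cor:bias_error_batches_1d}) and then solve for $n$; both routes yield the same sample complexity up to constants and the $\log^{1/2} m$ factors hidden in $\tilde O_k$.
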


\begin{proof}
Suppose we have access to a coarse estimate of the mean with accuracy $16\sqrt{1/m}$. Then by \Cref{lem:coarse-to-fine-bounded} we have that there exists an $\eps$-DP estimator of the mean that takes samples from the the sample mean distribution truncates them within a radius of $\rho$ around the coarse estimate, and outputs an estimate of the mean with success probability $\beta$ and accuracy
\begin{equation*}
\alpha  \le_k
\sqrt{\frac{1}{m n \beta}}
+
\frac{\rho \log\paren{1/\beta}}{n \epsilon}
+
m^{-k+1} \cdot \paren{\rho - u}^{-k+1},
\end{equation*}
where $u =16\sqrt{1/m}$. Take $\rho = \Theta\Paren{\sqrt{\frac{\paren{k-1}\log m}{m}} +\Paren{\frac{n\eps}{\log\paren{1/\beta}}}^{1/k} \cdot \frac{1}{m^{1 - 1/k}}}$, hiding a sufficiently large constant to ensure that $\rho - u \ge \sqrt{\frac{\paren{k-1} \log m}{m}}$. Therefore, we have that

\begin{align*}
\alpha &\le_k
\sqrt{\frac{1}{mn \beta}}
+ 
\frac{\log\paren{1/\beta} \sqrt{\log m}}{\sqrt{m} n \eps}
+
\Paren{\frac{\log\paren{1/\beta}}{m n \eps}}^{1 - 1/k}.
\end{align*}

Rearranging the terms and using the coarse estimation sample complexity from \Cref{thm:user-level-coarse-esitmation}, with $r = 16\sqrt{1/m}$ we conclude that there exists a person-level $\eps$-dp mean estimation algorithm that takes 

\begin{align*}
n = \tilde{O}_k \Paren{\frac{1}{\alpha^2 m\beta}
+ 
\frac{\log\paren{1/\beta}}{\alpha \sqrt{m} \eps}
+ 
\frac{\log\paren{1/\beta}}{\alpha^{\frac{k}{k-1}} m \eps}
+
\frac{\log \paren{Rm/\beta}}{\eps}
}
,
\end{align*}
many samples and outputs an estimate of the mean with success probability $\beta$ and accuracy $\alpha$. Moreover $\tilde{O}_k$ hides multiplicative factors that only depend on $k$ and lower order logarithmic factors in $m$.
\end{proof}

\makeatletter
\newcommand{\batch}{%
  \@ifnextchar\bgroup{\batchwitharg}{\batchwithoutarg}%
}

\newcommand{\batchwitharg}[1]{%
  X^{(#1)} %
}

\newcommand{\batchwithoutarg}{%
  X
}

\newcommand{\clipbatch}{%
  \@ifnextchar\bgroup{\clipbatchwithargA}{\clipbatchwithoutarg}%
}

\newcommand{\clipbatchwithargA}[1]{%
  \@ifnextchar\bgroup{\clipbatchwithargB{#1}}{\clipbatchwithonearg{#1}}%
}

\newcommand{\clipbatchwithargB}[2]{%
  Z_{#1}(#2)
}

\newcommand{\clipbatchwithonearg}[1]{%
  Z_{#1}
}

\newcommand{\clipbatchwithoutarg}{%
  Z
}
\makeatother
\newcommand{\clip}[1]{\mathrm{clip}_{\rho,u}\Paren{#1}}
\newcommand{\itclip}{Threaded Clip-and-Noise\xspace}
\newcommand{\itcp}{TCN\xspace}
\section{Mean Estimation in High Dimensions with Approximate-DP}\label{sec:hd}
In this section, we introduce \Cref{alg:hd}, an algorithm for privately and efficiently estimating the mean of distributions with bounded $k$-th moments.

\paragraph{Using Clip-and-Noise in $d$ dimensions.} As in the univariate case, our algorithm for computing an estimate of the mean of a multivariate distribution is as follows: ($1$) compute a coarse estimate to the mean via private histograms, and ($2$) apply the clip-and-noise framework, clipping around the coarse estimate. As in the univariate case, our overall error is proportional to the bias induced by clipping and also the noise added to ensure privacy. To ensure overall error $\alpha$, we set the clipping radius $\rho$ to be as small as possible while still ensuring that the bias is $\lesssim \alpha$. Under this fixed $\rho$, we can guarantee that the noise is also $\lesssim \alpha$ so long as the number $n$ of people is sufficiently large.

\paragraph{Main technical obstacles in $d$ dimensions.} In one dimension, we used a corollary of the Non-Uniform Berry-Esseen Theorem (\Cref{cor:non-uniform-tail-bound}) from \cite{Michel76} in order to bound the bias due to clipping. Unfortunately, this bound does not apply to distributions over $\R^d$ for $d > 1$, and moreover the proof introduced by \cite{Michel76} does not readily extend to higher dimensions. In this section, we prove an analogous concentration bound in high dimensions. 

\begin{restatable}{theorem}{thmhighdtailbound}\label{thm:highd-tailbound}
    Let $k > 2$, and let $\cD$ have mean $0$ and $\sigma_k(\cD) \le 1$. There exists $t_1 = O\Paren{\sqrt{\frac{d\log m}{m}}}$ such that, for all $t \ge t_1$,
    \begin{equation*}
        \Pr\Brac{\normt{X} \ge t} = \tilde{O}\Paren{\frac{d^{k/2}}{m^{k-1}t^k} + e^{-mt^2 / d 
        }}.
    \end{equation*}
\end{restatable}

We use this concentration inequality in order to bound the bias due to Clip-and-Noise. As discussed in \Cref{sec:intro}, the techniques used to bound the bias in the univariate case do not give tight results in the multivariate case. In this section, we introduce \Cref{lem:new-highd-clipping-bias} whose proof circumvents this issue.
\begin{restatable}{theorem}{thmbias}\label{lem:new-highd-clipping-bias}
Let $\cD$ be a distribution over $\R^d$ with mean $\mu$ and $\sigma_k(\cD) \le 1$. Suppose that $X_j \sim \cD$ for all $j \in [m]$, and let $X = \frac{1}{m}\sum_{j=1}^mX_j$. Let $u \in \R^d$, $\rho \ge 0$, and $\gamma = \normt{u - \mu}$. If $\rho \ge t_1$ for some $t_1 = O\Paren{\sqrt{\frac{d\log m}{m}}}$, and if $\gamma \le \sqrt{\frac{d}{m}}$, then,
    \begin{equation}
        \Normt{\mu - \E\Brac{\clip{X}}} 
        = \tilde{O}\Paren{\frac{d^{\frac{k-1}{2}}}{m^{k-1}\rho^{k-1}}\Paren{1 + \gamma \cdot \frac{d^{\frac{1}{2}}}{\rho}}}.
    \end{equation}
\end{restatable}

\paragraph{Main algorithmic obstacle in $d$ dimensions.} Even equipped with a high-dimensional tail bound (\Cref{thm:highd-tailbound}) and bound on the bias (\Cref{lem:new-highd-clipping-bias}), there is an inherent algorithmic obstacle that exists in the multivariate case that does not exist in the univariate case.

Recall that the error of Clip-and-Noise is proportional to both the bias induced by the clipping and the noise added to ensure privacy. Thus, if the overall error is $\alpha$, then each of the bias and the noise must be $\lesssim \alpha$. As seen in \Cref{lem:new-highd-clipping-bias}, the bias is proportional to the error $\gamma$ of the coarse estimate. In order for the bias to be $\lesssim \alpha$, the clipping radius $\rho$ must be large enough to offset the error $\gamma$. However, enlarging $\rho$ also enlarges the sensitivity and thus the noise! The only way to then decrease the sensitivity is to increase the number of available people in the data set. Unfortunately, the $\gamma$ produced by the coarse estimate is too large, ultimately requiring $n$ to be larger than optimal.

To get around this problem, instead of forcing the optimal error $\alpha$ and settling for a sub-optimal $n$, we instead enforce the optimal $n$ and observe a different error $\gamma' > \alpha$. In our parameter setting, while $\gamma'$ is not smaller than $\alpha$, it is \textit{smaller} than $\gamma$; thus, we can actually interpret the clip-and-noise algorithm as boosting the error of the coarse estimate. The insight is that this process is repeatable. In particular, we will run a process called \itclip which does exactly this, and is able to drive the error of the coarse estimate down to optimal.\footnote{\emph{We can get the optimal coarse estimate for free!}} Then, with the optimal coarse estimate, we can do one last application of Clip-and-Noise in order to achieve error $\alpha$ with the optimal sample complexity.

In this section, we work towards introducing \Cref{alg:hd}, an algorithm which composes the processes described above. The guarantees of \Cref{alg:hd} are described in the following theorem.
\begin{restatable}{theorem}{thmapproxdpupperbound}\label{thm:approx-dp-upperbound}
    Let $\cD$ be a distribution over $\R^d$ with mean $\mu$, and let $\sigma_k(\cD) \le 1$ for some $k > 2$. For all $\eps, \delta > 0$, \Cref{alg:hd} satisfies person-level $(\eps,\delta)$-DP. Furthermore, \Cref{alg:hd} is efficient, and for all $\eps, \delta, \alpha \in (0,1)$, there exists 
    \begin{equation}
        \label{eq:sample-complexity-twice-clipped}
        n_0 = \tilde{O}_k\Paren{\frac{d}{\alpha^2 m} + \frac{d\log^{1/2} \paren{1/\delta}}{\alpha m^{1/2}\eps} + \frac{d\log^{1/2} \paren{1/\delta}}{\alpha^{k/(k-1)}m\eps} + \frac{d\log^{1/2} \paren{1/\delta}}{\eps} + \frac{d^{1/2} \log^{3/2}\paren{1/\delta}}{\eps}}
    \end{equation}
    such that, if $n \ge n_0$, then, with probability at least $2/3$, \Cref{alg:hd} outputs $\hat{\mu} \in \R^d$ such that $$\normt{\hat{\mu} - \mu} \le \alpha.$$
\end{restatable}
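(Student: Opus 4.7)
The plan is to separately establish privacy and accuracy for \Cref{alg:hd}.

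For privacy, I would observe that the algorithm consists of three sub-routines---a coarse histogram step producing $u_1$, and two Gaussian-mechanism clip-and-noise rounds producing $u_2$ and $\hat{\mu}$---and allocate privacy budget $(\eps/3,\delta/3)$ to each. Privacy of the coarse step follows from \Cref{lem:private-histograms} run coordinate-wise, together with advanced composition (\Cref{lem:advanced-composition}) across the $d$ coordinates. For each clip-and-noise round, the person-level $\ell_2$-sensitivity of the average of per-person sample-means clipped into an $\ell_2$ ball of radius $\rho_t$ is exactly $2\rho_t/n$, so \Cref{lem:gaussian-mechanism} applies directly. Basic composition (\Cref{lem:basic-composition}) and post-processing (\Cref{lem:postprocessing}) then give the claimed $(\eps,\delta)$-DP.

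For accuracy, I would decompose the error into the three stages. \emph{Stage 1 (coarse).} Running \Cref{thm:user-level-coarse-esitmation} on each coordinate at scale $r = \Theta(1/\sqrt{m})$ under $(\eps,\delta)$-DP yields $u_1$ with $\normi{u_1 - \mu} = O(1/\sqrt{m})$, hence $\normt{u_1 - \mu} = O(\sqrt{d/m})$; this costs the $\tilde{O}(d^{1/2}\log^{3/2}(1/\delta)/\eps)$ term in \eqref{eq:sample-complexity-twice-clipped}. \emph{Stage 2 (first fine).} Clip each $S_i$ to the $\ell_2$-ball of radius $\rho_1$ around $u_1$, average, and add spherical Gaussian noise of scale $\rho_1\sqrt{\log(1/\delta)}/(n\eps)$. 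The resulting error has three contributions: a sampling error that, by applying \Cref{lem:second-moment-after-truncation} along each coordinate to the clipped per-person averages, is $O(\sqrt{d/(nm)})$; a Gaussian-noise error of magnitude $\rho_1\sqrt{d\log(1/\delta)}/(n\eps)$; and a clipping bias controlled by the multivariate analog of \Cref{cor:bias_error_batches_1d}. Choosing $\rho_1$ to balance these produces $u_2$ with $\normt{u_2 - \mu} \le \alpha_1$ for an intermediate accuracy $\alpha_1 \ll \sqrt{d/m}$. \emph{Stage 3 (second fine).} Repeat with center $u_2$ but at a tighter radius $\rho_2 = \Theta\bigl(\alpha_1 + \sqrt{d/m}\cdot\polylog\bigr)$. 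The Gaussian-noise contribution $\rho_2\sqrt{d\log(1/\delta)}/(n\eps)$ is now much smaller than in Stage 2; rebalancing against bias and sampling yields $\normt{\hat{\mu}-\mu}\le\alpha$, and inverting for $n$ produces the sample complexity in \eqref{eq:sample-complexity-twice-clipped}.

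The main obstacle I anticipate is controlling the clipping bias in $d$ dimensions. The paper's directional-moment condition gives, for each unit $v\in\mathbb{S}^{d-1}$, a one-dimensional tail bound on $\langle v,\bar{S}_m\rangle$ via \Cref{cor:non-uniform-tail-bound}; but the clipping is to an $\ell_2$ ball, so the bias vector $\E[\bar{S}_m-\Pi(\bar{S}_m)]$ (where $\Pi$ is projection onto the ball) must be bounded in $\ell_2$-norm. I would write this bias as a vector integral over the tail event $\{\normt{\bar{S}_m-u_1}>\rho_t\}$, reduce to directional tail bounds, and integrate, taking care to lose only the correct power of $d$. Getting this dimension exponent right is precisely what distinguishes the improved $d^{1+1/(2k)}$ term in \Cref{thm:approx-dp-upperbound} from the weaker $d^{1+1/(2k-2)}$ term of the single-round \Cref{thm:single-round-clip-and-noise}: the two-round structure lets the second-round bias analysis operate at the improved radius $\rho_2$ rather than $\rho_1$, halving the effective exponent loss. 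Once this high-dimensional bias lemma is in hand, the remainder of the argument is an algebraic exercise in choosing $\rho_1,\rho_2$ and solving for $n$.
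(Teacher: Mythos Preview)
Your three-stage decomposition and privacy argument are essentially correct and match the paper's structure. However, your explanation of \emph{why} the second clip-and-noise round yields the improved $d^{1+1/(2k)}$ term is off, and this points to a real gap in the plan.

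You write that ``the two-round structure lets the second-round bias analysis operate at the improved radius $\rho_2$ rather than $\rho_1$, halving the effective exponent loss.'' But the improvement is not about a smaller radius. The paper uses two \emph{different} bias lemmas for the two rounds. The first round uses the basic bound (\Cref{lem:warmup-bias}): if $\normt{u-\mu}\le\rho/4$, then the clipping bias is $\tilde O\bigl(d^{k/2}/(m\rho)^{k-1}\bigr)$. The second round uses the sharper \Cref{thm:highd-clipping-bias}: if $\normt{u-\mu}\le\rho/(2\sqrt d)$, then the bias is $\tilde O\bigl((\sqrt d/(m\rho))^{k-1}\bigr)$, a full $\sqrt d$ savings. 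The stronger hypothesis---center within $\rho/\sqrt d$ rather than $\rho/4$---is precisely what the first fine round provides (\Cref{lem:coarse-estimation-2} shows $\normt{u_2-\mu}\le\rho_2/\sqrt d$). So the first round's job is not to shrink the radius but to furnish a center accurate enough to unlock the sharper bias analysis.

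The mechanism behind \Cref{thm:highd-clipping-bias} is also more delicate than ``reduce to directional tail bounds.'' The paper writes the bias as $\E\bigl[\langle X-\clip{X},v\rangle\bigr]$ for the unit bias direction $v$, then splits the tail integral at the threshold $\ell=\rho/\sqrt d$. For $\ell\le\rho/\sqrt d$ (term $T_1$), the $d$-dimensional tail bound of \Cref{cor:highd-tailbound} is applied, but the integration range has length only $\rho/\sqrt d$, which accounts for the $\sqrt d$ savings. For $\ell>\rho/\sqrt d$ (term $T_2$), one observes that $\langle X-\clip{X},v\rangle\ge\ell$ forces $\langle X-\mu,v\rangle\ge\ell-\normt{u-\mu}$; since $\normt{u-\mu}\le\rho/(2\sqrt d)$, this is controlled by the \emph{one-dimensional} tail of $\langle X-\mu,v\rangle$, and \Cref{cor:highd-tailbound} with $d=1$ applies. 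Without the $\rho/\sqrt d$ centering assumption this split fails, and you fall back to the weaker bound and the single-round complexity of \Cref{thm:single-round-clip-and-noise}. Your plan as written would reproduce the single-round result, not the two-round one.
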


\paragraph{Organization of \Cref{sec:hd}.} The remainder of this section is organized as follows. In \Cref{subsec:technical-lemmata}, we prove \Cref{thm:highd-tailbound} and \Cref{lem:new-highd-clipping-bias}. In \Cref{subsec:iterative-clip}, we discuss \itclip. In \Cref{sec:full-algo} we give the full algorithm and prove \Cref{thm:approx-dp-upperbound}.

\subsection{Technical Lemmata: Bounding the Bias}\label{subsec:technical-lemmata}
In this section, we prove a concentration bound for distributions in $d$ dimensions with bounded $k$th moments. Using this concentration bound, we analyze the bias induced by Clip-and-Noise.
\subsubsection{Theorem~\ref{thm:highd-tailbound}: Proof Overview and Core Lemmata}\label{sec:highdim-overview}
Let $\cD$ be a distribution over $\R^d$, let $X_1,\ldots,X_m \stackrel{\text{iid}}{\sim} \cD$, and let $X = \frac{1}{m}\sum_{i=1}^m X_i$. We introduce the following theorem which is central in the analysis of the bias:
\thmhighdtailbound*
This theorem is a direct corollary of the following two theorems:

\begin{restatable}{theorem}{thmsmallt}
    Let $k > 2$, and let $\cD$ have mean $0$ and $\sigma_k(\cD) \le 1$. There exists $t_1 = O\Paren{\sqrt{\frac{d\log m}{m}}}$ and $t_2 = \Omega\Paren{\sqrt{d}\log^{\frac{-1}{k-2}}m}$ such that, for all $t \in [t_1,t_2]$,
    \begin{equation*}
        \Pr\Brac{\normt{X} \ge t} = \tilde{O}\Paren{\frac{d^{k/2}}{m^{k-1}t^k} + e^{-mt^2 / d 
        }}.
    \end{equation*}
    \label{thm:thmsmallt}
\end{restatable}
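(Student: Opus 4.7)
The plan is to establish a multivariate analog of the one-dimensional non-uniform tail bound (\Cref{cor:non-uniform-tail-bound}) by combining a truncation of the individual samples with a vector Bernstein-type concentration on the centered, truncated sum. The polynomial term $d^{k/2}/(m^{k-1}t^k)$ will arise from the ``heavy'' event that some $X_i$ is atypically large, while the exponential term $e^{-mt^2/d}$ will arise from concentration of the average of the remaining bounded summands around its mean.

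As a preliminary step, I would upgrade the directional assumption $\sigma_k(\cD) \le 1$ to a Euclidean-norm moment bound. Writing $\|X_i\|_2^k = (\sum_{j=1}^d X_{i,j}^2)^{k/2}$ and applying the power-mean inequality $(\sum_{j=1}^d a_j)^{k/2} \le d^{k/2-1}\sum_{j=1}^d a_j^{k/2}$ with $a_j = X_{i,j}^2$, together with the coordinate-wise moment bound $\E|X_{i,j}|^k \le \sigma_k(\cD)^k \le 1$, yields $\E\|X_i\|_2^k \le d^{k/2}$. Then, for a threshold $\tau$ of order $mt$, I would decompose $X_i = X_i^{(s)} + X_i^{(l)}$ with $X_i^{(s)} = X_i \Ind{\|X_i\|_2 \le \tau}$ and bound the probability of any heavy summand by Markov plus a union bound:
\[
    \Pr\Brac{\exists i: X_i^{(l)} \ne 0} \le \frac{m\,\E\|X_i\|_2^k}{\tau^k} \le \frac{m\,d^{k/2}}{\tau^k},
\]
which, with $\tau \asymp mt$, contributes the desired polynomial term. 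The bias introduced by this truncation, $\|\E X_i^{(s)}\|_2 \le \E\|X_i\|_2^k/\tau^{k-1} \le d^{k/2}/(mt)^{k-1}$, is subdominant to $t$ throughout $[t_1, t_2]$.

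Conditional on no sample being heavy, $X = \tfrac{1}{m}\sum_i X_i^{(s)}$, and it remains to concentrate the centered average $\tfrac{1}{m}\sum_i (X_i^{(s)} - \E X_i^{(s)})$. I would apply a vector Bernstein / Pinelis inequality using the per-summand range $\tau$ and the per-summand variance bound $\E\|X_i^{(s)}\|_2^2 \le d$, yielding a bound of the form $\exp(-c m t^2/(d + \tau t))$ that interpolates between Gaussian and sub-exponential tails, and reduces to the target $e^{-mt^2/d}$ when the variance term dominates.

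The main obstacle is reconciling the two natural scales for $\tau$: the heavy-tail bound prefers $\tau$ as large as possible (roughly $mt$), while keeping the Bernstein bound in its Gaussian (variance-dominated) regime requires $\tau$ of order at most $d/t$. These compete for $t \gtrsim \sqrt{d/m}$, which is already the interior of the window $[t_1, t_2]$. I expect the resolution to be either choosing $\tau$ differently in two sub-regimes of $[t_1, t_2]$ and taking the minimum of the resulting bounds, or performing a layered truncation that simultaneously respects both scales. The upper endpoint $t_2 = \Omega(\sqrt{d}\log^{-1/(k-2)} m)$ is precisely where the truncation bias $d^{k/2}/(mt)^{k-1}$ stops being negligible compared to $t$, and the lower endpoint $t_1 = O(\sqrt{d\log m/m})$ supplies the logarithmic slack needed to absorb the gap between the two Bernstein regimes; verifying that the polylogarithmic factors hidden by $\tilde O$ in the statement cover this gap is the most delicate piece of the calculation.
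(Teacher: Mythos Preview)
Your plan is on the same track as the paper: truncate individual samples by norm, use vector Bernstein on the bounded piece to obtain the $e^{-mt^2/d}$ term, and a union bound plus Markov on the heavy piece to obtain the $d^{k/2}/(m^{k-1}t^k)$ term. You have also correctly isolated the central obstruction, namely that the truncation scale $\tau \asymp mt$ needed for the heavy union bound and the scale $\tau \lesssim d/t$ needed to keep Bernstein in its Gaussian regime are incompatible throughout $[t_1,t_2]$, and your instinct toward a ``layered truncation'' is the right one.

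What you have not spelled out, and what is the heart of the paper's argument, is the treatment of the \emph{intermediate} range. The paper uses a three-way split at $r_1 = d/t$ and $r_2 = mt/(3\log m)$: light samples ($\|X_i\|_2 < r_1$) go to Bernstein, heavy samples ($\|X_i\|_2 \ge r_2$) go to the union bound, and moderate samples in $[r_1,r_2)$ are handled by a dyadic-shell argument. If the moderate contribution to $\|X\|_2$ exceeds $t/3$, pigeonhole forces some shell $B_\ell = \{i : 2^{-\ell}r_2 \le \|X_i\|_2 < 2^{-\ell+1}r_2\}$ to contain at least $2^{\ell-1}$ indices; one then bounds $\Pr[|B_\ell| \ge 2^{\ell-1}]$ by $\binom{m}{2^{\ell-1}}\Pr[\|X_1\|_2 \ge 2^{-\ell}r_2]^{2^{\ell-1}}$ and observes that the resulting expression is log-convex in $\ell$, so it suffices to check the two endpoints $\ell=1$ and $\ell=\log(r_2/r_1)$. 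Your option (a), varying a single threshold $\tau$ across sub-regimes of $[t_1,t_2]$, does not work: for any fixed $t$ in the interior, neither choice of $\tau$ alone delivers both target terms (e.g.\ at $t \asymp \sqrt{d}$, taking $\tau = d/t$ makes the heavy union bound $\Theta(m)$, while taking $\tau = mt$ pushes Bernstein into its sub-exponential regime and yields only a constant).

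Two small corrections on the endpoints. The upper endpoint $t_2 = \Omega(\sqrt{d}\,\log^{-1/(k-2)}m)$ is not where the truncation bias $d^{k/2}/(mt)^{k-1}$ becomes comparable to $t$; that bias is negligible throughout the window (it carries $m^{-(k-1)}$). Rather, $t_2$ is the largest $t$ for which the dyadic endpoint bound at $\ell = \log(r_2/r_1)$ remains $\tilde O(d^{k/2}/(m^{k-1}t^k))$. Likewise $t_1 = O(\sqrt{d\log m/m})$ is dictated by the same endpoint calculation at the other extreme, not by bridging a gap between Bernstein regimes.
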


\begin{restatable}{theorem}{thmbigt}
    Let $k > 2$, and let $\cD$ have mean $0$ and $\sigma_k(\cD) \le 1$. There exists $t_3 = O\Paren{\sqrt{d}\log^{1/k} m}$ such that, for all $t \ge t_3$,
    \begin{equation*}
        \Pr\Brac{\normt{X} \ge t} = \tilde{O}\Paren{\frac{d^{k/2}}{m^{k-1}t^k}}.
    \end{equation*}
    \label{thm:thmbigt}
\end{restatable}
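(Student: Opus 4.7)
My plan is to reduce the multivariate tail bound to the one-dimensional non-uniform Berry--Esseen bound (Corollary~\ref{cor:non-uniform-tail-bound}) via the triangle inequality. Writing $U_i := \|X_i\|_2$ and $\bar U := \frac{1}{m}\sum_{i=1}^m U_i$, the triangle inequality gives $\|X\|_2 \le \bar U$, so it suffices to bound $\Pr[\bar U \ge t]$. In the target regime $t \geq t_3 = \tilde{\Omega}(\sqrt{d})$ this reduction is essentially tight: the dominant tail mechanism is a single atypically large summand, which affects $\|X\|_2$ and $\bar U$ by the same order $\|X_i\|_2/m$, while the Gaussian-like fluctuations of $X$ that are smoothed out by the triangle inequality live at the much smaller scale $\sqrt{d/m}$.

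The first step is to control $\E U_1^k$. I would apply the power-mean inequality $(\sum_j a_j)^{k/2} \le d^{k/2 - 1}\sum_j a_j^{k/2}$ (valid for $k \ge 2$) with $a_j = (X_1^{(j)})^2$ and take expectations. Using $\E |X_1^{(j)}|^k \le \sigma_k(\cD)^k \le 1$ (by applying the definition of $\sigma_k$ to the unit vector $e_j$), this yields $\E U_1^k \le d^{k/2}$, and Jensen then gives $\E U_1 \le \sqrt{d}$.

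The second step is to center. Let $V_i := U_i - \E U_i$. The $C_r$-inequality gives $\E |V_i|^k \le 2^k \E U_i^k \le (2\sqrt{d})^k$, so the iid random variables $W_i := V_i/(2\sqrt{d})$ are mean zero with $k$-th moment at most $1$. Applying Corollary~\ref{cor:non-uniform-tail-bound} to the $W_i$'s, for any $s \ge \sqrt{(k-1)\log m / m}$,
\[
\Pr\!\left[\tfrac{1}{m}\sum_{i=1}^m W_i \ge s\right] \;\le_k\; m^{-k+1}s^{-k}.
\]

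The final step stitches these pieces together. For $t \ge 2\sqrt{d}$, the event $\bar U \ge t$ forces $\bar V := \bar U - \E U_1 \ge t - \sqrt{d} \ge t/2$, hence $\bar W := \bar V/(2\sqrt{d}) \ge t/(4\sqrt{d})$. Taking $t_3 := C_k \sqrt{d}\,\log^{1/k} m$ for a sufficiently large $C_k$ guarantees both $t_3 \ge 2\sqrt{d}$ and $t/(4\sqrt{d}) \ge \sqrt{(k-1)\log m/m}$ for all $t \ge t_3$ (the latter condition being equivalent to $m \ge C'_k \log^{1 - 2/k} m$, which holds past an absolute constant). Invoking the corollary,
\[
\Pr[\|X\|_2 \ge t] \;\le\; \Pr[\bar U \ge t] \;\le\; \Pr\!\left[\bar W \ge \tfrac{t}{4\sqrt{d}}\right] \;\le_k\; m^{-k+1}\!\left(\tfrac{4\sqrt{d}}{t}\right)^{\!k} \;=\; O_k\!\left(\tfrac{d^{k/2}}{m^{k-1}t^k}\right),
\]
matching the claim up to $k$-dependent factors, which are absorbed into the $\tilde O$. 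The main obstacle is convincing oneself that the crude triangle-inequality reduction does not introduce spurious powers of $m$ or $d$; the threshold $t_3 = \tilde\Omega(\sqrt{d})$ is precisely what rules this out, since the Gaussian scale of $\bar X$, namely $\sqrt{d/m}$, is then much smaller than $t$, so the ``one big jump'' mechanism captured by $\bar U$ correctly describes the tail.
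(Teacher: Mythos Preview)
Your reduction is clean and, for $k \ge 3$, fully correct: the triangle inequality $\|X\|_2 \le \bar U$, the moment bound $\E U_1^k \le d^{k/2}$ via power-mean, the centering and rescaling to $W_i$, and the final invocation of Corollary~\ref{cor:non-uniform-tail-bound} all go through exactly as you describe, yielding $O_k(d^{k/2}/(m^{k-1}t^k))$ once $t \gtrsim \sqrt{d}\log^{1/k} m$. This is a genuinely different and much more elementary route than the paper's. The paper decomposes each summand $X_i$ into light/moderate/heavy pieces (with thresholds $r_1 = t/3$ and $r_2 = mt/(3\log m)$), disposes of the light piece trivially since $\|A_1\|_2 < t/3$ deterministically, bounds the heavy piece by a union bound over the event that any $\|X_i\| \ge r_2$, and handles the moderate piece by a dyadic bucketing argument together with a convexity observation about the function $f(\ell)$. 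Your argument replaces all of this machinery with a single black-box appeal to the one-dimensional non-uniform Berry--Esseen bound. What you gain is brevity; what the paper's approach buys is that it is self-contained (it does not rely on Michel's theorem) and, as noted next, slightly more general in $k$.

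The one real gap is the range $2 < k < 3$. Corollary~\ref{cor:non-uniform-tail-bound} is stated and proved only for $k \ge 3$ (the proof in Appendix~\ref{sec:recreating-nube} Taylor-expands $e^{hY}$ to third order and needs $\E|Y|^3 < \infty$), so your argument does not cover $k \in (2,3)$. The paper's bucketing proof of Theorem~\ref{thm:thmbigt} avoids Corollary~\ref{cor:non-uniform-tail-bound} entirely and works for all $k > 2$. If you want to match the stated hypothesis $k > 2$, you would need either a version of the one-dimensional tail bound valid for fractional moments in $(2,3)$ (e.g., a Fuk--Nagaev-type inequality) or to fall back on a direct argument for that range. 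A minor additional remark: the threshold on $m$ you derive, $m \gtrsim_k \log^{1 - 2/k} m$, has a $k$-dependent (not absolute) implied constant, but this is harmless under the paper's $O_k/\tilde O$ conventions.
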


In this subsection, we give a proof overview of both \Cref{thm:thmsmallt} and \Cref{thm:thmbigt} (whose combination yields \Cref{thm:highd-tailbound}). As we give the overview, we will also develop some high-level lemmas that will be used in both analyses. Building on these lemmas, we will give the full proofs of the theorems in the following subsections. 

As notation, we say that a random variable $Z \in \R^d$ has a \textit{$t$-deviation} if $\normt{Z} \ge t$. In \Cref{thm:thmsmallt}, we analyze the probability that $X$ has a $t$-deviation for all $\sqrt{d/m} \lesssim t \lesssim \sqrt{d}$ (we call all $t$ in this range \textit{small $t$}). In \Cref{thm:thmbigt}, we analyze the probability that $X$ has a $t$-deviation for all $t \gtrsim \sqrt{d}$ (we call all $t$ in this range \textit{large $t$}). The analyses of the small-$t$ event and the large-$t$ event closely follow the same blueprint, which we describe in this subsection.

To start, let $r_1, r_2 \ge 0$ such that $r_1 < r_2$. We call a sample $X_i$ \textit{light} if $\normt{X_i} < r_1$. We call a sample $X_i$ \textit{moderate} if $\normt{X_i} \in [r_1,r_2)$. We call a sample $X_i$ \textit{heavy} if $\normt{X_i} \ge r_2$. We can group together the light, moderate, and heavy samples as follows:
\begin{align*}
A_1 &= \frac{1}{m} \sum_{i=1}^m X_i \cdot \ind{\normt{X_i} < r_1} \\
A_2 &= \frac{1}{m} \sum_{i=1}^m X_i \cdot \ind{r_1 \le \normt{X_i} < r_2} \\
A_3 &= \frac{1}{m} \sum_{i=1}^m X_i \cdot \ind{\normt{X_i} \ge r_2}
\end{align*}
We can use these groupings to expand the probability that $X$ has a $t$-deviation:
\begin{equation}
    \Pr\Brac{\normt{X} \ge t} 
    \le \Pr\Brac{\normt{A_1} \ge \frac{t}{3}} + \Pr\Brac{\normt{A_2} \ge \frac{t}{3}} + \Pr\Brac{\normt{A_3} \ge \frac{t}{3}}.
    \label{eq:pigeon}
\end{equation}
The proofs of \Cref{thm:thmsmallt} and \Cref{thm:thmbigt} focus on bounding the probabilities that each of $A_1$, $A_2$, and $A_3$ have $(t/3)$-deviations. 

In order to bound these events, we must take care to set $r_1$ and $r_2$ appropriately as functions of $t$. In particular, for both the small-$t$ regime and the large-$t$ regime, we will always set $r_2 = \frac{mt}{3\log m} \approx mt$. That is, we should think of $r_2$ as being a global variable for both the small-$t$ regime and the large-$t$ regime. On the other hand, we should think of $r_1$ as being a local variable which is set differently depending on whether $t$ is small or large (this is actually why we divide $t$ into small $t$ and large $t$ in the first place). In the small-$t$ regime, we set $r_1 = d/t$. In the large-$t$ regime, we set $r_1 = t/3$.

As it turns out, the proof bounding the deviation of the heavy samples is agnostic to whether $t$ is small or large. On the other hand, the proofs bounding the deviation of the light samples will necessarily be different for small $t$ and large $t$. The proofs bounding the deviation of the moderate samples are largely similar for small $t$ and large $t$. We focus for now on proving bounds for the deviations of just the heavy and moderate samples. (In ensuing subsections, we bound the probability that $A_1$ has a $(t/3)$-deviation separately for small $t$ and for large $t$.) 

Throughout the proof, we make extensive use of the following lemma:
\begin{lemma}[Tail bound for the norm of bounded $k$-th moment distribution \cite{Zhu2022}]
\label{lem:tail-bound-norm}
Assume $p$ has its $k$-th moment bounded by $1$ for $k \ge 2$. Then for all $t > 0$,
\begin{equation*}
\Pr\Brac{\normt{X} \ge t} \le d^{k/ 2} t^{-k}.
\end{equation*}    
\end{lemma}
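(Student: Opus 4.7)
The plan is to prove the bound via Markov's inequality applied to $\|X\|_2^k$, after first reducing $\E[\|X\|_2^k]$ to a sum of marginal $k$-th moments using a standard power mean inequality.

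First, I would observe that
\[
    \Pr\Brac{\normt{X} \ge t} = \Pr\Brac{\normt{X}^k \ge t^k} \le \frac{\E\Brac{\normt{X}^k}}{t^k}
\]
by Markov's inequality, so it suffices to establish $\E[\normt{X}^k] \le d^{k/2}$. The main step is to control $\E[\normt{X}^k] = \E\bigl[(X_1^2 + \dots + X_d^2)^{k/2}\bigr]$ by the marginal $k$-th moments of the coordinates of $X$. Since $k \ge 2$, the power mean inequality applied to $|X_1|,\dots,|X_d|$ with exponents $k \ge 2$ gives
\[
    \Paren{\frac{1}{d}\sum_{i=1}^d X_i^2}^{k/2} \le \frac{1}{d}\sum_{i=1}^d \Abs{X_i}^k,
\]
which rearranges to $\normt{X}^k \le d^{k/2 - 1}\sum_{i=1}^d |X_i|^k$. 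Taking expectations yields $\E[\normt{X}^k] \le d^{k/2-1}\sum_{i=1}^d \E[|X_i|^k]$.

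Next, I would use the bounded $k$-th moment hypothesis. By the definition of $\sigma_k(\cD)$ applied to the standard basis vector $e_i \in \mathbb{S}^{d-1}$, we have $\E[|X_i|^k] = \E[|\langle e_i, X\rangle|^k] \le \sigma_k(\cD)^k \le 1$ for each $i$. Summing over $i$ gives $\sum_{i=1}^d \E[|X_i|^k] \le d$, hence $\E[\normt{X}^k] \le d^{k/2 - 1} \cdot d = d^{k/2}$. Combining with the first display completes the proof.

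There is essentially no hard step here: the only thing to be careful about is that the power mean inequality requires $k/2 \ge 1$ (which is guaranteed by the hypothesis $k \ge 2$), and that $\sigma_k(\cD) \le 1$ directly controls each coordinate's $k$-th moment since each $e_i$ is a unit vector. One minor point is that the statement uses $\|X\|_2$ rather than $\|X - \mu\|_2$, so I am implicitly treating the case $\mu = 0$ (which is how this lemma is applied in the subsequent subsections, where it is used for centered random variables).
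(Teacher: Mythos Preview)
Your proof is correct. The paper does not actually give its own proof of this lemma; it simply cites it from \cite{Zhu2022} and uses it as a black box, so there is nothing to compare against. Your argument via Markov's inequality on $\normt{X}^k$ together with the power-mean inequality $\normt{X}^k \le d^{k/2-1}\sum_{i=1}^d |X_i|^k$ and the coordinate-wise moment bound $\E[|X_i|^k]\le 1$ is exactly the standard derivation, and your remark about the implicit assumption $\mu=0$ is accurate and matches how the lemma is invoked throughout Section~\ref{sec:highdim-overview}.
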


\paragraph{Bounding the deviation of the heavy samples.}
We begin by introducing \Cref{lem:heavy-samples}, which bounds the probability that $A_3$ has a $(t/3)$-deviation. This is the easiest step in the analysis and is used in both the small-$t$ and large-$t$ regimes. As mentioned earlier, we set $r_2 = \frac{mt}{3\log m}$.

\begin{lemma}[Deviation of Heavy Samples]
Let $k \ge 2$, and let the distribution $\cD$ over $\R^d$ have mean $0$ and $k$-th moment bounded by $1$. Then for all $t > 0$,
\begin{equation}
    \Pr\Brac{\normt{A_3} \ge \frac{t}{3}} \le \tilde{O}\Paren{\frac{d^{k/2}}{m^{k-1}t^k}}.
    \label{eq:third-term}
\end{equation}
\label{lem:heavy-samples}
\end{lemma}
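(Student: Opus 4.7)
The strategy is to bound $\normt{A_3}$ by a sum of independent nonnegative scalars via the triangle inequality and then dispatch the tail with a single application of Markov's inequality; the per-sample expectation is small because the truncation to the heavy tail starts at the very large radius $r_2 = \frac{mt}{3\log m}$.

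Concretely, I would begin by observing that, by the triangle inequality,
\[
    \normt{A_3} \le \frac{1}{m} \sum_{i=1}^m \normt{X_i}\,\ind{\normt{X_i} \ge r_2} \eqqcolon \frac{1}{m}\sum_{i=1}^m Y_i,
\]
so it suffices to bound $\Pr\Brac{\sum_{i=1}^m Y_i \ge mt/3}$. Since the $Y_i$ are \iid{} and nonnegative, Markov's inequality gives
\[
    \Pr\Brac{\normt{A_3} \ge \frac{t}{3}} \le \frac{3\,\E\brac{Y_1}}{t}.
\]

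Next, I would compute $\E\brac{Y_1}$ using the layer-cake formula together with the norm tail bound of Lemma~\ref{lem:tail-bound-norm}, which yields $\Pr\brac{\normt{X_1} \ge x} \le d^{k/2} x^{-k}$ for every $x > 0$. Splitting the integral at $r_2$ and noting that $\Pr\brac{Y_1 \ge x} = \Pr\brac{\normt{X_1} \ge r_2}$ for $x \in [0, r_2]$ and equals $\Pr\brac{\normt{X_1} \ge x}$ for $x \ge r_2$,
\[
    \E\brac{Y_1} \le r_2 \cdot \frac{d^{k/2}}{r_2^{k}} + \int_{r_2}^{+\infty} \frac{d^{k/2}}{x^k}\, dx = \frac{k}{k-1} \cdot \frac{d^{k/2}}{r_2^{k-1}}.
\]

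Finally, I would substitute $r_2 = mt/(3\log m)$ into the Markov bound to obtain
\[
    \Pr\Brac{\normt{A_3} \ge \frac{t}{3}} \le \frac{3k}{k-1} \cdot \frac{d^{k/2}}{r_2^{k-1}\,t} = \frac{3k (3 \log m)^{k-1}}{k-1} \cdot \frac{d^{k/2}}{m^{k-1} t^k} = \tilde{O}\Paren{\frac{d^{k/2}}{m^{k-1}t^k}},
\]
which is the claimed bound; the $\tilde{O}$ absorbs the $(\log m)^{k-1}$ factor and the $k$-dependent constant. There is no real obstacle here: the crucial choice that makes this clean is $r_2 \approx mt$, which is large enough that the contribution of each heavy sample in expectation is $\tilde{O}(d^{k/2}/(mt)^{k-1})$, so Markov alone suffices. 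The analytical difficulty in Theorems~\ref{thm:thmsmallt} and \ref{thm:thmbigt} is relegated to the light and moderate regimes, where the random variables are bounded but no longer well-concentrated under a mere Markov bound.
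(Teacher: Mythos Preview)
Your proof is correct and reaches the same $\tilde{O}\Paren{d^{k/2}/(m^{k-1}t^k)}$ bound, but it takes a slightly different route than the paper. The paper does not use Markov's inequality or compute $\E[Y_1]$; instead it observes that the event $\sum_{i:\normt{X_i}\ge r_2}\normt{X_i}\ge mt/3$ implies the event $\sum_{i:\normt{X_i}\ge r_2}\normt{X_i}\ge r_2$ (because $r_2 \le mt/3$), which in turn is exactly the event that at least one sample is heavy. A union bound then gives $\Pr[\normt{A_3}\ge t/3]\le m\Pr[\normt{X_1}\ge r_2]\le m\,d^{k/2}r_2^{-k}$, and substituting $r_2 = mt/(3\log m)$ yields the claim. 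So the paper's argument is a one-line union bound with no integration, while yours is a one-line Markov bound after a short layer-cake computation; your version actually picks up one fewer factor of $\log m$ (you get $(\log m)^{k-1}$ versus the paper's $(\log m)^k$), though both are swallowed by the $\tilde{O}$.
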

\begin{proof}
    Note that
    \begin{align}
        \Pr\Brac{\normt{A_3} \ge \frac{t}{3}} 
        &= \Pr\Brac{\Bignormt{\frac{1}{m}\sum_{\substack{i \\ \normt{X_i} \ge r_2}} X_i} \ge \frac{t}{3}}\\
        &\le\Pr\Brac{\sum_{\substack{i \\ \normt{X_i} \ge r_2}} \normt{X_i} \ge \frac{mt}{3}} \notag \\
        &\le \Pr\Brac{\sum_{\substack{i \\ \normt{X_i} \ge r_2}} \normt{X_i} \ge r_2} \notag \\
        &= \Pr\Brac{\exists i \text{ such that } \normt{X_i} \ge r_2} \notag \\
        &\le m \Pr\Brac{\normt{X_1} \ge r_2}, \label{eq:last_equation}
    \end{align}
    where the last line follows from the Union Bound. We now use \Cref{lem:tail-bound-norm} to bound the probability of the norm of one sample having an $r_2$-deviation:
    \begin{equation}
        \Pr\Brac{\normt{X_1} \ge r_2}
        \le r_2^{-k} d^{k/2}
        = \tilde{O}\Paren{\frac{d^{k/2}}{m^kt^k}},
    \end{equation}
    which combined with \eqref{eq:last_equation} completes the proof.
\end{proof}

\paragraph{Bounding the deviation of the moderate samples.}
We give an overview of how to bound the probability of the moderate samples deviating for both small $t$ and large $t$. This is the most interesting part of the analysis.

We begin by partitioning the moderate samples further. WLOG assume that $r_2/r_1$ is a power of two, and for all integers $\ell \in \set{1, \dots, \log\Paren{r_2/ r_1}}$ we define the interval
\begin{equation*}
B_{\ell} = \Set{i : 2^{-\ell} \cdot r_2 \le \normt{X_i} < 2^{-\ell + 1} \cdot r_2 }.
\end{equation*}
Note that, as $\ell$ gets larger, the interval $B_{\ell}$ shrinks in width and also covers a smaller range of values. Consider the following remark about these intervals.

\begin{remark}
\label{claim:B_ell_exists}
If $\log (r_2/r_1) \le \log m$ and $A_2$ has a $(t/3)$-deviation, then there exists $\ell \in \Set{1, \dots, \log\Paren{r_2/ r_1}}$ such that $|B_{\ell}| \ge 2^{\ell-1}$. 
\end{remark}
\begin{proof}
Suppose for contradiction that $\forall \ell : \abs{B_{\ell}} < 2^{\ell-1}$. Then, 
\begin{align}
    \normt{A_2} 
    &=\Bignormt{\frac{1}{m}\sum_{\substack{i \\ \normt{X_i} \in [r_1,r_2)}} X_i} \notag \\
    &\le \frac{1}{m}\sum_{\substack{i \\ \normt{X_i} \in [r_1,r_2)}} \normt{X_i} \notag \\ 
    &= \frac{1}{m}\sum_{\ell = 1}^{\log\Paren{r_2/ r_1}} \sum_{i \in B_{\ell}} \normt{X_i} \notag \\
    &< \frac{1}{m}\sum_{\ell = 1}^{\log\Paren{r_2/ r_1}} 2^{\ell-1} \cdot \paren{2^{-\ell + 1} \cdot r_2} \notag \\
    &= \frac{\log\Paren{r_2/ r_1} r_2}{m}.
    \label{eq:bucketing-argument}
\end{align}
By assumption on $\log(r_2/r_1) \le \log m$, equation \eqref{eq:bucketing-argument} becomes
\begin{equation*}
\normt{A_2} < \frac{\log\Paren{r_2/ r_1} r_2}{m} \le \frac{\log m}{m} \cdot \frac{mt}{3\log m} \le \frac{t}{3},
\end{equation*}
which is a contradiction.
\end{proof}

We will assume $\log(r_2/r_1) \le \log m$ and use \Cref{claim:B_ell_exists} to expand the probability that $A_2$ has a $(t/3)$-deviation:
\begin{align}
    \Pr\Brac{\normt{A_2} \ge \frac{t}{3}}
    &\le\sum_{\ell = 1}^{\log\Paren{r_2/ r_1}} \Pr\Brac{\abs{B_{\ell}} \ge 2^{\ell-1}} \tag{by \Cref{claim:B_ell_exists}} \notag\\
    &\le \sum_{\ell = 1}^{\log\Paren{r_2/ r_1}} \binom{m}{2^{\ell-1}} \Pr\Brac{\normt{X_i} \ge 2^{-\ell} \cdot r_2}^{2^{\ell-1}} \tag{by Union Bound}\\
    &\le \sum_{\ell = 1}^{\log\Paren{r_2/ r_1}} \Paren{\frac{em}{2^{\ell-1}}}^{2^{\ell-1}} \Paren{\Paren{\frac{2^{\ell} \cdot \sqrt{d}}{r_2}}^k}^{2^{\ell-1}}\tag{by \Cref{lem:tail-bound-norm} and \Cref{lem:binom-upperbound}} \notag\\ 
    &\le \sum_{\ell = 1}^{\log\Paren{r_2/ r_1}} \Paren{\frac{e \cdot m  \cdot d^{k/2} \cdot 2^{\ell k - \ell + 1} }{r_2^k}}^{2^{\ell-1}}\notag\\
    &\le \log(r_2/r_1)\cdot \max_{\ell \in \{1,\ldots,\log(r_2/r_1)\}}\Paren{\frac{e \cdot m  \cdot d^{k/2} \cdot 2^{\ell k - \ell + 1} }{r_2^k}}^{2^{\ell-1}}\label{eq:expanding-moderate-samples}
\end{align}
Since $\log(r_2/r_1) = \tilde{O}(1)$, then the rest of the proof will reduce to showing
\begin{equation}
    \max_{\ell}\Paren{\frac{2\cdot 3^ke\cdot d^{k/2} \cdot 2^{\ell(k - 1)}\log^km }{m^{k-1}t^k}}^{2^{\ell-1}} = \tilde{O}\Paren{\frac{d^{k/2}}{m^{k-1}t^k}}.
    \label{eq:sum-to-bound}
\end{equation}
Showing \eqref{eq:sum-to-bound} will follow from algebraic calculations. To lighten the notation, we will define the function $$f(\ell) = \Theta\Paren{\frac{d^{k/2} \cdot 2^{\ell(k - 1)}\log^km}{m^{k-1}t^k}}^{2^{\ell-1}},$$ where the $\Theta$-notation hides the $2 \cdot 3^k e$ term in \eqref{eq:sum-to-bound}. To bound $\max_{\ell} f(\ell)$, it helps to observe \Cref{lem:f-is-convex} which states that $f(\ell)$ is convex with respect to $\ell$ (see \Cref{sec:highdim-overview-missing} for the proof).
\begin{restatable}{lem}{lemfisconvex}
    Fix any $m,d,t > 0$ and $k \ge 1$. Then the function $f(\ell)$ is convex with respect to $\ell$.
    \label{lem:f-is-convex}
\end{restatable}
By \Cref{lem:f-is-convex}, it follows that 
\begin{equation}
    \max_{\ell} f(\ell) \in \{f(1),f(\log(r_2/r_1)\},
    \label{eq:max-f-ell}
\end{equation}
and so it suffices to show that $f(1)$ and $f(\log(r_2/r_1))$ are bounded. Note that, for both small $t$ and large $t$,
\begin{equation}
  f(1) = \Theta\Paren{\frac{d^{k/2} \cdot 2^{k - 1}\log^km}{m^{k-1}t^k}} = \tilde{O}\Paren{\frac{d^{k/2}}{m^{k-1}t^k}}.
  \label{eq:ell-is-one}
\end{equation}
On the other hand, we must bound $f(\log(r_2/r_1))$ separately for small $t$ and large $t$, as in each analysis we will set $r_1$ differently. We leave this subsection and bookmark our progress with the following lemma:
\begin{lemma}
    If $\log (r_2/r_1) \le \log m$ and $f(\log(r_2/r_1)) = \tilde{O}(d^{k/2}m^{-k+1}t^{-k})$, then $$\Pr\Brac{\normt{A_2} \ge \frac{t}{3}} = \tilde{O}\Paren{\frac{d^{k/2}}{m^{k-1}t^k}}.$$
    \label{lem:missing-piece}
\end{lemma}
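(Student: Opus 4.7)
The plan is to observe that essentially all the work has already been done in the text preceding the lemma, and the statement is a consolidation step. First I would recall the chain of inequalities leading to \eqref{eq:expanding-moderate-samples}, which, under the hypothesis $\log(r_2/r_1) \le \log m$ (needed to invoke \Cref{claim:B_ell_exists}), yields
\[
\Pr\Brac{\normt{A_2} \ge \frac{t}{3}} \;\le\; \log(r_2/r_1) \cdot \max_{\ell \in \{1, \dots, \log(r_2/r_1)\}} f(\ell),
\]
where $f(\ell)$ is the function defined immediately after \eqref{eq:sum-to-bound}. The derivation uses only the union bound, the binomial estimate, and the tail bound of \Cref{lem:tail-bound-norm}, all of which are unconditional on the choice of $r_1$, so it applies here.

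Next, I would invoke \Cref{lem:f-is-convex} to conclude that, since $f$ is convex in $\ell$, its maximum over the interval $[1, \log(r_2/r_1)]$ is attained at one of the endpoints, i.e.\
\[
\max_{\ell} f(\ell) \in \Set{f(1),\, f(\log(r_2/r_1))}.
\]
From \eqref{eq:ell-is-one} we already have $f(1) = \tilde{O}\Paren{d^{k/2} / (m^{k-1} t^k)}$, and the hypothesis of the lemma provides the same bound on $f(\log(r_2/r_1))$. Combining these two endpoint bounds yields $\max_\ell f(\ell) = \tilde{O}\Paren{d^{k/2}/(m^{k-1} t^k)}$.

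Finally, multiplying by the prefactor $\log(r_2/r_1) \le \log m$ and absorbing this logarithm into the $\tilde{O}$ notation gives
\[
\Pr\Brac{\normt{A_2} \ge \frac{t}{3}} = \tilde{O}\Paren{\frac{d^{k/2}}{m^{k-1} t^k}},
\]
which is exactly the claim. There is no real obstacle: the lemma is essentially a bookkeeping step that packages the bucketing argument, the convexity reduction, and the already-proved bound on $f(1)$ into a single reusable statement, leaving the regime-dependent verification of the bound on $f(\log(r_2/r_1))$ to the proofs of \Cref{thm:thmsmallt} and \Cref{thm:thmbigt}.
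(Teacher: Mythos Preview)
Your proposal is correct and matches the paper's approach exactly: the paper presents this lemma as a ``bookmark'' of the preceding argument without even writing out a separate proof, since it is an immediate consequence of the bucketing bound \eqref{eq:expanding-moderate-samples}, the convexity reduction via \Cref{lem:f-is-convex} and \eqref{eq:max-f-ell}, and the endpoint bound \eqref{eq:ell-is-one}. Your write-up simply makes this explicit.
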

\subsubsection{The Small-\texorpdfstring{$t$}{t} Regime: Proving Theorem~\ref{thm:thmsmallt}}
\label{sec:highdim-smallt}
In this section, we prove the following theorem:
\thmsmallt*
Let $r_1 = \frac{d}{t}$, $r_2 = \frac{mt}{3\log m}$, and let $A_1$, $A_2$, and $A_3$ be defined as follows:
\begin{align*}
A_1 &= \frac{1}{m} \sum_{i=1}^m X_i \cdot \ind{\normt{X_i} < r_1} \\
A_2 &= \frac{1}{m} \sum_{i=1}^m X_i \cdot \ind{r_1 \le \normt{X_i} < r_2} \\
A_3 &= \frac{1}{m} \sum_{i=1}^m X_i \cdot \ind{\normt{X_i} \ge r_2}
\end{align*}
As discussed earlier, we must bound the probabilities that $A_1$, $A_2$, and $A_3$ have $(t/3)$-deviations. We have already bounded the heavy samples by \Cref{lem:heavy-samples}. We move on to showing the complete proof of the bound for the moderate samples in \Cref{lem:moderate-samples}:
\begin{lemma}[Deviation of Moderate Samples in Small-$t$ Regime]
Let $k > 2$, $\cD$ have mean $0$ and $k$-th moment bounded by $1$, and $t > 0$. Then, there exists $t_1 = O\Paren{\sqrt{\frac{d\log m}{m}}}$ and $t_2 = \Omega(\sqrt{d}\log^{\frac{-1}{k-2}}m)$ such that, for all $t \in [t_1,t_2]$,
\begin{equation}
    \Pr\Brac{\normt{A_2} \ge \frac{t}{3}} \le \tilde{O}\Paren{\frac{d^{k/2}}{m^{k-1}t^k}}.
    \label{eq:second-term-small-t}
\end{equation}
\label{lem:moderate-samples}
\end{lemma}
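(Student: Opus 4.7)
My approach is to directly execute the blueprint from the proof overview in Section~\ref{sec:highdim-overview}, specialized to $r_1 = d/t$ and $r_2 = mt/(3\log m)$. The plan is to (i) verify the hypothesis $\log(r_2/r_1) \le \log m$ of \Cref{lem:missing-piece}, and (ii) bound the only remaining quantity $f(L)$, where $L = \log(r_2/r_1)$, by $\tilde{O}(g)$ with $g := d^{k/2}/(m^{k-1}t^k)$; invoking \Cref{lem:missing-piece} then gives the claim.

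For step (i), observe that $r_2/r_1 = mt^2/(3d\log m)$. Setting $t_1 := \sqrt{6 d\log m/m}$ ensures $r_2 \ge 2 r_1$, so the bucket decomposition is nontrivial and $L \ge 1$; and for $t \le t_2 = O(\sqrt d)$ we trivially have $\log(r_2/r_1) \le \log m$. Hence \Cref{lem:missing-piece} applies and the task reduces to bounding $f(L)$.

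For step (ii), substituting $2^L = mt^2/(3d\log m)$ into the base of $f$ produces a clean cancellation, yielding $f(L) = a^{2^{L-1}}$ with $a = \Theta_k(t^{k-2}\log m / d^{k/2-1})$ and $2^{L-1} = mt^2/(6d\log m)$. The condition $a \le 1/2$ is equivalent to $t \le t_2 = \Omega(\sqrt d / \log^{1/(k-2)} m)$, which is exactly the stated upper endpoint. To upgrade $f(L) \le 1/2$ to $f(L) \le \tilde{O}(g)$, I take logs and verify the inequality $2^{L-1}\log(1/a) \ge \log(1/g) - O(\log\log m)$. Using $\log(1/a) \approx (k-2)\log(\sqrt d/t)$ and $\log(1/g) = (k-1)\log m - k\log(\sqrt d/t)$, this reduces to an elementary two-sided check: at $t = t_1$, $2^{L-1}$ is a constant ($\ge 1$) while both $\log(1/a)$ and $\log(1/g)$ are $\Theta((k-2)\log m)$, so the leading-order balance holds provided the constant in $t_1$ is taken large enough; at $t = t_2$, $\log(1/a) = O(1)$ but $2^{L-1} = \Omega(m/\log^{k/(k-2)} m) \gg \log m$, giving enormous slack; monotonicity (or a direct convexity-type argument in $t$) handles the interior.

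Putting the pieces together: equation~\eqref{eq:ell-is-one} already bounds $f(1) = \tilde{O}(g)$; convexity of $f$ in $\ell$ (\Cref{lem:f-is-convex}) reduces $\max_\ell f(\ell)$ to $\max\{f(1), f(L)\}$; and the analysis above gives $f(L) = \tilde{O}(g)$. Plugging into \Cref{lem:missing-piece} yields $\Pr\Brac{\normt{A_2} \ge t/3} = \tilde{O}(d^{k/2}/(m^{k-1}t^k))$ for all $t \in [t_1, t_2]$. I expect the main obstacle to be calibration near the lower endpoint $t_1$: there the ``decay'' exponent $2^{L-1}$ is only a small constant, so the constants in the chain of log-comparisons must all be tracked carefully, and this is precisely where $t_1 = \Theta(\sqrt{d\log m/m})$ emerges as the natural lower threshold.
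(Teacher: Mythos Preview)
Your proposal is correct and follows essentially the same approach as the paper: verify the hypothesis $\log(r_2/r_1)\le\log m$ of \Cref{lem:missing-piece}, then bound $f(L)$ at $L=\log(r_2/r_1)$ by endpoint checks plus convexity. The paper's execution of the $f(L)$ bound (deferred to \Cref{lem:ell-is-large}) is packaged slightly more cleanly---it shows $t\mapsto f(L)$ is convex, bounds both endpoints by $\tilde O(1/m^{k-1})$, and then observes $1/m^{k-1}\le \min_{t\in[t_1,t_2]} g$, which sidesteps the pointwise log-comparison you sketch for the interior---but this is the same endpoint-plus-convexity idea you already identify.
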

\begin{proof}
We seek to apply \Cref{lem:missing-piece} (as described in the previous subsection). Note that there exists $t_2 = \Omega(\sqrt{d}\log^{\frac{-1}{k-2}}m)$ such that, for all $t \le t_2$,
$$\log(r_2/r_1) = \log\Paren{\frac{mt^2}{3d\log m}} \le \log m.$$
Thus, all that remains is proving that $f(\log(r_2/r_1)) = \tilde{O}(d^{k/2}m^{-k+1}t^{-k})$. We defer the proof of this to \Cref{lem:ell-is-large} in \Cref{sec:highdim-smallt-missing}.
\end{proof}

We now introduce \Cref{lem:light-samples}, which shows that the deviation of the light samples is bounded. 
\begin{lemma}[Deviation of Light Samples in Small-$t$ Regime]
Let $k > 2$, and let the distribution $\cD$ over $\R^d$ have mean $0$ and $k$-th moment bounded by $1$. Then, there exists $t_2 = \Omega(\sqrt{d})$ such that, for all positive $t \le t_2$,
\begin{equation}
    \Pr\Brac{\normt{A_1} \ge \frac{t}{3}} \le \exp\Paren{-\Theta\Paren{\frac{mt^2}{d}}}.
    \label{eq:first-term}
\end{equation}
\label{lem:light-samples}
\end{lemma}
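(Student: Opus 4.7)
The plan is to split $A_1$ into its mean plus a mean-zero fluctuation, control the mean as a bias term by integrating the norm tail bound, and then apply a vector Bernstein inequality to the fluctuation. Let $Y_i \defeq X_i \cdot \ind{\normt{X_i} < r_1}$ with $r_1 = d/t$, so that $A_1 = \frac{1}{m}\sum_{i=1}^{m} Y_i$ and the $Y_i$ are i.i.d.\ with $\normt{Y_i} \le r_1$. Since $\E[X_i] = 0$, the bias satisfies
\[
\normt{\E[Y_i]} = \normt{\E[X_i \cdot \ind{\normt{X_i} \ge r_1}]} \le \E\Brac{\normt{X_i} \cdot \ind{\normt{X_i} \ge r_1}}.
\]
Writing this expectation via the layer-cake formula and plugging in the tail bound $\Pr[\normt{X_i} \ge s] \le d^{k/2}s^{-k}$ from \Cref{lem:tail-bound-norm} yields $\normt{\E[Y_i]} = O_k\Paren{d^{k/2} r_1^{-(k-1)}} = O_k\Paren{t \cdot \Paren{t/\sqrt{d}}^{k-2}}$. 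For $t$ at most a sufficiently small constant multiple of $\sqrt{d}$ (depending on $k$), this bias is at most $t/12$; this is precisely the source of the threshold $t_2 = \Omega(\sqrt{d})$.

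By the triangle inequality, $\normt{A_1} \ge t/3$ then implies $\normt{A_1 - \E[A_1]} \ge t/4$, so it suffices to bound $\Pr\Brac{\normt{A_1 - \E[A_1]} \ge t/4}$. Consider the centered variables $W_i \defeq Y_i - \E[Y_i]$: they are i.i.d., mean-zero, and satisfy $\normt{W_i} \le 2r_1 = 2d/t$ and $\E\brac{\normt{W_i}^2} \le \E\brac{\normt{X_i}^2} \le d$, where the last bound uses that bounded $k$-th moment for $k > 2$ implies $\E[\iprod{v, X_i}^2] \le 1$ for every unit vector $v$, hence $\E\brac{\normt{X_i}^2} \le d$ by summing over an orthonormal basis.

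Applying a Hilbert-space vector Bernstein inequality (e.g.\ Pinelis's) to $\sum_i W_i$ with total-variance parameter $\sigma^2 = md$, boundedness parameter $K = 2d/t$, and deviation $s = mt/4$, the crossover $\sigma^2 / K = mt/2$ exceeds $s$, so we are in the sub-Gaussian regime of the bound and obtain
\[
\Pr\Brac{\Normt{\sum_{i=1}^m W_i} \ge mt/4} \le \exp\Paren{-\Omega(s^2/\sigma^2)} = \exp\Paren{-\Omega(mt^2/d)},
\]
as required. The main obstacle in the plan is the bias analysis: the truncation radius $r_1 = d/t$ is not large in this regime, and it is exactly the condition $t \le t_2 = O(\sqrt{d})$ that makes $\Paren{t/\sqrt{d}}^{k-2}$ a small constant, which is what allows the bounded $k$-th moment to pay for the bias so that we can absorb it into the $t/3$ deviation budget before invoking concentration. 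Everything else reduces to a routine application of vector Bernstein.
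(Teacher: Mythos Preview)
Your proposal is correct and follows essentially the same approach as the paper: split $A_1$ into bias plus centered fluctuation, bound the bias by integrating the norm tail bound $\Pr[\normt{X_i}\ge s]\le d^{k/2}s^{-k}$ to get $\normt{\E[Y_i]}=O_k(t(t/\sqrt d)^{k-2})\le ct$ for $t\le t_2=\Omega(\sqrt d)$, and then apply a vector Bernstein inequality to the centered variables with variance proxy $d$ and range $2r_1$. The only cosmetic differences are the exact constants ($t/12$ vs.\ $t/6$) and that the paper cites a specific vector Bernstein statement while you invoke Pinelis's form; the sub-Gaussian regime check $s<\sigma^2/K$ you note is the same verification the paper performs.
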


\Cref{lem:light-samples} critically uses a Bernstein Inequality for vectors (see, e.g., Lemma 18 of \cite{kohler2017sub}):
\begin{lemma}[Vector Bernstein Inequality]
\label{lem:vector-bernstein}
Let $Z_1, \dots, Z_m$ be independent random vectors in $\R^d$ such that $\E[Z_i] = 0$, $\E[\Snormt{Z_i}] \le \sigma^2$  and $\normt{Z_i} \le r$ for all $i$. Then for all $0 < t < \sigma^2 / r$,
\begin{equation*}
\Pr\Brac{\Bignormt{\sum_{i=1}^m Z_i} \ge mt} \le \exp\Paren{\frac{-mt^2}{8 \sigma^2} + \frac{1}{4}}.
\end{equation*}
\end{lemma}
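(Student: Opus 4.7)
The plan is to derive this vector Bernstein inequality via a Doob-martingale argument combined with the scalar Bernstein (Freedman) inequality for martingales. The main idea is that $\normt{S}$, where $S = \sum_{i=1}^m Z_i$, concentrates around its expectation, which is itself of order $\sqrt{m}\sigma$, and the $1$-Lipschitzness of the norm gives sub-Gaussian concentration with variance proxy $\sigma^2$ per summand rather than the naive $r^2$.

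First I would bound $\E[\normt{S}]$: by independence and $\E[Z_i] = 0$, the cross terms in $\E[\snormt{S}]$ vanish, giving $\E[\snormt{S}] = \sum_i \E[\snormt{Z_i}] \leq m\sigma^2$, and so $\E[\normt{S}] \leq \sqrt{m}\sigma$ by Jensen's inequality. Then I would introduce the Doob martingale $X_k = \E[\normt{S} \mid Z_1, \dots, Z_k]$, so that $X_0 = \E[\normt{S}]$ and $X_m = \normt{S}$. Writing $X_k = \phi_k(Z_k)$ for $\phi_k(z) = \E_{Z_{k+1:m}}\normt{\sum_{j < k} Z_j + z + \sum_{j > k} Z_j}$, the reverse triangle inequality shows that $\phi_k$ is $1$-Lipschitz; combined with $\normt{Z_k} \leq r$, this yields $|X_k - X_{k-1}| \leq 2r$ almost surely. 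A symmetrization argument with an independent copy $Z_k'$ then gives $\Var(X_k \mid \mathcal{F}_{k-1}) \leq \tfrac{1}{2}\E\snormt{Z_k - Z_k'} = \E\snormt{Z_k} \leq \sigma^2$, so the total conditional variance is at most $m\sigma^2$.

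Freedman's Bernstein inequality for martingales then produces $\Pr[\normt{S} - \E\normt{S} \geq u] \leq \exp\bigl(-u^2/(2m\sigma^2 + (4r/3)u)\bigr)$. The hypothesis $t < \sigma^2/r$ ensures that the choice $u = mt - \sqrt{m}\sigma$ satisfies $(4r/3)u \leq 2m\sigma^2$, placing us in the sub-Gaussian regime where the denominator is bounded by $4m\sigma^2$ and the tail becomes $\exp(-u^2/(4m\sigma^2))$. Expanding $(mt - \sqrt{m}\sigma)^2/(4m\sigma^2)$ and comparing it with the target exponent reduces the claim to the elementary inequality $(y - 2)^2 \geq 0$ in $y = \sqrt{m}t/\sigma$, which furnishes the additive slack $+\tfrac{1}{4}$ in the stated bound. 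For $t \leq \sigma/\sqrt{m}$ the stated bound already exceeds $1$, so nothing needs to be proved in that range.

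The main obstacle is the conditional variance bound of $\sigma^2$ per step: a naive Azuma-Hoeffding argument using only $\normt{Z_k} \leq r$ would produce $r^2$ in place of $\sigma^2$ and give only a Hoeffding-type tail, far too weak for the sub-Gaussian rate claimed here. The key refinement is that each Doob difference is a function of a single $Z_k$ through a $1$-Lipschitz map, so its variance inherits the genuine second moment $\sigma^2$ of $Z_k$ rather than its diameter $r$; this is precisely what promotes the estimate to Bernstein form in the regime $t < \sigma^2/r$.
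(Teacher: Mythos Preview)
The paper does not prove this lemma at all; it simply cites it as Lemma~18 of \cite{kohler2017sub}. So there is no ``paper's own proof'' to compare against.

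Your argument is a correct self-contained proof. The Doob-martingale route with Freedman's inequality is a standard and clean way to obtain vector Bernstein bounds: the $1$-Lipschitzness of the norm is exactly what lets you replace the naive per-step variance proxy $r^2$ with the genuine second moment $\sigma^2$, and your verification that $(4r/3)u < 2m\sigma^2$ under the hypothesis $t < \sigma^2/r$ correctly isolates the sub-Gaussian regime. The algebra with $y = \sqrt{m}\,t/\sigma$ reducing to $(y-2)^2 \geq 0$ checks out, and the trivial-bound observation for $t \le \sigma/\sqrt{m}$ closes the remaining case. One small point worth stating explicitly is that the conditional-variance bound $\Var(X_k\mid\mathcal{F}_{k-1})\le\sigma^2$ holds \emph{almost surely} (not just in expectation), since $\phi_k$ is $1$-Lipschitz for every realization of $Z_1,\dots,Z_{k-1}$; Freedman's inequality requires this almost-sure control, and you have it.
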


We now prove \Cref{lem:light-samples}.
\begin{proof}[Proof of \Cref{lem:light-samples}]
    For each $i \in [m]$, set $Y_i = X_i\cdot \ind{\normt{X_i} < r_1}$, and $Z_i = Y_i - \E\Brac{Y_i}$.
    We can rewrite the left-hand side of \eqref{eq:first-term} in terms of $Z_i$ and $\E\Brac{Y_i}$ as follows:
    \begin{align*}
    \Pr\Brac{\normt{A_1} \ge \frac{t}{3}}
    &= \Pr\Brac{\Bignormt{\frac{1}{m}\sum_{\substack{i \\ \normt{X_i} < r_1}} X_i} \ge \frac{t}{3}}\\
    &= \Pr \Brac{\Bignormt{\sum_{\substack{i \\ \normt{X_i} < r_1}} X_i} \ge \frac{mt}{3}} \\
    &= \Pr \Brac{\Bignormt{\sum_{i=1}^m Y_i} \ge \frac{mt}{3}} \\
    &= \Pr \Brac{\Bignormt{\sum_{i= 1}^m Y_i - \E\Brac{Y_i} + \E\Brac{Y_i}} \ge \frac{mt}{3}} \\
    &= \Pr \Brac{\Bignormt{\sum_{i= 1}^m Z_i + \E\Brac{Y_i}} \ge \frac{mt}{3}} \\
    &\le \Pr \Brac{\Bignormt{\sum_{i= 1}^m Z_i} \ge m \Paren{\frac{t}{3} - \normt{\E\Brac{Y_i}}}}.\tag{by the Triangle Inequality}
    \end{align*}
    We have that $\normt{\E\Brac{Y_i}} \le t/6$ by \Cref{lem:bounding-expectation-in-the-tails}, which uses the fact that $\cD$ has a bounded $k$-th moment (see \Cref{sec:highdim-smallt-missing} for the proof). This gives us
    \[
    \Pr\Brac{\normt{A_1} \ge \frac{t}{3}} \le \Pr \Brac{\Bignormt{\sum_{i= 1}^m Z_i} \ge \frac{mt}{6}}.
    \]
    Applying Lemma~\ref{lem:vector-bernstein} to the expression above---using the parameters $\sigma^2 = d$ and $r = 2r_1$---completes the proof. This is done  in detail in \Cref{lem:bernstein-buildup-1} and \Cref{lem:final-berstein} in Section~\ref{sec:highdim-smallt-missing}. In particular, \Cref{lem:bernstein-buildup-1} checks the constraints on the $Z_i$'s, and \Cref{lem:final-berstein} performs the necessary algebraic manipulation to complete the proof. 
\end{proof}

\begin{proof}[Proof of \Cref{thm:thmsmallt}]
    We have that
    \begin{align*}
        \Pr\Brac{\normt{X} \ge t} &\le \Pr\Brac{\normt{A_1} \ge \frac{t}{3}} + \Pr\Brac{\normt{A_2} \ge \frac{t}{3}} + \Pr\Brac{\normt{A_3} \ge \frac{t}{3}}\\
        &\le \tilde{O}\Paren{\frac{d^{k/2}}{m^{k-1}t^k} + e^{-mt^2 / d}}.\tag{by \Cref{lem:light-samples}, \Cref{lem:moderate-samples}, and \Cref{lem:heavy-samples}}
    \end{align*}
\end{proof}
\subsubsection{The Large-\texorpdfstring{$t$}{t} Regime: Proving Theorem~\ref{thm:thmbigt}}\label{sec:highdim-larget}
\thmbigt*
\begin{proof}
    Let $r_1 = \frac{t}{3}$, $r_2=\frac{mt}{3\log m}$ and let $A_1$, $A_2$, and $A_3$ be defined as follows:
    \begin{align*}
        A_1 &= \frac{1}{m} \sum_{i=1}^m X_i \cdot \ind{\normt{X_i} < r_1} \\
        A_2 &= \frac{1}{m} \sum_{i=1}^m X_i \cdot \ind{r_1 \le \normt{X_i} < r_2} \\
        A_3 &= \frac{1}{m} \sum_{i=1}^m X_i \cdot \ind{\normt{X_i} \ge r_2}.
    \end{align*}
    We have that
    \begin{align*}
        \Pr\Brac{\normt{X} \ge t} &\le \Pr\Brac{\normt{A_1} \ge \frac{t}{3}} + \Pr\Brac{\normt{A_2} \ge \frac{t}{3}} + \Pr\Brac{\normt{A_3} \ge \frac{t}{3}}\\
        &= 0 + \Pr\Brac{\normt{A_2} \ge \frac{t}{3}} + \Pr\Brac{\normt{A_3} \ge \frac{t}{3}}\\
        &\le \Pr\Brac{\normt{A_2} \ge \frac{t}{3}} + \tilde{O}\Paren{\frac{d^{k/2}}{m^{k-1}t^k}}.\tag{by \Cref{lem:heavy-samples}}
    \end{align*}
    To bound $\Pr\Brac{\normt{A_2} \ge \frac{t}{3}}$, we seek to apply \Cref{lem:missing-piece}. When $r_1 = t/3$, 
    $$\log(r_2/r_1) = \log\Paren{\frac{m}{\log m}} \le \log m.$$
    Thus, all that remains is proving that $f(\log(r_2/r_1)) = \tilde{O}(d^{k/2}m^{-k+1}t^{-k})$.
    Note that 
    \begin{align*}
	f\Paren{\log \Paren{\frac{m}{\log m}}}
	&= \Theta\Paren{\frac{md^{k/2}\Paren{\frac{m}{\log m}}^{k-1}\log^km}{m^kt^k}}^{\frac{m}{2\log m}}\\
	&= \Theta\Paren{\frac{d^{k/2}\log m}{t^k}}^{\frac{m}{2\log m}}\\
    &\le \tilde{O}\Paren{\frac{d^{k/2}}{m^{k-1}t^k}}. \tag{since $t \ge t_3$} 
\end{align*}
    This completes the proof.
\end{proof}
As an immediate corollary, we have:
\begin{restatable}{corollary}{corhighdtailbound}
    Let $k > 2$, and let $\cD$ have mean $\mu$ and $\sigma_k(\cD) \le 1$. There exists $t_1 = O\Paren{\sqrt{\frac{d\log m}{m}}}$ such that, for all $t \ge t_1$,
    \begin{equation}
        \Pr\Brac{\normt{X - \mu} \ge t} = \tilde{O}\Paren{\frac{d^{k/2}}{m^{k-1}t^k}}.
        \label{eq:highd-tailbound}
    \end{equation}
    \label{cor:highd-tailbound}
\end{restatable}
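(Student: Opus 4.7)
\begin{proofsketch}
The plan is to reduce to the centered case and then absorb the sub-Gaussian tail $e^{-mt^2/d}$ into the polynomial tail for the stated range of $t$. First, I would define $\cD'$ to be the law of $Y - \mu$ where $Y \sim \cD$. Then $\cD'$ has mean $0$ and $\sigma_k(\cD') = \sigma_k(\cD) \le 1$, and if $X_1,\dots,X_m \sim \cD$ are iid, then $X_i - \mu$ are iid copies of $\cD'$. Writing $\overline{X} = \frac{1}{m}\sum_i X_i$, we have $\overline{X} - \mu = \frac{1}{m}\sum_i (X_i - \mu)$, so by \Cref{thm:highd-tailbound} applied to $\cD'$, there exists $t_1 = O(\sqrt{d\log m/m})$ such that for every $t \ge t_1$,
\[
\Pr\!\Brac{\normt{\overline{X} - \mu} \ge t} = \tilde O\!\Paren{\frac{d^{k/2}}{m^{k-1} t^k} + e^{-mt^2/d}}.
\]

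Next I would show that the second term is dominated by the first on the range $t \ge t_1$, perhaps after enlarging the hidden constant in $t_1$. It suffices to show $e^{-mt^2/d} \le d^{k/2}/(m^{k-1} t^k)$, equivalently, taking logarithms,
\[
\frac{m t^2}{d} \;\ge\; k \log t + (k-1)\log m - \tfrac{k}{2}\log d.
\]
Plugging in $t^2 \ge C\, d\log m / m$ for a sufficiently large constant $C = C(k)$ makes the left-hand side at least $C\log m$, which dominates the $(k-1)\log m$ term. The $k\log t$ term is harmless because $t$ appears on the left with a positive exponent of $2$ (so the inequality only becomes easier as $t$ grows beyond $t_1$): formally, for $t \ge t_1$ one has $mt^2/d \ge \frac{1}{2}\,mt^2/d + \frac{C}{2}\log m$, and the first piece more than covers $k\log t$ once $t$ is bounded below by a constant multiple of $\sqrt{d/m}$, while the second piece covers $(k-1)\log m$.

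The only mildly delicate point is handling the $k\log t$ term uniformly in $t$, but this is routine: for $t \le 1$ the polynomial-tail term already exceeds $1$ (absorb into the $\tilde O$), and for $t \ge 1$ we split $mt^2/d$ as above. The rest is a one-line conclusion: combining the display with this dominance statement gives the claimed bound
\[
\Pr\!\Brac{\normt{\overline{X} - \mu} \ge t} = \tilde O\!\Paren{\frac{d^{k/2}}{m^{k-1} t^k}}
\]
for all $t \ge t_1$, which is exactly \eqref{eq:highd-tailbound}. No step should present a genuine obstacle; the corollary is a bookkeeping consequence of \Cref{thm:highd-tailbound}.
\end{proofsketch}
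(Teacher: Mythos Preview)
Your proposal is correct and follows essentially the same approach as the paper: shift to reduce to the mean-zero case, invoke \Cref{thm:highd-tailbound}, and then argue (by comparing $mt^2/d$ against $(k-1)\log m + k\log t - \tfrac{k}{2}\log d$) that the exponential term $e^{-mt^2/d}$ is dominated by the polynomial term $d^{k/2}/(m^{k-1}t^k)$ once $t \ge C\sqrt{d\log m/m}$. One small remark: your case split at $t=1$ is not quite the right pivot (the polynomial tail need not exceed $1$ when $t\le 1$, and for $t\ge 1$ the claim $\tfrac12 mt^2/d \ge k\log t$ is not automatic without using $t\ge t_1$); a cleaner way to finish is to note that $t\mapsto mt^2/d - k\log t$ is increasing on $[t_1,\infty)$, so it suffices to verify the inequality at $t=t_1$, which follows for $C$ large enough depending on $k$.
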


\begin{proof}
    Without loss of generality, we can assume that $\cD$ has arbitrary mean $\mu$. Thus, to complete the proof, we show that the exponential term in \Cref{thm:highd-tailbound} is dominated by the polynomial term in \Cref{thm:highd-tailbound} for all $t \ge t_0 \sqrt{\frac{kd\log m}{m}}$. Set $t = c \sqrt{\frac{kd\log m}{m}}$ for some $c \ge t_0$. Observe that, for all $c \ge t_0$,
    \begin{align*}
        \exp{\Paren{-\frac{mt^2}{d}}} 
        &= \exp\Paren{-c^2k\log m}
        \le \frac{1}{m^{kc^2}}
        \le \frac{1}{m^k2^{kc^2}}
        \le \frac{1}{m^k2^{k\log c}}
        \le \frac{1}{m^kc^k}\\
        &\le \frac{1}{m^{k/2-1}c^k}\\
        &= \frac{d^{k/2}}{m^{k-1}t^k}.
    \end{align*}
\end{proof}
\subsubsection{Using Corollary~\ref{cor:highd-tailbound} to Bound the Bias}\label{sec:bias}
We now define the operation $\mathrm{clip}_{\rho,u}$. The operation takes as input some $X \in \R^d$ and clips $X$ to the $\ell_2$ ball centered at $u \in \R^d$ with radius $\rho$.
\begin{definition}[Clipping Operation]
\label{def:clipping-operation}
For $\rho \ge 0$ and $u \in \R^d$, the operation $\mathrm{clip}_{\rho,u}: \R^d\rightarrow\R^d$ is defined as follows:
\begin{equation*}
\mathrm{clip}_{\rho,u} (x)\coloneqq
\begin{cases}
x & \normt{x-u} \le \rho \\
u + \rho \cdot \frac{x-u}{\normt{x-u}} & \text{\textit{otherwise}}
\end{cases}.
\end{equation*}
\end{definition}

We now introduce the following theorem, which describes the bias of the statistical estimator that averages $m$ samples from some distribution $\cD$ with bounded $k$-th moments and applies $\mathrm{clip}_{\rho,u}$ to the average.
\thmbias*
\begin{proof}
    Define $Z = \clip{X}$, and define 
    $$v = \frac{\E\Brac{X - \clipbatch}}{\Normt{\E\Brac{X - \clipbatch}}}. $$ 
    We use $v$ to expand the expression for the bias:
    \begin{align}
        &\Normt{\mu - \E\Brac{\clipbatch}} 
        = \Normt{\E\Brac{X-\clipbatch}}
        = \E\Brac{\langle X-\clipbatch, v\rangle} \notag\\
        &\le \int_{\ell=0}^{\infty} \Pr\Brac{\langle X-\clipbatch, v\rangle \ge \ell} \,\mathrm{d}\ell \notag \\
        &= \underbrace{\int_{\ell=0}^{\rho / \sqrt{d} + \gamma} \Pr\Brac{\langle X-\clipbatch, v\rangle \ge \ell} \,\mathrm{d}\ell}_{T_1\coloneqq}  + \underbrace{\int_{\ell=\rho / \sqrt{d}+\gamma}^{\infty} \Pr\Brac{\langle X-\clipbatch, v\rangle \ge \ell} \,\mathrm{d}\ell}_{T_2\coloneqq}. \label{eq:bias-two-sums}
    \end{align}
    The rest of the proof is dedicated to bounding $T_1$ and $T_2$ in \eqref{eq:bias-two-sums}. 
    
    \textbf{Bounding $T_1$.} Our goal is to bound $T_1$ using \Cref{cor:highd-tailbound}. Be begin by bounding the projection of $X - Z$ onto $v$ by the $\ell_2$ norm of $X - Z$:
    \begin{align}
    T_1 =\int_{\ell=0}^{\rho / \sqrt{d}+\gamma} \Pr\Brac{\langle X-\clipbatch, v\rangle \ge \ell} \,\mathrm{d}\ell
    \le \int_{\ell=0}^{\rho / \sqrt{d}+\gamma} \Pr\Brac{\normt{X-\clipbatch} \ge \ell} \,\mathrm{d}\ell\notag
    \end{align}
    For $\ell > 0$, one can check that $\normt{X-\clipbatch} \ge \ell$ implies $\normt{X-\mu} \ge \ell + \rho - \gamma$. (See \Cref{lem:bias-rearranging-terms} in \Cref{sec:bias-missing} for details). Thus,
    \begin{align}
    T_1
    &\le \int_{\ell=0}^{\rho/\sqrt{d}+\gamma} \Pr\Brac{\normt{X-\mu}\ge \ell + \rho - \gamma} \,\mathrm{d}\ell \notag\\
    &\le \int_{\ell=0}^{\rho/\sqrt{d}+\gamma} \Pr\Brac{\normt{X-\mu}\ge \ell + \rho/2} \,\mathrm{d}\ell \tag{by assumption on $\gamma$ and $\rho$}\\
    &\le \int_{\ell=0}^{\rho/\sqrt{d}+\gamma} \Pr\Brac{\normt{X-\mu}\ge \rho/2} \,\mathrm{d}\ell \tag{since $\ell \ge 0$}\\
    &=  \lim_{R\rightarrow 0^+}\int_{\ell=R}^{\rho/\sqrt{d}+\gamma} \tilde{O}\Paren{\frac{d^{k/2}}{m^{k-1}\rho^k}} \,\mathrm{d}\ell \tag{by \Cref{cor:highd-tailbound} and assumption on $\rho$}\\
    &= \tilde{O}\Paren{\frac{d^{\frac{k-1}{2}}}{m^{k-1}\rho^{k-1}}\Paren{1 + \gamma \cdot \frac{d^{\frac{1}{2}}}{\rho}}}.\label{eq:bound-for-first-bias-term}
    \end{align}
    We now focus bounding $T_2$. 
    
    \textbf{Bounding $T_2$.} Our goal is to bound $T_2$ using \Cref{cor:highd-tailbound} for univariate distributions. We start by expanding $\langle X-\mu,v \rangle$:
    \begin{align}
        \langle X-\mu,v \rangle 
        &=\langle X - \clipbatch + \clipbatch - u + u - \mu,v\rangle\notag\\
        &=\langle X-\clipbatch,v\rangle + \langle \clipbatch - u,v \rangle + \langle u - \mu,v \rangle\notag\\
        &\ge \langle X-\clipbatch,v\rangle + \langle \clipbatch - u,v \rangle -\gamma.\label{eq:expanding-dot-product}
    \end{align}
    We use \eqref{eq:expanding-dot-product} to expand $T_2$:
    \begin{align}
        T_2 &= \int_{\ell=\rho / \sqrt{d}+\gamma}^{\infty} \Pr\Brac{\langle X-\clipbatch, v\rangle \ge \ell} \,\mathrm{d}\ell\notag\\
        &\le 
        \int_{\ell=\rho / \sqrt{d}+\gamma}^{\infty} \Pr\Brac{\langle X-\mu,v \rangle \ge \ell + \langle \clipbatch - u,v \rangle -\gamma} \,\mathrm{d}\ell \tag{by \eqref{eq:expanding-dot-product}}
    \end{align}
Note that $X - \clipbatch$ and $\clipbatch - u$ are non-negative scalar multiples of each other. Thus, the inequality $\langle X-\clipbatch,v\rangle \ge 0$ (implied by $\langle X-\clipbatch,v\rangle \ge \ell$) implies $\langle \clipbatch - u,v \rangle \ge 0$. Thus,
    \begin{align}
        T_2
        &\le 
        \int_{\ell=\rho / \sqrt{d}+\gamma}^{\infty} \Pr\Brac{\langle X-\mu, v\rangle \ge \ell - \gamma} \,\mathrm{d}\ell \notag\\
        &\le 
        \int_{\ell'=\rho / \sqrt{d}}^{\infty} \Pr\Brac{\langle X-\mu, v\rangle \ge \ell'} \,\mathrm{d}\ell' \tag{by a change of variables}.
    \end{align}
    Let $Y = \langle X,v\rangle$. Note that $\E[Y] = \E\Brac{\langle X,v \rangle} = \langle \mu,v \rangle$ and also that $$\sigma_k(Y) = \E\Brac{\Abs{Y - \E[Y]}^k}^{1/k} = \E\Brac{\Abs{\langle X-\mu,v \rangle}^k}^{1/k} \le 1$$ by assumption on $\cD$. Thus, we can use \Cref{cor:highd-tailbound} in the case of univariate distributions:
    \begin{align}
        T_2 &\le \int_{\ell'=\rho / \sqrt{d}}^{\infty} \Pr\Brac{\langle X-\mu, v\rangle \ge \ell'} \,\mathrm{d}\ell'\notag\\
        &= \int_{\ell'=\rho / \sqrt{d}}^{\infty} \Pr\Brac{Y - \E[Y] \ge \ell'} \,\mathrm{d}\ell'\notag\\
        &\le \int_{\ell'=\rho / \sqrt{d}}^{\infty} \Pr\Brac{\Abs{Y - \E[Y]} \ge \ell'} \,\mathrm{d}\ell'\notag\\
        &\le 
        \int_{\ell'=\rho / \sqrt{d}}^{\infty} \tilde{O}\Paren{\frac{1}{m^{k-1}(\ell')^k}} \,\mathrm{d}\ell' \tag{by \Cref{cor:highd-tailbound} and assumption on $\rho$}\\
        &\le 
        \tilde{O}\Paren{\frac{\sqrt{d}}{m\rho}}^{k-1}. \label{eq:bound-for-second-term}
    \end{align}
    Combining \eqref{eq:bias-two-sums}, \eqref{eq:bound-for-first-bias-term}, and \eqref{eq:bound-for-second-term} completes the proof.
\end{proof}

\subsection{\itclip}\label{subsec:iterative-clip}
In this section, we introduce the algorithm \itclip (\itcp). As discussed in the introduction of \Cref{sec:hd}, \itclip takes as input an initial coarse estimate $u_0$ to the mean and outputs a new estimate $u^*$. The formal guarantees of \itclip are described in \Cref{thm:iterative-clip}.

\begin{restatable}{theorem}{thmiterativeclip}\label{thm:iterative-clip}
    Let $S = \Set{S_1,\ldots,S_n}$, where each $S_i = \Set{X^{(i)}_1,\ldots,X^{(i)}_m}$ and each $X^{(i)}_j \in \R^d$. Let $u_0 \in \R^d$. For all $\eps, \delta \in (0,1)$, $\mathrm{\itcp}(S;\eps,\delta,u_0)$ satisfies person-level $(\eps,\delta)$-DP. Furthermore, let $\cD$ be a distribution over $\R^d$ with mean $\mu$ and $\sigma_k(\cD) \le 1$. Suppose that $X^{(i)}_j \sim \cD$ for all $i \in [n]$ and $j \in [m]$, and suppose that $\normt{u_0 - \mu} \le \frac{d^{1/2}}{m^{1/2}}$. Let $u^*$ be the output of $\mathrm{\itcp}(S;\eps,\delta,u_0)$. There exists $n_0 = \tilde{O}\Paren{\frac{d\sqrt{\log(1/\delta)}}{\eps}}$, such that, if $n \ge n_0$, then $\normt{u^*-\mu} = \tilde{O}\Paren{\frac{1}{m^{1/2}}}$ with probability at least $0.9$.
\end{restatable}

\paragraph{Overview of \itclip.} 
We give a high-level overview of \itclip (see \Cref{alg:itclip} for the full details). The algorithm is composed of a total of $T$ layers. In each layer $t \in [T]$, the algorithm runs an application of the high-dimensional clip-and-noise algorithm $\textrm{ClipAndNoise}$ (\Cref{alg:hd-clip}) and outputs $u_t \in \R^d$. The name \itclip comes from the fact that the layers are threaded together: the output $u_{t-1}$ is fed as input into layer $t$, and it is used as the center of the clipping ball in the application of $\textrm{ClipAndNoise}$.

To discuss \itclip in more detail, we introduce the following notation. For each $t \in [T]$ and $i \in [n]$, define $\clipbatch{i}{t} = \mathrm{clip}_{\rho_t,u_{t-t}}\Paren{\batch{i}}$ where each $\batch{i} = \frac{1}{m}\sum_{j=1}^mX^{(i)}_j$ and where $\rho_t$ is a value specified at each layer $t$. Let $W_t$ be the noise added by $\textrm{ClipAndNoise}$ in layer $t$. The output of layer $t$ is $$u_t = \frac{1}{n} \sum_{i=1}^n \clipbatch{i}{t} + W_t.$$ As further notation, let $\gamma_t = \normt{u_t - \mu}$. We refer to $\gamma_t$ as \emph{the error of $u_t$}.

\begin{algorithm}[htb]
\caption{Clip-and-Noise}\label{alg:hd-clip}
    \hspace*{\algorithmicindent} \textbf{Input:} $S = \Set{X^{(i)}_j}_{i \in [n], j \in [m]}$ where each $X^{(i)}_j \in \R^d$. Parameters $\eps,\delta,\rho,u$.\\
    \hspace*{\algorithmicindent} \textbf{Output:} $u' \in \R^d$.
    \begin{algorithmic}[1]
    \Procedure{$\textrm{ClipAndNoise}$}{$S$;$\eps$,$\delta$,$\rho$,$u$}
        \For{$i \in [n]$}
            \State $Z_i \leftarrow \clip{\frac{1}{m}\sum_{j=1}^mX^{(i)}_j}$.\label{line:clip}
        \EndFor
        \State Let $W \sim \mathcal{N}\Paren{0,\Paren{\frac{2\rho\sqrt{2\ln (4/\delta)}}{n\eps}}^2\cdot \mathbb{I}_{d \times d}}$.
        \State Return $u' = \frac{1}{n}\sum_{i=1}^nZ_i + W$.\label{line:final-output}
    \EndProcedure
    \end{algorithmic}
\end{algorithm}
\paragraph{The iterative refinements of \itclip.} The layered/threaded design choices of \itclip hinge on the following key idea: Each layer $t$ produces an estimate $u_t$ that is better than that of the previous layer; that is, for all $t \in [T]$, the algorithm produces $\gamma_t$ such that $\gamma_t < \gamma_{t-1}$. In addition, by the time the algorithm terminates, the final estimate $u_T$ will have error $\lesssim \sqrt{1/m}$ (so long as the initial estimate $u_0$ had error $\lesssim \sqrt{d/m}$). 

To see why this is true, it is useful to understand how $\gamma_t$ evolves with $t$. We can think of $\gamma_t$ as being roughly equal to the sum of a bias term and a noise term, both of which come from unraveling the details of $\textrm{ClipAndNoise}$. In our parameter regime, the bias term will be 
\begin{equation}\label{eq:gamma-bias}
    \text{bias}(t) \approx \frac{d^{k/2}\gamma_{t-1}}{m^{k-1}\rho_t^k}
\end{equation} due to \Cref{lem:new-highd-clipping-bias}. When $n \gtrsim \frac{d\sqrt{\log(1/\delta)}}{\eps}$, the noise term will be $$\textrm{noise}(t) \approx \frac{\rho_t}{\sqrt{d}}.$$ Notice that both terms are functions of $\rho_t$, which is a parameter set by the algorithm at each layer. In order to minimize $\gamma_t$, the algorithm sets $\rho_t$ to be the value that balances $\textrm{bias}(t)$ and $\textrm{noise}(t)$, i.e. setting $$\rho_t \approx \frac{\Paren{\gamma_{t-1}}^{\frac{1}{k+1}}d^{\frac{1}{2}}}{m^{\frac{k-1}{k+1}}}$$ (where we pretend for now that the algorithm knows the value of $\gamma_{t-1}$). Plugging this value of $\rho_t$ into $\gamma_t$ gives the following evolution of $\gamma_t$ with respect to $t$: $$\gamma_t \approx \frac{\Paren{\gamma_{t-1}}^{\frac{1}{k+1}}}{m^{\frac{k-1}{k+1}}}.$$ If we assume that $\gamma_0 \le \sqrt{d/m}$, then it is ideal to set \begin{equation}\label{eq:rho-stuff}
    \rho_t = \frac{d^{\frac{1}{2}+\frac{1}{2k(k+1)^{t-1}}}}{m^{1-1/k}}
\end{equation} 
at each round, yielding that $\gamma_t = \frac{d^{\frac{1}{2k(k+1)^{t-1}}}}{m^{1-1/k}}$. After $\log\log d$ layers, it follows that $\gamma_{\log\log d} \lesssim \sqrt{\frac{1}{m}}$.

The gross oversimplification above glosses over one important detail: We can only apply \Cref{lem:new-highd-clipping-bias} in \eqref{eq:gamma-bias} if we assume that $\rho_t \gtrsim \sqrt{d/m}$. Thus, if the right-hand side of \eqref{eq:rho-stuff} ever dips below $\approx \sqrt{d/m}$, we must instead set $\rho_t \approx \sqrt{d/m}$. We then apply $\textrm{ClipAndNoise}$ one final time and then terminate the algorithm. In this event, the completion of the terminating layer actually gives a final error of $\sqrt{1/m}$, in which case the algorithm is happy to terminate. 

The rest of the section is dedicated to formalizing the intuition above.

\begin{algorithm}[htb]
\caption{\itclip}\label{alg:itclip}
    \hspace*{\algorithmicindent} \textbf{Input:} $S = \Set{X^{(i)}_j}_{i \in [n], j \in [m]}$ where each $S_i = \Set{X^{(i)}_j}_{j \in [m]}$ and $X^{(i)}_j \in \R^d$. Parameters $\eps,\delta,u$.\\
    \hspace*{\algorithmicindent} \textbf{Output:} $u^* \in \R^d$.
    \begin{algorithmic}[1]
    \Procedure{\itcp}{$S$;$\eps$,$\delta$,$u$}
        \State $u_0 = u$
        \State $\eps' = \frac{\eps}{\sqrt{\log\log d\cdot \log((2\log\log d)/\delta)}}$, $\delta' = \frac{\delta}{2\log \log d}$
        \State $T = \max\Paren{0,\max\Paren{t \in \mathbb{Z} : \frac{d^{\frac{1}{2}+\frac{1}{2k(k+1)^{t-1}}}}{m^{1-1/k}} > \tilde{O}\Paren{\frac{d^{1/2}}{m^{1/2}}}}} + 1$
        \If{$T > 1$}
            \For{$t = 1,\ldots,T-1$}
                \State Let $S(t) = \Set{S_{\floor{\frac{n}{T}}(t-1) + 1},\ldots,S_{\floor{\frac{n}{T}}t}}$
                \State $u_{t} \leftarrow \textrm{ClipAndNoise}\Paren{S(t);\eps',\delta', \rho_t,u_{t-1}}$, where $\rho_t = \frac{d^{\frac{1}{2}+\frac{1}{2k(k+1)^{t-1}}}}{m^{1-1/k}}$
            \EndFor
        \EndIf
        \State $S(T) = \Set{S_{\floor{\frac{n}{T}}(T-1) + 1},\ldots,S_n}$
        \State $u^* \leftarrow \textrm{ClipAndNoise}\Paren{S(T);\eps',\delta', \rho_{T},u_{T-1}}$, where $\rho_{T} = \tilde{O}\Paren{\frac{d^{1/2}}{m^{1/2}}}$
        \State Return $u^*$
    \EndProcedure
    \end{algorithmic}
\end{algorithm}

\paragraph{Full analysis of \itclip.} 
We decompose $\gamma_{t}$ into \textit{noise}, \textit{sampling error}, and \textit{bias} terms as follows:
\begin{align*}
    \gamma_{t} &= \normt{u_{t} - \mu}\\
    &= \Normt{W_t + \frac{1}{n}\sum_{i=1}^n\clipbatch{i}{t} - \E\Brac{\clipbatch{1}{t}} + \E\Brac{\clipbatch{1}{t}} - \mu}\\
    &\le \normt{W_t} + \Normt{\frac{1}{n}\sum_{i=1}^n\clipbatch{i}{t} - \E\Brac{\clipbatch{1}{t}}} + \Normt{\E\Brac{\clipbatch{1}{t}} - \mu}\\
    &\coloneqq \textrm{noise}(t) + \textrm{sampling}(t) + \textrm{bias}(t).
\end{align*}   
We first introduce \Cref{lem:noise-and-sampling} which bounds $\textrm{noise}(t)$ and $\textrm{sampling}(t)$. Its proof is deferred to \Cref{sec:full-algo-missing}

\begin{restatable}[Noise and Sampling]{lemma}{lemnoiseandsampling}\label{lem:noise-and-sampling}
    Let $\cD$ be a distribution over $\R^d$ with mean $\mu$ and $\sigma_k(\cD) \le 1$. Suppose that, for all $i \in [n]$ and $j \in [m]$, $X^{(i)}_j \sim \cD$. There exists $n_0 = \tilde{O}\Paren{\frac{d\sqrt{\log(1/\delta)}}{\eps}}$ such that, for all $n \ge n_0$, 
    \[
        \mathrm{noise}(t) + \mathrm{sampling}(t) = \tilde{O}\Paren{\frac{\rho_t}{\sqrt{d}}}
    \]
    with probability at least $1-0.1/\log\log d$,
\end{restatable}
We can now give guarantees of each of these three terms at any given layer $t$ of \itclip. 
\begin{lemma}[Error of iteration $t$]\label{lem:noise-samp-bias}
    For all $t$, define $r_t = \frac{d^{\frac{1}{2}+\frac{1}{2k(k+1)^{t-1}}}}{m^{1-1/k}}$. For all $t \in [T]$, let $\rho_t$ and $u_{t}$ be as defined in \itclip, and let $\gamma_t = \normt{u_t-\mu}$. If $\gamma_{t-1} = \tilde{O}\Paren{r_{t-1}/\sqrt{d}}$, then there exists $n_0 = \tilde{O}\Paren{\frac{d\sqrt{\log(1/\delta)}}{\eps}}$ such that for all $n \ge n_0$, $$\gamma_{t} = \tilde{O}\Paren{\rho_t/\sqrt{d}}$$
    with probability at least $1-0.1/\log\log d$.
\end{lemma}
\begin{proof}
    We've seen that 
    \begin{align*}
        \gamma_{t} &\le \textrm{noise}(t) + \textrm{sampling}(t) + \textrm{bias}(t).
    \end{align*}
    Note by \Cref{lem:noise-and-sampling} that the $\textrm{noise}(t) + \textrm{sampling}(t) = \tilde{O}\Paren{\rho_t/\sqrt{d}}$ with probability at least $1-0.1/\log\log d$. Thus, it suffices to show that $\textrm{bias}(t)$ is also $\tilde{O}\Paren{\rho_t/\sqrt{d}}$, which is shown in the following calculation:
    \begin{align*}
        \textrm{bias}(t) 
        &= \tilde{O}\Paren{\frac{d^{k/2}\gamma_{t-1}}{m^{k-1}\rho_t^k} + \frac{d^{k/2}}{m^{k-1}\rho_t^{k-1}}\cdot\frac{1}{d^{1/2}}}\tag{from \Cref{lem:new-highd-clipping-bias}}\\
        &=\tilde{O}\Paren{\frac{d^{k/2}}{m^{k-1}\rho_{t}^k}\cdot\frac{r_{t-1}}{d^{1/2}} + \frac{d^{k/2}}{m^{k-1}\rho_t^{k-1}}\cdot\frac{1}{d^{1/2}}}\tag{by assumption on $\gamma_{t-1}$}\\
        &=\tilde{O}\Paren{\frac{d^{k/2}}{m^{k-1}r_{t}^k}\cdot\frac{r_{t-1}}{d^{1/2}} + \frac{d^{k/2}}{m^{k-1}r_t^{k-1}}\cdot\frac{1}{d^{1/2}}} \tag{since $\rho_t \ge r_t$}\\
        &=\tilde{O}\Paren{\frac{d^{k/2}d^{\frac{1}{2(k+1)^{t-1}}}}{m^{k-1}r_{t}^k}\cdot\frac{r_t}{d^{1/2}} + \frac{d^{k/2}}{m^{k-1}r_t^{k}}\cdot\frac{r_t}{d^{1/2}}} \tag{since $r_{t-1} = r_td^{\frac{1}{2(k+1)^{t-1}}}$}\\
        &=\tilde{O}\Paren{\frac{d^{k/2}d^{\frac{1}{2(k+1)^{t-1}}}}{m^{k-1}r_{t}^k}\cdot\frac{r_t}{d^{1/2}}}\\
        &=\tilde{O}\Paren{\frac{r_t}{\sqrt{d}}}\\
        &= \tilde{O}\Paren{\frac{\rho_t}{\sqrt{d}}} \tag{since $\rho_t \ge r_t$}.
    \end{align*}
\end{proof}

\begin{lemma}\label{lem:clip-is-private}
    \Cref{alg:hd-clip} satisfies person-level $(\eps,\delta)$-DP.
\end{lemma}
\begin{proof}
    For each user, let $S_i = \Set{X^{(i)}_1,\ldots,X^{(i)}_m}$. Define each $Z_i$ as in \Cref{alg:hd-clip}. Consider the function $f(S_1,\ldots,S_n) = \frac{1}{n}\sum_{i=1}^nZ_i$. The $\ell_2$-sensitivity of this function (with respect to person-level privacy) is $O\Paren{\frac{\rho}{n}}$. The proof then immediately follows by applying \Cref{lem:gaussian-mechanism}.
\end{proof}

\thmiterativeclip*
\begin{proof}
    We first prove the privacy guarantees of \Cref{thm:iterative-clip}. Without loss of generality, we can assume that each call to $\mathrm{ClipAndNoise}$ receives as input the entire dataset $X$ of size $n$ and chooses to only compute on a fraction of this dataset. By \Cref{lem:clip-is-private}, each of the $T$ iterations of $\mathrm{ClipAndNoise}$ satisfies person-level $(\eps',\delta')$-DP. Thus, by \Cref{lem:advanced-composition} and the fact that $T \le \log\log d$, \itclip satisfies person-level $(\eps,\delta)$-DP.

    We now prove the accuracy guarantees of $u^*$. Let $\gamma^* = \normt{u^* - \mu}$. Our goal will be to show that $\gamma^* = \tilde{O}\Paren{\rho_{T}/\sqrt{d}}$ as this implies that $\gamma^* = \tilde{O}\Paren{1/m^{1/2}}$. We divide $T$ into two cases: ($1$) when $T > 1$ and ($2$) when $T=1$.

    \paragraph{Case $1$ ($T > 1$):} For all $t \in [T]$, let $E_t$ be the event that $\gamma_t = \tilde{O}\Paren{\rho_t/\sqrt{d}}$, and let $\Bar{E}_t$ be the negation of this event. We have that
    \[
        \Pr\Brac{E_T}
        \ge \Pr\Brac{E_0, E_1,\ldots,E_T}
        = 1 - \Pr\Brac{\exists t\in\{0,\ldots,T\} : \Bar{E}_t}
    \]
    We can make the following realization.
    \begin{equation}\label{eq:bad-events}
        \Pr\Brac{\exists t\in\{0,\ldots,T\} : \Bar{E}_t} = \sum_{t=1}^T\Pr\Brac{\Bar{E}_t \mid E_{t-1},\ldots,E_0} + \Pr[\Bar{E_0}].
    \end{equation}
    We briefly show that, when $T > 1$, $\Pr[\Bar{E}_0] = 0$ (i.e. $\gamma_0 \le r_0/\sqrt{d}$). By assumption, $\gamma_0 \le \frac{d^{1/2}}{m^{1/2}}$. When $T > 1$, $\tilde{O}\Paren{\frac{d^{1/2}}{m^{1/2}}} \le r_1 = r_0/\sqrt{d}$, and thus, $\gamma_0 \le r_0/\sqrt{d}$. 
    
    Plugging in $\Pr[\Bar{E}_0] = 0$, we finish bounding \eqref{eq:bad-events}:
    \begin{align*}
        \Pr\Brac{E_T}
        &\ge 1 - \sum_{t=1}^T\Pr\Brac{\Bar{E}_t \mid E_{t-1},\ldots,E_0}\\
        &= 1 - \sum_{t=1}^T\Pr\Brac{\Bar{E}_t \mid E_{t-1}}\\
        &= 1 - \sum_{t=1}^T\Pr\Brac{\Bar{E}_t \mid \gamma_{t-1} = \tilde{O}\Paren{r_{t-1}/\sqrt{d}}}\tag{since $\rho_t = r_t$ for all $t \le T$}\\
        &\ge 1 - T\cdot \frac{0.01}{\log\log d}\tag{by \Cref{lem:noise-samp-bias}}\\
        &\ge 0.09,
    \end{align*}
    where the second-to-last line comes from the fact that $$T = \max\Paren{0,\max\Paren{t \in \mathbb{Z} : \frac{d^{\frac{1}{2}+\frac{1}{2k(k+1)^{t-1}}}}{m^{1-1/k}} > O\Paren{\frac{d^{1/2}\log^{1/2}m}{m^{1/2}}}}} + 1 \le \log\log d.$$
    \paragraph{Case $2$ ($T = 1$):} When $T=1$,
    \begin{align*}
        \gamma_1 &\le \textrm{noise}(1) + \textrm{sampling}(1) + \textrm{bias}(1)\\
        &= \tilde{O}\Paren{\rho_1/\sqrt{d}} + \textrm{bias}(1) \tag{by \Cref{lem:noise-and-sampling}}
    \end{align*}
    where the last line holds with probability at least $0.9$. Thus it remains to show that $\textrm{bias}(1) = \tilde{O}\Paren{\rho_1/\sqrt{d}}$, which we show in the following calculations:
    \begin{align*}
        \textrm{bias}(1) &= \tilde{O}\Paren{\frac{d^{k/2}\gamma_{0}}{m^{k-1}\rho_1^k} + \frac{d^{k/2}}{m^{k-1}\rho_1^{k-1}}\cdot\frac{1}{d^{1/2}}}\tag{from \Cref{lem:new-highd-clipping-bias}}\\
        &= \tilde{O}\Paren{\frac{d^{k/2}}{m^{k-1}\rho_1^k}\cdot\rho_1 + \frac{d^{k/2}}{m^{k-1}\rho_1^{k-1}}\cdot\frac{1}{d^{1/2}}}\tag{by assumption on $\gamma_0$}\\
        &= \tilde{O}\Paren{\frac{d^{k/2}}{m^{k-1}\rho_1^k}\cdot\rho_1}\\
        &= \tilde{O}\Paren{\frac{d^{k/2}}{m^{k-1}r_1^k}\cdot\rho_1}\tag{since $\rho_1 \ge r_1$}\\
        &= \tilde{O}\Paren{\frac{\rho_1}{\sqrt{d}}}.
    \end{align*}
\end{proof}
\subsection{Full Algorithm}\label{sec:full-algo}
We are now ready to introduce \Cref{alg:hd}. The algorithm first runs the high-dimensional private histogram subroutine \textrm{CoarseEstimate}, which is the natural extension of the one-dimensional coarse estimate. (The details are deferred to the \Cref{subsec:coarse-estimation}.) After running \textrm{CoarseEstimate}, the algorithm then runs \itclip and finishes by running a final round of \textrm{ClipAndNoise}.
\begin{algorithm}[htb]
    \caption{Person-Level Approximate-DP High-Dimensional Mean Estimator}\label{alg:hd}
    \hspace*{\algorithmicindent} \textbf{Input:} $S = \Set{X^{(i)}_j}_{i \in [3n], j \in [m]}$ where each $X^{(i)}_j \in \R^d$. Parameters $\eps,\delta,k > 0$.\\
    \hspace*{\algorithmicindent} \textbf{Output:} $\hat{\mu} \in \R^d$.
    \begin{algorithmic}[1]
    \Procedure{$\textrm{MeanEstimation}$}{$X$; $\eps,\delta,k$}
        \State Let $\rho^* = \max\Paren{\sqrt{\frac{d}{m}},\frac{n^{1/k}\eps^{1/k}d^{\frac{1}{2}-\frac{1}{k}}}{(\log 1/\delta)^{1/(2k)}m^{1-1/k}}}$.
        \State Let $Y \coloneqq \Paren{Y^{(1)},\ldots,Y^{(n)}} = \Paren{X^{(1)},\ldots,X^{(n)}}$
        \State Let $P \coloneqq \Paren{P^{(1)},\ldots,P^{(n)}} = \Paren{X^{(n+1)},\ldots,X^{(2n)}}$
        \State Let $V \coloneqq \Paren{V^{(1)},\ldots,V^{(n)}} = \Paren{X^{(2n+1)},\ldots,X^{(3n)}}$
        \State $u_0 \leftarrow \textrm{CoarseEstimate}(Y;\eps/2,\delta/2,16\sqrt{d/m})$.\label{line:u1}
        \State Let $u^* \leftarrow \mathrm{\itcp}(P;\eps/4,\delta/4)$.\label{line:u2}
        \State Let $\hat{\mu} \leftarrow \textrm{ClipAndNoise} (V;\eps/4,\delta/4,\rho^*,u^*)$.\label{line:mu-hat}
        \State Return $\hat{\mu}$.
    \EndProcedure
    \end{algorithmic}
\end{algorithm}
In the remainder of this section, we prove the following theorem:
\thmapproxdpupperbound*

We first state the guarantees of $\textrm{CoarseEstimate}$.
\begin{restatable}[Approximate DP Coarse Estimation in High Dimensions]{theorem}{thmhighdcoarseestimate}
\label{thm:high-d-coarse-estimate}
Let $n$ be the number of people, $m$ be the number of samples per person, and $\eps> 0, \delta \in (0,1)$.
Assume $P$ is a distribution over $\R^d$ with $k$-th moment bounded by $1$. Then there exists an efficient person-level $(\eps, \delta)$-DP algorithm $\textrm{CoarseEstimate}(X; \eps, \delta, r)$ that takes \iid samples $\set{X_j^{(i)}}_{i, j}$ from $P$, and outputs an estimate of the mean up to accuracy $16^{1/k} \sqrt{d / m} < r$, with success probability $1-\beta$, as long as $n \ge n_0$, for some
\begin{equation*}
n_0 = \tilde{O}\Paren{\frac{\log\paren{1/\beta}}{\log\paren{r \sqrt{m / d}}} 
+ 
\min \Set{
\frac{d\log\paren{1 / \paren{\delta\beta}}}{\eps}
,
\frac{\sqrt{d \log\paren{1/\delta}} \log\paren{1/\delta \beta}}{\eps}
}
} \mcom
\end{equation*}
where $\tilde{O}$ hides lower order logarithmic factors in $d$.
\end{restatable}
Finally, we prove the following lemma about the accuracy of the final application of \textrm{ClipAndNoise}.
\begin{restatable}{lemma}{lemfinalestimation}\label{lem:final-estimation}
       Let $d, m, n \ge 2$, let $\eps, \delta \in (0,1)$, let $k > 2$, and let $\rho^*$, $u^*$, and $\hat{\mu}$ be as defined in \Cref{alg:hd}. Suppose $\cD$ is a distribution over $\R^d$ with mean $\mu$ and $\sigma_k(\cD) \le 1$. If $V^{(i)}_1,\ldots,V^{(i)}_m \stackrel{\text{iid}}{\sim} \cD$ for all $i \in [n]$, and if $\normt{u^* - \mu} \le \frac{\rho^*}{2\sqrt{d}}$, then there exists $$n_0 =  O\Paren{\frac{d}{m\alpha^2} + \frac{d\sqrt{\log 1/\delta}}{m\eps\alpha^{\frac{k}{k-1}}}+\frac{d\sqrt{\log (1/\delta)}}{\sqrt{m}\alpha\eps}}$$ such that if $n \ge n_0$, then with probability at least $0.9$, $$\Normt{\hat{\mu} - \mu}\le \alpha.$$
\end{restatable}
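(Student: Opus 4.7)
My plan is to mirror the template of \Cref{lem:coarse-estimation-2}, but to invoke the sharper clipping-bias bound \Cref{thm:highd-clipping-bias} in place of \Cref{lem:warmup-bias}. This swap is precisely what the tighter hypothesis $\normt{u_2 - \mu} \le \rho_2/(2\sqrt{d})$ is designed to enable, and it is what drives the asymptotic improvement of the two-round algorithm over the single-round one. Concretely, let $\mathsf{V}_i \coloneqq \mathrm{clip}_{\rho_2,u_2}\paren{\tfrac{1}{m}\sum_{j=1}^m V^{(i)}_j}$, let $W_2$ denote the Gaussian noise injected on \Cref{line:final-output} of \Cref{alg:hd-clip} during the call that produces $\hat\mu$, and decompose via the triangle inequality:
\[
    \normt{\hat{\mu} - \mu} \le \underbrace{\normt{W_2}}_{T_1} \;+\; \underbrace{\Normt{\tfrac{1}{n}\sum_{i=1}^n \mathsf{V}_i - \E[\mathsf{V}_1]}}_{T_2} \;+\; \underbrace{\Normt{\E[\mathsf{V}_1] - \mu}}_{T_3}.
\]
I will argue that each $T_i \le \alpha/3$ holds with probability at least $0.97$ and then conclude by a union bound.

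For $T_2$ (sampling error), I directly invoke \Cref{lem:sampling-error} with target accuracy $\alpha/3$; this requires $n = \Omega(d/(m\alpha^2))$, contributing the first term in $n_0$. For $T_1$ (Gaussian noise), a standard chi-squared tail bound gives $T_1 = O\paren{\rho_2 \sqrt{d \log(1/\delta)}/(n\eps)}$ with probability $0.97$. Plugging in the definition $\rho_2 = \max\paren{\sqrt{d/m},\;n^{1/k}\eps^{1/k} d^{(k-1)/(2k)}(\log 1/\delta)^{-1/(2k)} m^{-(k-1)/k}}$ and enforcing $T_1 \le \alpha/3$ in each branch yields, respectively, $n = \Omega\paren{d\sqrt{\log(1/\delta)}/(\sqrt{m}\alpha\eps)}$ (the third term of $n_0$) and, after isolating $n^{(k-1)/k}$ and raising to the $k/(k-1)$ power, $n = \Omega\paren{d\sqrt{\log(1/\delta)}/(m\eps\alpha^{k/(k-1)})}$ (the second term of $n_0$).

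For $T_3$ (clipping bias), I apply \Cref{thm:highd-clipping-bias}. Both of its preconditions are in place: the lemma's hypothesis is exactly $\normt{u_2 - \mu} \le \rho_2/(2\sqrt{d})$, and $\rho_2 \ge \sqrt{d/m}$ by construction (absorbing a $\sqrt{\log m}$ factor into the constants hidden by $\tilde{O}$ to obtain $\rho_2 \ge t_1$). The theorem then gives $T_3 \le \tilde{O}\paren{(\sqrt{d}/(m\rho_2))^{k-1}}$; requiring $T_3 \le \alpha/3$ translates to $\rho_2 = \tilde{\Omega}\paren{\sqrt{d}/(m\alpha^{1/(k-1)})}$, and substituting the second branch of $\rho_2$ and solving for $n$ recovers exactly the same bound $n = \Omega\paren{d\sqrt{\log(1/\delta)}/(m\eps\alpha^{k/(k-1)})}$ that emerged from the $T_1$ analysis, so no new sample-complexity term is introduced. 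The main technical obstacle is precisely this reconciliation: the definition of $\rho_2$ in \Cref{alg:hd} was engineered so that the noise contribution $T_1$ and the bias contribution $T_3$ are simultaneously saturated at the same value of $n$, and carrying out the exponent bookkeeping to confirm this balancing---while ultimately routine---is the one step that requires care.
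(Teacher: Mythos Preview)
Your proposal is correct and follows essentially the same approach as the paper: the same triangle decomposition into noise $T_1$, sampling error $T_2$, and clipping bias $T_3$; the same invocations of \Cref{lem:sampling-error} for $T_2$ and \Cref{thm:highd-clipping-bias} for $T_3$; and the same case split on the two branches of $\rho_2$ together with the observation that the second branch is chosen precisely to equalize $T_1$ and $T_3$. One small slip: you transcribe the $d$-exponent in $\rho_2$ as $(k-1)/(2k)$, but the algorithm sets it to $\tfrac{1}{2}-\tfrac{1}{k}=(k-2)/(2k)$ (what you wrote is the exponent in $\rho_1$); your stated conclusions for $n$ are nonetheless the correct ones, so this appears to be a typo rather than an error in the computation.
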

\begin{proof}
     Recall that $\hat{\mu}$ is the output of an application of \Cref{alg:hd-clip}. Define $\mathsf{V}_i \coloneqq \mathrm{clip}_{\rho^*,u^*}\Paren{\frac{1}{m}\sum_{j=1}^mV_j^{(i)}}$ and define $W$ such that
    $$W \sim \mathcal{N}\Paren{0,\Paren{\frac{8\rho^*\sqrt{2\ln (16/\delta)}}{n\eps}}^2\cdot \mathbb{I}_{d \times d}}.$$ We begin by expanding $\normt{\hat{\mu} - \mu}$:
    \begin{align*}
        \normt{\hat{\mu} - \mu}
        &= \Normt{W + \frac{1}{n}\sum_{i=1}^n\mathsf{V}_i - \E\Brac{\mathsf{V_1}} + \E\Brac{\mathsf{V_1}} - \mu}\\
        &\le \underbrace{\normt{W}}_{T_1\coloneqq} + \underbrace{\Normt{\frac{1}{n}\sum_{i=1}^n\mathsf{V}_i - \E\Brac{\mathsf{V_1}}}}_{T_2\coloneqq} + \underbrace{\Normt{\E\Brac{\mathsf{V_1}} - \mu}}_{T_3\coloneqq}.
    \end{align*}
    By assumption on $n$, we can apply \Cref{lem:sampling-error} which says that, with probability at least $0.95$, 
    $$T_2 = \Normt{\frac{1}{n}\sum_{i=1}^n\mathsf{V}_i - \E\Brac{\mathsf{V_1}}} \le \frac{\alpha}{2}.$$
    By the properties of Gaussian random variables, it follows that with probability at least $0.95$,
    \begin{equation*}
        T_1 
        =\Normt{W}
        = O\Paren{\frac{\rho^* \sqrt{d\log (1/\delta)}}{n\eps}}.
    \end{equation*}
    When $\rho^* = \sqrt{d/m}$,
    \begin{equation}
        T_1
        = O\Paren{\frac{d\sqrt{\log (1/\delta)}}{\sqrt{m}n\eps}}
        \le \frac{\alpha}{4},
    \end{equation}
   by assumption on $n$ and $n_0$. Likewise, when $\rho^* = \frac{n^{1/k}\eps^{1/k}d^{\frac{1}{2}-\frac{1}{k}}}{(\log 1/\delta)^{1/(2k)}m^{1-1/k}}$,
    \begin{equation}
         T_1 = O\Paren{\frac{d\sqrt{\log (1/\delta)}}{mn\eps}}^{\frac{k-1}{k}}
         \le \frac{\alpha}{4},
    \end{equation}
    by assumption on $n$ and $n_0$. By \Cref{lem:new-highd-clipping-bias} and assumption on $u^*$,    
    $$\Normt{\E\Brac{\mathsf{V_1}} - \mu} = \tilde{O}\Paren{\frac{\sqrt{d}}{m\rho^*}}^{k-1}.$$
    When $\rho^* = \sqrt{d/m}$, $T_3$ is dominated by $T_1$. Otherwise, when $\rho^* = \frac{n^{1/k}\eps^{1/k}d^{\frac{1}{2}-\frac{1}{k}}}{(\log 1/\delta)^{1/(2k)}m^{1-1/k}}$, $T_3$ and $T_1$ have the same asymptotics. Thus, 
    \begin{equation}
        \normt{\hat{\mu} - \mu} 
        \le T_1 + T_2 + T_3
        \le 2T_1 + T_2
        \le \frac{\alpha}{2} + \frac{\alpha}{2} 
        = \alpha,
    \end{equation}
    which completes the proof.
\end{proof}

We can now prove \Cref{thm:approx-dp-upperbound}
\begin{proof}[Proof of \Cref{thm:approx-dp-upperbound}]
We break the proof into two pieces. First, we prove that \Cref{alg:hd} satisfies person-level $(\eps,\delta)$-DP. Afterwords, we analyze the accuracy of \Cref{alg:hd}.

\textbf{Analysis of privacy.} Without loss of generality, we can assume that each of \Cref{line:u1}, \Cref{line:u2}, and \Cref{line:mu-hat} each receive as input the entire dataset $X$ and choose to only compute on a third of the dataset. By \Cref{thm:high-d-coarse-estimate}, \textrm{CoarseEstimate} satisfies person-level $\Paren{\frac{\eps}{2},\frac{\delta}{2}}$-DP. By \Cref{thm:iterative-clip}, \Cref{alg:itclip}, is $(\frac{\eps}{4},\frac{\delta}{4})$-DP. By \Cref{lem:clip-is-private}, \Cref{line:mu-hat} is person-level $(\frac{\eps}{4},\frac{\delta}{4})$-DP. Thus, by basic composition (\Cref{lem:privacy-composition}), \Cref{alg:hd} is person-level $(\eps,\delta)$-DP.

\textbf{Analysis of accuracy.} By \Cref{thm:high-d-coarse-estimate}, the estimate $u_0$ satisfies $\normt{u_0 - \mu} \le \frac{\sqrt{d}}{2\sqrt{m}}$ with probability at least $0.9$. Conditioned on this event, by \Cref{thm:iterative-clip}, $\normt{u^* - \mu} \le \frac{\rho^*}{2\sqrt{d}}$ with probability at least $0.9$. Conditioned on this event, by \Cref{lem:final-estimation}, $\normt{\hat{\mu} - \mu} \le \alpha$ with probability at least $0.9$. Thus, by a union bound, $\normt{\hat{\mu} - \mu} \le \alpha$ with probability at least $0.7$.
\end{proof}

\section{Mean Estimation in High Dimensions with Pure-DP}
\label{sec:pure_dp_ineff}

The section is split into two parts.
The first focuses on fine estimation, whereas the second integrates coarse estimation to the overall process and presents the full algorithm.

\subsection{Pure-DP Fine Estimation}
\label{subsec:pure_dp_fine}

The high-level overview of our fine estimation algorithm is the following.
We have a distribution $\cD$ over $\bR^d$ such that $\mu \coloneqq \E\limits_{X \sim \cD}[X]$ and $\E\limits_{X \sim \cD}\Brac{\Abs{\iprod{X - \mu, v}}^k} \le 1, \forall \|v\|_2 = 1$.
Since we're in the fine estimation setting, we assume that we have $\|\mu\|_{\infty} \le \alpha$, where $\alpha$ is our target error.
The idea is to cover the space of candidate means in a way that, for every candidate, there exists a point in the cover that is at $\ell_2$-distance at most $\alpha$ from it.
Then, around each point, we construct a ``local cover''.
The goal of the local cover is to help us examine whether our point is a good candidate for an estimate of the true mean by performing binary comparisons.
If the ``central point'' loses a comparison with an element of the local cover, this outcome is treated as a certificate that the true mean is far from that point.
Conversely, if the central point wins all comparisons, this implies that the point is indeed close to the true mean.

Implementing this privately involves instantiating the exponential mechanism in a way that is tailored to the above setting.
Given that each person is contributing a batch consisting of multiple samples, the score of a point is the minimum number of batches that need to be altered so that the candidate point loses at least one comparison with an element of its corresponding local cover.
In the rest of this section, the analysis will first focus on establishing the necessary facts which describe how comparisons behave, and then the focus will shift to establishing the utility guarantees of the exponential mechanism.

We start by analyzing how binary comparisons are performed between the central point and the elements of its local cover.
Our algorithm first calculates the empirical mean of each batch, and then projects the result on the line that connects the two points.
Then, the projections are averaged, and, depending on which point the average is closer to, the outcome of the comparison is determined.
Given that this will be used to instantiate the exponential mechanism, we need to ensure that the extent to which the outcome of the comparison is affected if one or multiple batches are altered is limited, thus guaranteeing bounded sensitivity of the score function.
For that reason, the mean of each batch is truncated around the point that is the candidate for the true mean that is under consideration at each stage of the algorithm's execution.

Analyzing the effect of the truncation is the first step to obtaining our result.
To do so, we will need the a variant of Lemma~$5.1$ from~\cite{KamathSU20}.
The lemma in question quantifies the bias error induced by the truncation, when the truncation center is far from the true mean.

\begin{lemma}
\label{lem:truncation_error}
Let $\cD$ be a distribution over $\R$ with mean $\mu$, and $k$th moment bounded by $1$.
Let $x_0, \rho \in \R$, with $\rho \coloneqq \Theta\Paren{\sqrt{\frac{\paren{k-1}\log m}{m}} + \frac{1}{m \alpha^{\frac{1}{k - 1}}}}$.
For $X \coloneqq (X_1, \dots, X_m) \sim \cD^{\otimes m}$, we define $Z$ be the following random variable:
\[
    Z \coloneqq
    \begin{cases}
        x_0 - \rho, & \text{if } \frac{1}{m} \sum\limits_{i = 1}^m X_i < x_0 - \rho \\
        \frac{1}{m} \sum\limits_{i = 1}^m X_i, & \text{if }\Abs{\frac{1}{m} \sum\limits_{i = 1}^m X_i - x_0} \le \rho \\
        x_0 + \rho, & \text{if }\frac{1}{m} \sum\limits_{i = 1}^m X_i > x_0 + \rho
    \end{cases}.
\]
Then, the following hold:
\begin{enumerate}
    \item If $x_0 > \mu + \frac{\rho}{2}$, we have:
    \[
        \E[Z] \in
        \begin{cases}
            \left[\mu - \frac{\rho}{8},\mu + \frac{\rho}{8}\right], & \text{if }\frac{\rho}{2} < \abs{x_0 - \mu} \leq \frac{17 \rho}{16} \\
            \left[x_0 - \rho, x_0 - \frac{15 \rho}{16}\right], & \text{if }\abs{x_0 - \mu} > \frac{17 \rho}{16}
        \end{cases}.
    \]
    \item If $x_0 < \mu - \frac{\rho}{2}$, we have:
    \[
        \E[Z] \in
        \begin{cases}
            \left[\mu - \frac{\rho}{8}, \mu + \frac{\rho}{8}\right], & \text{if }\frac{\rho}{2} < \abs{x_0 - \mu} \leq \frac{17 \rho}{16} \\
            \left[x_0 + \frac{15 \rho}{16}, x_0 + \rho\right], & \text{if }\abs{x_0 - \mu} > \frac{17 \rho}{16}
        \end{cases}.
    \]
\end{enumerate}
\end{lemma}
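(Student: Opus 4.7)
The two cases in the statement are mirror images under $x \mapsto -x$, so the plan is to prove Case~$1$ ($x_0 > \mu + \rho/2$) and invoke symmetry. Write $\bar X \coloneqq \frac{1}{m}\sum_{i=1}^m X_i$ and $u \coloneqq x_0 - \mu > \rho/2$. The first step is the decomposition $Z = \bar X + (x_0 - \rho - \bar X)^+ - (\bar X - x_0 - \rho)^+$, which (taking expectations and using $\E[\bar X] = \mu$ and the layer-cake identity) gives $\E[Z] - \mu = L - R$ with $L,R \geq 0$ and
\[
    L = \int_0^\infty \Pr\Brac{\mu - \bar X > t + (\rho - u)}\, dt, \qquad R = \int_0^\infty \Pr\Brac{\bar X - \mu > t + (u + \rho)}\, dt.
\]
The constant hidden inside the $\Theta(\cdot)$ in the choice of $\rho$ is fixed once and for all to be large enough that $\rho/16 \geq \sqrt{\paren{k-1}\log m / m}$, which is the regime where \Cref{cor:non-uniform-tail-bound} applies, and also large enough that $\rho \geq 16/\sqrt m$ and $\rho^k \geq_k m^{-k+1}$.

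The bound on $R$ is uniform over both subcases. Since $t + u + \rho \geq 3\rho/2 \geq \sqrt{\paren{k-1}\log m / m}$, Corollary~\ref{cor:non-uniform-tail-bound} yields $\Pr[\bar X - \mu > t + u + \rho] \leq_k m^{-k+1}(t+u+\rho)^{-k}$, and integration gives $R \leq_k m^{-k+1}\rho^{-k+1} \leq \rho/16$.

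For Subcase~1a, $\rho/2 < u \leq 17\rho/16$, so $\rho - u \in [-\rho/16,\rho/2)$. Substituting $s = t + (\rho - u)$,
\[
    L = \int_{\rho-u}^{\infty} \Pr\Brac{\mu - \bar X > s}\, ds \leq (u-\rho)^{+} + \E\Brac{(\mu-\bar X)^+} \leq \frac{\rho}{16} + \sqrt{\Var(\bar X)} \leq \frac{\rho}{16} + \frac{1}{\sqrt m},
\]
where we bounded the part of the integrand with $s<0$ by $1$, used $\E[(\mu - \bar X)^+] \leq \E[|\mu - \bar X|] \leq \sqrt{\Var(\bar X)}$, and invoked $\Var(\bar X) \leq 1/m$ (immediate from $\sigma_k(\cD) \leq 1$ and $k\geq 2$). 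By the constant chosen for $\rho$, $L \leq \rho/8$. Combined with $R \leq \rho/16$, this gives $-\rho/8 \leq -R \leq \E[Z] - \mu \leq L \leq \rho/8$, as required.

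For Subcase~1b, $u > 17\rho/16$, so $u - \rho > \rho/16 \geq \sqrt{\paren{k-1}\log m / m}$. The lower bound $\E[Z] \geq x_0 - \rho$ is immediate from $Z \geq x_0 - \rho$. For the upper bound, decompose
\[
    \E[Z] \leq (x_0-\rho)\Pr[\bar X \leq x_0 - \rho] + (x_0 + \rho)\Pr[\bar X > x_0 - \rho] = (x_0-\rho) + 2\rho\Pr[\bar X - \mu > u - \rho],
\]
and apply \Cref{cor:non-uniform-tail-bound} to obtain $2\rho\Pr[\bar X - \mu > u - \rho] \leq_k m^{-k+1}\rho^{-k+1} \leq \rho/16$, giving $\E[Z] \leq x_0 - 15\rho/16$. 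The main technical obstacle in the plan is simply constant chasing: one must pick the absolute constant hidden in the definition of $\rho$ so that the same $\rho$ simultaneously satisfies (i) the threshold condition $\rho/16 \geq \sqrt{\paren{k-1}\log m / m}$ needed to legally invoke the non-uniform tail bound at scale $u - \rho$ in Subcase~1b, (ii) $\rho \geq 16/\sqrt m$ to absorb the $1/\sqrt m$ slack from the $(\mu - \bar X)^+$ term in Subcase~1a, and (iii) $\rho^{k} \geq_k m^{-k+1}$ to ensure the ``bias at the far endpoint'' term $m^{-k+1}\rho^{-k+1}$ is $\leq \rho/16$.
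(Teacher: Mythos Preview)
Your proof is correct, and the constant-chasing you flag at the end is exactly the only place care is needed; all three conditions (i)--(iii) are indeed simultaneously satisfiable once the $\Theta$-constant is taken large enough in $k$.

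The route differs from the paper's. The paper does not use the layer-cake decomposition $\E[Z]-\mu=L-R$; instead it introduces $a \coloneqq \max\{x_0-\rho,\mu+\rho/16\}$, $b \coloneqq x_0+\rho$, conditions on the event $\{|\bar X-\mu|\le a-\mu\}$ with probability $1-q$, and shows
\[
    \E[Z]\le (1-q)a+qb=a+(b-a)q,\qquad \E[Z]\ge(1-q)(2\mu-a)+q(x_0-\rho),
\]
then verifies $q\le\tfrac{1}{32}$ via \Cref{cor:non-uniform-tail-bound}. The clever choice of $a$ unifies the two subcases, so a single estimate $q\le\tfrac{1}{32}$ handles all four inequalities at once. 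Your approach instead reuses the clipping-bias machinery already built in \Cref{lem:clipping-bias}/\Cref{cor:clipping-bias-coarse-estimate}: you control $L$ in Subcase~1a by the crude second-moment bound $\E[(\mu-\bar X)^+]\le 1/\sqrt m$ (rather than a tail bound), and only fall back on the paper's style of conditioning for the upper bound in Subcase~1b. What your route buys is conceptual reuse (the $L,R$ integrals are exactly the left/right tail contributions in \Cref{lem:clipping-bias}); what the paper's route buys is a single quantity $q$ that drives all four bounds without splitting into separate $L$ and $R$ arguments.
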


\begin{proof}
Without loss of generality, we assume that $x_0 > \mu$, since the other case is symmetric.
We define $a \coloneqq \max\{x_0 - \rho, \mu + \frac{\rho}{16}\}$ and $b \coloneqq x_0 + \rho$.
The previous implies that $\mu < a < b$.
To help us bound the value of $\E[Z]$, we also consider the probability $q \coloneqq \Pr\Brac{\Abs{\frac{1}{m} \sum\limits_{i = 1}^m X_i - \mu} > a - \mu}$.
For the upper bound, we have:
\begin{align}
    \E[Z]
    &= \E\left[\mathds{1}\left\{ \Abs{\frac{1}{m} \sum\limits_{i = 1}^m X_i - \mu} \le a - \mu \right\} Z\right] + \E\left[\mathds{1}\left\{ \Abs{\frac{1}{m} \sum\limits_{i = 1}^m X_i - \mu} > a - \mu \right\} Z\right] \nonumber \\
    &= (1 - q) \E\left[Z \middle| \Abs{\frac{1}{m} \sum\limits_{i = 1}^m X_i - \mu} \le a - \mu \right] + q \E\left[Z \middle| \Abs{\frac{1}{m} \sum\limits_{i = 1}^m X_i - \mu} > a - \mu \right] \nonumber \\
    &\le (1 - q) a + q b \nonumber \\
    &= a + (b - a) q, \label{eq:clipping_bias_ub}
\end{align}
where we used the property of conditional expectation that $\E[X \mathds{1}\{X \in A\}] = \Pr[X \in A] \E[X | X \in A]$.

Indeed, when the event $\Abs{\frac{1}{m} \sum\limits_{i = 1}^m X_i - \mu} > a - \mu \Leftrightarrow \Paren{\frac{1}{m} \sum\limits_{i = 1}^m X_i > a} \vee \Paren{\frac{1}{m} \sum\limits_{i = 1}^m X_i < 2 \mu - a}$ is realized, the value of $Z$ is maximized when one assumes that $\frac{1}{m} \sum\limits_{i = 1}^m X_i \geq x_0 + \rho = b$ holds with probability $q$.
Similarly, when $\Abs{\frac{1}{m} \sum\limits_{i = 1}^m X_i - \mu} \le a - \mu \Leftrightarrow 2 \mu - a \le \frac{1}{m} \sum\limits_{i = 1}^m X_i \le a$ is realized, the value of $Z$ is maximized when $\frac{1}{m} \sum\limits_{i = 1}^m X_i = a$ happens with probability $1 - q$.
Now, we take cases depending on the value of $a$.
If $a = \mu + \frac{\rho}{16} \Leftrightarrow \mu + \frac{\rho}{16} \geq x_0 - \rho \Leftrightarrow x_0 - \mu \le \frac{17}{16} \rho$, (\ref{eq:clipping_bias_ub}) becomes:
\begin{equation}
    \E[Z] \le \mu + \frac{\rho}{16} + 2 q \rho. \label{eq:clipping_bias_ub1}
\end{equation}
Conversely, if $a = x_0 - \rho \Leftrightarrow \mu + \frac{\rho}{16} < x_0 - \rho \Leftrightarrow x_0 - \mu > \frac{17}{16} \rho$, (\ref{eq:clipping_bias_ub}) yields:
\begin{equation}
    \E[Z] \le x_0 - \rho + 2 \rho q. \label{eq:clipping_bias_ub2}
\end{equation}
We observe that, in either of the two cases, in order to establish the desired upper bounds, it suffices to establish that $q \le \frac{1}{32}$.
Since $\rho = \Theta\Paren{\sqrt{\frac{\paren{k-1}\log m}{m}} + \frac{1}{m \alpha^{\frac{1}{k - 1}}}}$, Corollary~\ref{cor:non-uniform-tail-bound} yields:
\begin{align*}
    q = \Pr\Brac{\Abs{\frac{1}{m} \sum\limits_{i = 1}^m X_i - \mu} > \alpha - \mu} 
    &\le \Pr\Brac{\Abs{\frac{1}{m} \sum\limits_{i = 1}^m X_i - \mu} > \frac{\rho}{16}} \\
    &\le_k \frac{1}{m^{k - 1} \Paren{\frac{\rho}{16}}^k}.
\end{align*}
Due to our choice of $\rho$, we get that $q \le \frac{1}{32}$, so the desired upper bounds follow from (\ref{eq:clipping_bias_ub1}) and (\ref{eq:clipping_bias_ub2}).

We now turn our attention to establishing the lower bounds.
The bound $\E[Z] \geq x_0 - \rho$ holds trivially because of the truncation interval considered.
Thus, we focus on the case where $x_0 - \mu < \frac{17}{16} \rho$.
Based on a similar reasoning as the one used in the upper bound, we get:
\begin{align*}
    \E[Z] \geq (1 - q) (2 \mu - a) + q (x_0 - \rho) = (1 - q) \Paren{\mu - \frac{\rho}{16}} + q (x_0 - \rho)
    &= \mu - \frac{\rho}{16} + q (x_0 - \mu) - \frac{15 q \rho}{16} \\
    &\geq \mu - \frac{\rho}{8},
\end{align*}
where we used that $x_0 - \mu > 0$ and $q \le \frac{1}{32}$.
\end{proof}

The next step in the analysis involves reasoning about the \emph{sampling error}.
In particular, we have to consider how much the truncated sample mean of a batch deviates from the true mean of the truncated distribution.

\begin{lemma}
\label{lem:sample_error}
Let $\cD$ be a distribution over $\R$ with mean $\mu$, and $k$th moment bounded by $1$.
Assume we are given $n \geq \cO\Paren{\frac{\log\Paren{\frac{1}{\beta}}}{m \alpha^2}}$ independently-drawn batches of size $m$, i.e., $X^{(i)} \coloneqq \Paren{X^{(i)}_1, \dots, X^{(i)}_m} \sim \cD^{\otimes m}, \forall i \in [n]$.
For $x_0, \rho \in \R$, we define the random variables $Z_i$ as in Lemma~\ref{lem:truncation_error}.\footnote{$x_0, \rho$ do not need to be the same as in that lemma.}
We denote the mean of the above random variables by $\mu_{\mathrm{tr}}$.
Now, let $k = \Theta\Paren{\log\Paren{\frac{1}{\beta}}}$, so that $\frac{n}{k} \geq \frac{10}{m \alpha^2}$.
For $i = 1, \dots, k$, we define $Y^i \coloneqq \Paren{Z_{(i - 1) \frac{n}{k} + 1}, \dots Z_{i \frac{n}{k}}}$, and let $\mu_{\mathrm{tr}, i}$ be the empirical mean of $Y^i$.
Then, for $\Bar{\mu} \coloneqq \mathrm{Median}\Paren{\mu_{\mathrm{tr}, 1}, \dots, \mu_{\mathrm{tr}, k}}$, we will have:
\[
    \Pr[\Abs{\Bar{\mu} - \mu_{\mathrm{tr}}} > \alpha] \le \beta.
\]
In particular, at least $90 \%$ of $\Paren{\mu_{\mathrm{tr}, 1}, \dots, \mu_{\mathrm{tr}, k}}$ will be at most $\alpha$-far from $\Bar{\mu}$, except with probability $\beta$.
\end{lemma}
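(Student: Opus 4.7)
The statement is a textbook median-of-means estimate, so the plan is to (i) bound the variance of each truncated batch mean $Z_i$, (ii) apply Chebyshev within each of the $k$ sub-samples, and (iii) boost the resulting constant-probability guarantee to a $1-\beta$ guarantee by a Chernoff bound on the median. The only content specific to our setting is the variance bound in step (i); everything else is generic.

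First, I would observe that $\Var(Z_i) \le 1/m$. The distribution $\cD$ satisfies $\sigma_2(\cD) \le \sigma_k(\cD) \le 1$ by Lyapunov's inequality, so the untruncated batch mean $\frac{1}{m}\sum_{j=1}^m X_j^{(i)}$ has variance at most $1/m$. Since $Z_i$ is obtained from this batch mean by clipping to the interval $[x_0-\rho, x_0+\rho]$, Lemma~\ref{lem:second-moment-after-truncation} gives $\Var(Z_i) \le 1/m$, regardless of how the clipping interval relates to $\mu$ or to $\mu_{\mathrm{tr}}$.

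Next, since $\mu_{\mathrm{tr},i}$ is the average of $n/k \ge 10/(m\alpha^2)$ i.i.d.\ copies of $Z$, its variance is at most $k/(mn) \le \alpha^2/10$. Chebyshev then gives $\Pr[|\mu_{\mathrm{tr},i}-\mu_{\mathrm{tr}}|>\alpha] \le 1/10$ independently over $i$ (independence across groups comes from the disjointness of the underlying batches). Letting $B_i$ be the indicator of this bad event, $\sum_i B_i$ is a sum of independent $\{0,1\}$ variables with mean at most $k/10$, and a standard Chernoff bound yields
\begin{equation*}
    \Pr\!\left[\sum_{i=1}^k B_i \ge k/2\right] \le \exp(-\Omega(k)) \le \beta,
\end{equation*}
once the hidden constant in $k = \Theta(\log(1/\beta))$ is chosen large enough. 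On the complementary event, more than half of the $\mu_{\mathrm{tr},i}$'s lie within $\alpha$ of $\mu_{\mathrm{tr}}$, which forces the median $\bar\mu$ to satisfy $|\bar\mu - \mu_{\mathrm{tr}}| \le \alpha$, establishing the first claim.

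For the stronger ``$90\%$'' claim, I would repeat the Chebyshev step at scale $\alpha/2$ (which only inflates the hidden constants in the sample complexity and in $k$) to conclude that, except with probability $\beta$, at least $95\%$ of the $\mu_{\mathrm{tr},i}$'s lie within $\alpha/2$ of $\mu_{\mathrm{tr}}$. On this event $\bar\mu$ is itself within $\alpha/2$ of $\mu_{\mathrm{tr}}$, and the triangle inequality gives that at least $90\%$ of the $\mu_{\mathrm{tr},i}$'s are within $\alpha$ of $\bar\mu$. There is no substantive obstacle; the only subtle point is that the truncation is centered at an arbitrary $x_0$, potentially far from the true mean, so a direct computation of $\Var(Z_i)$ would be awkward. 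This is exactly why Lemma~\ref{lem:second-moment-after-truncation} was proved in its generic form, and invoking it makes the variance bound immediate.
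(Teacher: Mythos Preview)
Your proposal is correct and follows essentially the same approach as the paper: bound $\Var(Z_i)\le 1/m$ via Jensen/Lyapunov plus Lemma~\ref{lem:second-moment-after-truncation}, apply Chebyshev within each of the $k$ groups to get failure probability $\le 0.1$, and invoke the median trick (Chernoff) to boost to $1-\beta$. You are in fact slightly more explicit than the paper, which simply invokes ``the median trick'' and does not spell out the Chernoff step or the separate $\alpha/2$ argument for the $90\%$ conclusion.
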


\begin{proof}
First, we note that Jensen's inequality implies:
\[
    \E\Brac{\Abs{X^i_j - \mu}^2} \le \E\Brac{\Abs{X^i_j - \mu}^k}^{\frac{2}{k}} \le 1.
\]
Additionally, looking at the second moment of the sample mean of each batch, we get:
\[
    \E\Brac{\Abs{\frac{1}{m} \sum\limits_{j = 1}^m X^{(i)}_j - \mu}^2} = \frac{1}{m^2} \sum\limits_{j = 1}^m \E\Brac{\Abs{X^{(i)}_j - \mu}^2} \le \frac{1}{m}.
\]
By Lemma~\ref{lem:second-moment-after-truncation}, we get that:
\[
    \E\Brac{\Abs{Z_i - \E\Brac{Z_i}}^2} \le \E\Brac{\Abs{\frac{1}{m} \sum\limits_{j = 1}^m X^{(i)}_j - \mu}^2} \le \frac{1}{m}.
\]
By Chebyshev's inequality and our assumption that $\frac{n}{k} \geq \frac{10}{m \alpha^2}$, we have that:
\[
    \Pr\Brac{\Abs{\mu_{\mathrm{tr}, i} - \mu_{\mathrm{tr}}} > \alpha} \le 0.1.
\]
Then, by a direct application of the \emph{median trick}, we amplify the probability of success to get the desired high probability guarantee, leading to the bound $n \geq \cO\Paren{\frac{\log\Paren{\frac{1}{\beta}}}{m \alpha^2}}$.
\end{proof}

Having established the above two lemmas which account for the effect of various sources of error, we now proceed to perform a step-by-step analysis of our algorithm.
This involves working in a bottom-up fashion, meaning that we will start by focusing on the binary comparison between one individual candidate mean and an element of its local cover.
We consider two cases based on whether we're truncating around a point that's close to the true mean or not.
For the first case, we have a lemma that establishes that, if we're truncating around a point $p$ that's ``close'' to the true mean, $p$ will win the comparison against any point $q$ that's far from the true mean with high probability, and there will be a lower bound on the number of batches that would have to be changed in order to alter the result of the comparison.

\begin{algorithm}[htb]
    \caption{Binary Mean Comparison}\label{alg:bmc}
    \hspace*{\algorithmicindent} \textbf{Input:} A dataset $X \coloneqq \Paren{X^{\Paren{1}}, \dots, X^{\Paren{n}}} \sim \Paren{\cD^{\otimes m}}^{\otimes n}$, points $p, q \in \bR^d$, target error $\alpha$, target failure probability $\beta$. \\
    \hspace*{\algorithmicindent} \textbf{Output:} \texttt{True, False}.
    \begin{algorithmic}[1]
    \Procedure{$\mathrm{BinMeanComp}_{p, q, \alpha, \beta}$}{$X$}
        \State Let $\rho = \Theta\Paren{\sqrt{\frac{\paren{k-1}\log m}{m}} + \frac{1}{m \alpha^{\frac{1}{k - 1}}}}$.
        \For{$i \in \Brac{n}$}
            \State Consider the $1$-d projection $\left\langle \frac{q - p}{\left\|q - p\right\|_2}, \frac{1}{m} \sum\limits_{j = 1}^m X^{\Paren{i}}_j \right\rangle$, and, for this projection, define $Z_i$ as in Lemma~\ref{lem:truncation_error} with truncation center $x_0 \coloneqq p$ and truncation radius $\rho$.
        \EndFor
        \State Let $p_0 \coloneqq \left\langle \frac{q - p}{\left\| q - p \right\|_2}, p \right\rangle$, and $q_0 \coloneqq \left\langle \frac{q - p}{\left\| q - p \right\|_2}, q \right\rangle$.
        \If {$\Bar{\mu} \le \frac{p_0 + q_0}{2}$} \Comment{$\Bar{\mu}$ is the median of means, as in Lemma~\ref{lem:sample_error}.}
            \State \Return \texttt{True}.
        \Else
            \State \Return \texttt{False}.
        \EndIf
    \EndProcedure
    \end{algorithmic}
\end{algorithm}

\begin{lemma}
\label{lem:score1}
Let $\cD$ be a distribution over $\R^d$ with mean $\mu$, and $k$th moment bounded by $1$.
Assume we are given $n \geq \cO\Paren{\frac{\log\Paren{\frac{1}{\beta}}}{m \alpha^2}}$ independently-drawn batches of size $m$, each denoted by $X^{(i)} \coloneqq \Paren{X^{(i)}_1, \dots, X^{(i)}_m} \sim \cD^{\otimes m}, \forall i \in [n]$.
Also, assume we are given a pair of points $p, q \in \R^d$ such that $\left\|p - \mu\right\|_2 \le \frac{\alpha}{8}$ and $\left\|p - q\right\|_2 > \alpha$.
Then, we have for Algorithm~\ref{alg:bmc}, with probability at least $1 - \beta$, the number of batches an adversary needs to change to make $p$ lose the comparison with $q$ is at least $\Theta\Paren{\frac{n \alpha}{\rho}}$.
\end{lemma}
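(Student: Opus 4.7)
The plan is to reduce the comparison to a one–dimensional analysis along the line through $p$ and $q$, control the truncated projected mean using the clipping–bias tools from Section~\ref{sec:one-dim}, establish concentration via the median–of–means lemma, and finally use the batched structure of the input together with the truncation radius to lower bound the sensitivity of the outcome.

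First I would project: set $v := (q-p)/\|q-p\|_2$, and let $Y := \iprod{v, X}$ for $X\sim\cD$. Since $\cD$ has its $k$th moment bounded by $1$ along \emph{every} unit direction, $Y$ is a one-dimensional random variable with mean $\mu_v := \iprod{v,\mu}$ and $k$th moment bounded by $1$. Cauchy--Schwarz gives $|p_0 - \mu_v| \le \|p-\mu\|_2 \le \alpha/8$, and by construction $q_0 - p_0 = \|q-p\|_2 > \alpha$ (with $q_0>p_0$), so the decision threshold satisfies $(p_0+q_0)/2 \ge p_0 + \alpha/2 \ge \mu_v + 3\alpha/8$. Thus, to prove the algorithm outputs \texttt{True}, it suffices to show $\bar\mu$ lies below $\mu_v + 3\alpha/8$.

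Next I would analyze the honest execution in two error terms. Let $\mu_{\mathrm{tr}} := \E[Z_i]$ denote the mean of the truncated projected batch mean. Applying Corollary~\ref{cor:bias_error_batches_1d} to the projected distribution with truncation center $p_0$, offset $u \le \alpha/8$, and the prescribed $\rho = \Theta\bigl(\sqrt{(k-1)\log m/m} + 1/(m\alpha^{1/(k-1)})\bigr)$ (with constants chosen large enough that $\rho-u \ge \rho/2$ and the bias is at most $\alpha/C$ for some large $C$) yields $|\mu_{\mathrm{tr}} - \mu_v| \le \alpha/C$. Then, invoking Lemma~\ref{lem:sample_error} with target error $\alpha/C$ and failure probability $\beta$ shows that with probability $\ge 1-\beta$, the median-of-means satisfies $|\bar\mu - \mu_{\mathrm{tr}}| \le \alpha/C$, and moreover at least $90\%$ of the per-group means $\mu_{\mathrm{tr},i}$ lie within $\alpha/C$ of $\bar\mu$. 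The sample size lower bound on $n$ in the statement matches what Lemma~\ref{lem:sample_error} requires. Combining the two bounds gives $\bar\mu \le \mu_v + 2\alpha/C < \mu_v + 3\alpha/8$, so the algorithm outputs \texttt{True}.

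Finally I would prove the sensitivity lower bound. Let $\ell = \Theta(\log(1/\beta))$ be the number of median-of-means groups; each group averages $n/\ell$ truncated projections, each lying in $[p_0-\rho, p_0+\rho]$. Changing one batch alters exactly one $Z_i$ by at most $2\rho$, hence shifts exactly one group mean by at most $2\rho\cdot\ell/n$. Conditioning on the $(1-\beta)$ event above, at least $0.9\ell$ group means are at most $\alpha/C$ away from $\bar\mu$, which itself lies at least $3\alpha/8 - \alpha/C > \alpha/4$ below the midpoint $(p_0+q_0)/2$. For the median to cross the midpoint, at least $\lceil\ell/2\rceil - 0.1\ell = \Omega(\ell)$ of these concentrated group means must each be displaced by at least $\alpha/4 - \alpha/C = \Omega(\alpha)$. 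Shifting a single group mean by $\Omega(\alpha)$ requires changing at least $\Omega(\alpha n/(\rho \ell))$ batches within that group, so the total number of batch changes needed is $\Omega(\ell)\cdot\Omega(\alpha n/(\rho\ell)) = \Omega(\alpha n/\rho)$, as claimed.

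The main obstacle is the careful constant bookkeeping in step two (ensuring the bias from clipping plus the median-of-means sampling error leaves a margin of $\Theta(\alpha)$ between $\bar\mu$ and the midpoint) and in step three (translating ``move the median'' into ``flip $\Omega(\ell)$ groups by $\Omega(\alpha)$ each,'' which relies on the disjointness of groups so that a single batch only influences one group mean). Once these two pieces are lined up, the rest is routine application of the preliminary lemmas.
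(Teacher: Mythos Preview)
Your proposal is correct and follows essentially the same approach as the paper's proof: project onto the line through $p$ and $q$, bound the truncation bias via Corollary~\ref{cor:bias_error_batches_1d}, control the sampling error via the median-of-means Lemma~\ref{lem:sample_error}, and then argue that flipping the comparison requires moving $\Omega(\ell)$ group means by $\Omega(\alpha)$ each, with each such move costing $\Omega(n\alpha/(\rho\ell))$ batch changes. Your sensitivity argument is in fact spelled out a bit more carefully than the paper's (you explicitly use the disjointness of the median-of-means groups and the $90\%$-concentration clause of Lemma~\ref{lem:sample_error}), but the content is the same.
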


\begin{proof}
Let $\mu_0 \coloneqq \left\langle \frac{q - p}{\left\|q - p\right\|_2}, \mu \right\rangle$ be the projection of $\mu$ along the line connecting $p$ and $q$.
We have that $\Abs{\mu_0 - p_0} \le \left\|\mu - p\right\|_2 \le \frac{\alpha}{8}$.
The mean $\mu_{0, \mathrm{tr}}$ of the samples $Z_i$ satisfies $\Abs{\mu_{0, \mathrm{tr}} - \mu_0} \le \frac{\alpha}{16}$ (by Corollary~\ref{cor:bias_error_batches_1d} and our choice of truncation radius), whereas we have $\Abs{\Bar{\mu} - \mu_{0, \mathrm{tr}}} \le \frac{\alpha}{16}$, except with probability $\beta$ (by Lemma~\ref{lem:sample_error} for $\alpha \to \frac{\alpha}{16}$).
By triangle inequality, we get overall that $\Abs{p_0 - \Bar{\mu}} \le \frac{\alpha}{8} + \frac{\alpha}{16} + \frac{\alpha}{16} = \frac{\alpha}{4}$.
We note that $\Abs{p_0 - q_0} = \left\|p - q\right\|_2 > \alpha$.
Thus, for $q$ to win the comparison instead of $p$, we need to shift $\Bar{\mu}$ by at least $\frac{\alpha}{4}$.
By Lemma~\ref{lem:sample_error} we know that, for an adversary to achieve this, they need to corrupt at least $0.4 k = \Theta\Paren{\log\Paren{\frac{1}{\beta}}}$ subsamples in order to shift their empirical mean by $\frac{\alpha}{4}$.
Focusing on an individual subsample, corrupting a single batch can shift the mean by at most $\frac{2 \rho k}{n}$.
Thus, in each subsample, the adversary needs to corrupt at least $\frac{n \alpha}{8 \rho k}$ batches, implying at least $\Theta\Paren{\frac{n \alpha}{\rho}}$ batches corrupted overall.
\end{proof}

Our second lemma considers the case of truncating around a point $p$ that's far from the true mean, whereas we have a point $q$ that's close to the true mean.
In this case, there are two possible outcomes, depending on how far the true mean $\mu$ is from $p$.
If $\mu$ (and $q$ with it) is very far from $p$, what might happen is that, due to the effect of the truncation, the truncated mean might end up being closer to $p$ than $q$.
If the previous doesn't happen, we show that $p$ loses with high probability.
Conversely, if it happens, we argue that there exists a set of points $q'$ such that, if we run Algorithm~\ref{alg:bmc} with input $p$ and $q'$, $p$ will lose the comparison with high probability.

\begin{lemma}
\label{lem:score2}
Let $\cD$ be a distribution over $\R^d$ with mean $\mu$, and $k$th moment bounded by $1$.
For $\alpha \le \cO\Paren{\frac{1}{m^{\frac{k - 1}{k}}}}$, we are given $n \geq \cO\Paren{\frac{\log\Paren{\frac{1}{\beta}}}{m \alpha^2}}$ independently-drawn batches of size $m$, each denoted by $X^{(i)} \coloneqq \Paren{X^{(i)}_1, \dots, X^{(i)}_m} \sim \cD^{\otimes m}, \forall i \in [n]$.
Assume we are given a pair of points $p, q \in \R^d$ such that $\left\|p - \mu\right\|_2 > \frac{9 \alpha}{8}$ and $\left\|q - \mu\right\|_2 \le \frac{\alpha}{8}$.
Then, at least one of the following occurs:
\begin{itemize}
    \item $p$ loses the comparison to $q$ with probability at least $1 - \beta$.
    \item Let $\widetilde{\mu} \coloneqq p + \frac{\rho}{2} \cdot \frac{\mu - p}{\left\|\mu - p\right\|_2}$, and let $q'$ be any point in $\mathbb{R}^d$ such that $\left\|q' - \widetilde{\mu}\right\|_2 \le \frac{\alpha}{8}$.
    Then, if we run Algorithm~\ref{alg:bmc} with points $p$ and $q'$ as input, $p$ will lose with probability at least $1 - \beta$.
\end{itemize}
\end{lemma}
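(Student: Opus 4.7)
The plan is to argue by cases on $\normt{\mu - p}$. I set up notation for the one-dimensional projection used inside Algorithm~\ref{alg:bmc}: let $v \coloneqq (q - p)/\normt{q - p}$, and write $p_0, q_0, \mu_0$ for the inner products of $v$ with $p, q, \mu$. The hypotheses $\normt{p - \mu} > 9\alpha/8$ and $\normt{q - \mu} \le \alpha/8$ immediately give $\normt{p - q} > \alpha$, $\abs{q_0 - \mu_0} \le \alpha/8$, and hence $\mu_0 - p_0 \ge 7\alpha/8 > 0$ together with $\mu_0 - (p_0 + q_0)/2 \ge 3\alpha/8$; so $\mu_0$ sits strictly on $q$'s side of the midpoint, with a $3\alpha/8$ buffer.

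In the first case, $\normt{\mu - p} \le 17\rho/16$, I will show $p$ loses against $q$, yielding the first alternative of the lemma. Since $\mu_0 - p_0 \in [7\alpha/8,\,17\rho/16]$, I control the truncation bias along $v$: Lemma~\ref{lem:truncation_error} handles the subcase $\mu_0 - p_0 > \rho/2$ with bias at most $\rho/8$, while Corollary~\ref{cor:bias_error_batches_1d} handles $\mu_0 - p_0 \le \rho/2$ with bias $O_k(\alpha)$, using $\rho \gtrsim 1/(m\alpha^{1/(k-1)})$ so that $\rho - u \ge \rho/2$. Combined with the $\alpha/16$ median-of-means error from Lemma~\ref{lem:sample_error} and the inequality $\rho \ge \alpha$ (a direct consequence of $\alpha \le O(1/m^{(k-1)/k})$), $\Bar{\mu}$ stays on $q$'s side of $(p_0 + q_0)/2$ except with probability $\beta$.

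In the second case, $\normt{\mu - p} > 17\rho/16$, I establish the second alternative by rerunning the analysis along $v' \coloneqq (q' - p)/\normt{q' - p}$. Writing $\hat u \coloneqq (\mu - p)/\normt{\mu - p}$ and $q' - p = \rho \hat u + e$ with $\normt{e} \le \alpha/8$, a short calculation gives $\iprod{v', \hat u} \ge (\rho - \alpha/8)/(\rho + \alpha/8) = 1 - O(\alpha/\rho)$, so that $\mu_0' - p_0' = \normt{\mu - p} \cdot \iprod{v', \hat u} \ge 17\rho/16 - O(\alpha)$. Because $\rho \gg \alpha$ under the hypothesis on $\alpha$, this projected gap comfortably exceeds $\rho/2$, so Lemma~\ref{lem:truncation_error} applies and yields $\mu_{0,\mathrm{tr}}' - p_0' \ge 15\rho/16 - O(\alpha)$ (directly in the far subcase $\mu_0' - p_0' > 17\rho/16$, or via $\mu_{0,\mathrm{tr}}' \ge \mu_0' - \rho/8$ in the near subcase). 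Meanwhile $(p_0' + q_0')/2 - p_0' = \normt{q' - p}/2 \le \rho/2 + \alpha/16$, so invoking Lemma~\ref{lem:sample_error} once more gives $\Bar{\mu}' > (p_0' + q_0')/2$, provided $\rho$ exceeds $\alpha$ by a sufficiently large constant factor; this is guaranteed by $\alpha \le O(1/m^{(k-1)/k})$ after tuning hidden constants.

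The main obstacle is constant-tracking. The $15/16$ and $17/16$ factors from Lemma~\ref{lem:truncation_error}, the $\alpha/16$ slack from Lemma~\ref{lem:sample_error}, and the $O(\alpha/\rho)$ rotation error incurred by swapping $v$ for $v'$ all interact in narrow margins. The hypothesis $\alpha \le O(1/m^{(k-1)/k})$ is used in a load-bearing way in both cases to force $\rho$ to be a sufficiently large constant multiple of $\alpha$; I will need to verify carefully — especially in Case~B — that the rotation error does not erode the $17\rho/16$ separation needed to invoke Lemma~\ref{lem:truncation_error}, which amounts to calibrating the hidden constants in the definition of $\rho$ and in the upper bound on $\alpha$.
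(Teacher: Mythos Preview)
Your proposal is correct and follows the same overall strategy as the paper: project to one dimension, control truncation bias via Lemma~\ref{lem:truncation_error} and Corollary~\ref{cor:bias_error_batches_1d}, control sampling error via Lemma~\ref{lem:sample_error}, and conclude by comparing $\Bar{\mu}$ to the midpoint. The differences are organizational rather than substantive.

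The main structural difference is your case split. The paper cases on the \emph{projected} distance $|p_0-\mu_0|$ (against $\rho/2$ and $17\rho/16$), and in the far subcase first checks whether $p$ still happens to lose to $q$ before invoking the second alternative with $q'$. You instead case on the \emph{full} distance $\|\mu-p\|_2$ against $17\rho/16$, which is cleaner: in Case~A you inherit $\mu_0-p_0\le 17\rho/16$ automatically and then subcase on $\rho/2$, and in Case~B you go straight to the second alternative without the intermediate sub-subcase. Your Case~B geometry---bounding $\langle v',\hat u\rangle \ge (\rho-\alpha/8)/(\rho+\alpha/8)$ directly and reading off $\mu_0'-p_0'\ge \|\mu-p\|_2\cdot\langle v',\hat u\rangle$---is also more transparent than the paper's triangle-similarity/Pythagorean computation, which ends with a somewhat informal ``$|\mu_0'-p_0'|$ is sufficiently large, so we can work similarly.'' One small point to make explicit when you write it out: in subcase~A with $\mu_0-p_0>\rho/2$, the preamble bound $\mu_0-(p_0+q_0)/2\ge 3\alpha/8$ is not by itself enough to absorb the $\rho/8$ bias; you need the sharper $\mu_0-(p_0+q_0)/2\ge (\mu_0-p_0)/2-\alpha/16>\rho/4-\alpha/16$, which your text hints at (``the inequality $\rho\ge\alpha$'') but does not state.
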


\begin{proof}
As in the proof of Lemma~\ref{lem:score1}, we have $\mu_0 \coloneqq \left\langle \frac{q - p}{\left\|q - p\right\|_2}, \mu \right\rangle$ be the projection of $\mu$ along the line connecting $p$ and $q$.
We start by noting that $\left\|p - q\right\|_2 > \alpha$ holds because of the triangle inequality.
Additionally, we have $\Abs{q_0 - \mu_0} \le \left\|q - \mu\right\|_2 \le \frac{\alpha}{8}$.
We now need to consider two cases, depending on how $\Abs{p_0 - \mu_0}$ compares with $\frac{\rho}{2}$.

\begin{itemize}
    \item $\Abs{p_0 - \mu_0} \le \frac{\rho}{2}$ (Figure~\ref{fig:easy_case}).
    By Corollary~\ref{cor:bias_error_batches_1d} and our choice of truncation radius, we get that $\Abs{\mu_{0, \mathrm{tr}} - \mu_0} \le \frac{\alpha}{16}$.
    Additionally, Lemma~\ref{lem:sample_error} yields that $\Abs{\Bar{\mu} - \mu_{0, \mathrm{tr}}} \le \frac{\alpha}{16}$ with probability at least $1 - \beta$.
    Thus, by triangle inequality we get $\Abs{\Bar{\mu} - q_0} \le \frac{\alpha}{4} < \frac{\alpha}{2} \le \frac{\left\|p - q\right\|_2}{2}$.
    This implies that $p$ will lose to $q$, yielding the desired result.
        
    \begin{figure}[htb]
        \centering
        \includegraphics[scale=0.3]{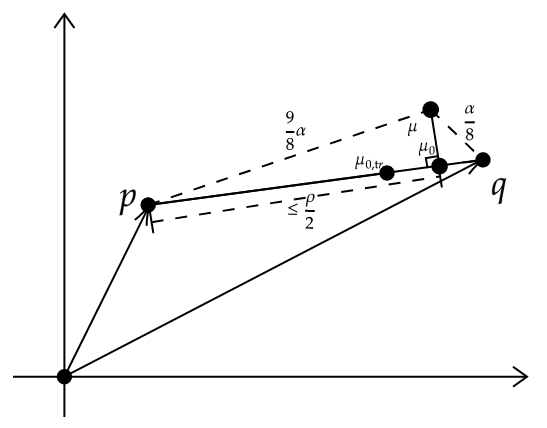}
        \caption{$\left\|p - \mu\right\|_2 > \frac{9 \alpha}{8}, \left\|q - \mu\right\|_2 \le \frac{\alpha}{8}$, and $\Abs{p_0 - \mu_0} \le \frac{\rho}{2}$.}
        \label{fig:easy_case}
    \end{figure}
        
    \item $\Abs{p_0 - \mu_0} > \frac{\rho}{2}$.
    In this case, the triangle inequality implies that $\Abs{p_0 - q_0} > \frac{\rho}{2} - \frac{\alpha}{8} \geq \frac{3 \rho}{8}$, where the last inequality follows from the assumption $\alpha \le \cO\Paren{\frac{1}{m^{\frac{k - 1}{k}}}}$.
    Now, we have to consider two cases depending on how $\Abs{p_0 - q_0}$ compares with $\frac{17 \rho}{16}$.
        
    When $\Abs{p_0 - \mu_0} \le \frac{17 \rho}{16}$ (Figure~\ref{fig:intermediate_case}), Lemma~\ref{lem:truncation_error} implies $\Abs{\mu_0 - \mu_{0, \mathrm{tr}}} \le \frac{\rho}{8}$.
    Additionally, we have by Lemma~\ref{lem:sample_error} that $\Abs{\Bar{\mu} - \mu_{0, \mathrm{tr}}} \le \frac{\alpha}{16}$ with probability at least $1 - \beta$.
    The triangle inequality yields that $\Abs{q_0 - \Bar{\mu}} \le \frac{\alpha}{8} + \frac{\rho}{8} + \frac{\alpha}{16} = \frac{\rho}{8} + \frac{3 \alpha}{16} \le \frac{3 \rho}{16} < \frac{\left\|p - q\right\|_2}{2}$.
    This immediately yields that $p$ loses the comparison to $q$.

    \begin{figure}[htb]
        \centering
        \includegraphics[scale=0.3]{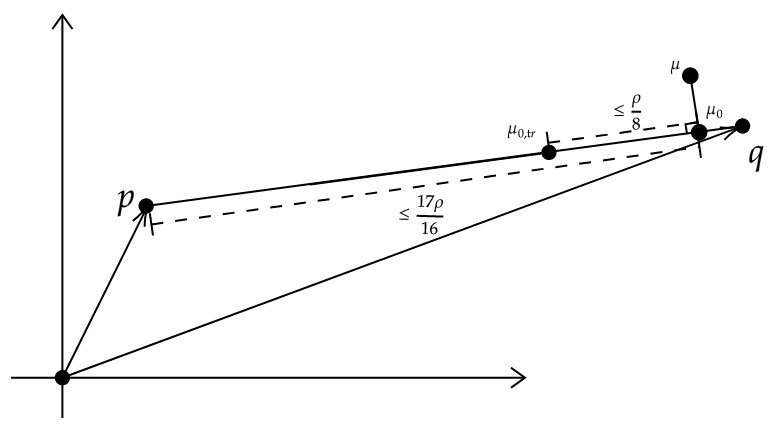}
        \caption{$\Abs{p_0 - \mu_0} > \frac{\rho}{2}$, and $\Abs{p_0 - \mu_0} \le \frac{17 \rho}{16}$.}
        \label{fig:intermediate_case}
    \end{figure}

    When $\Abs{p_0 - \mu_0} > \frac{17 \rho}{16}$, we get that $\Abs{p_0 - \mu_{0, \mathrm{tr}}} \in \left(\frac{15 \rho}{16}, \rho\right]$.
    We note that, by definition, we have $q_0 > p_0$, so $q_0 - p_0 = \left\|q - p\right\|_2 \geq \frac{3 \rho}{8}$.
    Combining this with the fact that $\Abs{q_0 - \mu_0} \le \frac{\alpha}{8}$, we get that $\mu_0 - p_0 \geq \frac{3 \rho}{8} - \frac{\alpha}{8} > 0 \implies \mu_0 > p_0 \implies \mu_{0, \mathrm{tr}} > p_0 + \frac{15 \rho}{16}$.
    By Lemma~\ref{lem:sample_error}, we have that $\Abs{\Bar{\mu} - \mu_{0, \mathrm{tr}}} \le \frac{\alpha}{16}$ with probability at least $1 - \beta$, implying that $\Bar{\mu} > p_0 + \frac{15}{16} \Paren{\rho - \alpha}$.
    If $\Abs{q_0 - \Bar{\mu}} \le \frac{15}{16} \Paren{\rho - \alpha}$, then $q$ wins and $p$ loses.
    However, if that does not happen, we have to work differently.
    
    For the case where $\Abs{q_0 - \Bar{\mu}} > \frac{15}{16} \Paren{\rho - \alpha}$, let $q'$ be any point that is at $\ell_2$-distance at most $\frac{\alpha}{8}$ from $\widetilde{\mu} = p + \frac{\rho}{2} \cdot \frac{\mu - p}{\left\|\mu - p\right\|_2}$ (Figure~\ref{fig:hard_case}).

    \begin{figure}[htb]
        \centering
        \includegraphics[scale=0.3]{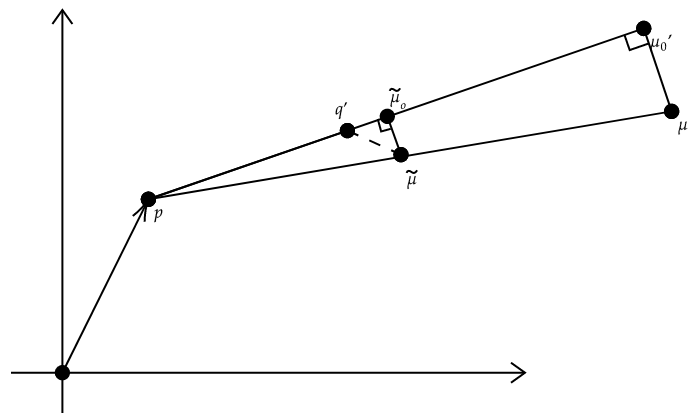}
        \caption{$\Abs{q_0 - \Bar{\mu}} > \frac{15}{16} \Paren{\rho - \alpha}$.}
        \label{fig:hard_case}
    \end{figure}

    We project everything along the line that connects $p$ and $q'$, so we get $p_0' \coloneqq \left\langle \frac{q' - p}{\left\|q' - p\right\|_2}, p \right\rangle, q_0' \coloneqq \left\langle \frac{q' - p}{\left\|q' - p\right\|_2}, q' \right\rangle, \mu_0' \coloneqq \left\langle \frac{q' - p}{\left\|q' - p\right\|_2}, \mu \right\rangle$, and $\widetilde{\mu}_0 \coloneqq \left\langle \frac{q' - p}{\left\|q' - p\right\|_2}, \widetilde{\mu} \right\rangle$.
    By triangle inequality, we have $\Abs{q'_0 - p'_0} = \left\|q' - p\right\|_2 \geq \left\|\widetilde{\mu} - p\right\|_2 - \left\|q' - \widetilde{\mu}\right\|_2 \geq \frac{\rho}{2} - \frac{\alpha}{8} \geq \frac{3 \rho}{8}$.
    Additionally, by triangle similarity, we have:
    \begin{align*}
        &\qquad\quad \frac{\Abs{\widetilde{\mu}_0 - p_0'}}{\Abs{\mu_0' - p_0'}} = \frac{\left\|p - \widetilde{\mu}\right\|_2}{\left\|p - \mu\right\|_2} \\
        &\Leftrightarrow \Abs{\mu_0' - p_0'} = \Abs{\widetilde{\mu}_ 0 - p_0'} \frac{\left\|p - \mu\right\|_2}{\left\|p - \widetilde{\mu}\right\|_2} = \Abs{\widetilde{\mu}_ 0 - p_0'} \frac{\Abs{p_0 - \mu_0}}{\left\|p - \widetilde{\mu}\right\|_2} = \Abs{\widetilde{\mu}_ 0 p- p_0'} \frac{\frac{17}{16} \rho}{\frac{\rho}{2}} = \frac{17}{32} \Abs{\widetilde{\mu}_ 0 - p_0'},
    \end{align*}
    where, by the Pythagorean theorem $\Abs{p_0' - \widetilde{\mu}_0} = \sqrt{\left\|p - \widetilde{\mu}\right\|_2^2 - \left\| \widetilde{\mu} - \Paren{p + \widetilde{\mu}_0 \frac{q' - p}{\|q' - p\|_2}}\right\|_2^2} \geq \sqrt{\left\|p - \widetilde{\mu}\right\|_2^2 - \left\| \widetilde{\mu} - q'\right\|_2^2} \geq \sqrt{\left\|p - \widetilde{\mu}\right\|_2^2 - \Paren{\frac{\alpha}{8}}^2} = \sqrt{\left(\frac{\rho}{2}\right)^2 - \Paren{\frac{\alpha}{8}}^2}$.
    As a result of all the previous, we get that $\Abs{\mu_0' - p_0'}$ is sufficiently large, so we can work similarly to the previous cases, but $p$ is guaranteed to lose this time. 
\end{itemize}
\end{proof}

Assuming that $\left\|\mu\right\|_{\infty} \le \alpha$, our goal is to use the above lemma to analyze the comparisons between a point $p$ with $\left\|p\right\|_{\infty} \le \alpha$ and the elements of a local cover of the $\ell_{\infty}$-ball of radius $2 \alpha$ that is centered around $p$.
We will remove from the cover all the points that are at $\ell_2$-distance at most $\alpha$ from $p$.
Our goal is to plug the above into the exponential mechanism in order to determine whether a point $p$ is close to the true mean of our distribution.
For that reason, we will need to define a \emph{score function}, which we do formally below:

\begin{definition}[$\mathrm{Score}$ of a Point]
\label{def:score}
Let $X \coloneqq \Paren{X^{\Paren{1}}, \dots, X^{\Paren{n}}}$ be a dataset consisting of $n$ batches $X^{\Paren{i}} \in \bR^{m \times d}, \forall i \in \Brac{d}$, and $p$ be a point in $\R^d$, and $\alpha > 0$.
We define $\mathrm{Score}_{X, D, \alpha}\Paren{p}$ of $p$ with respect to a domain $D \subset \R^d$ to be the minimum number of batches of $X$ that need to be changed to get a dataset $\wb{X}$ so that there exists $q \in D$, such that Algorithm~\ref{alg:bmc} outputs \texttt{False}.
If for all $q \in D \setminus \{p\}$ and all $X$, Algorithm~\ref{alg:bmc} outputs \texttt{True}, then we define $\mathrm{Score}_{X, D, \alpha}\Paren{p} = n \alpha$.
\noindent If the context is clear, we abbreviate the quantity to $\mathrm{Score}_X\paren{p}$.
\end{definition}

At this point, we recall the following lemma from~\cite{KamathSU20}:

\begin{lemma}
\label{lem:score_sensitivity}
The $\mathrm{Score}$ function satisfies the following:
\[
    \Delta_{\mathrm{Score}, 1} \le 1.
\]
\end{lemma}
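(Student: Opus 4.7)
The plan is to establish $\Delta_{\mathrm{Score},1} \le 1$ by a direct witness-transfer argument: any adversarial modification that forces $p$ to lose a comparison on one dataset can be lifted to an adversarial modification on a neighboring dataset at an additional cost of at most one batch.

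First I would fix $p \in \bR^d$, a domain $D$, and two neighboring datasets $X$ and $X'$ that differ only in their $i_0$-th batch. Assume $\mathrm{Score}_X(p) = k < n\alpha$, witnessed by an index set $I \subseteq [n]$ with $|I| = k$ and replacement batches $\{Y^{(i)}\}_{i \in I}$ such that, after substituting $X^{(i)} \leftarrow Y^{(i)}$ for each $i \in I$, the resulting dataset $\bar X$ causes Algorithm~\ref{alg:bmc} to return \texttt{False} for some $q \in D$. Starting from $X'$, I can reach the same dataset $\bar X$ by (i) applying exactly the same substitutions on the indices in $I$, and (ii) if $i_0 \notin I$, additionally overwriting the $i_0$-th batch of $X'$ with $X^{(i_0)}$. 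This costs $k$ modifications when $i_0 \in I$ and $k+1$ modifications when $i_0 \notin I$, so in either case $\mathrm{Score}_{X'}(p) \le k+1$.

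By symmetry (swapping the roles of $X$ and $X'$), we also obtain $\mathrm{Score}_X(p) \le \mathrm{Score}_{X'}(p) + 1$, yielding $\bigl|\mathrm{Score}_X(p) - \mathrm{Score}_{X'}(p)\bigr| \le 1$, which is the claim. The degenerate case $\mathrm{Score}_X(p) = n\alpha$ requires no extra work: by Definition~\ref{def:score}, this default value is invoked precisely when Algorithm~\ref{alg:bmc} outputs \texttt{True} for every $q \in D \setminus \{p\}$ and every input dataset, which is a property of $(p, D, \alpha)$ alone; hence if the default applies to $X$ it applies to $X'$ as well, and the two scores coincide. There is essentially no serious obstacle in this proof; the only subtlety worth highlighting is distinguishing whether the witness index set $I$ for the optimal modification of $X$ already contains the distinguishing index $i_0$, since that determines whether an extra ``revert'' step is needed when transferring the witness to $X'$.
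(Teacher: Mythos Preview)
Your proposal is correct and is exactly the standard witness-transfer argument one expects here. The paper does not actually supply a proof: it recalls the lemma from~\cite{KamathSU20} and simply notes that the item-level argument (where the score counts individual points to be changed) carries over verbatim when ``points'' are replaced by ``batches.'' Your proof is precisely that argument spelled out, so it matches what the paper implicitly invokes.
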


We note that this lemma was established in the setting of \emph{item-level privacy}, where the score is defined in terms of points that have to be changed for a point $p$ to lose a comparison, but the result still holds if instead of individual points we consider batches.

Algorithm~\ref{alg:test} implements the above reasoning.
The algorithm outputs a value that corresponds to the score of $p$, as defined in Definition~\ref{def:score}.
Its performance guarantees will be analyzed by using Lemmas~\ref{lem:score1} and~\ref{lem:score2}.

\begin{algorithm}[htb]
    \caption{Test Candidate}\label{alg:test}
    \hspace*{\algorithmicindent} \textbf{Input:} A dataset $X \coloneqq \Paren{X^{\Paren{1}}, \dots, X^{\Paren{n}}} \sim \Paren{\cD^{\otimes m}}^{\otimes n}$, a point $p \in \bR^d$, target error $\alpha$, target failure probability $\beta$. \\
    \hspace*{\algorithmicindent} \textbf{Output:} $s \in \mathbb{N}$.
    \begin{algorithmic}[1]
    \Procedure{$\mathrm{TestCan}_{p, \alpha, \beta}$}{$X$}
        \For{$i \in \Brac{d}$}
            \State Let $J_{p, i} \coloneqq \left\{p_i - 2 \alpha, p_i - 2 \alpha + \frac{\alpha}{4 \sqrt{d}}, \dots, p_i + 2 \alpha - \frac{\alpha}{4 \sqrt{d}}, p_i + 2 \alpha\right\}$.
        \EndFor
        \State Let $J_p \coloneqq \left(\bigotimes\limits_{i \in [d]} J_{p, i}\right) \setminus \left\{x \in \mathbb{R}^d \colon \left\|x - p\right\|_2 \le \alpha \right\}$.
        \State \Return $\mathrm{Score}_{X, J_p, \alpha}\left(X\right)$.
    \EndProcedure
    \end{algorithmic}
\end{algorithm}

\begin{lemma}
\label{lem:test_candidate}
Let $\cD$ be a distribution over $\R^d$ with mean $\left\|\mu\right\|_{\infty} \le \alpha \le \cO\Paren{\frac{1}{m^{\frac{k - 1}{k}}}}$, and $k$th moment bounded by $1$.
Assume we are given $n \geq \cO\Paren{\frac{d \log\Paren{\frac{d}{\beta}}}{m \alpha^2}}$ independently-drawn batches of size $m$, each denoted by $X^{(i)} \coloneqq \Paren{X^{(i)}_1, \dots, X^{(i)}_m} \sim \cD^{\otimes m}, \forall i \in [n]$.
For any $p \in \mathbb{R}^d$ such that $\left\|p\right\|_{\infty} \le \alpha$, we have:
\begin{itemize}
    \item If $\left\|p - \mu\right\|_2 \le \frac{\alpha}{8}$, we have for Algorithm~\ref{alg:test} that, with probability at least $1 - \beta$, it will output $\mathrm{Score}_{X, J_p, \alpha}\left(X\right) \geq \Theta\Paren{\frac{n \alpha}{\rho}}$.
    \item If $\left\|p - \mu\right\|_2 > \frac{9 \alpha}{8}$, we have for Algorithm~\ref{alg:test} that, with probability at least $1 - \beta$, it will output $\mathrm{Score}_{X, J_p, \alpha}\left(X\right) = 0$.
\end{itemize}
\end{lemma}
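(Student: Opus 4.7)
I would proceed by case analysis on $\normt{p-\mu}$, invoking Lemmas~\ref{lem:score1} and~\ref{lem:score2} on the pairwise comparisons between $p$ and elements of the local cover $J_p$. Since $J_p$ sits inside a $d$-dimensional grid of $\ell_\infty$-radius $2\alpha$ with spacing $\alpha/(4\sqrt{d})$, we have $\abs{J_p} \le (17\sqrt{d})^d$, and hence $\log\Paren{\abs{J_p}/\beta} = O(d\log(d/\beta))$. Running each comparison at failure probability $\beta/\abs{J_p}$ and union-bounding over $J_p$ then matches the sample-complexity hypothesis $n \geq O\Paren{d\log(d/\beta)/(m\alpha^2)}$. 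A second property of the grid that I will repeatedly use is that every point of the enclosing cube $[p-2\alpha, p+2\alpha]^d$ lies within $\ell_2$-distance $\sqrt{d}\cdot\alpha/(8\sqrt{d}) = \alpha/8$ of some grid point.

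\noindent\textbf{Case 1 ($\normt{p-\mu}\le \alpha/8$).} Every $q \in J_p$ satisfies $\normt{q-p} > \alpha$ by construction, so I would apply Lemma~\ref{lem:score1} with per-comparison failure probability $\beta/\abs{J_p}$ to deduce that an adversary must alter $\Omega(n\alpha/\rho)$ batches in order to flip the comparison between $p$ and $q$. A union bound over $J_p$ then yields $\mathrm{Score}_{X, J_p, \alpha}(X) \geq \Omega(n\alpha/\rho)$ with probability at least $1-\beta$.

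\noindent\textbf{Case 2 ($\normt{p-\mu}>9\alpha/8$).} Here it suffices to exhibit a single $q \in J_p$ to which $p$ already loses (with no adversarial corruption), forcing $\mathrm{Score}_{X,J_p,\alpha}(X) = 0$. Since $p,\mu \in [-\alpha,\alpha]^d$, the grid contains some $q$ with $\normt{q-\mu}\le\alpha/8$, and the triangle inequality gives $\normt{q-p}>\alpha$, hence $q \in J_p$. Applying Lemma~\ref{lem:score2} to this pair returns one of two outcomes: either (a) $p$ loses to $q$ directly, so we are done, or (b) $p$ loses to any $q'$ with $\normt{q'-\tilde\mu}\le\alpha/8$, where $\tilde\mu = p + \rho\cdot(\mu - p)/\normt{\mu-p}$. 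I expect the main obstacle to be verifying that a suitable $q' \in J_p$ exists in case (b), i.e.\ that some grid point near $\tilde\mu$ also satisfies $\normt{q'-p}>\alpha$ so as to survive the removal step defining $J_p$. Case (b) arises only when $\normt{p-\mu}>17\rho/16>\rho$, so $\tilde\mu$ is a convex combination of $p$ and $\mu$ and therefore lies in $[-\alpha,\alpha]^d\subseteq[p-2\alpha,p+2\alpha]^d$. The hypothesis $\alpha \le O(1/m^{(k-1)/k})$ inherited from Lemma~\ref{lem:score2}, together with a careful choice of the hidden constant in $\rho=\Theta\Paren{\sqrt{(k-1)\log m/m}+1/(m\alpha^{1/(k-1)})}$, guarantees $\rho > 9\alpha/8$; a grid point on the segment from $p$ through $\tilde\mu$ sitting just past $\tilde\mu$ will then simultaneously satisfy $\normt{q'-\tilde\mu}\le\alpha/8$ and $\normt{q'-p}>\alpha$, placing it in $J_p$ and completing the proof.
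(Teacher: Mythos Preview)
Your proposal is correct and follows the same two-case structure as the paper's proof, invoking Lemmas~\ref{lem:score1} and~\ref{lem:score2} with a union bound over the $\Paren{\Theta(\sqrt d)}^d$ points of $J_p$. You are in fact more careful than the paper's terse argument in handling case~(b) of Lemma~\ref{lem:score2} (verifying that a suitable $q'\in J_p$ near $\widetilde{\mu}$ exists); one small correction is that you should simply take the \emph{nearest} grid point to $\widetilde{\mu}$ rather than one ``on the segment from $p$ through $\widetilde{\mu}$'' (grid points need not lie on that segment), but the triangle-inequality bound $\normt{q'-p}\ge \normt{\widetilde\mu-p}-\alpha/8>\alpha$ you need goes through unchanged.
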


\begin{proof}
The proof follows almost immediately from Lemmas~\ref{lem:score1} and~\ref{lem:score2}.

For the first case, the fact that Algorithm~\ref{alg:test} constructs $J_p$ in a way that ensures that points $q$ considered are at least $\alpha$-far from $p$ suffices for us to be able to apply Lemma~\ref{lem:score1}.
Given that we want to apply the lemma for all points $q$, we need to set $\beta \to \frac{\beta}{\left(\Theta\left(\sqrt{d}\right)\right)^{d}}$.
Indeed, by a union bound, the probability of getting the wrong result for at least one comparison is upper bounded by $\beta$, leading to the bound $n \geq \cO\Paren{\frac{d \log\Paren{\frac{d}{\beta}}}{m \alpha^2}}$.

For the second case, we note that $J_p$ is an $\frac{\alpha}{8 \sqrt{d}}$-cover with respect to the $\ell_{\infty}$-distance of the $\ell_{\infty}$-ball with radius $2 \alpha$ that's centered at $p$ (modulo the $\ell_2$-ball centered around $p$, which we've removed).
We want to argue that there must exist a point $q \in J_p$ such that $\left\| q - \mu \right\|_{\infty} \le \frac{\alpha}{8 \sqrt{d}}$.
This would follow immediately from the definition of $J_p$, had the points that lie at the $\ell_2$-ball of radius $\alpha$ that's centered at $p$ not been removed.
Thus, we need to argue that any point $q$ which satisfies $\left\| q - \mu \right\|_{\infty} \le \frac{\alpha}{8 \sqrt{d}}$ must also satisfy $\left\| p - q \right\|_2 > \alpha$.
We have that $\left\| q - \mu \right\|_{\infty} \le \frac{\alpha}{8 \sqrt{d}} \implies \left\| q - \mu \right\|_2 \le \frac{\alpha}{8}$.
Since $\left\|p - \mu\right\|_2 > \frac{9 \alpha}{8}$, the triangle inequality yields that $\left\|p - q\right\|_2 > \frac{9 \alpha}{8} - \frac{\alpha}{8} = \alpha$, leading to the desired result.
Additionally, the previous guarantee that the conditions of Lemma~\ref{lem:score2} are satisfied, and setting again $\beta \to \left(\Theta\left(\sqrt{d}\right)\right)^d$ completes the proof.
\end{proof}

We now have all the necessary tools to present the complete algorithm for fine estimation.
Algorithm~\ref{alg:fine_est} constructs an $\ell_{\infty}$-cover of the set $\{ x \in \mathbb{R}^d \colon \left\| x \right\|_{\infty} \le \alpha \}$ that has granularity $\frac{2 \alpha}{\sqrt{d}}$, and then calculates the score of all points of the cover and samples from the exponential mechanism.

\begin{algorithm}[htb]
    \caption{Fine Estimation}\label{alg:fine_est}
    \hspace*{\algorithmicindent} \textbf{Input:} A dataset $X \coloneqq \Paren{X^{\Paren{1}}, \dots, X^{\Paren{n}}} \sim \Paren{\cD^{\otimes m}}^{\otimes n}$, target error $\alpha$, target failure probability $\beta$, privacy parameter $\eps$. \\
    \hspace*{\algorithmicindent} \textbf{Output:} $\widehat{\mu} \in \bR^d$.
    \begin{algorithmic}[1]
    \Procedure{$\mathrm{FineEst}_{\alpha, \beta, \eps}$}{$X$}
        \For{$i \in \Brac{d}$}
            \State Let $J_i \coloneqq \left\{- \alpha, - \alpha + \frac{2 \alpha}{\sqrt{d}}, \dots, \alpha - \frac{2 \alpha}{\sqrt{d}}, \alpha\right\}$.
        \EndFor
        \State Let $J \coloneqq \bigotimes\limits_{i \in [d]} J_i$.
        \For{$p \in J$}
            \State Run $\mathrm{TestCan}_{p, \frac{8 \alpha}{9}, \frac{\beta}{2 \left(\Theta\left(\sqrt{d}\right)\right)^d}}(X)$ to calculate $\mathrm{Score}_{X, J_p, \alpha}\left(X\right)$.
        \EndFor
        \State Run the Exponential Mechanism with score function $\{\mathrm{Score}_{X, J_p, \alpha}\left(X\right)\}_{p \in J}$ and privacy budget $\eps$, to get a point $\widehat{\mu}$.
        \State \Return $\widehat{\mu}$.
    \EndProcedure
    \end{algorithmic}
\end{algorithm}

\begin{lemma}
\label{lem:fine_est}
Let $\cD$ be a distribution over $\R^d$ with mean $\left\|\mu\right\|_{\infty} \le \alpha \le \cO\Paren{\frac{1}{m^{\frac{k - 1}{k}}}}$, and $k$th moment bounded by $1$.
Assume we are given $n \geq \cO_k\Paren{\frac{d \log\Paren{\frac{d}{\beta}}}{m \alpha^2} + \frac{d \log\Paren{\frac{d}{\beta}} \sqrt{\log(m)}}{\sqrt{m} \alpha \eps} + \frac{d \log\Paren{\frac{d}{\beta}}}{m \alpha^{\frac{k}{k - 1}} \eps}}$ independently-drawn batches of size $m$, each denoted by $X^{(i)} \coloneqq \Paren{X^{(i)}_1, \dots, X^{(i)}_m} \sim \cD^{\otimes m}, \forall i \in [n]$.
Then, for $\eps, \beta > 0$, there exists an $\eps$-DP mechanism (Algorithm~\ref{alg:fine_est}) which, given $\Paren{X^{(1)}, \dots X^{(n)}}$ as input, outputs a point $\widehat{\mu}$ such that $\left\|\widehat{\mu} - \mu \right\|_2 \le \alpha$ with probability at least $1 - \beta$.
\end{lemma}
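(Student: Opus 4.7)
The plan is to combine Lemma~\ref{lem:test_candidate} (which turns $\mathrm{Score}(p)$ into a ``$p$ is close to $\mu$'' indicator with a polynomial gap) with the utility guarantee of the exponential mechanism (Lemma~\ref{lem:exp_mech}).

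\textbf{Privacy.} The output $\widehat\mu$ is produced by the exponential mechanism over $J$ with scores $\{\mathrm{Score}_{X,J_p,8\alpha/9}(X)\}_{p\in J}$, whose sensitivity under a single-batch change is at most $1$ by Lemma~\ref{lem:score_sensitivity}. Lemma~\ref{lem:exp_mech} then gives $\eps$-DP.

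\textbf{Accuracy.} The set $J$ is an $\ell_\infty$-grid over $[-\alpha,\alpha]^d$ whose granularity can be made fine enough that the induced $\ell_2$-covering radius is at most $\alpha/9 = (8\alpha/9)/8$. Since $\|\mu\|_\infty\le\alpha$, there exists $p^\star\in J$ with $\|p^\star-\mu\|_2\le\alpha/9$, so the first case of Lemma~\ref{lem:test_candidate} (invoked with target error $8\alpha/9$) yields $\mathrm{Score}(p^\star)=\Omega(n\alpha/\rho)$ except with probability $\beta/(2|J|)$. Conversely, the second case of Lemma~\ref{lem:test_candidate}---whose hypothesis $\alpha=O(m^{-(k-1)/k})$ is precisely the one in the statement---implies that every $p\in J$ with $\|p-\mu\|_2>\alpha=9\cdot(8\alpha/9)/8$ has $\mathrm{Score}(p)=0$ except with probability $\beta/(2|J|)$. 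A union bound over $p\in J$ certifies all these score estimates simultaneously with probability $\ge 1-\beta/2$. Conditioning on that event and invoking Lemma~\ref{lem:exp_mech} with $t=\ln(2/\beta)$, we obtain, except with probability $\beta/2$,
\[
\mathrm{Score}(\widehat\mu)\ \ge\ \Omega\!\Paren{\frac{n\alpha}{\rho}}\ -\ \frac{2}{\eps}\Paren{\log|J|+\log(2/\beta)}.
\]
Since $|J|=(O(\sqrt{d}))^d$ gives $\log|J|=O(d\log d)$, the right-hand side is positive as long as $n\alpha/\rho=\Omega(d\log(d/\beta)/\eps)$; and a positive score forces $\|\widehat\mu-\mu\|_2\le\alpha$, by contrapositive of the second case of Lemma~\ref{lem:test_candidate}. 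Substituting $\rho=\Theta\bigl(\sqrt{(k-1)\log m/m}+1/(m\alpha^{1/(k-1)})\bigr)$ from Lemma~\ref{lem:truncation_error} produces the last two terms of the claimed bound on $n$; the first term $d\log(d/\beta)/(m\alpha^2)$ is the sampling-error requirement already imposed by Lemma~\ref{lem:test_candidate}. A final union bound over the two $\beta/2$ events gives total failure probability at most $\beta$.

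\textbf{Main obstacle.} The delicate part is the scale-bookkeeping among the three closely related quantities $\alpha/9$ (the $\ell_2$-covering radius of $J$), $8\alpha/9$ (the parameter passed to TestCan), and $\alpha$ (the accuracy target): one needs $J$ dense enough that some point of it lies in the ``close'' regime of Lemma~\ref{lem:test_candidate}, while simultaneously ensuring that every point outside the $\alpha$-ball around $\mu$ lands in the ``far'' regime so that it receives score $0$. Once these scales are reconciled, the remaining work---balancing the $\Omega(n\alpha/\rho)$ score gap against the $O(d\log(d/\beta)/\eps)$ additive slack of the exponential mechanism and plugging in the formula for $\rho$---is routine algebraic manipulation.
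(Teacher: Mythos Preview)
Your proposal is correct and follows essentially the same approach as the paper: privacy via the exponential mechanism, accuracy via the score gap from Lemma~\ref{lem:test_candidate} combined with the utility bound of Lemma~\ref{lem:exp_mech}, and then substitution of the two branches of $\rho$ to obtain the two $\eps$-dependent terms. Your handling of the grid granularity (insisting the $\ell_2$-covering radius of $J$ be at most $\alpha/9$ so that some grid point lands in the ``close'' regime) is in fact slightly more explicit than the paper's write-up, but the overall argument is the same.
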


\begin{proof}
The privacy guarantee is an immediate consequence of the privacy guarantee of the exponential mechanism (Lemma~\ref{lem:exp_mech}), so we focus on the accuracy guarantee.
By the definition of the set $J$, we have that there must exist a point $p \in J$ such that $\left\|p - \mu\right\|_{\infty} \le \frac{\alpha}{\sqrt{d}} \implies \left\|p - \mu\right\|_2 \le \alpha$.
Thus, what we must do is argue that, thanks to our choice of parameters, and the guarantees of the exponential mechanism imply that a point with $\left\|p - \mu\right\|_2 \le \alpha$ will be chosen with high probability.
First, we note that Algorithm~\ref{alg:fine_est} uses Algorithm~\ref{alg:test} with target error $\frac{8 \alpha}{9}$, and probability $\frac{\beta}{2 \left(\Theta\left(\sqrt{d}\right)\right)^d}$.
Due to the number of batches $n$ that we have, as well as the guarantees of Lemma~\ref{lem:test_candidate} we have that, for each point $p \in J$, except with probability $\frac{\beta}{2}$, the score of a point $p$ with $\left\|p - \mu\right\|_2 \le \frac{\alpha}{9}$ will be $\Theta\left(\frac{n \alpha}{\rho}\right)$, whereas the score of a point $p$ with $\left\|p - \mu\right\|_2 > \alpha$ will be $0$.
We have by Lemma~\ref{lem:score_sensitivity} that the sensitivity of our score function is at most $1$.
Thus, thanks to the guarantees of the exponential mechanism (Lemma~\ref{lem:exp_mech}), we have that, except with probability at least $\frac{\beta}{2}$, we have:
\begin{align*}    
    \mathrm{Score}(X, \widehat{\mu})
    &\geq \mathrm{OPT}_{\mathrm{Score}}(X) - \frac{2 \Delta_{\mathrm{Score}, 1}}{\eps} (\ln(|S|) + t) \\
    &\geq \Theta\left(\frac{n \alpha}{\rho}\right) - \frac{2}{\eps} \left(\Theta\left(d \log(d)\right) + \log\left(\frac{2}{\beta}\right)\right) \\
    &\geq \Theta\left(\frac{n \alpha}{\sqrt{\frac{\paren{k-1}\log m}{m}} + \frac{1}{m \alpha^{\frac{1}{k - 1}}}}\right) - \frac{2}{\eps} \left(\Theta\left(d \log(d)\right) + \log\left(\frac{2}{\beta}\right)\right)
\end{align*}
We now need to consider two cases, depending on which term dominates in the denominator.
If $\sqrt{\frac{\paren{k-1}\log m}{m}} \le \frac{1}{m \alpha^{\frac{1}{k - 1}}}$, the previous can be lower-bounded by:
\[
    \Theta\left(n \alpha^{\frac{k}{k - 1}} m\right) - \frac{2}{\eps} \left(\Theta\left(d \log(d)\right) + \log\left(\frac{2}{\beta}\right)\right)
\]
which, by assumption, is greater than $0$.

Conversely, if $\sqrt{\frac{\paren{k-1}\log m}{m}} > \frac{1}{m \alpha^{\frac{1}{k - 1}}}$, we get the lower bound:
\[
    \Theta\left(n \alpha \sqrt{\frac{m}{(k - 1) \log(m)}} \right) - \frac{2}{\eps} \left(\Theta\left(d \log(d)\right) + \log\left(\frac{2}{\beta}\right)\right),
\]
which, by assumption again, is greater than $0$.

Thus, except with probability $\beta$, the exponential mechanism will output a point with score greater than $0$, implying that the point will be at distance at most $\alpha$ from the true mean.
\end{proof}

\subsection{Coarse Estimation and the Full Algorithm}
\label{subsec:everything_together}

After focusing on fine estimation in the previous section, we can reason about coarse estimation, and tie everything together.
Our coarse estimator is not a new algorithm: it consists of a component-wise application of the single-dimensional mean estimator that is implied by Theorem~\ref{cor:mean-estimation-bounded}.
In particular, given sufficiently many samples from a distribution with $k$th moment bounded by $1$ that has mean $\mu$, the estimator of Theorem~\ref{cor:mean-estimation-bounded} outputs a $\widehat{\mu}$ such that $|\widehat{\mu} - \mu| \le \alpha$.
Now, given samples from a distribution in $d$ dimensions with mean $\mu$ and $k$th moment bounded by $1$, we can apply this algorithm independelty for each coordinate, and obtain an estimate $\widehat{\mu}$ such that $\|\widehat{\mu} - \mu\|_{\infty} \le \alpha$.
This allows us to reduce to the case where $\|\mu\|_{\infty} \le \alpha$, which was addressed in Section~\ref{subsec:pure_dp_fine}.
Algorithm~\ref{alg:full_algo} presents the full pseudocode that handles both coarse and fine estimation.

\begin{algorithm}[htb]
    \caption{Person-Level Pure-DP High-Dimensional Mean Estimation}\label{alg:full_algo}
    \hspace*{\algorithmicindent} \textbf{Input:} A dataset $X \coloneqq \Paren{X^{\Paren{1}}, \dots, X^{\Paren{2 n}}} \sim \Paren{\cD^{\otimes m}}^{\otimes 2 n}$, target error $\alpha$, target failure probability $\beta$, privacy parameter $\eps$. \\
    \hspace*{\algorithmicindent} \textbf{Output:} $\widehat{\mu} \in \bR^d$.
    \begin{algorithmic}[1]
    \Procedure{$\mathrm{PersonLevelMeanEst}_{\alpha, \beta, \eps}$}{$X$}
        \For{$k \in \Brac{d}$}
            \State Let $Y_i \coloneqq \Paren{\Paren{X^{(i)}_1}_k, \dots, \Paren{X^{(i)}_m}_k} \in \bR^n, \forall i \in [n]$ be the batch consisting of the $k$-th component of each element of the $i$-th batch.
            \State Run the algorithm of Theorem~\ref{cor:mean-estimation-bounded} over $\Paren{Y_1, \dots, Y_n}$ with target error $\alpha$, target failure probability $\frac{\beta}{2 d}$, and privacy parameter $\frac{\eps}{d}$, to obtain $\mu_{\mathtt{coarse}, i}$.
        \EndFor
        \State Let $\mu_{\mathtt{coarse}} \coloneqq \left(\mu_{\mathtt{coarse}, 1}, \dots, \mu_{\mathtt{coarse}, d}\right)$.
        \For{$i \in \{n+1, \dots, 2n\}$}
            \State Let $Z_{i - n}$ be the batch that is obtained by subtracting $\mu_{\mathtt{coarse}}$ from each element of $X^{(i)}$.
        \EndFor
        \State Let $Z \coloneqq \left(Z_1, \dots, Z_n\right)$
        \State Let $\widehat{\mu} \coloneqq \mathrm{FineEst}_{\alpha, \frac{\beta}{2}, \eps}(Z) + \mu_{\mathtt{coarse}}$.
        \State \Return $\widehat{\mu}$.
    \EndProcedure
    \end{algorithmic}
\end{algorithm}

\begin{theorem}
\label{thm:mean_est_pure_dp}
Let $\cD$ be a distribution over $\R^d$ with mean $\mu$ such that $\|\mu\|_2 \le R$, and $k$th moment bounded by $1$.
Assume we are given $n \geq \cO_k\Paren{\frac{d \log\Paren{d}}{m \alpha^2 \beta} + \frac{d \log\Paren{\frac{d}{\beta}} \sqrt{\log(m)}}{\sqrt{m} \alpha \eps} + \frac{d \log\Paren{\frac{d}{\beta}}}{m \alpha^{\frac{k}{k - 1}} \eps} + \frac{d \log \Paren{\frac{d R m}{\beta}}}{\eps}}$ independently-drawn batches of size $m$, each denoted by $X^{(i)} \coloneqq \Paren{X^{(i)}_1, \dots, X^{(i)}_m} \sim \cD^{\otimes m}, \forall i \in [n]$.
Then, for $\eps, \alpha, \beta > 0$ with $\alpha \le \cO\Paren{\frac{1}{m^{\frac{k - 1}{k}}}}$, there exists an $\eps$-DP mechanism (Algorithm~\ref{alg:fine_est}) which, given $\Paren{X^{(1)}, \dots X^{(n)}}$ as input, outputs a point $\widehat{\mu}$ such that $\left\|\widehat{\mu} - \mu \right\|_2 \le \alpha$ with probability at least $1 - \beta$.
\end{theorem}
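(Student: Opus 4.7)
My plan is to prove \Cref{thm:mean_est_pure_dp} by combining the two subroutines of \Cref{alg:full_algo} via parallel composition (for privacy) and a union bound over failure events (for accuracy).

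\textbf{Privacy.} I would first observe that the coarse and fine stages act on disjoint halves of the input, namely $X^{(1)}, \dots, X^{(n)}$ and $X^{(n+1)}, \dots, X^{(2n)}$. Since each person corresponds to exactly one batch, changing a single person's data affects only one half. On the first half, the algorithm runs $d$ independent invocations of the $\paren{\eps/d}$-DP univariate estimator of \Cref{cor:mean-estimation-bounded} (one per coordinate), which compose to $\eps$-DP by \Cref{lem:basic-composition}. On the second half, \Cref{lem:fine_est} guarantees that $\mathrm{FineEst}_{\alpha, \beta/2, \eps}$ is $\eps$-DP. Parallel composition across the two disjoint halves then yields person-level $\eps$-DP for the full algorithm.

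\textbf{Utility.} I would split the target failure probability evenly. First, applying \Cref{cor:mean-estimation-bounded} coordinate-wise with target error $\alpha$, failure probability $\beta/\paren{2d}$, and privacy parameter $\eps/d$, and union-bounding across coordinates, one obtains
\[
    \normi{\mu_{\mathtt{coarse}} - \mu} \le \alpha
\]
with probability at least $1 - \beta/2$, provided
\[
    n \ge \tilde{O}_k\Paren{\frac{d}{\alpha^2 m \beta} + \frac{d \log\paren{d/\beta}}{\alpha \sqrt{m} \eps} + \frac{d \log\paren{d/\beta}}{\alpha^{k/(k-1)} m \eps} + \frac{d \log\paren{dRm/\beta}}{\eps}} \mper
\]
Conditioning on this event, the recentered batches $Z_i = X^{(n+i)} - \mu_{\mathtt{coarse}}$ follow a distribution whose mean $\mu' \coloneqq \mu - \mu_{\mathtt{coarse}}$ satisfies $\normi{\mu'} \le \alpha$, with $k$-th moment still bounded by $1$ by translation-invariance of $\sigma_k$. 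This is exactly the precondition of \Cref{lem:fine_est}, which then outputs a point at $\ell_2$-distance at most $\alpha$ from $\mu'$ with probability at least $1 - \beta/2$, provided
\[
    n \ge \tilde{O}_k\Paren{\frac{d \log\paren{d/\beta}}{m \alpha^2} + \frac{d \log\paren{d/\beta} \sqrt{\log m}}{\sqrt{m} \alpha \eps} + \frac{d \log\paren{d/\beta}}{m \alpha^{k/(k-1)} \eps}} \mper
\]
Adding $\mu_{\mathtt{coarse}}$ back yields $\widehat{\mu}$ with $\normt{\widehat{\mu} - \mu} \le \alpha$. A union bound over the two failure events gives total failure probability at most $\beta$, and the claimed sample complexity follows by taking the maximum of the two requirements above (absorbing logarithmic factors).

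\textbf{Main obstacle.} The delicate point I expect to watch for is the matching of accuracy parameters at the handoff: \Cref{lem:fine_est} demands $\ell_\infty$-control on the recentered mean, not $\ell_2$-control, so each coordinate of the coarse estimate must be accurate to $\alpha$ rather than $\alpha/\sqrt{d}$. This is precisely what allows the coarse-stage sample complexity to scale only linearly in $d$ instead of $d^{3/2}$, and it matches the top-level bound. I would also need to confirm that the assumption $\alpha \le \cO\paren{m^{-(k - 1)/k}}$ of \Cref{lem:fine_est} is preserved across the recentering (it is, since $\alpha$ itself is unchanged), and to check that the $\eps \to \eps/d$ rescaling in the coarse stage does not inflate the third term of the bound beyond what is claimed, which is immediate from inspecting the statement of \Cref{cor:mean-estimation-bounded}.
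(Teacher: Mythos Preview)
Your proposal is correct and follows essentially the same route as the paper: privacy via basic composition over the $d$ coordinate-wise calls plus parallel composition across the two disjoint halves, and accuracy via a union bound over the coarse stage (giving $\ell_\infty$-control so that \Cref{lem:fine_est} applies) and the fine stage. The only minor addition in the paper's version is an explicit mention of post-processing (\Cref{lem:postprocessing}) to justify that subtracting the released $\mu_{\mathtt{coarse}}$ from the second-half batches does not affect the privacy guarantee, which you handle implicitly.
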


\begin{proof}
We start by establishing the privacy guarantee first.
By the privacy guarantee of Theorem~\ref{cor:mean-estimation-bounded}, and basic composition (Lemma~\ref{lem:basic-composition}), we get that we have $\eps$-DP over the batches $(X^{(1)}, \dots, X^{(n)})$.
The guarantee is not affected when we later construct $\mu_{\mathtt{coarse}}$ and subtract it from the other datapoints, due to closure under post-processing (Lemma~\ref{lem:postprocessing}).
By the privacy guarantee of Lemma~\ref{lem:fine_est}, we are guaranteed $\eps$-DP for the batches $X^{(n + 1)}, \dots X^{(2 n)}$.
The overall privacy guarantee then follows from parallel composition.

Now, it remains to establish the accuracy guarantee.
By the guarantees of Theorem~\ref{cor:mean-estimation-bounded}, we get that, except with probability $\frac{\beta}{2}$, we will have $\|\mu_{\mathtt{coarse}} - \mu\|_{\infty} \le \alpha$.
Thus, by the guarantees of Lemma~\ref{lem:fine_est}, we get that, except with probability $\frac{\beta}{2}$, we will get a $\widehat{\mu}$ that satisfies $\left\|\widehat{\mu} - (\mu - \mu_{\mathtt{coarse}})\right\|_2 = \left\|(\widehat{\mu} + \mu_{\mathtt{coarse}}) - \mu\right\|_2 \le \alpha$.
The desired accuracy guarantee follows directly from the last inequality and a union bound over failure events.
\end{proof}

\section{Approximate-DP Lower Bounds for Bounded \texorpdfstring{$k$}{k}-th Moments}
\label{sec:approx_dp_lb}

In this section, we establish lower bounds for private mean estimation of distributions with bounded $k$th moments in the person-level setting, under the constraint of $(\eps, \delta)$-DP.
Our bounds nearly match the upper bounds of Section~\ref{sec:hd}.
We start by stating the main theorem of the section.\footnote{We note that this differs from Theorem~\ref{thm:lower-bound} in that it omits the $\log(1/\delta)/\varepsilon$ term. 
This is a known lower bound for the set of all point-mass distributions in the item-level case, and observe that the problem is unchanged in the person-level setting (regardless of the value of $m$).}

\begin{theorem}
\label{thm:main_lb_approx_dp}
Let $C_1 > 1$ be a sufficiently large absolute constant.
Fix $k \geq 2$, and let $C(k) \coloneqq 2^k C^{(0)}(k) + 1$, where $C^{(0)}(k)$ is a constant that is sufficiently large so that $\mathbb{E}\left[\left|\mathcal{N}(\mu, \sigma^2)\right|^k\right] \le C^{(0)}(k) \left( |\mu| + \sigma \right)^k$ is satisfied.\footnote{It is a standard fact that such a $C^{(0)}(k)$ exists.}
Suppose $\alpha, \eps \le 1$, and $\delta \le \min\left\{\widetilde{\Omega}\left(\frac{\alpha \eps}{d}\right)^{C_1}, \Omega\left(\min\left\{\frac{1}{n m}, \frac{\sqrt{d}}{n m \sqrt{\log\left(\frac{n m}{\sqrt{d}}\right)}}\right\}\right)\right\}$.
Suppose that $M$ is an $(\eps, \delta)$-DP mechanism that takes as input $X \coloneqq \left(X^{(1)}, \dots, X^{(n)}\right)$ with $X^{(i)} \coloneqq \left(X^{(i)}_1, \dots, X^{(i)}_m\right) \in \mathbb{R}^{m \times d}, \forall i \in [n]$.
Let $\mathcal{D}$ be any distribution over $\mathbb{R}^d$ such that $\mathbb{E}\left[\left|\left\langle \mathcal{D} - \mathbb{E}\left[\mathcal{D}\right], v \right\rangle \right|^k \right] \le C(k), v \in \mathbb{S}^{d - 1}$, and assume that $X^{(i)}_j \sim_{\text{i.i.d.}} \mathcal{D}, \forall i \in [n], j \in [m]$.
If, for any such $\mathcal{D}$, we have that $\mathbb{P}\left[\left\|M(X) - \mathbb{E}\left[\mathcal{D}\right] \right\|_2 \le \alpha \right] \geq \frac{2}{3}$, it must hold that $n \geq \widetilde{\Omega}\left( \frac{d}{m \alpha^2} + \frac{d}{\sqrt{m} \alpha \eps} + \frac{d}{m \alpha^{k/{k - 1}} \eps} \right)$.
\end{theorem}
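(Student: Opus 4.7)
The plan is to prove the three lower bound terms separately, each via a reduction-based argument that appeals either to an information-theoretic baseline or to a known item-level DP lower bound, and then combine them by taking the maximum. The three reductions correspond to three qualitatively different sources of difficulty, and each has appeared in some form in prior work; the contribution here is to adapt them to the person-level bounded-moments setting and to the exact parameter regime of the theorem.

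For the first term $\tilde\Omega(d/(m\alpha^2))$, I would use a purely information-theoretic argument. Even without any privacy constraint, estimating the mean of a distribution over $\bR^d$ with bounded $k$-th moment (for $k \ge 2$) to $\ell_2$-error $\alpha$ requires $\tilde\Omega(d/\alpha^2)$ samples; this can be obtained from Fano's inequality against a Gaussian $\mathcal{N}(\mu, I_d)$ family (which satisfies the hypothesis of the theorem for the given $C(k)$). Since the person-level mechanism $M$ has access to a pool of $nm$ samples across all persons, any such non-private lower bound gives $nm = \tilde\Omega(d/\alpha^2)$, i.e.\ $n = \tilde\Omega(d/(m\alpha^2))$. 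For the third term $\tilde\Omega(d/(m\alpha^{k/(k-1)}\eps))$, I would reduce from the item-level lower bound of \cite{KamathSU20}, which says that any item-level $(\eps,\delta)$-DP mechanism estimating the mean of a $k$-th-moment-bounded distribution on $\bR^d$ to error $\alpha$ needs $\tilde\Omega(d/(\alpha^{k/(k-1)}\eps))$ samples. Given any person-level $M$ on $n$ batches of size $m$, I would build an item-level mechanism $M'$ that takes $nm$ samples, partitions them into $n$ arbitrary groups of size $m$, and feeds them into $M$; person-level neighbors of the resulting batched dataset correspond to item-level neighbors that differ in at most $m$ coordinates, but since one item-level change touches only one batch, the person-level $(\eps,\delta)$-DP guarantee of $M$ transfers directly to item-level $(\eps,\delta)$-DP of $M'$. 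This forces $nm = \tilde\Omega(d/(\alpha^{k/(k-1)}\eps))$.

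For the second term $\tilde\Omega(d/(\sqrt{m}\alpha\eps))$, I would follow the sufficient-statistic conversion of \cite{LevySAKKMS21}, specialized to Gaussian data. Set $\cD = \cN(\mu,I_d)$, which satisfies the $k$-th-moment hypothesis by definition of $C(k)$. For each batch $X^{(i)} \sim \cN(\mu,I_d)^{\otimes m}$, the sample mean $\bar X^{(i)}\sim\cN(\mu,I_d/m)$ is a sufficient statistic for $\mu$, and conditioned on $\bar X^{(i)}$ the remaining intra-batch randomness is independent of $\mu$ and can be resampled without data access. Therefore any $(\eps,\delta)$-DP person-level mechanism $M$ operating on $n$ batches can be simulated by an $(\eps,\delta)$-DP item-level mechanism that takes only $n$ i.i.d.\ samples from $\cN(\mu,I_d/m)$, produces a batch internally from each sample, and runs $M$. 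Invoking the known item-level lower bound for Gaussian mean estimation (e.g.~\cite{KamathLSU19}) on $\cN(\mu,\sigma^2 I_d)$ with $\sigma^2 = 1/m$ and accuracy $\alpha$ gives $n = \tilde\Omega(d\sigma/(\alpha\eps)) = \tilde\Omega(d/(\sqrt{m}\alpha\eps))$.

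The main obstacle will be executing the Gaussian sufficient-statistic reduction in a fully rigorous way under the delicate $\delta$-regime stated in the theorem, and then matching the resulting parameters to the existing item-level lower bounds. Concretely, I would need to (i) verify that the conditional resampling construction preserves $(\eps,\delta)$-DP exactly, with item-level neighbors of the simulated data mapping to person-level neighbors of $M$'s input, (ii) handle the normalization constant $C(k)$ in the theorem hypothesis via a rescaling so that the hard instances used in the cited item-level lower bound remain admissible, and (iii) check that the packing radius and covariance scale $1/m$ entering the reduced item-level problem lie in the parameter range where the item-level bound's proof (which typically uses fingerprinting or secrecy-of-the-sample arguments) still requires $\delta$ in the range $\tilde O(\alpha\eps/d)^{C_1}$ assumed here. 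The first and third terms are routine given the cited prior work, so the second-term reduction is the only step that demands new technical care.
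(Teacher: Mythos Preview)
Your proposal is correct and follows essentially the same three-part decomposition as the paper: the paper obtains the first and third terms together via the batching reduction you describe (citing the item-level bound of~\cite{Narayanan23} rather than~\cite{KamathSU20}), and for the second term it simply invokes the person-level Gaussian lower bound of~\cite{LevySAKKMS21} as a black box (their Appendix~E.2) rather than re-deriving the sufficient-statistic reduction you sketch. Your concern about executing that reduction rigorously under the stated $\delta$-regime is therefore already absorbed into that cited result.
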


The proof of Theorem~\ref{thm:main_lb_approx_dp} can be split into two parts.
The first part involves establishing the first and the last term of the sample complexity, whereas the second part focuses on the middle term.
The former is significantly simpler since, as we will see, it comes as a direct consequence of a reduction from the item-level setting.
In particular, we recall the following result from~\cite{Narayanan23}:

\begin{proposition}
\label{prop:shyam_lb}
[Theorem~$5.1$ from~\cite{Narayanan23}].
Let $C_1 > 1$ be a sufficiently large absolute constant.
Fix $k \geq 2$, and let $C(k) \coloneqq 2^k C^{(0)}(k) + 1$, where $C^{(0)}(k)$ is a constant that is sufficiently large so that $\mathbb{E}\left[\left|\mathcal{N}(\mu, \sigma^2)\right|^k\right] \le C^{(0)}(k) \left( |\mu| + \sigma \right)^k$ is satisfied.
Suppose $\alpha, \eps \le 1$, and $\delta \le \widetilde{\Omega}\left(\frac{\alpha \eps}{d}\right)^{C_1}$.
Suppose that $M$ is an $(\eps, \delta)$-DP mechanism that takes as input $X \coloneqq \left(X_1, \dots, X_n\right)$ with $X_i \in \mathbb{R}^d, \forall i \in [n]$.
Let $\mathcal{D}$ be any distribution over $\mathbb{R}^d$ such that $\mathbb{E}\left[\left|\left\langle \mathcal{D} - \mathbb{E}\left[\mathcal{D}\right], v \right\rangle \right|^k \right] \le C(k), \forall v \in \mathbb{S}^{d - 1}$, and assume that $X_i \sim_{\text{i.i.d.}} \mathcal{D}, \forall i \in [n]$.
If, for any such $\mathcal{D}$, we have that $\mathbb{P}\left[\left\|M(X) - \mathbb{E}\left[\mathcal{D}\right] \right\|_2 \le \alpha \right] \geq \frac{2}{3}$, it must hold that $n \geq \widetilde{\Omega}\left( \frac{d}{\alpha^2} + \frac{d}{\alpha^{\frac{k}{k - 1}} \eps} \right)$.
\end{proposition}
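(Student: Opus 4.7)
The plan is to establish the two terms of the sample complexity separately. The first term $d/\alpha^2$ is information-theoretic and holds even without privacy, while the second term $d/(\alpha^{k/(k-1)}\eps)$ captures the cost of $(\eps,\delta)$-DP and requires a fingerprinting-style tracing argument.

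\textbf{The non-private term $d/\alpha^2$.} For this term I would ignore the privacy constraint entirely. Take $\cD = \cN(\mu, I_d)$ with $\mu$ ranging over an $\alpha$-separated packing of a bounded subset of $\bR^d$. Gaussians have bounded $k$-th central moments up to a factor depending only on $k$, so choosing $C(k)$ appropriately places the family inside our distribution class. A classical Fano (or Assouad) argument on this packing yields the lower bound $n = \Omega(d/\alpha^2)$, regardless of whether the estimator is private.

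\textbf{The private term in one dimension.} Here I would reduce $(\eps,\delta)$-DP mean estimation for bounded $k$-th moment distributions to $(\eps,\delta)$-DP Bernoulli mean estimation. Consider the two-point distribution $D_p$ supported on $\{0, L\}$ with $L = \alpha^{-1/(k-1)}$ and $D_p[L] = p$. A direct calculation gives $\E_{X\sim D_p}[|X - pL|^k] \le pL^k(1 + O(p))$, so $D_p$ lies inside the class whenever $p \le c \alpha^{k/(k-1)}$ for a suitable constant. Its mean is $pL$, so estimating the mean to $\ell_2$-accuracy $\alpha$ requires recovering $p$ to accuracy $\alpha/L = \alpha^{k/(k-1)}$. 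The standard $(\eps,\delta)$-DP lower bound for Bernoulli bias estimation (for $\delta$ sufficiently small, via e.g.\ coupling or packing arguments) then forces $n = \widetilde{\Omega}(1/(\alpha^{k/(k-1)}\eps))$.

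\textbf{Lifting to $d$ dimensions.} To boost the 1D bound by a factor of $d$, I would invoke a fingerprinting-style tracing attack in the style of Bun--Ullman--Vadhan and Kamath--Singhal--Ullman, adapted to the peaky distributions $D_p$. The hard instance is a product distribution over $\{0,L\}^d$ whose coordinate-wise biases $p_1, \dots, p_d$ are drawn from a suitable prior (e.g.\ a scaled Beta). Since any estimator achieving $\ell_2$-accuracy $\alpha$ must achieve coordinate-wise accuracy $\alpha/\sqrt d$ on many coordinates simultaneously, one can build a score statistic $\Score(X_i, M(X))$ that on average correlates with individual inputs whenever $M$ is accurate, while remaining small on fresh samples. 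Combined with the privacy constraint this yields a contradiction unless $n = \widetilde{\Omega}(d/(\alpha^{k/(k-1)}\eps))$. The required smallness $\delta \le \widetilde{\Omega}(\alpha\eps/d)^{C_1}$ emerges naturally from the parameters of the fingerprinting argument, whose error probability must beat a union bound over $d$ coordinates and $\poly(d/\alpha)$ packing points.

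\textbf{Main obstacle.} The hardest step is verifying that the fingerprinting/tracing analysis goes through with the heavy-tailed two-point instance $D_p$ rather than the Gaussian or bounded hard instances used in earlier work. The score statistic must remain well-concentrated despite the large support radius $L = \alpha^{-1/(k-1)}$, which grows as $\alpha \to 0$; simultaneously, the attacker must extract a $1/\alpha^{k/(k-1)}$-magnitude signal from a distribution whose $k$-th moment is only $O(1)$. Controlling the moment generating function of the score on both the in-sample and fresh-sample regimes, and matching these constants to the prior over biases, is where the bulk of the technical work lies; this is precisely what~\cite{Narayanan23} carries out, and we defer to that reference.
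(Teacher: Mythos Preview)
This proposition is cited from~\cite{Narayanan23} and the paper does not supply its own proof; it is invoked as a black box and then combined with the item-to-person reduction (Lemma~\ref{lem:reduction_simple}) to obtain Corollary~\ref{cor:partial_lb_approx_dp}. Your proposal correctly identifies this at the end by deferring to the reference, and the sketch you give (Fano for the non-private term, a two-point heavy-tailed instance plus a fingerprinting/tracing lift for the private term) is consistent with the approach taken in that work, so there is nothing further to compare against within this paper.
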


To understand why \Cref{prop:shyam_lb} implies a lower bound for the batch setting, we need the following lemma.
The lemma describes how, given oracle access to a mechanism for the person-level setting, we can construct a mechanism for the item-level setting.
This allows us to reduce instances of the item-level problem to the person-level.
Hence, lower bounds on the number of items in the item-level setting imply lower bounds on the number of people in the person-level setting.

\begin{lemma}
\label{lem:reduction_simple}
Let $\mathcal{D}$ be any distribution over $\mathbb{R}^d$.
Let $\eps, \delta, \alpha \geq 0$.
Assume that any $(\eps, \delta)$-DP mechanism $M \colon \bR^{n' \times d} \to \mathbb{R}^d$ which, given $X \coloneqq \left(X_1, \dots, X_{n'}\right) \sim \mathcal{D}^{\otimes n'}$, outputs $M(X)$ such that $\mathbb{P}\left[\left\|M(X) - \mathbb{E}\left[\mathcal{D}\right] \right\|_2 \le \alpha \right] \geq \frac{2}{3}$ requires at least $n' \geq n_0$ samples.
Then, for any $(\eps, \delta)$-DP mechanism $\Bar{M} \colon \left(\mathbb{R^{m \times d}}\right)^n \to \mathbb{R}^d$ which, given $Y \coloneqq \left(Y^{(1)}, \dots, Y^{(n)}\right) \sim \left(\mathcal{D}^{\otimes m}\right)^{\otimes n}$, outputs $\Bar{M}(Y)$ such that $\mathbb{P}\left[\left\|M(Y) - \mathbb{E}[\mathcal{D}]\right\|_2 \le \alpha\right] \geq \frac{2}{3}$, it must hold that $n \geq \frac{n_0}{m}$.
\end{lemma}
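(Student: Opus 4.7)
The plan is to prove this reduction by contrapositive: assume there is a person-level algorithm $\Bar{M}$ that succeeds with fewer than $n_0/m$ people, and use it to construct an item-level algorithm $M$ that succeeds with fewer than $n_0$ items, contradicting the assumed lower bound. The construction is the natural one: $M$ takes its $n' = nm$ input items $X_1, \dots, X_{nm}$ and groups them into $n$ consecutive batches of size $m$, defining $Y^{(i)} \coloneqq (X_{(i-1)m+1}, \dots, X_{im})$ for $i \in [n]$, and then outputs $\Bar{M}(Y^{(1)}, \dots, Y^{(n)})$.

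The next step is to verify the two properties required for $M$. For accuracy, since the $X_i$'s are drawn i.i.d.\ from $\cD$, the induced batches $(Y^{(1)}, \dots, Y^{(n)})$ are distributed as $(\cD^{\otimes m})^{\otimes n}$, which is precisely the input distribution under which $\Bar{M}$'s accuracy guarantee holds; therefore $\Pr[\|M(X) - \E[\cD]\|_2 \le \alpha] \geq 2/3$. For privacy, observe that if item-level neighbors $X$ and $X'$ differ in a single coordinate $X_j \neq X'_j$, then the corresponding batch sequences $(Y^{(1)}, \dots, Y^{(n)})$ and $(Y'^{(1)}, \dots, Y'^{(n)})$ differ only in the single batch containing index $j$. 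Thus they are person-level neighbors, and the $(\eps, \delta)$-DP guarantee of $\Bar{M}$ directly transfers to $M$ via post-processing (\Cref{lem:postprocessing}) applied to the deterministic batching map.

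To conclude, combining the two properties, $M$ is an $(\eps, \delta)$-DP item-level mean estimator on $n' = nm$ samples achieving error $\alpha$ with probability at least $2/3$. By hypothesis, any such mechanism requires $n' \geq n_0$, so $nm \geq n_0$, equivalently $n \geq n_0/m$, which is the desired bound.

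The argument is essentially routine; the only subtle point to make explicit is that the batching is a deterministic, data-oblivious map, so post-processing applies cleanly and the item-level neighbor relation maps precisely to the person-level neighbor relation with Hamming distance at most one. Once Theorem~\ref{thm:main_lb_approx_dp}'s first and third terms are derived by plugging Proposition~\ref{prop:shyam_lb} into this reduction (treating $n_0$ as the item-level bound $\widetilde{\Omega}(d/\alpha^2 + d/(\alpha^{k/(k-1)} \eps))$), the person-level lower bound $n \geq \widetilde{\Omega}(d/(m\alpha^2) + d/(m \alpha^{k/(k-1)} \eps))$ follows immediately.
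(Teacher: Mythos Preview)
Your proof is correct and follows essentially the same approach as the paper: batch the $nm$ items into $n$ batches, feed them to $\Bar{M}$, and inherit both accuracy and privacy. One minor terminological point: the privacy transfer is not really an instance of post-processing (\Cref{lem:postprocessing} concerns functions applied to the \emph{output} of a DP mechanism), but rather follows directly from the fact---which you correctly identify---that the batching map sends item-level neighbors to person-level neighbors; the citation to \Cref{lem:postprocessing} is unnecessary.
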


\begin{proof}
Let $\Bar{M}$ be an $(\eps, \delta)$-DP mechanism for the person-level setting that takes $n$ batches of size $m$ as input, and satisfies the desired accuracy guarantee.
Given oracle access to this mechanism, we will show how to construct a mechanism $M$ for the item-level setting which shares the same accuracy guarantees.
Assume that we have a dataset $X$ of size $n' \coloneqq n m$ that has been drawn i.i.d.\ from $\mathcal{D}$.
We partition the dataset into $n$ batches of size $m$.
This results in a dataset $Y$ of size $n$, where each datapoint is an i.i.d.\ sample from $\mathcal{D}^{\otimes m}$.
As a result, we can feed $Y$ into $\Bar{M}$, and the resulting mechanism will be $M$.
$M$ inherits the accuracy and privacy guarantees of $\Bar{M}$.
Consequently, any lower bound on $n_0$ on $n'$ also implies a lower bound on $n$, yielding the desired result.
\end{proof}

As a direct consequence of Proposition~\ref{prop:shyam_lb} and Lemma~\ref{lem:reduction_simple}, we get the following corollary which accounts for two out of three terms that appear in the lower bound of Theorem~\ref{thm:main_lb_approx_dp}.

\begin{corollary}
\label{cor:partial_lb_approx_dp}
Let $C_1 > 1$ be a sufficiently large absolute constant.
Fix $k \geq 2$, and let $C(k) \coloneqq 2^k C^{(0)}(k) + 1$, where $C^{(0)}(k)$ is a constant that is sufficiently large so that $\mathbb{E}\left[\left|\mathcal{N}(\mu, \sigma^2)\right|^k\right] \le C^{(0)}(k) \left( |\mu| + \sigma \right)^k$ is satisfied.
Suppose $\alpha, \eps \le 1$, and $\delta \le \widetilde{\Omega}\left(\frac{\alpha \eps}{d}\right)^{C_1}$.
Suppose that $M$ is an $(\eps, \delta)$-DP mechanism that takes as input $X \coloneqq \left(X^{(1)}, \dots, X^{(n)}\right)$ with $X^{(i)} \coloneqq \left(X^{(i)}_1, \dots, X^{(i)}_m\right) \in \mathbb{R}^{m \times d}, \forall i \in [n]$.
Let $\mathcal{D}$ be any distribution over $\mathbb{R}^d$ such that $\mathbb{E}\left[\left|\left\langle \mathcal{D} - \mathbb{E}\left[\mathcal{D}\right], \forall v \right\rangle \right|^k \right] \le C(k), v \in \mathbb{S}^{d - 1}$, and assume that $X^{(i)}_j \sim_{\text{i.i.d.}} \mathcal{D}, \forall i \in [n], j \in [m]$.
If, for any such $\mathcal{D}$, we have that $\mathbb{P}\left[\left\|M(X) - \mathbb{E}\left[\mathcal{D}\right] \right\|_2 \le \alpha \right] \geq \frac{2}{3}$, it must hold that $n \geq \widetilde{\Omega}\left( \frac{d}{m \alpha^2} + \frac{d}{m \alpha^{\frac{k}{k - 1}} \eps} \right)$.
\end{corollary}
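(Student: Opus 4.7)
The plan is to obtain Corollary~\ref{cor:partial_lb_approx_dp} as an immediate consequence of Proposition~\ref{prop:shyam_lb} and Lemma~\ref{lem:reduction_simple}, with essentially no additional work beyond instantiating the reduction. The item-level bound of Proposition~\ref{prop:shyam_lb} already covers exactly the class of distributions considered in the corollary (those with $k$-th moment constant $C(k)$ in every direction), the same privacy regime ($\eps \le 1$ and the same bound on $\delta$), and the same accuracy goal of outputting $\hat\mu$ with $\|\hat\mu-\mu\|_2 \le \alpha$ with probability $\geq 2/3$. So the hypothesis ``$n_0$ samples are required in the item-level setting'' of Lemma~\ref{lem:reduction_simple} can be instantiated with $n_0 = \tilde\Omega\bigl( d/\alpha^2 + d/(\alpha^{k/(k-1)}\eps) \bigr)$.

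The core step is then the reduction argument formalized in Lemma~\ref{lem:reduction_simple}: given any $(\eps,\delta)$-DP person-level mechanism $\bar M$ that succeeds with $n$ batches of size $m$, one simulates an item-level mechanism $M$ by taking $n' := nm$ i.i.d.\ samples, grouping them arbitrarily into $n$ batches of size $m$, and invoking $\bar M$ on the resulting batched dataset. Since each batch is distributed as $\cD^{\otimes m}$ and the batches are mutually independent, the resulting mechanism inherits both the privacy parameters and the accuracy guarantee of $\bar M$. Hence $M$ is a valid $(\eps,\delta)$-DP item-level estimator on $n' = nm$ samples, and Proposition~\ref{prop:shyam_lb} forces $nm \geq \tilde\Omega\bigl( d/\alpha^2 + d/(\alpha^{k/(k-1)}\eps) \bigr)$. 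Dividing through by $m$ yields
\[
    n \;\geq\; \tilde\Omega\!\left( \frac{d}{m\alpha^2} + \frac{d}{m\alpha^{k/(k-1)}\eps} \right),
\]
which is exactly the bound claimed by the corollary.

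There is really no obstacle to overcome: the only thing to double-check is that all hypotheses of Proposition~\ref{prop:shyam_lb} transfer verbatim through the reduction. Privacy parameters $(\eps,\delta)$ and the constraints $\alpha,\eps \le 1$, $\delta \le \tilde\Omega(\alpha\eps/d)^{C_1}$ are unchanged; the moment assumption on $\cD$ in Corollary~\ref{cor:partial_lb_approx_dp} is identical to the one in the proposition; and the target accuracy and success probability are identical. Thus the proof is a one-line application of Lemma~\ref{lem:reduction_simple} with $n_0$ supplied by Proposition~\ref{prop:shyam_lb}, followed by the algebraic rearrangement $n \geq n_0/m$.
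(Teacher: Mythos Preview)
Your proposal is correct and follows exactly the approach the paper takes: the corollary is stated there as a direct consequence of Proposition~\ref{prop:shyam_lb} and Lemma~\ref{lem:reduction_simple}, with no additional argument beyond the instantiation and division by $m$ that you describe.
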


It remains to argue about the term $\frac{d}{\sqrt{m} \alpha \eps}$ that appears in Theorem~\ref{thm:main_lb_approx_dp}.
To do so, we need to invoke a result that is implicit in~\cite{LevySAKKMS21}.
Theorem~$6$ in that work is concerned with proving a lower bound on the rates of Stochastic Convex Optimization with person-level privacy.
Establishing the result involves reducing from Gaussian mean estimation under person-level privacy, for which~\cite{LevySAKKMS21} proves a lower bound in Appendix E.2.
We explicitly state the result here.

\begin{proposition}
\label{prop:lb_gaussians_ulp}
Given $X \sim \left(\cN(\mu, \mathbb{I})^{\otimes m}\right)^{\otimes n}$ with $\mu \in [\pm 1]^d$, for any $\alpha = \cO\paren{\sqrt{d}}$ and any $\paren{\eps, \delta}$-DP mechanism $M \colon \left(\mathbb{R}^{m \times d}\right)^n \to [\pm 1]^d$ with $\eps, \delta \in [0, 1]$, and $\delta \le \Omega\left(\min\left\{\frac{1}{n m}, \frac{\sqrt{d}}{n m \sqrt{\log\left(\frac{n m}{\sqrt{d}}\right)}}\right\}\right)$ that satisfies $\underset{X, M}{\mathbb{E}}\left[\left\|M\paren{X} - \mu\right\|_2^2\right] \le \alpha^2$, it holds that $n = \Omega\left(\frac{d}{\sqrt{m} \alpha \eps}\right)$.
\end{proposition}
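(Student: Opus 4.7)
My approach is a reduction from item-level Gaussian mean estimation via a sufficient-statistics argument, followed by a rescaling that converts the variance $\mathbb{I}/m$ problem into a standard variance-$\mathbb{I}$ problem where known item-level lower bounds apply. The key observation is that for $\cN(\mu, \mathbb{I})^{\otimes m}$, the batch mean $\bar{X}^{(i)} \coloneqq \frac{1}{m} \sum_{j=1}^m X^{(i)}_j$ is a sufficient statistic for $\mu$, with $\bar{X}^{(i)} \sim \cN(\mu, \mathbb{I}/m)$, and the conditional distribution of $(X^{(i)}_1, \dots, X^{(i)}_m)$ given $\bar{X}^{(i)}$ is mean-free: it is simply a product of independent $\cN(0, \mathbb{I})$ random variables projected onto the affine hyperplane of vectors with fixed empirical mean $\bar{X}^{(i)}$, which can be sampled without knowledge of $\mu$.

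First I would formalize the reduction. Suppose $M$ is the hypothesized person-level $(\eps,\delta)$-DP mechanism achieving $\E\|M(X)-\mu\|_2^2 \le \alpha^2$ with $n$ people. Given item-level samples $Y_1, \dots, Y_n \sim \cN(\mu, \mathbb{I}/m)$, I construct an item-level algorithm $M'$ that, for each $i$ independently, draws a fresh batch $W^{(i)} = (W^{(i)}_1, \dots, W^{(i)}_m)$ from the mean-free conditional distribution described above, conditioned on its empirical mean equaling $Y_i$, and then outputs $M(W^{(1)}, \dots, W^{(n)})$. By construction the joint distribution of $(W^{(1)}, \dots, W^{(n)})$ is exactly $(\cN(\mu, \mathbb{I})^{\otimes m})^{\otimes n}$, so $M'$ achieves the same accuracy $\alpha$. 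Since the synthesis of each $W^{(i)}$ uses only $Y_i$ and independent internal randomness, altering a single $Y_i$ modifies only the $i$-th batch fed to $M$; thus person-level $(\eps,\delta)$-DP of $M$ transfers to item-level $(\eps,\delta)$-DP of $M'$ on the $Y_i$'s.

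Next, I would rescale by $\sqrt{m}$: setting $Y'_i \coloneqq \sqrt{m}\, Y_i$ yields $Y'_i \sim \cN(\sqrt{m}\mu, \mathbb{I})$ with $\sqrt{m}\mu \in [\pm \sqrt{m}]^d$, and estimating $\mu$ to $\ell_2$ error $\alpha$ is equivalent to estimating $\sqrt{m}\mu$ to $\ell_2$ error $\sqrt{m}\alpha$. The rescaled task is the standard problem of item-level $(\eps,\delta)$-DP mean estimation of identity-covariance Gaussians with bounded mean; applying the standard fingerprinting-code-based lower bound for this task (e.g., the version from Kamath--Li--Singhal--Ullman or the line of work descended from Bun--Ullman--Vadhan) gives that any such mechanism requires $n = \Omega(d/(\sqrt{m}\alpha\eps))$ samples, provided the target error is at most the $\ell_2$ radius of the mean range (here $\sqrt{m}\alpha \le \sqrt{m d}$, i.e., $\alpha = \cO(\sqrt{d})$, as assumed) and provided $\delta$ lies below the threshold required by the fingerprinting construction, which is exactly the upper bound on $\delta$ stated in the proposition.

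\textbf{Main obstacle.} The main subtlety is establishing rigorously that the simulation from the sufficient statistic preserves the $(\eps,\delta)$-DP guarantee and produces a sample with the exact correct joint law: one must verify that conditional on $\bar{X}^{(i)}$ the within-batch distribution is indeed independent of $\mu$ (this is immediate for Gaussians with known covariance but would fail for arbitrary families), and that the reduction handles neighboring datasets correctly under both $\eps$ and $\delta$ losses, so that the item-level lower bound for the rescaled problem kicks in at the stated $\delta$-regime. The remaining ingredients are then purely book-keeping: tracking the rescaling factor and checking that the regime $\alpha \le \cO(\sqrt{d})$ ensures the item-level bound is non-vacuous.
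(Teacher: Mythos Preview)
Your proposal is correct and follows precisely the approach the paper attributes to Levy et al.\ (the paper does not give its own proof of this proposition but cites it as implicit in \cite{LevySAKKMS21}): reduce person-level Gaussian mean estimation to item-level via the sufficient-statistic simulation (batch mean $\bar X^{(i)}\sim\cN(\mu,\mathbb{I}/m)$ plus mean-free reconstruction of the batch), then rescale by $\sqrt m$ and invoke the standard item-level fingerprinting lower bound for identity-covariance Gaussians. The regime checks you list for $\alpha$ and $\delta$ are exactly those needed for the item-level bound to apply after rescaling.
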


We note that the result in~\cite{LevySAKKMS21} comes from a reduction-based approach which works by leveraging the lower bound for Gaussian mean estimation in the item-level setting that was shown in~\cite{KamathLSU19}.
We provide a direct proof of Proposition~\ref{prop:lb_gaussians_ulp} in Appendix~\ref{sec:direct_proof_fing}, which might be of independent interest.

Having said all the above, Theorem~\ref{thm:main_lb_approx_dp} now follows directly by combining Corollary~\ref{cor:partial_lb_approx_dp} and Proposition~\ref{prop:lb_gaussians_ulp}.

\addcontentsline{toc}{section}{References}
\bibliographystyle{alpha}
\bibliography{biblio,gbiblio}
\appendix

\newpage
\section{Missing Proofs from Section~\ref{sec:hd}}\label{sec:extra-proofs}
\subsection{Missing Proofs from Section~\ref{sec:highdim-overview}}\label{sec:highdim-overview-missing}
\begin{lemma}
Let $m, j$ be integers satisfying $1 \le j \le m$. It follows that $$\binom{m}{j} \le \left(\frac{em}{j}\right)^j.$$
\label{lem:binom-upperbound}
\end{lemma}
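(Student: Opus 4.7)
The plan is to use the standard two-step bound that first controls $\binom{m}{j}$ by $\frac{m^j}{j!}$ and then controls $j!$ from below by $\left(j/e\right)^j$.

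First I would write
\[
\binom{m}{j} = \frac{m(m-1)\cdots(m-j+1)}{j!} \le \frac{m^j}{j!},
\]
which follows because each factor in the numerator is at most $m$. The only remaining task is therefore a lower bound on $j!$ of the form $j! \ge (j/e)^j$.

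For this, I would avoid invoking Stirling's formula and instead use the Taylor series identity $e^j = \sum_{k \ge 0} j^k/k!$. Since all terms are non-negative and the $k=j$ term equals $j^j/j!$, we obtain $e^j \ge j^j/j!$, hence $j! \ge (j/e)^j$. Combining the two inequalities gives
\[
\binom{m}{j} \le \frac{m^j}{j!} \le \frac{m^j}{(j/e)^j} = \left(\frac{em}{j}\right)^j,
\]
as desired.

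There is no real obstacle here; this is a textbook identity, and both the factorial inequality and the falling-factorial bound are one-line arguments. The only thing to be careful about is the edge case $j=0$, but since the lemma assumes $j \ge 1$, this is vacuous, and both inequalities used are valid for all $j \ge 1$.
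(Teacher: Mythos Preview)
Your proof is correct and is the standard textbook argument for this inequality. The paper does not actually supply a proof of this lemma; it states it without proof as a well-known fact, so your write-up fills in exactly what is needed.
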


\begin{lemma}
	Let $a > 0$ and $b \ge 0$ be constants. The function $(ax^b)^x$ defined on $x > 0$ is convex.
	\label{lem:convex}
\end{lemma}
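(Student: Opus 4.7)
The plan is to reduce the problem to showing convexity of a simpler function by taking a logarithm. Specifically, I would write
\[
f(x) = (ax^b)^x = \exp\!\bigl(x \ln(ax^b)\bigr) = \exp\!\bigl(g(x)\bigr),
\quad \text{where } g(x) \coloneqq x \ln a + b\, x \ln x.
\]
Since $f(x) = e^{g(x)}$ and the exponential function is smooth and positive, the second derivative satisfies $f''(x) = \bigl(g''(x) + (g'(x))^2\bigr)\, e^{g(x)}$, so it suffices to show that $g''(x) \geq 0$ on $x > 0$ (the $(g')^2 e^g$ term is always nonnegative, so it does not need to cancel anything).

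A direct calculation then finishes the job: differentiating once gives $g'(x) = \ln a + b + b \ln x$, and differentiating a second time gives $g''(x) = b / x$. Since $b \ge 0$ and $x > 0$ by hypothesis, we have $g''(x) \ge 0$. Plugging this into the formula for $f''(x)$ yields $f''(x) \ge 0$ on $x > 0$, which is the desired convexity.

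There is no real obstacle here: the statement is standard, and the only thing to be careful about is remembering that convexity of $g$ alone suffices to guarantee convexity of $e^g$ (because $\exp$ is both convex \emph{and} increasing, so composition preserves convexity, and in the chain rule for the second derivative the cross term $(g')^2 e^g$ is automatically nonnegative). Note also that the argument genuinely needs $a > 0$ only to make $\ln a$ well defined (it contributes to $g'$ but cancels out of $g''$), and it needs $b \ge 0$ rather than $b > 0$ because the boundary case $b = 0$ gives $f(x) = a^x$, which is convex for any $a > 0$ and is consistent with $g''(x) = 0$.
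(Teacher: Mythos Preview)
Your proof is correct and is essentially the same as the paper's: the paper also computes the second derivative directly, obtaining $(ax^b)^x\bigl((\ln(ax^b)+b)^2 + b/x\bigr)\ge 0$, which is exactly your $(g''+(g')^2)e^g$ with $g'(x)=\ln a + b + b\ln x$ and $g''(x)=b/x$.
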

\begin{proof}
Note that
\[
\frac{d^2}{dx^2} (ax^b)^x = (ax^b)^x\Paren{\Paren{\ln (ax^b) + b}^2 + \frac{b}{x}} \ge 0
\]
when $a,x > 0$, $b \ge 0$, completing the proof.
\end{proof}

For the next lemma, recall that the function $f(\ell)$ is defined as $$f(\ell) = \Theta\Paren{\frac{d^{k/2} \cdot 2^{\ell(k - 1)}\log^km}{m^{k-1}t^k}}^{2^{\ell-1}}.$$
\lemfisconvex*
\begin{proof}
    Set $x = 2^{\ell}$, $a = \frac{m^{1/2}d^{k/4}\log^{k/2} m}{m^{k/2}t^{k/2}}$, and $b = \frac{k-1}{2}$. Observe that $a, b, x > 0$. Thus,
    \begin{align*}
    	f(\ell)
    	&= \Paren{\frac{md^{k/2}2^{\ell(k-1)}\log^k m}{m^kt^k}}^{2^{\ell-1}}\\
    	&= \Paren{\frac{m^{1/2}d^{k/4}\log^{k/2} m}{m^{k/2}t^{k/2}}\cdot 2^{\ell\Paren{\frac{k-1}{2}}}}^{2^{\ell}}\\
    	&= (ax^b)^x,
    \end{align*}
    which we know by Lemma~\ref{lem:convex} is convex with respect to $x > 0$, and thus is convex with respect to $\ell$.
\end{proof}

\subsection{Missing Proofs from Section~\ref{sec:highdim-smallt}}\label{sec:highdim-smallt-missing}
\begin{restatable}{lemma}{lemmavectorbernstein}
    Let $k > 2$ and $t > 0$. Let $X_i$ be a random vector in $\mathbb{R}^d$ with $\E[X_i] = 0$ and $k$-th moment bounded by $1$. Let $r_1 = \frac{d}{t}$. Let $Y_i = X_i\cdot \mathbb{I}\Brac{\normt{X_i} < r_1}$. There exists $t_2 = \Omega(\sqrt{d})$ such that, for all positive $t \le t_2$,
    $$\normt{\E\Brac{Y_i}} \le \min\Paren{\frac{t}{6},\frac{d}{t}}.$$
    \label{lem:bounding-expectation-in-the-tails}
\end{restatable}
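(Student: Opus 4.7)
The plan is to use the mean-zero assumption to rewrite $\E[Y_i]$ as (minus) the expectation of $X_i$ restricted to the truncated tail, then to pass to a scalar tail integral and apply the norm tail bound of \Cref{lem:tail-bound-norm}. Concretely, since $\E[X_i] = 0$ we have $\E[Y_i] = -\E\!\left[X_i \,\mathbb{I}\{\normt{X_i} \geq r_1\}\right]$, so by Jensen's inequality applied to $\normt{\cdot}$,
\[
    \normt{\E[Y_i]} \;\leq\; \E\!\left[\normt{X_i}\,\mathbb{I}\{\normt{X_i} \geq r_1\}\right].
\]

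Next I would write the right-hand side as a tail integral. Let $Z \coloneqq \normt{X_i}\,\mathbb{I}\{\normt{X_i}\geq r_1\}$; then $Z$ is supported on $\{0\}\cup[r_1,\infty)$, so
\[
    \E[Z] \;=\; \int_0^{\infty} \Pr[Z \geq s]\,ds \;=\; r_1 \Pr\!\left[\normt{X_i}\geq r_1\right] \;+\; \int_{r_1}^{\infty} \Pr\!\left[\normt{X_i}\geq s\right] ds.
\]
Plugging in the bound $\Pr[\normt{X_i}\geq s]\leq d^{k/2} s^{-k}$ from \Cref{lem:tail-bound-norm} and computing the elementary integral gives
\[
    \normt{\E[Y_i]} \;\leq\; \frac{k}{k-1}\cdot\frac{d^{k/2}}{r_1^{k-1}} \;=\; \frac{k}{k-1}\cdot\frac{t^{k-1}}{d^{(k-2)/2}},
\]
after substituting $r_1 = d/t$.

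It then remains to verify that this expression is at most $\min(t/6,\ d/t)$ whenever $t\leq t_2$ for some $t_2 = \Omega(\sqrt{d})$. The inequality $\tfrac{k}{k-1}\cdot t^{k-1}/d^{(k-2)/2}\leq t/6$ is equivalent to $t^{k-2}\leq \tfrac{k-1}{6k}\,d^{(k-2)/2}$, i.e.\ $t\leq c_k\sqrt{d}$; the inequality $\tfrac{k}{k-1}\cdot t^{k-1}/d^{(k-2)/2}\leq d/t$ is equivalent to $t^k\leq \tfrac{k-1}{k}\,d^{k/2}$, i.e.\ $t\leq c_k'\sqrt{d}$. Setting $t_2 = \min(c_k,c_k')\sqrt{d} = \Omega(\sqrt{d})$ makes both inequalities hold simultaneously and completes the argument.

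I do not expect any serious obstacle here: the main computation is an elementary tail integral, and the only subtlety is book-keeping the constants in $k$ so that $t_2$ really comes out as $\Omega(\sqrt{d})$ rather than degrading to something smaller. Writing the reduction $\E[Y_i] = -\E[X_i\,\mathbb{I}\{\normt{X_i}\geq r_1\}]$ (as opposed to bounding $\E[Y_i]$ directly) is what lets us avoid controlling the full truncated first moment and instead only pay the tail contribution, which is what makes the bound work out.
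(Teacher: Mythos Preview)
Your proposal is correct and follows essentially the same route as the paper: rewrite $\E[Y_i]$ via the mean-zero assumption as (minus) the tail-truncated expectation, bound its norm by $\E[\normt{X_i}\,\mathbb{I}\{\normt{X_i}\ge r_1\}]$, split the tail integral at $r_1$, apply \Cref{lem:tail-bound-norm}, and then check the two inequalities each force $t\le c_k\sqrt{d}$. The only cosmetic difference is that the paper bounds the resulting constant by $2$ rather than keeping your sharper $k/(k-1)$.
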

\begin{proof}[Proof of Lemma~\ref{lem:bounding-expectation-in-the-tails}]
    Since $\E[X_i] = 0$, we have that
    \[
        \E[X_i] = \E\Brac{X_i\cdot \mathbb{I}\Brac{\normt{X_i} < r_1}} + \E\Brac{X_i\cdot \mathbb{I}\Brac{\normt{X_i} \ge r_1}}
    \]
    which implies that
    \begin{align*}
        \normt{\E\Brac{Y_i}} &= \normt{\E\Brac{X_i\cdot \mathbb{I}\Brac{\normt{X_i} < r_1}}}\\
        &= \normt{\E[X_i] - \E\Brac{X_i\cdot \mathbb{I}\Brac{\normt{X_i} \ge r_1}}}\\
        &= \normt{\E\Brac{X_i\cdot \mathbb{I}\Brac{\normt{X_i} \ge r_1}}}
    \end{align*}
    So now,
    \begin{align*}
        &\normt{\E\Brac{X_i\cdot \mathbb{I}\Brac{\normt{X_i} \ge r_1}}}\\
        &\le \E\Brac{\normt{X_i}\cdot \mathbb{I}\Brac{\normt{X_i} \ge r_1}}\\
        &=\int_{x=0}^{\infty} \Pr\Brac{ \normt{X_i}\cdot \mathbb{I}\Brac{\normt{X_i} \ge r_1} > x}\,\mathrm{d}x\\
        &=\int_{x=0}^{r_1} \Pr\Brac{ \normt{X_i}\cdot \mathbb{I}\Brac{\normt{X_i} \ge r_1} > x}\,\mathrm{d}x \\
        &\phantom{oooooo}+ \int_{x=r_1}^{\infty} \Pr\Brac{ \normt{X_i}\cdot \mathbb{I}\Brac{\normt{X_i} \ge r_1} > x}\,\mathrm{d}x\\
        &= \int_{x=0}^{r_1} \Pr\Brac{\normt{X_i} \ge r_1}\,\mathrm{d}x + \int_{x=r_1}^{\infty} \Pr\Brac{ \normt{X_i} > x}\,\mathrm{d}x\\
        &= \int_{x=0}^{r_1} \Paren{\frac{d^{1/2}}{r_1}}^k\,\mathrm{d}x + \int_{x=r_1}^{\infty}\Paren{\frac{d^{1/2}}{x}}^k\,\mathrm{d}x \tag{by Lemma~\ref{lem:tail-bound-norm}}\\
        &= \frac{d^{k/2}}{r_1^{k-1}} + \frac{d^{k/2}}{r_1^{k-1}(k-1)}\\
        &\le \frac{2d^{k/2}}{r_1^{k-1}} \tag{since $k \ge 2$}\\
        &= \frac{2t^{k-1}}{d^{k/2-1}}.\\
    \end{align*}
   To finish the proof, note that
    \begin{equation*}
        \frac{2t^{k-1}}{d^{k/2-1}}
        \le 2t\Paren{\frac{t}{\sqrt{d}}}^{k-2}
        \le \frac{t}{6},
    \end{equation*}
    which uses the assumption that $t \le t_2$ for some $t_2 \le 12^{\frac{-1}{k-2}}\sqrt{d}$.
    Also note that
    \begin{equation*}
        \frac{2t^{k-1}}{d^{k/2-1}}
        \le \frac{2t^{k}}{d^{k/2}}\cdot\frac{d}{t}
        \le \frac{d}{t}
    \end{equation*}
    which uses the assumption that $t \le t_2$ for some $t_2 \le 2^{-1/k}\sqrt{d}$.
\end{proof}

\begin{lemma}
    Let $k > 2$ and $t > 0$. Let $X_i$ be a random vector in $\mathbb{R}^d$ with $\E[X_i] = 0$ and $k$-th moment bounded by $1$. Let $r_1 = \frac{d}{t}$. Let $Y_i = X_i\cdot \mathbb{I}\Brac{\normt{X_i} < r_1}$. Let $Z_i = Y_i - \E\Brac{Y_i}$. There exists $t_2 = \Omega(\sqrt{d})$ such that, for all positive $t \le t_2$,
    \begin{align}
    \E[Z_i] &= 0,\label{eq:bernstein-constraint-1}\\
    \E\Brac{\snormt{Z_i}} &\le d\text{, and}\label{eq:bernstein-constraint-2}\\
    \normt{Z_i} &\le 2r_1. \label{eq:bernstein-constraint-3}
\end{align}
\label{lem:bernstein-buildup-1}
\end{lemma}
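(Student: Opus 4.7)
The plan is to verify the three constraints one at a time; none of them should pose any serious difficulty given the work already done in \Cref{lem:bounding-expectation-in-the-tails}.

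For \eqref{eq:bernstein-constraint-1}, the identity $\E[Z_i] = \E[Y_i] - \E[Y_i] = 0$ is immediate from the definition $Z_i = Y_i - \E[Y_i]$, so no additional argument is required.

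For \eqref{eq:bernstein-constraint-2}, I would first apply the standard centering identity
\[
\E\Brac{\snormt{Z_i}} = \E\Brac{\snormt{Y_i}} - \snormt{\E[Y_i]} \le \E\Brac{\snormt{Y_i}}.
\]
Since $Y_i$ is either $X_i$ (when $\normt{X_i} < r_1$) or the zero vector, we have $\snormt{Y_i} \le \snormt{X_i}$ pointwise, which gives $\E\snormt{Y_i} \le \E\snormt{X_i}$. To bound $\E\snormt{X_i}$, I would expand coordinatewise as $\E\snormt{X_i} = \sum_{j=1}^d \E[X_{i,j}^2]$ and apply the bounded $k$-th moment assumption along each standard basis vector $v = e_j$: since $k \ge 2$, Jensen's inequality yields $\E[X_{i,j}^2] \le \E[|X_{i,j}|^k]^{2/k} \le 1$, so $\E\snormt{X_i} \le d$.

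For \eqref{eq:bernstein-constraint-3}, the triangle inequality gives $\normt{Z_i} \le \normt{Y_i} + \normt{\E[Y_i]}$. By construction $\normt{Y_i} \le r_1$ almost surely, and by \Cref{lem:bounding-expectation-in-the-tails} (applied under the hypothesis $t \le t_2 = \Omega(\sqrt{d})$, which is exactly the regime for which that lemma gives $\normt{\E[Y_i]} \le d/t$) we have $\normt{\E[Y_i]} \le d/t = r_1$. Adding these yields $\normt{Z_i} \le 2 r_1$.

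The entire proof is essentially bookkeeping: the only place the threshold $t_2 = \Omega(\sqrt{d})$ enters is in invoking \Cref{lem:bounding-expectation-in-the-tails}, and the only nontrivial input is the coordinatewise bound on $\E\snormt{X_i}$, which follows from the bounded $k$-th moment assumption and Jensen. I do not anticipate any obstacle.
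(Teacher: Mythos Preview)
Your proposal is correct and follows essentially the same approach as the paper's proof: both handle \eqref{eq:bernstein-constraint-1} by definition, \eqref{eq:bernstein-constraint-2} via the centering inequality followed by the coordinatewise expansion $\E\snormt{X_i} = \sum_j \E[\iprod{X_i,e_j}^2] \le d$, and \eqref{eq:bernstein-constraint-3} via the triangle inequality together with \Cref{lem:bounding-expectation-in-the-tails}. Your explicit invocation of Jensen's inequality for the coordinate bound is in fact slightly cleaner than the paper's version, but the arguments are otherwise identical.
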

\begin{proof}
    Note that \eqref{eq:bernstein-constraint-1} is satisfied by design of the $Z_i$'s. We now show that \eqref{eq:bernstein-constraint-2} holds. Note that
    \begin{equation*}
    \E\Brac{\snormt{Z_i}}
    =
    \E\Brac{\snormt{Y_i - \E\Brac{Y_i}}}
    \le
    \E\Brac{\snormt{Y_i}}
    \le 
    \E\Brac{\snormt{X_i}},
    \end{equation*}
    where the first inequality above comes from the fact that setting $y = \E\Brac{Y_i}$ minimizes $\E\Brac{\snormt{Y_i - y}}$.
    Observe that
    \begin{equation}
    \label{eq:bound-snormt-Xi}
    \E\Brac{\normt{X_i}^2} 
    =
    \E\Brac{\sum_{j=1}^d \iprod{X_i, e_j}^2}
    =
    \sum_{j=1}^d \E\Brac{\iprod{X_i, e_j}^2}
    \le
    \sum_{j=1}^d \E\Brac{\iprod{X_i, e_j}^k} 
    \le 
    d,
    \end{equation}
    where the last inequality comes from the assumption that, 
    for every $u$ with $\normt{u} = 1$, $\E\Brac{\iprod{X_i, u}^k} \le 1$. 
    Thus, $\E\Brac{\snormt{Z_i}} \le d$ and so \eqref{eq:bernstein-constraint-2} is satisfied.
    
    Now we show that the last constraint, \eqref{eq:bernstein-constraint-3}, also holds. We start by expanding $\normt{Z_i}$ for some $i$.
    \begin{align}
    \normt{Z_i}
    &=
    \normt{X_i\cdot \mathbb{I}\Brac{\normt{X_i} < r_1} - \E\Brac{Y_i}} \notag\\
    &\le
    \normt{X_i\cdot \mathbb{I}\Brac{\normt{X_i} < r_1}} + \normt{\E\Brac{Y_i}} \notag\\
    &\le
    r_1 + \normt{\E\Brac{Y_i}} \notag\\
    &\le 2r_1.\tag{by Lemma~\ref{lem:bounding-expectation-in-the-tails}}
    \end{align}
\end{proof}
\begin{lemma}
    Let $k > 2$ and $t > 0$. Let $X_i$ be a random vector in $\mathbb{R}^d$ with $\E[X_i] = 0$ and $k$-th moment bounded by $1$. Let $r_1 = \frac{d}{t}$. Let $Y_i = X_i\cdot \mathbb{I}\Brac{\normt{X_i} < r_1}$. Let $Z_i = Y_i - \E\Brac{Y_i}$. There exists $t_2 = \Omega(\sqrt{d})$ such that, for all positive $t \le t_2$,
    $$\Pr \Brac{\Bignormt{\sum_{i= 1}^m Z_i} \ge m \Paren{\frac{t}{3} - \normt{\E\Brac{Y_i}}}} \le \exp\Paren{-\Theta\Paren{\frac{mt^2}{d}}}$$
    \label{lem:final-berstein}
\end{lemma}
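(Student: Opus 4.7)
\begin{proofsketch}
The plan is to apply the Vector Bernstein Inequality (\Cref{lem:vector-bernstein}) directly, using the three structural facts about the $Z_i$ that were already established in \Cref{lem:bernstein-buildup-1}, together with the bound on $\|\E[Y_i]\|_2$ from \Cref{lem:bounding-expectation-in-the-tails}.

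The key steps, in order, are as follows. First, instantiate \Cref{lem:vector-bernstein} with $\sigma^2 = d$ (since $\E\brac{\snormt{Z_i}} \le d$ by \eqref{eq:bernstein-constraint-2}) and $r = 2 r_1 = 2d/t$ (since $\normt{Z_i} \le 2 r_1$ by \eqref{eq:bernstein-constraint-3}). Second, set the deviation parameter to $t' \coloneqq \frac{t}{3} - \normt{\E[Y_i]}$. By \Cref{lem:bounding-expectation-in-the-tails}, and the assumption $t \le t_2 = \Omega(\sqrt{d})$, we have $\normt{\E[Y_i]} \le t/6$, so $t' \in [t/6,\, t/3]$. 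Third, verify the applicability hypothesis $t' < \sigma^2/r$ of \Cref{lem:vector-bernstein}: here $\sigma^2/r = d/(2d/t) = t/2$, and indeed $t' \le t/3 < t/2$. Fourth, plug in and simplify: the exponent becomes
\[
-\frac{m (t')^2}{8\sigma^2} + \frac14 \;\le\; -\frac{m (t/6)^2}{8 d} + \frac14 \;=\; -\Theta\!\Paren{\frac{m t^2}{d}},
\]
which yields the claimed tail bound.

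There is essentially no obstacle: the lemma is a bookkeeping step that combines already-proven sensitivity/variance bounds with a standard vector concentration inequality. The one small care point is ensuring the hypothesis $t' < \sigma^2/r$ of \Cref{lem:vector-bernstein} is genuinely satisfied for the relevant range of $t$, which as shown above is automatic once $\normt{\E[Y_i]} \le t/6$ holds; this in turn is guaranteed by the assumption $t \le t_2 = \Omega(\sqrt d)$ coming from \Cref{lem:bounding-expectation-in-the-tails}.
\end{proofsketch}
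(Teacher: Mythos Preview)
Your proposal is correct and follows essentially the same approach as the paper: both apply \Cref{lem:vector-bernstein} with $\sigma^2 = d$, $r = 2r_1$, and $t' = t/3 - \normt{\E[Y_i]}$, then use \Cref{lem:bounding-expectation-in-the-tails} to bound $\normt{\E[Y_i]} \le t/6$ and conclude. Your explicit verification that $t' \le t/3 < t/2 = \sigma^2/r$ is in fact slightly more detailed than the paper's, which simply asserts the condition holds for $0 < t < \sqrt{d}/12$.
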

\begin{proof}
    We apply the Vectorized Bernstein's Inequality (i.e. Lemma~\ref{lem:vector-bernstein}) by setting $t' = t / 3 - \normt{\E\Brac{Y_i}}$, $r = 2r_1$, and $\sigma^2 = d$. One can check using Lemma~\ref{lem:bounding-expectation-in-the-tails} that $0 < t' < \sigma^2/r$ is satisfied for all $0 < t < \sqrt{d}/12$. Thus,
    \begin{align*}
    \Pr \Brac{\Bignormt{\sum_{i= 1}^m Z_i} \ge m \paren{t / 3 - \normt{\E\Brac{Y_i}}}}
    &= \Pr \Brac{\Bignormt{\sum_{i= 1}^m Z_i} \ge mt'}\\
    &\le \exp\Paren{- m \frac{(t')^2}{8 \sigma^2} + \frac{1}{4}} \tag{by Lemma~\ref{lem:vector-bernstein}}\\
    &= \exp\Paren{- m \frac{(t / 3 - \normt{\E\Brac{Y_i}})^2}{8d} + \frac{1}{4}}\\
    &\le \exp\Paren{- m \frac{(t / 3 - t/6)^2}{8d} + \frac{1}{4}} \tag{by Lemma~\ref{lem:bounding-expectation-in-the-tails}}\\
    &\le \exp\Paren{-\Theta\Paren{\frac{mt^2}{d}}}.
    \end{align*}
\end{proof}

For the next lemma, recall that the function $f(\ell)$ is defined as $$f(\ell) = \Theta\Paren{\frac{d^{k/2} \cdot 2^{\ell(k - 1)}\log^km}{m^{k-1}t^k}}^{2^{\ell-1}}.$$
\begin{restatable}{lemma}{lemellislarge}
Let $m, d, t > 0$ and $k > 2$. Let $r_2 = mt/(3\log m)$ and let $r_1 = d/t$. There exists $t_1 = O\Paren{\sqrt{\frac{d\log m}{m}}}$ and $t_2 = \Omega\Paren{\sqrt{d}\log^{\frac{-1}{k-2}}m}$ such that, for all $t \in [t_1,t_2]$,
    $$f(\log(r_2/r_1)) = \tilde{O}\Paren{\frac{d^{k/2}}{m^{k-1}t^k}}.$$
    \label{lem:ell-is-large}
\end{restatable}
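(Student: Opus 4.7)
The plan is to substitute $\ell = \log_2(r_2/r_1)$ into the definition of $f$ and reduce the statement to a one-parameter computation in a suitably normalized variable. Using $2^{\ell} = r_2/r_1 = mt^2/(3d\log m)$, algebraic simplification gives
\[
f(\log(r_2/r_1)) = \Paren{\Theta\Paren{\frac{t^{k-2}\log m}{d^{k/2-1}}}}^{mt^2/(6d\log m)},
\]
so the statement reduces to showing that this quantity is $\tilde{O}(d^{k/2}/(m^{k-1}t^k))$ uniformly in $t \in [t_1, t_2]$. I pick $t_2 \coloneqq c_2 \sqrt{d}/\log^{1/(k-2)} m$ for a small enough constant $c_2 = c_2(k) < 1$ so that the base above is a constant strictly below $1$ on the whole interval; this is immediate from the explicit form of the base.

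Next I reparameterize by $\tau \coloneqq mt^2/(d\log m)$, which maps $t_1 \coloneqq \sqrt{6d\log m/m}$ to $\tau = 6$ and $t_2$ to $\tau_{\max} \asymp m/\log^{k/(k-2)} m$, and I take $\log_2$ of the target ratio. Collecting terms by the kind of logarithm produces
\[
\log_2 \mathrm{(ratio)} = \frac{(k-2)(6-\tau)}{12}\log_2 m + \frac{\tau(k-2)+6k}{12}\log_2\tau + \frac{k(\tau+6)}{12}\log_2\log m + \frac{\tau}{6}\log_2 c,
\]
where $c = c(k)$ is the constant absorbed by the $\Theta$ in the base. I will analyze this in two regimes. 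For $\tau \in [6,12]$ the first term is non-positive and all remaining terms are of size $O(k\log\log m)$, so the ratio is $(\log m)^{O(k)} = \polylog(m)$, which is absorbed by $\tilde O$. For $\tau \in [12, \tau_{\max}]$ I will use the algebraic identity that the $\log_2 m$ contributions of the first two terms combine to $\tau(k-2)/12 \cdot \log_2(\tau/m)$; since $\tau/m \le 1/\log^{k/(k-2)}m$ in this range, this combined piece equals $-\tau k/12 \cdot \log_2\log m + O(\tau)$, which exactly cancels the positive Term~3 up to lower-order pieces, and the remaining $(\tau/6)\log_2 c$ contribution (negative for $c<1$) drives the whole expression below zero, so the ratio is at most $1$.

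The main obstacle is the narrow transition near $\tau = 6$. Here the base is only a mild constant less than $1$ and the exponent is only a small constant, so neither factor alone beats the target: evaluating at the critical value $\tau = 6$ gives $f(\log(r_2/r_1)) = \Theta(\log^{k/2} m/m^{(k-2)/2})$ while the right-hand side equals $\Theta(1/(m^{(k-2)/2}\log^{k/2} m))$, so the ratio is exactly $\Theta(\log^k m)$, which is $\tilde O(1)$ but essentially tight; this is what forces the hidden constant inside $t_1 = O(\sqrt{d\log m/m})$ to be at least $\sqrt 6$. Keeping careful track of the $k$-dependent constants hidden in the $\Theta$ of the base, and verifying the sign of the combined $\tau\log_2(\tau/m)$ contribution against the positive $\log_2\log m$ pieces in the large-$\tau$ regime, is the remaining routine-but-delicate bookkeeping.
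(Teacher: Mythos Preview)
Your substitution and the displayed formula for $\log_2(\text{ratio})$ are correct, and the case $\tau\in[6,12]$ is fine. The gap is in your $\tau\in[12,\tau_{\max}]$ argument. The ``algebraic identity'' you invoke is not exact: expanding Terms~1 and~2 gives
\[
\text{Term 1}+\text{Term 2}
=\frac{\tau(k-2)}{12}\log_2\frac{\tau}{m}
+\underbrace{\frac{k-2}{2}\log_2 m+\frac{k}{2}\log_2\tau}_{\text{positive leftover, }\Theta(\log m)},
\]
and those two leftover terms are never cancelled in your sketch. Your bound $\log_2(\tau/m)\le-\tfrac{k}{k-2}\log_2\log m+O(1)$ is tight only at $\tau=\tau_{\max}$; at, say, $\tau=12$ it throws away the dominant $-\Theta(\log m)$, and then Term~4 is only $2\log_2 c=O(1)$, far too small to kill the $\Theta(\log m)$ leftover. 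The statement is still true there (the genuine Term~1 at $\tau=12$ is already $-(k-2)/2\,\log_2 m$ and dominates everything else), but your decomposition hides rather than exposes this, and as written the argument does not close for $\tau$ in an interval like $[12,C_k\log m]$. A finer case split on $\tau$ would fix it, but that is exactly the ``routine-but-delicate bookkeeping'' you defer, and it is where the work actually lies.

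The paper avoids the entire case analysis with a short two-step argument. First, since $t_2=\Theta(\sqrt{d}/\log^{1/(k-2)}m)$, the target $d^{k/2}/(m^{k-1}t^k)$ is minimized on $[t_1,t_2]$ at $t_2$ with value $\tilde O(1/m^{k-1})$; so it suffices to prove the \emph{$t$-independent} bound $g(t)\coloneqq f(\log(r_2/r_1))\le\tilde O(1/m^{k-1})$. Second, $g(t)$ has the form $(ax^b)^x$ with $x=t^2$, hence is convex by the same one-line computation used for $f(\ell)$; therefore $\max_{[t_1,t_2]}g\in\{g(t_1),g(t_2)\}$, and a direct evaluation at those two endpoints (taking the hidden constant in $t_1=\sqrt{cd\log m/m}$ large enough that $(k-2)c/12\ge k-1$) finishes the proof in a few lines. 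The convexity step is what lets the paper skip the intermediate-$\tau$ regime that trips up your decomposition.
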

\begin{proof}
    Define $g(t) = f\Paren{\log \Paren{\frac{mt^2}{3d\log m}}}$. Note that $$\min_{t \in [t_1,t_2]} \tilde{O}\Paren{\frac{d^{k/2}}{m^{k-1}t^k}} = \tilde{O}\Paren{\frac{1}{m^{k-1}}}.$$ Thus, it suffices to show that
    \begin{equation}
        \max_{t \in [t_1,t_2]} g(t) \le \tilde{O}\Paren{\frac{1}{m^{k-1}}}.
    \end{equation}
    We break the proof into two parts. First, we show that 
    \begin{equation}
        \max_{t \in [t_1,t_2]} g(t) \in \{g(t_1),g(t_2)\}.
        \label{eq:g-is-max}
    \end{equation}
    Second, we show that
    \begin{equation}
        g(t_1),g(t_2) \le \tilde{O}\Paren{\frac{1}{m^{k-1}}}.
        \label{eq:g-is-bounded}
    \end{equation}
    To show \eqref{eq:g-is-max}, it suffices to show that $g(t)$ is convex with respect to $t$ for $t > 0$. Note that
    \begin{align*}
    	g(t)
    	&= \Theta\Paren{\frac{md^{k/2}\Paren{\frac{mt^2}{3d\log m}}^{k-1}\log^km}{m^kt^k}}^{\frac{mt^2}{6d\log m}}\\
    	&= \Theta\Paren{\frac{d^{-k/2 + 1}t^{k-2}\log m}{3^{k-1}}}^{\frac{mt^2}{6d\log m}}\\
    	&= \Theta\Paren{\log m\Paren{\frac{t^2}{d}}^{\frac{k-2}{2}}}^{\frac{mt^2}{6d\log m}}.
    \end{align*}
    Note that
    \[
        \Theta\Paren{\log m\Paren{\frac{t^2}{d}}^{\frac{k-2}{2}}}^{\frac{mt^2}{6d\log m}} = \Paren{\Paren{\frac{\log m}{d^{\frac{k-2}{2}}}}^{\frac{m}{6d\log m}}\Paren{t^2}^{\frac{k-2}{2}\cdot \frac{m}{6d\log m}}}^{t^2} = (ax^b)^x,
    \]
    for $x = t^2$, $a = \Paren{\frac{\log m}{d^{\frac{k-2}{2}}}}^{\frac{m}{6d\log m}}$ and $b = \frac{k-2}{2}\cdot \frac{m}{6d\log m}$. Since $a,b,x > 0$, then by Lemma~\ref{lem:convex}, $(ax^b)^x$ is convex with respect to $x$, and so $g$ is also convex for $t > 0$. Thus, \eqref{eq:g-is-max} holds.

    We move on to showing \eqref{eq:g-is-bounded}. Let $t_1 = \sqrt{\frac{cd\log m}{m}}$ for some $c \ge \frac{12(k-1)}{k-2}$. When $t = t_1$,
    \begin{align*}
        g(t_1) 
        &= \Theta\Paren{\log m\Paren{\frac{t_1^2}{d}}^{\frac{k-2}{2}}}^{\frac{mt_1^2}{6d\log m}}\\
        &= \Theta\Paren{\log m\Paren{\frac{\log m}{m}}^{\frac{k-2}{2}}}^{\frac{c}{6}}\\
        &= \tilde{O}\Paren{\frac{1}{m^{\Paren{\frac{k-2}{2}\cdot\frac{c}{6}}}}}\\
        &= \tilde{O}\Paren{\frac{1}{m^{k-1}}}.
    \end{align*}
    Let $t_2 = \frac{\sqrt{d}\log^{\frac{-1}{k-2}}m}{2}$. When $t = t_2$,
    \begin{align*}
        g(t_2) &= \Theta\Paren{\log m\Paren{\frac{t_2^2}{d}}^{\frac{k-2}{2}}}^{\frac{mt_2^2}{6d\log m}}\\
        &= O\Paren{\frac{1}{2}}^{O(m)}\\
        &= \tilde{O}\Paren{\frac{1}{m^{k-1}}}.
    \end{align*}
\end{proof}

\subsection{Missing Proofs from Section~\ref{sec:bias}}\label{sec:bias-missing}
\begin{lemma}
    Let $\ell > 0$. Let $\normt{u - \mu} < \rho$.  It follows that $$\normt{X-\mu} \ge \normt{X-\clip{X}} + \rho - \normt{\mu - u}.$$
    \label{lem:bias-rearranging-terms}
\end{lemma}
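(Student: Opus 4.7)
The plan is to exploit the explicit formula for the clipping operation from Definition~4.14 to get an \emph{exact} expression for $\normt{X - \clip{X}}$ in terms of $\normt{X - u}$, and then apply the reverse triangle inequality. The hypothesis $\ell > 0$ enters only through the implicit assumption (as used in the applications in Lemma~4.15 and Theorem~4.16) that $\normt{X - \clip{X}} \ge \ell > 0$, i.e.\ that $X$ was actually clipped and so lies outside the ball of radius $\rho$ around $u$.

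First I would dispose of the trivial case: when $\normt{X - u} \le \rho$, the definition of $\clip{\cdot}$ gives $\clip{X} = X$, so $\normt{X - \clip{X}} = 0$. Under the implicit assumption that the clipping distance is positive, this case does not occur, so we may assume $\normt{X - u} > \rho$. In this regime, the definition gives $\clip{X} = u + \rho \cdot \frac{X - u}{\normt{X-u}}$, and a direct computation shows
\[
    X - \clip{X} \;=\; (X - u) \cdot \Paren{1 - \frac{\rho}{\normt{X - u}}},
\]
where the scalar factor is nonnegative because $\normt{X - u} > \rho$. Taking norms yields the clean identity
\[
    \normt{X - \clip{X}} \;=\; \normt{X - u} - \rho,
\]
or equivalently $\normt{X - u} = \normt{X - \clip{X}} + \rho$.

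Second, the reverse triangle inequality gives
\[
    \normt{X - \mu} \;\ge\; \normt{X - u} - \normt{u - \mu} \;=\; \normt{X - \clip{X}} + \rho - \normt{\mu - u},
\]
which is the desired bound. The hypothesis $\normt{u - \mu} < \rho$ is not actually needed for this step; it is however required downstream to ensure the right-hand side is strictly positive (so that the tail bound for $\normt{X - \mu}$ can be meaningfully applied in the proofs of Lemma~4.15 and Theorem~4.16).

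There is no serious obstacle here: the lemma is essentially a one-line geometric fact following from the explicit form of the radial clip, plus triangle inequality. The only subtlety worth flagging is the $\ell > 0$ hypothesis, which I interpret as meaning $\normt{X - \clip{X}} \ge \ell > 0$ (consistent with how the lemma is invoked in Section~\ref{sec:bias}); this rules out the degenerate case where $X$ lies inside the clipping ball, for which the identity $\normt{X - u} = \normt{X - \clip{X}} + \rho$ would fail.
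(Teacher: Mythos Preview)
Your proof is correct and uses essentially the same two ingredients as the paper: the identity $\normt{X-\clip{X}} = \normt{X-u} - \rho$ (valid when $X$ lies outside the clipping ball), followed by a triangle-inequality comparison of $\normt{X-\mu}$ with $\normt{X-u}$. The paper's version is slightly more roundabout: it introduces an auxiliary point $\mu^{*} = \argmin_{\mu' \in \mathbb{B}^d(u,\normt{u-\mu})}\normt{X-\mu'}$ and argues geometrically via collinearity of $u,\mu^{*},\clip{X},X$, whereas you invoke the reverse triangle inequality directly. Your route is cleaner and avoids the detour through $\mu^{*}$, but the underlying argument is the same.
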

\begin{proof}
    Let $r = \normt{u - \mu}$, and let $\mathbb{B}^d(u,r)$ be the $\ell_2$ ball in $d$ dimensions of radius $r$ that is centered at $u$. Note that, by assumption, $r < \rho$, and so it follows that $\mu \in \mathbb{B}^d(u,r) \subseteq \mathbb{B}^d(u,\rho)$. Let $$\mu^* = \argmin_{\mu' \in \mathbb{B}^d(u,r)} \normt{X-\mu'}.$$ 
    One can check that $\mu^*$ is a point on the surface of $\mathbb{B}^d(u,r)$, and also that $u$, $\mu^*$, $\clip{X}$, and $X$ are collinear and arranged sequentially on the line in that order. We use these facts to relate the quantities $\normt{X-\mu}$ and $\normt{X-\clip{X}}$:
    \begin{align}
        \normt{X-\mu} 
        &\ge \normt{X-\mu^*}\notag\\
        &= \normt{X-u} - \normt{\mu^* - u}\notag\\
        &= \normt{X-\clip{X}} + \normt{\clip{X}-u} - \normt{\mu^* - u}\notag\\
        &\ge \normt{X-\clip{X}} + \rho - \normt{\mu - u} \label{eq:foobar}
    \end{align}
    Note that $\normt{X-\clip{X}} \ge \ell$ thus implies $\normt{X-\mu} \ge \ell + \rho - \normt{\mu - u}$, completing the proof.
\end{proof}
\subsection{Missing Proofs from Section~\ref{sec:full-algo}}\label{sec:full-algo-missing}
To prove \Cref{lem:noise-and-sampling}, we use the following lemma about Gaussian random variables:
\begin{lemma}[Laurent and Massart~\cite{LaurentM00}]\label{lem:gaussians}
    Let $d \in \N$. Consider the random vector $W \sim \cN(0,\sigma^2\mathbb{I}_d)$. Then, for every $s \ge 0$,
    \[
        \Pr\Brac{\normt{W}^2 \ge \sigma^2\Paren{d + 2d\sqrt{s}+2ds}} \le \exp{(-sd)}.
    \]
\end{lemma}
We also use \Cref{lem:sampling-error} which gives an analysis of the error incurred due to the sampling process.
\begin{restatable}{lemma}{lemsamplingerror}\label{lem:sampling-error}
    Suppose $\cD$ over $\R^d$ satisfies $\sigma_k(\cD) \le 1$. For all $i \in [n]$ and $j \in [m]$, let $X^{(i)}_j \stackrel{\text{iid}}{\sim} \cD$, and for some $\rho \ge 0$, $u \in \R^d$ let $$Z_i = \clip{\frac{1}{m}\sum_{j=1}^mX^{(i)}_j}.$$ Let $\alpha, \beta > 0$. There exists $n_0 = O\Paren{\frac{d}{m\beta\alpha^2}}$ such that for all $n \ge n_0$ and for all $\rho$ and $u$,
    \begin{equation}
        \Pr\Brac{\Normt{\frac{1}{n}\sum_{i=1}^n Z_i - \E\Brac{\frac{1}{n}\sum_{i=1}^n Z_i}} \ge \alpha} \le \beta.
    \end{equation}
\end{restatable}
\begin{proof}
    Let $Y^{(i)} = \frac{1}{m}\sum_{j=1}^mX^{(i)}_j \in \R^d$, and let $Y^{(i)}(\ell) \in \R$ be the $\ell$th coordinate of $Y^{(i)}$. We begin by expanding the following expression of expectation:
    \begin{align*}
        &\E\Brac{\Normt{\frac{1}{n}\sum_{i=1}^n Z_i - \E\Brac{\frac{1}{n}\sum_{i=1}^n Z_i}}^2}\\
        &= \frac{1}{n^2}\E\Brac{\Normt{\sum_{i=1}^n(Z_i-\E\Brac{Z_i})}^2}\\
        &\le \frac{1}{n^2}\E\Brac{\sum_{i=1}^n\Normt{Z_i-\E\Brac{Z_i}}^2}\\
        &= \frac{1}{n}\E\Brac{\Normt{Z_1-\E\Brac{Z_1}}^2} \\
        &\le \frac{1}{n}\E\Brac{\Normt{\frac{1}{m}\sum_{j=1}^mX_j^{(1)}-\E\Brac{\frac{1}{m}\sum_{j=1}^mX_j^{(1)}}}^2}\\
        &= \frac{1}{n}\E\Brac{\sum_{\ell=1}^d\Paren{Y^{(1)}(\ell)-\E\Brac{Y^{(1)}(\ell)}}^2}\\
        &= \frac{1}{n}\sum_{\ell=1}^d\E\Brac{\Paren{Y^{(1)}(\ell)-\E\Brac{Y^{(1)}(\ell)}}^2}\\
        &\le \frac{d}{nm}.
    \end{align*}
    Where the last line follows from the assumption that $\sigma_k(\cD) \le 1$. By Chebyshev's Inequality,
    \begin{equation*}
        \Pr\Brac{\Normt{\frac{1}{n}\sum_{i=1}^n Z_i - \E\Brac{\frac{1}{n}\sum_{i=1}^n Z_i}} \ge \alpha} = O\Paren{\frac{d}{mn\alpha^2}} \le \beta.
    \end{equation*}
\end{proof}
\lemnoiseandsampling*
\begin{proof}
    Recall from \textrm{ClipAndNoise} that $W_t \sim \cN(0,\sigma^2\mathbb{I}_d)$ with $\sigma = \tilde{O}\Paren{\rho_t/(n\eps')}$. Using \Cref{lem:gaussians}, let $s = \frac{\log(\log\log d / 0.05)}{d}$. With probability at least $1-0.05/(\log\log d)$,
    \[
        \textrm{noise}(t) = \normt{W_t}
        =\tilde{O}\Paren{\frac{\sqrt{d}\rho_t}{n\eps'}}.
    \]
    Thus, there exists $n_0 = \tilde{O}\Paren{\frac{d
    \sqrt{\log{(1/\delta)}}}{\eps}}$, such that, for $n \ge n_0$,
    it follows that with probability at least $1-0.05/(\log\log d)$, 
    $$\normt{W} \le \frac{\rho_t}{4\sqrt{d}}.$$
    In order to bound $\textrm{sampling}(t)$, we apply \Cref{lem:sampling-error} directly, setting $\alpha = \frac{\rho_t}{2\sqrt{d}}$ and $\beta = 0.05/\log\log d$. In particular, if $$n \ge \frac{d\log\log d}{0.05} \ge \frac{d\log\log d}{0.05m}\cdot \frac{4d}{\rho_t^2} \ge \frac{d}{m\beta\alpha},$$
    then $\textrm{sampling}(t) > \frac{\rho_t}{2\sqrt{d}}$ with probability at most $\frac{0.05}{\log\log d}$.
\end{proof}

\subsection{Coarse Estimation}\label{subsec:coarse-estimation}
In this subsection, we provide an algorithm to obtain a coarse estimate of the mean, up to accuracy $O\paren{\sqrt{d / m}}$ in high dimensions under approximate differential privacy, using $1$-dimensional estimates (\Cref{thm:user-level-coarse-esitmation}).

\thmhighdcoarseestimate*
\begin{proof}
Run $d$ instances of \Cref{alg:user-level-coarse-estimate} with $r'= r/\sqrt{d}, \eps' = \eps / \sqrt{d \log\paren{1/\delta}}, \delta' = \delta / d$, on each coordinate of the dataset $X$, and output the coordinate-wise estimate of the mean. By \Cref{thm:user-level-coarse-esitmation} (one dimensional coarse estimation), and \Cref{lem:advanced-composition} (advanced composition), we know that this output will be person-level $\paren{O\paren{\eps}, O\paren{\delta}}$-DP. Moreover, the output will have accuracy $r$, with probability $1-\beta$, as long as $n \ge n_1$, for some 
\begin{equation*}
n_1 = O \Paren{\frac{\log\paren{d/\beta}}{\log\paren{r \sqrt{m / d}}}
+
\frac{\sqrt{d}\sqrt{\log\paren{1/\delta}}\log\paren{d^2/\delta \beta}}{\eps}
} \mper
\end{equation*}
Alternatively, using basic composition (\Cref{lem:privacy-composition}), and applying \Cref{alg:user-level-coarse-estimate} with $r ' = r / \sqrt{d}, \eps' = \eps / d, \delta' = \delta / d$, on each coordinate of $X$, and outputting the coordinate-wise estimate of the mean, we obtain a person-level $\paren{\eps, \delta}$-DP algorithm that has accuracy $r$, with success probability $1 - \beta$, as long as $n \ge n_2$, for some
\begin{equation*}
n_2 = O \Paren{\frac{\log\paren{d/\beta}}{\log\paren{r \sqrt{m / d}}}
+
\frac{d\log\paren{d^2/\delta \beta}}{\eps}
} \mper
\end{equation*}
\end{proof}

\section{Lower Bounds Under Pure Differential Privacy}
\label{sec:pure-lbs}
Suppose $\cD$ is a distribution over $\R^d$ with $k$-th moment bounded from above by $1$. Assume $n$ people each take $m$ samples from the distribution $\cD$. In this section we show lower bounds against the number of people required to estimate the mean of the distribution $\cD$ up to accuracy $\alpha$ with success probability $1-\beta$, under person-level $\eps$-differential privacy. 

We can describe the problem in an equivalent language: take $n$ samples from the tensorized distribution $\cD^{\tensor m}$. How many samples are required in order to estimate the mean of $\cD$ up to accuracy $\alpha$ with success probability $1-\beta$, under item-level $\eps$ differential privacy?

Technically, we apply packing lower bounds to distributions of the form $\cD^{\tensor m}$. Specifically, we repeatedly apply the following theorem, that shows sample complexity lower bounds against learning a set of distributions that are all $\gamma$ far from a central distribution and have disjoint parameters that can be learned with probability at least $1-\beta$. Moreover, we use some facts about Kullback-Leibler divergence in order to analyze the tensorized distribution.

\begin{theorem}[Theorem~7.1 in \cite{HopkinsKM22}]
\label{thm:lb-meta}
Let $\mathcal{P}=\left\{P_1, \ldots, P_m\right\}$ be a set of distributions, and $P_O$ be a distribution such that for every $P_i \in \mathcal{P},\left\|P_i-P_O\right\|_{\mathrm{TV}} \leq \gamma$. Let $\mathcal{G}=\left\{G_1, \ldots, G_m\right\}$ be a collection of disjoint subsets of some set $\mathcal{Y}$. If there is an $\varepsilon$-DP algorithm $M$ such that $\mathbb{P}_{X \sim P_i^n}\left[M(X) \in G_i\right] \geq 1- \beta$ for all $i \in[m]$, then
$$
n \geq \Omega\left(\frac{\log m+\log (1 / \beta)}{\gamma\left(e^{2 \varepsilon}-1\right)}\right) .
$$
Note that for the usual regime $\varepsilon \leq 1$, we can replace the $e^{2 \varepsilon}-1$ in the denominator with $\varepsilon$.
\end{theorem}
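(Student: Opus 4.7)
\noindent The plan is a standard packing / coupling / group-privacy argument in the style of Hardt--Talwar and Bun--Ullman--Vadhan. Two facts fall out of the hypotheses immediately: (i) by disjointness of the $G_i$'s, $\sum_i \Pr_{X \sim P_O^n}[M(X) \in G_i] \le 1$, so some index $i^*$ has $\Pr_{X \sim P_O^n}[M(X) \in G_{i^*}] \le 1/m$; and (ii) for any $j \neq i^*$, the correctness guarantee on $P_j^n$ combined with disjointness forces $\Pr_{X \sim P_j^n}[M(X) \in G_{i^*}] \le \beta$, since otherwise $M$ would fail to hit $G_j$ on $P_j^n$.

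\noindent Next, for any distributions $Q_1, Q_2$ over $\cX$ with $\Norm{Q_1 - Q_2}_{\mathrm{TV}} \le \kappa$, I construct the coordinate-wise maximal coupling $(X, X')$ of $Q_1^n$ and $Q_2^n$. Under this coupling, $d_H(X, X')$ is a sum of $n$ i.i.d.\ $\mathrm{Bernoulli}(\kappa')$ variables with $\kappa' \le \kappa$, so $\E\Brac{e^{2\eps d_H(X, X')}} \le (1 + \kappa(e^{2\eps}-1))^n \le \exp\!\Paren{n\kappa(e^{2\eps}-1)}$. For an $\eps$-DP mechanism $M$, group privacy gives $\phi(X) \le e^{\eps d_H(X, X')}\, \phi(X')$ pointwise in the coupling, where $\phi(Y) := \Pr_M[M(Y) \in G_{i^*}]$. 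Integrating this inequality against the coupling and applying Cauchy--Schwarz (using $\phi \le 1$, hence $\phi^2 \le \phi$) yields the master inequality
\[
    \Pr_{X \sim Q_1^n}[M(X) \in G_{i^*}] \;\le\; \sqrt{\E\Brac{e^{2\eps d_H(X, X')}}} \cdot \sqrt{\Pr_{X' \sim Q_2^n}[M(X') \in G_{i^*}]}.
\]

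\noindent Applying this master inequality twice produces the result. Taking $Q_1 = P_{i^*}$, $Q_2 = P_O$, and $\kappa = \gamma$, the correctness guarantee together with (i) give $1 - \beta \le \exp\!\Paren{n\gamma(e^{2\eps}-1)/2}\, m^{-1/2}$, yielding $n \gtrsim \log(m) / [\gamma(e^{2\eps}-1)]$. Taking $Q_1 = P_{i^*}$, $Q_2 = P_j$ for any $j \neq i^*$, and $\kappa = 2\gamma$ (via the TV triangle inequality through $P_O$), correctness together with (ii) give $1 - \beta \le \exp\!\Paren{n\gamma(e^{2\eps}-1)}\, \sqrt{\beta}$, yielding $n \gtrsim \log(1/\beta) / [\gamma(e^{2\eps}-1)]$. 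Taking the maximum of these two lower bounds delivers the claimed sample complexity.

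\noindent The hard part is the Cauchy--Schwarz step: the coupling correlates $e^{\eps d_H}$ with $\phi(X')$, so one cannot simply factor the expectation. Cauchy--Schwarz resolves this at the price of upgrading the MGF exponent from $e^{\eps}-1$ to $e^{2\eps}-1$, which is exactly the rate appearing in the theorem statement. A secondary technicality is verifying that the coordinate-wise maximal coupling truly produces $n$ \emph{independent} $\mathrm{Bernoulli}$ indicators of disagreement; this follows because the samples in $Q_1^n$ and $Q_2^n$ are themselves i.i.d., so couplings can be performed independently across coordinates without affecting the marginals.
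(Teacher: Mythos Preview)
The paper does not prove this theorem; it is quoted verbatim as Theorem~7.1 of \cite{HopkinsKM22} and used as a black box for the packing lower bounds in the appendix. So there is no ``paper's own proof'' to compare against.

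That said, your argument is correct and is exactly the standard coupling/group-privacy proof one expects here. The two key moves---pigeonholing on $P_O^n$ to find an $i^*$ with $\Pr[M(X)\in G_{i^*}]\le 1/m$, and bounding $\Pr_{P_j^n}[M(X)\in G_{i^*}]\le\beta$ via disjointness---are both sound. The coupling step is clean: the coordinate-wise maximal coupling does give i.i.d.\ $\mathrm{Bernoulli}(\kappa')$ disagreement indicators, the group-privacy inequality $\phi(X)\le e^{\eps d_H(X,X')}\phi(X')$ holds pointwise, and Cauchy--Schwarz together with $\phi^2\le\phi$ correctly decouples the MGF from $\phi(X')$, producing the $e^{2\eps}-1$ rate in the denominator. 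Both applications of the master inequality then go through as you describe.

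One tiny caveat: the second application (with $Q_2=P_j$, $j\neq i^*$) presupposes $m\ge 2$. For $m=1$ the $\log m$ term vanishes and the claimed bound reduces to $n\gtrsim \log(1/\beta)/(\gamma(e^{2\eps}-1))$, which your argument does not directly cover in that degenerate case; but this is harmless since the theorem is only interesting for $m\ge 2$, and one can always adjoin a trivial second distribution if needed. You should also implicitly assume $\beta$ is bounded away from $1$ (say $\beta\le 1/2$) so that $\log(1-\beta)$ does not swamp the $\log m$ term in the first application; this is standard and usually left unstated.
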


We instantiate \Cref{thm:lb-meta} three times in order to prove each term of the lower bound. The first term is the non-private cost, the second term corresponds to the cost of estimating the mean of a Gaussian, the third term corresponds to the cost of estimating the mean of point distributions with bounded $k$-th moment, and the last term corresponds to the cost of estimating up to a coarse estimate.

\begin{theorem}[Pure DP Lower Bound for Bounded $k$-th Moments]
\label{thm:main-lb}
Suppose $k \ge 2$, and $\cD$ is a distribution with $k$-th moment bounded by $1$ and mean $\mu$. Moreover, suppose $\mu$ is in a ball of radius $R$, where $\alpha \le \min\set{\frac{6}{25} R, \paren{1/25}^{(k-1)/k}}$, and assume $\eps \le 1$. Any $\eps$-DP algorithm that takes $n$ samples from $\cD^{\tensor m}$ and outputs $\hat{\mu}$ such that $\normt{\mu - \hat{\mu}} \le \alpha$ with probability $1-\beta$ requires
\begin{equation*}
n = \Omega_k\Paren{
\frac{d + \log\paren{1/\beta}}{\alpha^2 m}
+
\frac{d + \log \paren{1/\beta}}{\alpha\sqrt{m} \eps}
+\frac{d+ \log\paren{1/\beta}}{\alpha^{\frac{k}{k-1}} m \eps}
+
\frac{d\log \paren{R / \alpha} + \log\paren{1/\beta}}{\eps}
}.
\end{equation*}
many samples, where $\Omega_k$ hides multiplicative factors that only depend on $k$.
\end{theorem}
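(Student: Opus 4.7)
\begin{proofsketch}

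The overall plan is to prove each of the four terms in the lower bound separately, via a packing argument and a direct information-theoretic argument for the non-private term. The three private terms all follow the same template, using \Cref{thm:lb-meta} applied to an appropriate family $\{P_i\}$ of bounded-$k$th-moment distributions, tensorized $m$ times. For each family the key quantities are (i) the size $M$ of the packing and (ii) a uniform bound $\gamma$ on $\normtv{P_i^{\otimes m} - P_O^{\otimes m}}$ to a reference distribution $P_O$. By Pinsker plus sub-additivity of KL for product measures, $\gamma \le \sqrt{\tfrac{m}{2}\,\mathrm{KL}(P_i \Vert P_O)}$, which will be the main tool for controlling $\gamma$.

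First, for the non-private term $\tfrac{d + \log(1/\beta)}{\alpha^2 m}$, I will use a standard Fano/Assouad argument on $\{\cN(\mu,\mathbf{I}) : \mu \in \cP\}$ where $\cP$ is a maximal $\alpha$-packing inside the $\ell_2$-ball of radius $\Theta(\alpha)$ of size $\exp(\Omega(d))$. Since $\cN(\mu,\mathbf{I})$ has $k$-th moment bounded by some $C_k$, rescaling ensures the family lies within the bounded-$k$th-moment class; the resulting $nm$-sample lower bound gives $n \geq \Omega\bigl((d+\log(1/\beta))/(\alpha^2 m)\bigr)$.

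Second, for the \emph{Gaussian} term $\tfrac{d+\log(1/\beta)}{\alpha\sqrt{m}\eps}$, I will instantiate \Cref{thm:lb-meta} with $P_i = \cN(\mu_i,\mathbf{I})$, $\mu_i \in \cP$ as above ($\log M = \Omega(d)$), and $P_O = \cN(0,\mathbf{I})$. Then $\mathrm{KL}(P_i \Vert P_O) = \tfrac{1}{2}\normt{\mu_i}^2 = O(\alpha^2)$, so $\gamma = \normtv{P_i^{\otimes m} - P_O^{\otimes m}} = O(\alpha\sqrt{m})$, and \Cref{thm:lb-meta} yields the desired bound.

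Third, for the \emph{heavy-tail} term $\tfrac{d+\log(1/\beta)}{\alpha^{k/(k-1)} m \eps}$, I will construct distributions $P_v$ parameterized by $v$ in an $\alpha$-packing of the sphere of radius $\alpha$ (again with $\log M = \Omega(d)$), where $P_v$ places mass $p$ at $v/p$ and mass $1-p$ at $0$. The mean of $P_v$ is $v$, and a short calculation shows that $\sigma_k(P_v)^k = p\cdot\normt{v/p - v}^k \le p^{-(k-1)}\alpha^k$, so taking $p = \alpha^{k/(k-1)}$ keeps us in the bounded-$k$th-moment class up to an absolute constant. Setting $P_O = \delta_0$, we get $\normtv{P_v - P_O} \le p$, hence $\gamma = \normtv{P_v^{\otimes m} - P_O^{\otimes m}} \le mp = m\alpha^{k/(k-1)}$, and \Cref{thm:lb-meta} gives the bound. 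The main technical subtlety is verifying the $k$-th moment condition \emph{uniformly in direction} once we translate by $v$; this is the step I expect to require the most care.

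Fourth, for the \emph{range} term $\tfrac{d\log(R/\alpha) + \log(1/\beta)}{\eps}$, take a maximal $\alpha$-packing $\{\mu_1,\dots,\mu_M\}$ of the ball of radius $R$ (so $\log M = \Omega(d\log(R/\alpha))$), and let $P_i = \delta_{\mu_i}$. These have trivially bounded moments, but $\normtv{P_i^{\otimes m} - P_j^{\otimes m}} \le 1$, so \Cref{thm:lb-meta} with $\gamma = 1$ gives the bound directly (here we do not need a central distribution argument; any pairwise argument suffices).

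Finally, taking the maximum of the four lower bounds yields the theorem. The main obstacle is the third term: verifying that $\sigma_k(P_v) \le O(1)$ holds uniformly in $v$ (including when projected onto directions nearly orthogonal to $v$), and handling the fact that under $P_v^{\otimes m}$ the number of ``heavy'' samples is random, so the TV computation must be done via a union bound or a direct likelihood-ratio calculation rather than a naive coupling.

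\end{proofsketch}
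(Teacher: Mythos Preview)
Your proposal is correct and follows essentially the same approach as the paper: the paper also proves each of the four terms separately, using \Cref{thm:lb-meta} applied respectively to Gaussians (via the KL/Pinsker route you describe), to the two-point-mass construction (the paper's version puts mass $\lambda=25\alpha^{k/(k-1)}$ at $\tfrac{1}{6\alpha^{1/(k-1)}}v_i$ and mass $1-\lambda$ at $0$, which is your construction up to constants), and to point masses for the range term with $\gamma=1$. The two ``obstacles'' you flag are not actually difficulties: the directional $k$-th moment of a point-mass-at-a-single-vector distribution is maximized along that vector (orthogonal directions give zero), and the TV bound for the tensorized product against $\delta_0^{\otimes m}$ is simply $1-(1-p)^m\le mp$ by a one-line calculation, exactly as you state.
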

\begin{proof}
The first term is the cost of estimating the mean of a distribution with $nm$ samples without privacy.
For the other terms putting together \Cref{cor:normalized-lb-gaussian}, \Cref{lem:lb-bounded-moments}, and \Cref{lem:lb-range} finishes the proof. 
\end{proof}

\subsection{Preliminaries}

First, we state some facts about Kullback–Leibler divergence and total variation distance. In our analysis we intend to analyze the total variation distance of the tensorized distribution $\cD^{\tensor m}$.  However, analyzing the total variation distance of the tensorized distribution directly, is difficult, as the total variation distance does not behave well under tensorization. Thereofore, instead we use the Kullback-Leibler divergence to analyze the total variation distance of the tensorized distribution through Pinkser's inequality.

We want to analyze the tensorized distribution $\cD^{\tensor m}$. The following fact states that the Kullback-Leibler divergence tensorizes well.
\begin{fact}[Tensorization of KL Divergence]
\label{fact:kl-tensorizes}
Assume $\set{P_i}_{i=1}^n$, and $\set{Q_i}_{i=1}^n$ are distributions over $\cX$ and $P = \tensor_{i=1}^n P_i$ and $Q = \tensor_{i=1}^n Q_i$. Then 
\begin{equation*}
\mathrm{KL}
(P \Vert Q)=\sum_{i=1}^n \mathrm{KL}
\left(P_i \| Q_i\right) \mper
\end{equation*}
\end{fact}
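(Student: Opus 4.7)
The plan is to prove the tensorization identity by directly unpacking the definition of KL divergence and exploiting the factorization of the joint density (or probability mass function) into a product of marginals.

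First I would write, for concreteness, the KL divergence as $\mathrm{KL}(P \| Q) = \E_{X \sim P}\bigl[\log \tfrac{P(X)}{Q(X)}\bigr]$, treating $P(X)$ and $Q(X)$ as the Radon–Nikodym derivative (or densities / pmfs) at the point $X = (X_1, \dots, X_n)$; implicit in the statement is that $P_i \ll Q_i$ for each $i$, so $P \ll Q$ as well. Since $P = \bigotimes_{i=1}^n P_i$ and $Q = \bigotimes_{i=1}^n Q_i$, the joint density factorizes, giving $P(X)/Q(X) = \prod_{i=1}^n P_i(X_i)/Q_i(X_i)$, and so $\log \tfrac{P(X)}{Q(X)} = \sum_{i=1}^n \log \tfrac{P_i(X_i)}{Q_i(X_i)}$.

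Next I would take the expectation of both sides under $X \sim P$ and use linearity of expectation to exchange the sum and the expectation. The key observation is that under $P$, the marginal distribution of the $i$-th coordinate $X_i$ is exactly $P_i$ (by definition of the product measure), so $\E_{X \sim P}\bigl[\log \tfrac{P_i(X_i)}{Q_i(X_i)}\bigr] = \E_{X_i \sim P_i}\bigl[\log \tfrac{P_i(X_i)}{Q_i(X_i)}\bigr] = \mathrm{KL}(P_i \| Q_i)$. Summing over $i$ yields the claimed identity.

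There is no real obstacle here; the only care needed is to handle the absolute-continuity condition (so that all logarithms are well-defined, with the standard convention $0 \log 0 = 0$), which follows immediately from the assumption that the sum on the right is finite (if any $P_i$ fails to be absolutely continuous with respect to $Q_i$, both sides are defined to be $+\infty$ and the equality holds trivially). The proof thus amounts to two lines of display math once the definition is recalled.
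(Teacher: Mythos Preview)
Your proof is correct and is the standard argument for this identity. The paper itself does not supply a proof: it states the result as a background \emph{Fact} and uses it as a black box (together with Pinsker's inequality) to control the total variation distance of tensorized distributions in the lower-bound constructions. There is nothing to compare against, and nothing to fix in your write-up.
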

We apply Pinsker's inequality to relate Kullback-Leibler divergence and total variation distance.
\begin{fact}[Pinsker's Inequality]
\label{fact:pinsker}
Let $P$ and $Q$ be two distributions over $\cX$. Then
\begin{equation*}
\mathrm{d}_{\mathrm{TV}}(P, Q) \leq \sqrt{\frac{\mathrm{KL}(P \Vert Q)}{2}}
\mper
\end{equation*}
\end{fact}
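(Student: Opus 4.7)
The plan is to prove this classical inequality by reducing to the two-point (Bernoulli) case via the data-processing inequality for KL divergence, and then verifying the two-point inequality by a direct calculus argument.

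First, I would invoke the variational characterization $d_{\mathrm{TV}}(P,Q) = \sup_{A \subseteq \cX} |P(A) - Q(A)|$. Fix an arbitrary measurable event $A$ and write $p = P(A)$, $q = Q(A)$. The goal reduces to showing $(p-q)^2 \le \tfrac{1}{2}\mathrm{KL}(P\|Q)$, since taking a supremum over $A$ (or a sequence of events approaching the supremum) then yields the claimed bound on $d_{\mathrm{TV}}$.

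Second, I would apply the data-processing inequality for KL divergence to the indicator map $x \mapsto \mathds{1}\{x \in A\}$, which pushes $P$ and $Q$ forward to $\mathrm{Ber}(p)$ and $\mathrm{Ber}(q)$ respectively. This yields $\mathrm{KL}(\mathrm{Ber}(p)\|\mathrm{Ber}(q)) \le \mathrm{KL}(P\|Q)$, so it is enough to establish the scalar inequality $2(p-q)^2 \le \mathrm{KL}(\mathrm{Ber}(p)\|\mathrm{Ber}(q))$ for all $p,q \in (0,1)$. The data-processing step can be proved either directly from the log-sum inequality or as a short consequence of Jensen's inequality.

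Third, I would prove the Bernoulli bound by fixing $q$ and studying the function
\[
    \phi(p) \;=\; p \log\frac{p}{q} + (1-p)\log\frac{1-p}{1-q} - 2(p-q)^2
\]
on $p \in (0,1)$. A direct computation gives $\phi(q) = 0$ and $\phi'(q) = 0$, while
\[
    \phi''(p) \;=\; \frac{1}{p} + \frac{1}{1-p} - 4 \;=\; \frac{1}{p(1-p)} - 4 \;\ge\; 0,
\]
since $p(1-p) \le 1/4$ on $(0,1)$. Hence $\phi$ is nonnegative on $(0,1)$, establishing the Bernoulli case and, combined with the reduction above, the full statement. The only mildly delicate step is the data-processing reduction; once that is in hand the rest is a one-variable convexity computation.
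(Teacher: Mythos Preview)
Your proof is correct and follows the standard route to Pinsker's inequality: reduce to the two-point case via data processing, then handle the Bernoulli case by a one-variable convexity computation. The paper, however, does not prove this statement at all; it is recorded as a \texttt{fact} and simply cited as a known tool, so there is no paper argument to compare against. Your write-up supplies exactly the classical proof one would expect, and there is nothing to add.
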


Next, for two arbitrary Gaussians their closed form KL divergence is known, which we use in our analysis.

\begin{fact}[KL Divergence of Gaussians \cite{Rmusssen05}]
\label{fact:kl-gaussian}

For two arbitrary Gaussians 
$\mathcal{N}\left(\mu_1, \Sigma_1\right), \mathcal{N}\left(\mu_2, \Sigma_2\right)$
we have
\begin{equation*}
\mathrm{KL}\left(\mathcal{N}\left(\mu_1, \Sigma_1\right) \| \mathcal{N}\left(\mu_2, \Sigma_2\right)\right)=\frac{1}{2}\left(\operatorname{tr}\left(\Sigma_1^{-1} \Sigma_2-I\right)+\left(\mu_1-\mu_2\right)^{\top} \Sigma_1^{-1}\left(\mu_1-\mu_2\right)-\log \operatorname{det}\left(\Sigma_2 \Sigma_1^{-1}\right)\right) \mper
\end{equation*}
Specifically for identity covariance,
\begin{equation*}
\mathrm{KL}\left(\mathcal{N}\left(\mu_1, I\right) \| \mathcal{N}\left(\mu_2, I\right)\right)= \frac{\snormt{\mu_1-\mu_2}}{2} \mper
\end{equation*}
\end{fact}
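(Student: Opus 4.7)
The plan is to compute the KL divergence directly from its integral definition, $\mathrm{KL}(P\|Q) = \E_{X\sim P}[\log(p(X)/q(X))]$, for the two Gaussian densities. First I would write out $\log p(x) - \log q(x)$ explicitly using the closed-form density $p(x) = (2\pi)^{-d/2}\det(\Sigma_1)^{-1/2}\exp(-\tfrac{1}{2}(x-\mu_1)^\top\Sigma_1^{-1}(x-\mu_1))$ (and the analogous expression for $q$). This splits into (i) a constant log-determinant term $\tfrac{1}{2}\log\det(\Sigma_2\Sigma_1^{-1})$, and (ii) the difference of two quadratic forms $-\tfrac{1}{2}(x-\mu_1)^\top\Sigma_1^{-1}(x-\mu_1) + \tfrac{1}{2}(x-\mu_2)^\top\Sigma_2^{-1}(x-\mu_2)$.

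Next I would take expectations under $X\sim\mathcal{N}(\mu_1,\Sigma_1)$. The central linear-algebraic lemma is that for any symmetric matrix $A$,
\[
    \E_{X\sim\mathcal{N}(\mu_1,\Sigma_1)}\bigl[(X-\mu_1)^\top A(X-\mu_1)\bigr] = \Tr(A\Sigma_1),
\]
which follows from linearity of expectation and $\E[(X-\mu_1)(X-\mu_1)^\top] = \Sigma_1$. Applying this with $A=\Sigma_1^{-1}$ gives $d$ for the first quadratic, so the $-\tfrac{1}{2}$ factor contributes $-d/2$. For the second quadratic, I would decompose $X-\mu_2 = (X-\mu_1) + (\mu_1-\mu_2)$, expand the resulting three terms, note that the cross term vanishes in expectation since $\E[X-\mu_1]=0$, and apply the lemma with $A=\Sigma_2^{-1}$ to obtain $\Tr(\Sigma_2^{-1}\Sigma_1) + (\mu_1-\mu_2)^\top\Sigma_2^{-1}(\mu_1-\mu_2)$.

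Collecting terms, this yields the standard formula
\[
    \mathrm{KL}(P\|Q) = \frac{1}{2}\Bigl(\Tr(\Sigma_2^{-1}\Sigma_1) - d + (\mu_1-\mu_2)^\top\Sigma_2^{-1}(\mu_1-\mu_2) + \log\det(\Sigma_2\Sigma_1^{-1})\Bigr),
\]
which matches the stated expression (up to what appears to be a notational interchange of $\Sigma_1^{-1}$ and $\Sigma_2^{-1}$ in the paper's write-up, since the quantity $\Tr(\Sigma_1^{-1}\Sigma_2 - I)$ plays the same role as $\Tr(\Sigma_2^{-1}\Sigma_1) - d$ only when $\Sigma_1 = \Sigma_2$). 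Finally, for the identity-covariance specialization, substituting $\Sigma_1=\Sigma_2=I$ kills the trace term ($\Tr(I)-d=0$) and the log-determinant ($\log\det I = 0$), leaving $\tfrac{1}{2}\snormt{\mu_1-\mu_2}$ as claimed.

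The result is a textbook calculation with no genuine obstacle; the only care needed is bookkeeping the symmetric matrix identities correctly and justifying the vanishing of the cross term via independence of the coordinate rotations of a centered Gaussian. Since this fact is cited from \cite{Rmusssen05}, the proposal is really to reproduce that standard derivation and then specialize to the isotropic case used in the lower-bound arguments of \Cref{thm:main-lb}.
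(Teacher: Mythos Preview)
Your derivation is correct and is exactly the standard textbook computation. The paper does not prove this statement at all---it is stated as a cited fact from \cite{Rmusssen05} with no accompanying argument---so there is nothing to compare against on the level of proof technique.

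You are also right to flag the index swap in the general formula. The expression printed in the paper is actually the closed form for $\mathrm{KL}\left(\mathcal{N}(\mu_2,\Sigma_2)\,\|\,\mathcal{N}(\mu_1,\Sigma_1)\right)$, not for $\mathrm{KL}\left(\mathcal{N}(\mu_1,\Sigma_1)\,\|\,\mathcal{N}(\mu_2,\Sigma_2)\right)$ as labeled; the trace, quadratic form, and log-determinant terms all carry the ``wrong'' $\Sigma$ index relative to the stated direction. Your formula is the correct one. However, the only place the paper invokes this fact is in the identity-covariance specialization (inside the proof of \Cref{lem:lb-gaussian}), where the two directions coincide and the stated conclusion $\tfrac{1}{2}\snormt{\mu_1-\mu_2}$ is correct. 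So the typo in the general case is harmless for the paper's purposes.
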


Next we state a basic fact about packing number $M(\Theta,\|\cdot\|, \epsilon)$ and covering number $N(\Theta,\|\cdot\|, \epsilon)$. This fact shows and upper and lower bound for the packing and covering number of a unit norm ball.

\begin{fact}[Metric Entropy of Norm Balls \cite{Wu16}]
\label{fact:covering-number}
Let $B_2(1)$ be the unit $\ell_2$ ball.
Consider $N\left(B_2(1),\normt{\cdot}, \epsilon\right).$ When $\epsilon \ge 1, N\left(B_2(1),\normt{\cdot}, \epsilon\right)= 1$. When $\epsilon<1$, 
\begin{equation*}
\left(\frac{1}{\epsilon}\right)^d
\leq 
N\left(B_2(1),\normt{\cdot}, \epsilon\right)
\leq 
M\left(B_2(1),\normt{\cdot}, \epsilon\right)
\le
\left(1+\frac{2}{\epsilon}\right)^d \leq\left(\frac{3}{\epsilon}\right)^d
\mper
\end{equation*}
Hence $d \log \frac{1}{\epsilon} \leq 
\log N\left(B_2(1),\normt{\cdot},\epsilon\right) 
\leq
\log M\left(B_2(1),\normt{\cdot}, \epsilon\right)
\le
d \log \frac{3}{\epsilon}$.
\end{fact}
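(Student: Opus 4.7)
The plan is a standard volume comparison argument, treating the three inequalities separately after disposing of the trivial regime $\epsilon \geq 1$, in which case the singleton $\{0\}$ is an $\epsilon$-cover of $B_2(1)$ and hence $N(B_2(1),\|\cdot\|,\epsilon)=1$. From now on assume $\epsilon<1$.

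First, the middle inequality $N\bigl(B_2(1),\|\cdot\|,\epsilon\bigr)\le M\bigl(B_2(1),\|\cdot\|,\epsilon\bigr)$ is the general fact that any maximal $\epsilon$-packing is automatically an $\epsilon$-cover. Indeed, if $\{x_1,\dots,x_M\}\subset B_2(1)$ is a maximal set of points pairwise at distance $>\epsilon$, then for every $y\in B_2(1)$ there must exist some $x_i$ with $\|y-x_i\|\le\epsilon$; otherwise $\{x_1,\dots,x_M,y\}$ would be a strictly larger packing, contradicting maximality. Hence the packing doubles as a cover.

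Next I would prove the upper bound $M(B_2(1),\|\cdot\|,\epsilon)\le(1+2/\epsilon)^d$ by a packing-volume argument. Let $\{x_1,\dots,x_M\}$ be an $\epsilon$-packing of $B_2(1)$. The open balls $B(x_i,\epsilon/2)$ are pairwise disjoint and all contained in $B_2(1+\epsilon/2)$. Comparing $d$-dimensional Lebesgue volumes and using the scaling $\mathrm{vol}(B_2(r))=r^d\,\mathrm{vol}(B_2(1))$ gives
\[
M\cdot(\epsilon/2)^d\,\mathrm{vol}(B_2(1))\;\le\;(1+\epsilon/2)^d\,\mathrm{vol}(B_2(1)),
\]
so $M\le(1+2/\epsilon)^d\le(3/\epsilon)^d$, where the final inequality uses $\epsilon<1$.

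Finally, for the lower bound $N(B_2(1),\|\cdot\|,\epsilon)\ge(1/\epsilon)^d$, I would use a covering-volume argument. If $\{y_1,\dots,y_N\}$ is an $\epsilon$-cover, then $B_2(1)\subseteq\bigcup_i B(y_i,\epsilon)$, so
\[
\mathrm{vol}(B_2(1))\;\le\;\sum_{i=1}^N \mathrm{vol}(B(y_i,\epsilon))\;=\;N\,\epsilon^d\,\mathrm{vol}(B_2(1)),
\]
which rearranges to $N\ge(1/\epsilon)^d$. Taking logarithms yields the stated entropy bounds. No step here is substantive beyond these two volume comparisons; the only minor care needed is the slight enlargement of the containing ball in the packing argument, which is what produces the $1+2/\epsilon$ factor rather than a clean $2/\epsilon$.
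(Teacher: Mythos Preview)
Your proof is correct and is precisely the standard volume-comparison argument. The paper does not actually prove this statement; it is stated as a \emph{Fact} with a citation to \cite{Wu16} and used without further justification, so there is no ``paper's proof'' to compare against beyond noting that your argument is the textbook one.
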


\subsection{Proof of Theorem~\ref{thm:main-lb}}
In this section we prove the lower bounds for each term in \cref{thm:main-lb}. First we prove the second term which corresponds to learning a Gaussian.
\begin{lemma}[Pure DP Gaussian Mean Estimation Lower Bound]
\label{lem:lb-gaussian}
Suppose $\mu$ is in a ball of radius $R$, where $\alpha \le R/4$, and suppose $\eps \le 1$.
Any  $\eps$-DP algorithm that takes $n$ samples from $\cN\paren{\mu , I_d}^{\tensor m}$ and outputs $\hat{\mu}$ such that $\normt{\mu - \hat{\mu}} \le \alpha$ with probability $1-\beta$ requires 
\begin{equation*}
n = \Omega\Paren{\frac{d + \log \paren{1/\beta}}{\alpha \sqrt{m} \eps}} \mcom
\end{equation*}
many samples.
\end{lemma}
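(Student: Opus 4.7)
\begin{proofsketch}
The plan is to apply the packing lower bound of \Cref{thm:lb-meta} to the family of tensorized Gaussian distributions $\cN(\mu, I_d)^{\tensor m}$, which is precisely the distribution of one person's $m$ i.i.d.\ samples under the Gaussian model. Since $\eps$-DP is preserved under post-processing and the item-level packing argument treats each ``sample'' as one draw from the outer distribution, an $n$-sample item-level lower bound against $\cN(\mu, I_d)^{\tensor m}$ is exactly an $n$-person person-level lower bound.

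First, I would pick a radius $r = 4\alpha$, which satisfies $r \le R$ by the hypothesis $\alpha \le R/4$, and invoke \Cref{fact:covering-number} to obtain an $2\alpha$-packing $\{\mu_1, \dots, \mu_M\} \subset B(0, r)$ with $M \ge (r/(2\alpha))^d = 2^d$, so $\log M \ge d \log 2 = \Omega(d)$. I would then let $P_i = \cN(\mu_i, I_d)^{\tensor m}$, let $P_O = \cN(0, I_d)^{\tensor m}$, and set $G_i = B(\mu_i, \alpha)$; disjointness of the $G_i$'s follows from the $2\alpha$-separation of the packing, and the assumption that the algorithm outputs $\hat\mu$ with $\|\hat\mu - \mu\|_2 \le \alpha$ with probability at least $1 - \beta$ translates exactly into $\Pr_{X \sim P_i^n}[M(X) \in G_i] \ge 1 - \beta$.

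Next, I would bound the TV radius of the packing around $P_O$. Combining \Cref{fact:kl-tensorizes}, \Cref{fact:kl-gaussian}, and Pinsker's inequality (\Cref{fact:pinsker}):
\[
    \mathrm{d}_{\mathrm{TV}}(P_i, P_O) \le \sqrt{\frac{\mathrm{KL}(P_i \| P_O)}{2}} = \sqrt{\frac{m \cdot \snormt{\mu_i}/2}{2}} = \frac{\sqrt{m} \cdot \normt{\mu_i}}{2} \le \frac{\sqrt{m} \cdot 4\alpha}{2} = 2\alpha \sqrt{m},
\]
so we may take $\gamma = 2\alpha \sqrt{m}$ in \Cref{thm:lb-meta}. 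Applying that theorem (using $\eps \le 1$ so that $e^{2\eps} - 1 = \Theta(\eps)$) yields
\[
    n \ge \Omega\Paren{\frac{\log M + \log(1/\beta)}{\gamma \eps}} = \Omega\Paren{\frac{d + \log(1/\beta)}{\alpha \sqrt{m} \eps}},
\]
which is the desired bound. The main content of the argument is therefore the KL tensorization step plus Pinsker; there is no real obstacle here, but the key quantitative point to keep straight is that KL of Gaussians scales with $m$ (giving a $\sqrt{m}$ rather than $m$ after Pinsker), which is exactly what produces the $1/\sqrt{m}$ improvement characteristic of person-level privacy for light-tailed data.
\end{proofsketch}
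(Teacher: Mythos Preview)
Your proof is correct and follows essentially the same approach as the paper's own proof: both construct a $2\alpha$-packing inside a ball of radius $4\alpha$ (the paper parameterizes this as $\lambda v_i$ with $\lambda = 4\alpha$), bound the TV radius via KL tensorization plus Pinsker to get $\gamma = 2\alpha\sqrt{m}$, and then invoke \Cref{thm:lb-meta}. The only cosmetic difference is that the paper introduces $\lambda$ as a free parameter and sets it at the end, whereas you fix $r = 4\alpha$ from the start.
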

\begin{proof}
We want to use \Cref{thm:lb-meta}. 
Let $P_O = \mathcal{N} \paren{0, I_d}^{\tensor m}$,  $P_i = \cN \paren{\lambda v_i, I_d}^{\tensor m}$, for $v_i$ in the unit ball, such that $\normt{v_i - v_j} \ge 2\alpha / \lambda$, and for $\lambda$ to be determined later. We know that a set $I$ such that $\log\abs{I} \ge d \log\paren{\lambda / 2 \alpha}$ of such $v_i's$ exists from \Cref{fact:covering-number}.
Let $G_i = \lambda v_i + \alpha B_d$, for such a set $I$ and $i \neq j$, we will have that $G_i$'s are disjoint.
Now, from \Cref{fact:kl-gaussian}, and $\normt{v_i}\le 1$ we know that $\mathrm{KL}\paren{\cN\paren{0, I_d} \| \cN\paren{\lambda v_i, I_d}} \le \lambda^2 / 2$. Tensorization of KL (\Cref{fact:kl-tensorizes}) and Pinsker's inequality (\Cref{fact:pinsker}) imply that
\begin{equation*}
\normtv{P_O - P_i} \le \sqrt{\frac{\mathrm{KL}\paren{P_i \| P_O}}{2}} \le \sqrt{\frac{m \cdot \mathrm{KL}\paren{\cN\paren{\lambda v_i, I_d}, \cN\paren{0, I_d}}}{2}} \le \sqrt{m} \lambda /2
\mper
\end{equation*}
Taking $\lambda = 4 \alpha$ we obtain 
\begin{equation*}
n = \Omega\Paren{\frac{d + \log \paren{1/\beta}}{\alpha \sqrt{m} \eps}} \mper
\end{equation*}
\end{proof}
However, note that the the $k$-th moment of $\cN\paren{\mu, I_d}$, is not bounded by $1$, but instead it is bounded from above by $(k-1)!!$, where the $!!$ denotes the double factorial that is the product of all numbers from $1$ to $k-1$ with the same parity as $k-1$. We state the corollary below to extend the lemma above to our setting where the $k$-th moment is bounded by $1$.
\begin{corollary}[Pure DP Normalized Gaussian Mean Estimation Lower Bound]
\label{cor:normalized-lb-gaussian}
Let $\sigma_k^k$ be the $k$-th moment of $\cN\paren{0, 1}$.
Suppose $\mu$ is in a ball of radius $R$, where $\alpha \le R/4$, and suppose $\eps \le 1$. Any $\eps$-DP algorithm that takes $n$ samples from $\cN\paren{\mu, I_d /  \sigma_k^2}^{\tensor m}$, and outputs $\hat{\mu}$ such that $\normt{\mu - \hat{\mu}} \le \alpha$ with probability $1-\beta$ requires
\begin{equation*}
n = \Omega_k\Paren{\frac{d + \log \paren{1/\beta}}{\alpha \sqrt{m} \eps}} \mcom
\end{equation*}
where $\Omega_k$ hides multiplcative factors that only depend on $k$.
Moreover, $\cN\paren{\mu, I_d / \sigma_k^2}^{\tensor m}$ has $k$-th moment bounded by $1$.
\end{corollary}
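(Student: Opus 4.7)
\begin{proofsketch}
The plan is to reduce to \Cref{lem:lb-gaussian} by a simple rescaling of the input. First, I would verify the moment bound: for $X \sim \cN(\mu, I_d/\sigma_k^2)$ and any unit vector $v \in \mathbb{S}^{d-1}$, the projection $\langle X - \mu, v\rangle$ is distributed as $\cN(0, 1/\sigma_k^2)$, so its $k$-th absolute moment equals $\E[|Y|^k]/\sigma_k^k = \sigma_k^k/\sigma_k^k = 1$ (where $Y \sim \cN(0,1)$), confirming the normalization claim.

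For the sample-complexity lower bound, suppose for contradiction that $A$ is an $\eps$-DP algorithm that estimates the mean $\mu \in B_2(R)$ of $\cN(\mu, I_d/\sigma_k^2)^{\tensor m}$ to $\ell_2$-accuracy $\alpha$ with probability $1-\beta$ using $n$ samples. I would construct an $\eps$-DP algorithm $A'$ for the unnormalized Gaussian problem as follows: given a dataset of $n$ ``people,'' each holding $m$ samples from $\cN(\mu', I_d)^{\tensor m}$ with $\|\mu'\|_2 \le \sigma_k R$, scale each sample by $1/\sigma_k$, feed the resulting batches (which are distributed as $\cN(\mu'/\sigma_k, I_d/\sigma_k^2)^{\tensor m}$) into $A$ to obtain $\hat{\nu}$, and output $\sigma_k \hat{\nu}$. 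The pre-processing is a deterministic per-person transformation that preserves the neighboring relation, and the post-processing multiplies by $\sigma_k$, so by \Cref{lem:postprocessing} the composite mechanism $A'$ is $\eps$-DP. Accuracy $\alpha$ for $A$ on the rescaled input translates to $\ell_2$-accuracy $\sigma_k \alpha$ for $A'$ on the original input, and the hypothesis $\alpha \le R/4$ becomes $\sigma_k \alpha \le (\sigma_k R)/4$, matching the hypothesis of \Cref{lem:lb-gaussian} for radius $\sigma_k R$.

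Applying \Cref{lem:lb-gaussian} with target accuracy $\sigma_k \alpha$ and range $\sigma_k R$ yields
\[
    n = \Omega\Paren{\frac{d + \log(1/\beta)}{(\sigma_k \alpha)\sqrt{m}\,\eps}} = \Omega_k\Paren{\frac{d + \log(1/\beta)}{\alpha\sqrt{m}\,\eps}},
\]
since $\sigma_k$ depends only on $k$ and is absorbed into the $\Omega_k$ notation, which is the claimed bound.

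The proof has no real obstacle: the only mildly delicate point is ensuring that the coordinate rescaling $X \mapsto X/\sigma_k$ applied sample-by-sample preserves person-level adjacency (changing one person's $m$ samples before scaling still changes exactly one person's $m$ samples after scaling), which means the resulting composite algorithm inherits $\eps$-DP by post-processing without any loss in the privacy parameter.
\end{proofsketch}
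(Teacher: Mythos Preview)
Your proposal is correct and follows essentially the same rescaling reduction as the paper: divide identity-covariance samples by $\sigma_k$, feed them to the normalized-Gaussian estimator, and multiply the output by $\sigma_k$ to obtain an $\alpha\sigma_k$-accurate estimator for $\cN(\mu\sigma_k, I_d)$, then invoke \Cref{lem:lb-gaussian}. You are somewhat more careful than the paper in explicitly checking the $k$-th moment bound and in spelling out why the per-sample rescaling preserves person-level adjacency and hence $\eps$-DP, but the core argument is identical.
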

\begin{proof}
We apply \Cref{lem:lb-gaussian}. Assume there exists an $\eps$-DP algorithm that takes $n$ samples from $\cN\paren{\mu, I_d / \sigma_k^2}^{\tensor m}$, and outputs an estimate of $\mu$ up to accuracy $\alpha$, with probability $1-\beta$. Now we can take the same algorithm and apply it to $Y_i$'s where $Y_i = X_i / \sigma_k$ and $X_i\distras{}\cN\paren{\mu \sigma_k, I_d}^{\tensor m}$. Multiplying this estimate by $\sigma_k$ would give an $\alpha \sigma_k$ estimate of the mean of $\cN\paren{\mu \sigma_k, I_d}$. However, we know that such an estimate requires at least $\Omega\paren{\paren{d+ \log\paren{1/\beta}}/\paren{\alpha \sigma_k\sqrt{m} \eps}}$ many samples. Since $\sigma_k$ is a constant that only depends on $k$ we must have
\begin{equation*}
n = \Omega_k \Paren{\frac{d + \log\paren{1/\beta}}{\alpha \sqrt{m} \eps}} \mcom
\end{equation*}
as desired.
\end{proof}

Next we prove the third term which we show through constructing point mass distributions that have bounded $k$-th moements. 
\begin{lemma}[Pure DP Point Distribution with Bounded Moments Lower Bound]
\label{lem:lb-bounded-moments}
Suppose $k\ge 2$, and $\cD$ is a distribution with $k$-th moment bounded by $1$ and mean $\mu$.
Moreover, suppose $\mu$ is in a ball of radius $R$, where $\alpha \le \min\set{\frac{6}{25} R, \paren{1/25}^{(k-1)/k}}$, and assume $\eps \le 1$.
Any  $\eps$-DP algorithm that takes $n$ samples from $\cD^{\tensor m}$ and outputs $\hat{\mu}$ such that $\normt{\mu - \hat{\mu}} \le \alpha$ with probability $1-\beta$ requires 
\begin{equation*}
n = \Omega\Paren{\frac{d + \log \paren{1/\beta}}{\alpha^{\frac{k}{k-1}} m \eps}}
\end{equation*}
many samples.
\end{lemma}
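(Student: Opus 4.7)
The plan is to apply Theorem~\ref{thm:lb-meta} in direct analogy with the Gaussian lower bound (Lemma~\ref{lem:lb-gaussian}), but with a point-mass-supported construction tailored to the bounded $k$-th moment constraint, and with a sharper bound on the TV distance between tensorized distributions. The key conceptual departure from the Gaussian case is that, when the single-sample TV between our construction and the reference distribution is of order $q \asymp \alpha^{k/(k-1)}$, going through KL tensorization and Pinsker's inequality yields only $\gamma = O(\sqrt{mq})$, which loses a $\sqrt{mq}$ factor in the final bound. We instead use subadditivity of TV under tensorization to get $\gamma \le mq$.

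\emph{Construction.} Choose $\lambda \le R$ slightly larger than $\alpha$ (for concreteness $\lambda = 4\alpha$, which is $\le R$ under the hypothesis $\alpha \le 6R/25$), and, by Fact~\ref{fact:covering-number}, fix a set $\{v_i\}_{i \in I} \subset \lambda\,B_d$ with pairwise separation $\|v_i - v_j\|_2 \ge 2\alpha$ and $\log|I| = \Omega(d)$. Set $q \coloneqq C_k\,\alpha^{k/(k-1)}$ for an appropriately large constant $C_k$ depending only on $k$; the hypothesis $\alpha^{k/(k-1)} \le 1/25$ ensures that $q \le 1$. Define $\cD_i \coloneqq (1-q)\,\delta_0 + q\,\delta_{v_i/q}$ and the reference $\cD_O \coloneqq \delta_0$. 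A direct computation yields $\E_{X \sim \cD_i}[X] = v_i$ (so the mean lies in the ball of radius $R$) and, along any unit vector $w \in \mathbb{S}^{d-1}$,
\begin{equation*}
    \E_{X \sim \cD_i}\Brac{\Abs{\iprod{X - v_i, w}}^k}
    = |\iprod{v_i, w}|^k \Paren{(1-q) + (1-q)^k/q^{k-1}}
    \le 1,
\end{equation*}
where the last inequality uses $\|v_i\|_2 \le \lambda = 4\alpha$ together with our choice of $C_k$.

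\emph{TV bound and assembly.} Since $\cD_O$ is a point mass, a direct computation gives $\mathrm{TV}(\cD_i^{\otimes m}, \cD_O^{\otimes m}) = 1 - (1-q)^m \le mq = \Theta_k(m\alpha^{k/(k-1)})$, which we denote by $\gamma$; equivalently, this is subadditivity of TV under independent products. Take $G_i$ to be the closed $\ell_2$-ball of radius $\alpha$ around $v_i$: the $G_i$ are disjoint thanks to the $2\alpha$ separation, and the assumed mean-estimator outputs a point in $G_i$ with probability $\ge 1-\beta$ when the samples are drawn from $\cD_i^{\otimes m}$. Applying Theorem~\ref{thm:lb-meta} to the family $\{\cD_i^{\otimes m}\}_{i \in I}$ with reference $\cD_O^{\otimes m}$ and parameter $\gamma$ then gives
\begin{equation*}
    n \ge \Omega\Paren{\frac{\log|I| + \log(1/\beta)}{\gamma\,\eps}} = \Omega\Paren{\frac{d + \log(1/\beta)}{m\,\alpha^{k/(k-1)}\,\eps}},
\end{equation*}
as required.

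\emph{Main obstacle.} The only nontrivial step is balancing constants in the $k$-th moment computation: $q$ must be large enough that $\|v_i\|_2^k/q^{k-1}$ is at most an absolute constant (so the moment condition holds after absorbing the $(1-q) + (1-q)^k$ coefficients) and simultaneously small enough that $q \le 1$ and $mq$ stays below $1$ in the nontrivial regime. The parameter restriction $\alpha^{k/(k-1)} \le 1/25$ in the hypothesis is exactly what provides the slack for these $k$-dependent constants to be chosen consistently; the remaining pieces are routine applications of Theorem~\ref{thm:lb-meta} and the covering-number fact, mirroring the flow of Lemma~\ref{lem:lb-gaussian}.
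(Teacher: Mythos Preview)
Your proposal is correct and follows essentially the same approach as the paper: both construct two-point distributions $(1-q)\delta_0 + q\,\delta_{w_i}$ with $q \asymp \alpha^{k/(k-1)}$ so that the $k$-th moment is bounded and the means are $\Theta(\alpha)$-separated, bound the TV of the $m$-fold product by $1-(1-q)^m \le mq$, and plug into Theorem~\ref{thm:lb-meta}. The only differences are cosmetic (your parameterization places the mean at $v_i$ directly and you compute the \emph{centered} $k$-th moment, whereas the paper's check uses the raw moment), and your explicit remark on why direct TV subadditivity beats KL+Pinsker here is a nice clarification.
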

\begin{proof}
We want to use \Cref{thm:lb-meta}.
A set $I$, with $\log \abs{I} \ge d \log(2)$ of indices and vectors $v_i$ in the unit ball exist such that $\normt{v_i - v_i} \ge 1/2$.

Suppose $Q_O$ is a distribution with all of the mass on the origin, and $Q_i$ is equal to $0$ with probability $1-\lambda$, and is equal to 
$\frac{1}{6 \alpha^{\frac{1}{k-1}}} v_i$ , with probability $\lambda$, where $\lambda = 25 \alpha^{\frac{k}{k-1}}$. Then $\E_{x \distras{} Q_i}\brac{x} = \frac{25}{6} \alpha v_i$, and for all unit $u$, $\E_{x \distras{} Q_i} \brac{\abs{\iprod{x, u}}^k} \le \frac{25}{6^k} < 1$, therefore $Q_i$'s satisfy the moment assumption. Moreover if we take $G_i = \frac{26}{6} \alpha v_i + \alpha B$, then $G_i$'s are disjoint.
Let $P_i = Q_i^{\tensor m}$, then
\begin{equation*}
\normtv{P_O - P_i} = \normtv{Q_O^{\tensor m} - Q_i^{\tensor m}} \le \Pr_{x \distras{} Q_i^{\tensor m}}\brac{x \neq 0^{\tensor m}} = 
1 - \Pr_{x \distras{} Q_i^{\tensor m}}\brac{x = 0^{\tensor m}}
\le
1 - \paren{1 - \lambda}^m \le m \lambda,
\end{equation*}
where the last inequality comes from the fact that the function $f(y) = 1- my +(1-x)^m, y \in \brac{0,1}$ is an increasing function and is minimized at $0$, and therefore $f(y) \ge 0, \forall y \in \brac{0,1}$. Applying \Cref{thm:lb-meta}
finishes the proof.
\end{proof}

Lastly, we prove the last term which corresponds to the cost of finding a coarse estimate. We show this by choosing a set of point mass distributions over the ball of radius $R$.

\begin{lemma}[Pure DP Lower Bound Range Parameter]
\label{lem:lb-range}
Suppose $k \ge 2$, and $\cD$ is a distribution with $k$-th moment bounded by $1$ and mean $\mu$. Moreover, suppose $\mu$ is in a ball of radius $R$, where $\alpha < R/2 $, and assume $\eps \le 1$. Any $\eps$-DP algorithm that takes $n$ samples from $\cD^{\tensor m}$ and outputs $\hat{\mu}$ such that $\normt{\mu - \hat{\mu}} \le \alpha$ with probability $1-\beta$ requires
\begin{equation*}
n = \Omega\Paren{\frac{d \log \paren{R/\alpha} + \log\paren{1/\beta}}{\eps}}
\end{equation*}
many samples.
\end{lemma}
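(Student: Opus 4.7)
The plan is to deploy Theorem~\ref{thm:lb-meta} with a packing of point-mass distributions over the ball of radius $R$. The key simplification relative to the previous two lemmas is that point masses trivially satisfy any bounded-moment condition, since all of their central moments vanish; we may therefore place them anywhere in $B_2\paren{R}$ and take the total-variation parameter to be its trivial upper bound $\gamma=1$.

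Concretely, I would first invoke Fact~\ref{fact:covering-number} (after rescaling by $R$) to extract a $2\alpha$-separated subset $\Set{\mu_i}_{i\in I}\subset B_2\paren{R}$ of size $\abs{I}\ge \Paren{R/(2\alpha)}^d$, so that $\log\abs{I}=\Omega\Paren{d\log(R/\alpha)}$ in the regime $\alpha<R/2$. For each index $i$ let $Q_i=\delta_{\mu_i}$ be the Dirac mass at $\mu_i$, and set $P_i=Q_i^{\tensor m}$; then $Q_i$ has mean $\mu_i$ and its $k$-th central moment equals $0$ along every direction, so it certainly lies in the class of distributions with $k$-th moment bounded by $1$. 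Note also that $P_i$ is itself a point mass on the single deterministic configuration $(\mu_i,\dots,\mu_i)\in\paren{\R^d}^m$.

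Next, I would define the target sets $G_i\coloneqq B_2\paren{\mu_i,\alpha}\subset \R^d$, which are pairwise disjoint by the $2\alpha$-separation of the packing (a minor open/closed-ball nuance is handled by taking the separation to be a hair above $2\alpha$, which changes constants only). Any candidate estimator satisfying the accuracy guarantee of the lemma fulfils
\[
    \Pr_{X\sim P_i^{\tensor n}}\Brac{M(X)\in G_i}\ge 1-\beta\qquad\forall i\in I.
\]
Picking any reference distribution $P_O$ (e.g.\ $P_1$), the TV distances $\normtv{P_i-P_O}\le 1\eqcolon \gamma$ hold trivially. Feeding everything into Theorem~\ref{thm:lb-meta} and using $e^{2\eps}-1=\Theta(\eps)$ for $\eps\le 1$ yields
\[
    n=\Omega\Paren{\frac{\log\abs{I}+\log(1/\beta)}{\eps}}=\Omega\Paren{\frac{d\log(R/\alpha)+\log(1/\beta)}{\eps}},
\]
which is exactly the claimed bound.

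There is no real obstacle here: the argument is a completely standard pure-DP packing lower bound, and the work done in the previous two lemmas (which had to manage the moment condition carefully while maintaining small TV between the planted distributions) is avoided entirely by using point masses. The only things to double-check are the packing-number bookkeeping and that Theorem~\ref{thm:lb-meta} actually delivers the $\log(1/\beta)$ term for $\gamma=1$; the latter is true because its proof picks a pair $(i,j)$ for which $\Pr[M(X_j)\in G_i]\le \beta/(\abs{I}-1)$, converts this to a bound via group privacy on the fully deterministic datasets $X_i,X_j$, and solves for $n$.
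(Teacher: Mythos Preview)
Your proposal is correct and follows essentially the same argument as the paper: both build a $2\alpha$-packing of point masses in $B_2(R)$ via Fact~\ref{fact:covering-number}, observe that point masses have all central moments equal to zero, set $\gamma=1$, and invoke Theorem~\ref{thm:lb-meta}. The only cosmetic difference is that the paper takes $P_O$ to be the point mass at the origin while you take $P_O=P_1$; both choices give $\gamma=1$ and the rest is identical.
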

\begin{proof}
We want to use \Cref{thm:lb-meta}. \Cref{fact:covering-number} implies that there exists a set $I$ of indices such that $\log \abs{I} \ge d \log \paren{R/2\alpha}$ and $v_i$ in the ball of radius $R$ such that $\normt{v_i - v_j} \ge 2\alpha$, for every $i \neq j \in I$. Let $Q_O$ be the distribution with all of the mass on the origin, and $Q_i$'s be the distributions with all of the mass on $v_i$, then $\normtv{Q_O - Q_i} = 1$ for every $i \in I$. Let $P_i = Q_i^{\tensor m}$, then $\normtv{P_O - P_i} = 1$ for every $i \in I$. Let $G_i = v_i + \alpha B$, then $G_i$'s are disjoint. Applying \Cref{thm:lb-meta} finishes the proof.
\end{proof}

\section{Non Uniform Berry-Esseen}
\label{sec:recreating-nube}
In this section we restate and prove the Non Uniform Berry-Esseen Theorem \cite{Michel76}.

\begin{theorem}[\cite{Michel76}]
Let $k \ge 3$.
Assume $X$ is a distribution with mean $0$ and $k$-th moment bounded by $1$ . Then for any $t \geq \sqrt{\paren{k -1} \log m}$, we have

\begin{equation*}
\Pr\Brac{\sum X_i / \sqrt{m} \ge t} \le_k m^{-k/2+1} t^{-2k} + m \Pr\Brac{X \ge rm^{1/2} t},
\end{equation*}
where $r = 1 / (2 (k-1) k)$.
\end{theorem}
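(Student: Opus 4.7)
The theorem is a classical non-uniform tail bound whose proof follows the truncation-and-concentration paradigm. I would fix the cutoff $c \coloneqq r m^{1/2} t$ with $r = 1/(2k(k-1))$, and for each $i$ decompose $X_i = Y_i + R_i$, where $Y_i \coloneqq X_i \mathbf{1}\{X_i < c\}$ is the truncated part and $R_i \coloneqq X_i \mathbf{1}\{X_i \ge c\}$ captures the excursions above $c$. The event $\{\sum X_i/\sqrt{m} \ge t\}$ is then contained in the union of (i) $\{\sum Y_i/\sqrt{m} \ge t\}$ and (ii) $\{\exists i : R_i > 0\}$. The second event has probability at most $m \Pr[X \ge c] = m \Pr[X \ge r m^{1/2} t]$ by a union bound, which accounts directly for the second term in the claimed inequality.

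The remaining task is to bound $\Pr[\sum Y_i /\sqrt{m} \ge t]$ by $m^{-k/2+1} t^{-2k}$. My next step would be to center the truncated variables: letting $\mu_Y \coloneqq \E[Y_i] = -\E[X \mathbf{1}\{X \ge c\}]$, Markov's inequality applied to $|X|^k$ yields $|\mu_Y| \le c^{-(k-1)}$, so the total mean-shift $|m\mu_Y|$ is negligible compared to $t\sqrt{m}$ throughout the regime $t \ge \sqrt{(k-1)\log m}$ (a direct substitution of $c$ and $t$ absorbs this shift into a $k$-dependent constant, modulo a trivial bound when $m$ is too small). With $W_i \coloneqq Y_i - \mu_Y$, the $W_i$ are i.i.d., mean zero, bounded above by $c + c^{-(k-1)} = O(c)$, and satisfy $\mathrm{Var}(W_i) \le \E[X^2] \le 1$ by Jensen's inequality applied to the $k$-th moment hypothesis. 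Thus it suffices to show $\Pr\bigl[\sum W_i \ge (1-o(1)) t \sqrt{m}\bigr] \le_k m^{-k/2+1} t^{-2k}$.

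The technical heart of the argument is this final tail bound, and it is where I expect the main obstacle to lie. The natural tool is Bennett's exponential-moment inequality for one-sidedly bounded centered variables: $\Pr[\sum W_i \ge s] \le \exp\!\bigl(-(m/c^2)\,h(sc/m)\bigr)$ where $h(u) = (1+u)\log(1+u) - u$. Plugging in $s = t\sqrt{m}$, the exponent becomes $r^{-2} t^{-2}\, h(rt^2)$, and the assumption $t \ge \sqrt{(k-1)\log m}$ forces $rt^2 \ge (\log m)/(2k)$; the specific choice $r = 1/(2k(k-1))$ is then precisely what is needed so that $r^{-2} t^{-2}\, h(rt^2) \ge (k/2 - 1)\log m + 2k \log t + O_k(1)$, exponentiating to the desired $m^{-k/2+1} t^{-2k}$. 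The delicate point is that for large $t$ the Bennett exponent transitions from the Gaussian regime $\exp(-t^2/C)$ (which easily beats the target) into the Poisson regime $\exp(-c t^2 \log(rt^2))$, where only a careful matching of constants yields $t^{-2k}$ decay rather than a weaker polynomial; showing that Michel's calibration $r = 1/(2k(k-1))$ is exactly sharp enough to survive this transition (and that the mean-shift term remains absorbable throughout) is the step I would expect to spend most of the effort on, following \cite{Michel76}.
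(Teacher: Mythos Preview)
Your high-level structure—truncate at $c = r m^{1/2} t$, peel off the event $\{\exists i: X_i \ge c\}$ by a union bound, then control the truncated sum—is exactly the paper's, and your centering argument for $\mu_Y$ is fine. The gap is in the concentration step for $\sum W_i$.

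Bennett's inequality is not strong enough here, and not merely because of constants. At the lower edge $t = \sqrt{(k-1)\log m}$ you have $u \coloneqq r t^2 = (\log m)/(2k)$, so for large $m$ you are squarely in the Poisson regime. There your Bennett exponent is
\[
    \frac{1}{r^2 t^2}\,h(r t^2) \;=\; \frac{h(u)}{r u} \;\approx\; \frac{\log u}{r} \;=\; 2k(k-1)\,\log u \;=\; \Theta_k(\log\log m),
\]
whereas the target exponent is $(k/2-1)\log m + 2k\log t = \Theta_k(\log m)$. The two differ by a full factor of $\log m/\log\log m$, so no calibration of $r$ can close the gap. The underlying reason is that Bennett only uses $\Var(W_i)\le 1$ and $W_i \le c$; it never sees the hypothesis $\E[|X|^3]\le \E[|X|^k]^{3/k} \le 1$, and that higher-moment information is exactly what keeps the MGF from blowing up in the Poisson regime.

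The paper (following Michel) runs a raw Chernoff bound with a hand-picked tilt
\[
    h \;=\; m^{-1/2} t^{-1}\bigl((k-2)\log m + 2(k-1)k\log t\bigr),
\]
and controls the MGF via a third-order Taylor expansion:
\[
    \E\bigl[e^{hY_i}\bigr] \;\le\; 1 + h\,\bigl|\E[Y_i]\bigr| + \tfrac{h^2}{2}\,\E[Y_i^2] + \tfrac{h^3}{6}\,\E[|Y_i|^3]\,e^{hc}.
\]
The point is that the remainder is governed by $\E[|Y_i|^3]\le 1$ rather than by the crude bound $\E[|Y_i|^3]\le c\,\E[Y_i^2]$ implicit in Bennett. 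With the chosen $h$ one has $h c = r\bigl((k-2)\log m + 2(k-1)k\log t\bigr)$, so $e^{hc} = m^{(k-2)/(2(k-1)k)}\,t$, and the entire cubic term is $O_k(m^{-1})$. This yields $\E[e^{hY_i}]^m \le \exp(m h^2/2 + O_k(1))$—the MGF stays ``Gaussian-like'' even though $hc$ is large—and the calculation $m h^2/2 \le \tfrac{k-2}{2}\log m + 2k(k-2)\log t + O_k(1)$ (using $t^2 \ge (k-1)\log m$) then gives exactly $m^{-k/2+1}t^{-2k}$. To repair your argument, replace Bennett by this third-moment MGF bound.
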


\begin{proof}
Let $Y_i$ be the truncation of $X_i$ at $rm^{1/2} t$, i.e. $Y_i = \min \paren{X_i, rm^{1/2} t}$, where $r = 1 / (2 (k-1) k)$. Then
\begin{equation*}
\Pr\Brac{\sum X_i / \sqrt{m} \ge t} \le \Pr\Brac{\sum Y_i / \sqrt{m} \ge t} + m \Pr\Brac{X \ge r\sqrt{m} t}.
\end{equation*}
Therefore, we need to show that
\begin{equation*}
\Pr\Brac{\sum Y_i / \sqrt{m} \ge t} \le_k m^{-k/2+1} t^{-2k}.
\end{equation*}
Let $h = m^{-1/2} t^{-1} \cdot \paren{(k-2) \log m + 2 \paren{k-1}k \log t}$, and  $f(x) = \exp\paren{h m^{1/2} x}$. $f$ is a non-negative non-decreasing function. We apply Markov's inequality.
\begin{align*}
\Pr\Brac{\sum Y_i / \sqrt{m} \ge t} 
&=
\Pr \Brac{f \Paren{\sum Y_i / \sqrt{m}} \ge f(t)} \\
&\le
\frac{\E \Brac{f \Paren{\sum Y_i / \sqrt{m}}}}{f(t)} \\
&=
\frac{\E \Brac{\prod f \Paren{Y_i / \sqrt{m}}}}{f(t)} \\
&=
\frac{\E \Brac{f \Paren{Y_i / \sqrt{m}}}^m}{f(t)} \\
&=
\frac{\E\Brac{\exp\paren{h Y_i}}^m}{\exp\paren{h m^{1/2} t}} \\
&=
\E\Brac{\exp\paren{h Y_i}}^m \cdot \exp\paren{-h m^{1/2} t} \\
&\le
\E\Brac{\exp\paren{h Y_i}}^m \cdot m ^{-k + 2} \cdot t^{-2 \paren{k-1} k}
\end{align*}

Now we have to bound $\E\Brac{\exp\paren{h Y_i}}$. We use the Taylor expansion of $\exp$, and the fact that $Y_i \le rm^{1/2} t$, to get

\begin{equation*}
\exp\paren{h Y} \le 1 + hY + \frac{h^2 Y^2}{2} + \frac{h^3 |Y|^3}{6} \cdot \max_{y' \le rm^{1/2}t}\exp\paren{h  y'} \le 1 + hY + \frac{h^2 Y^2}{2} + \frac{h^3 |Y|^3}{6} \cdot \exp\paren{h rm^{1/2}t}.
\end{equation*}
Taking expectation, and using the fact that $\E\brac{Y_i^2} \le \E\brac{X_i^2} \le 1$, 
\begin{align*}
\E\Brac{\exp\paren{h Y_i}}
&\le
1 + h \abs{\E\Brac{Y_i}} + \frac{h^2 \E\Brac{Y_i^2}}{2} + \frac{h^3 \E\Brac{|Y_i|^3}}{6} \cdot \exp\paren{h rm^{1/2}t} \\
&\le
1 +  h\abs{\E\Brac{Y_i}} + \frac{h^2}{2} + \frac{h^3 \E\Brac{|Y_i|^3}}{6} \cdot \exp\paren{h rm^{1/2}t}.
\end{align*}
Now we bound $\abs{\E\Brac{Y_i}}$ and $\E\Brac{|Y_i|^3}$. We bound $\abs{\E\Brac{Y_i}}$ using \Cref{lem:clipping-bias}.
\begin{align*}
\Abs{\E\Brac{Y_i}} &\le 
\int_{r m^{1/2} t}^\infty \Pr\Brac{X \ge x} \diff x \\
&\le
\int_{r m^{1/2} t}^\infty \Pr\Brac{\Abs{X} \ge x} \diff x \\
&\le
\int_{r m^{1/2} t}^\infty \frac{\E\Brac{X^2}}{x^2} \diff x \\
&\le 
\frac{1}{r m^{1/2} t}.
\end{align*}
In order to bound $\E\Brac{|Y_i|^3}$, note that $\E\Brac{\Abs{Y_i}^3} \le \E\Brac{\Abs{X_i}^3} \le 1$. Therefore, since $t \ge 1$, we have
\begin{align*}
\E\Brac{\exp\paren{hY_i}}
&\le
1 + \frac{h}{rm^{1/2}t} + \frac{h^2}{2} + \frac{h^3}{6} \cdot \exp\paren{h rm^{1/2}t} \\
&=
1 + \frac{h}{rm^{1/2}t} + \frac{h^2}{2} + \frac{h^3}{6} \cdot 
m^{(k-2) / 2(k-1)k} t \\
&\le
1 + \frac{h^2}{2} + b_1 m^{-1},
\end{align*}
where $b_1$ is a constant depending only on $k$. Therefore,
\begin{equation*}
\E\Brac{\exp\paren{hY_i}} \le 1 + \frac{h^2}{2} + b_1 m^{-1} \le \exp\paren{h^2 / 2 + b_1 m^{-1}},
\end{equation*}
and
\begin{equation*}
\E\Brac{\exp\paren{hY_i}}^m \le \exp\paren{h^2 m / 2 + b_1}.
\end{equation*}
Note that 
\begin{align*}
h^2 m / 2 &\le \frac{\paren{\paren{k-2} \log m + 2\paren{k-1} k \log t}^2}{2t^2} \\
&= \frac{\paren{k-2}^2 \log^2 m}{2t^2} + \frac{\paren{k-2} \log m \cdot 2\paren{k-1} k \log t}{t^2} + \frac{4\paren{k-1}^2 k^2 \log^2 t}{2t^2} \\
&=  \frac{\paren{k-2} \log m}{t^2} \cdot \frac{\paren{k-2} \log m}{2} + \frac{\paren{k-1} \log m}{t^2} \cdot 2\paren{k-2} k \log t + 
\frac{\log^2 t}{t^2} \cdot 2\paren{k-1}^2 k^2 \\
&\le
\frac{k-2}{2} \log m + 2\paren{k-1} k \log t + b_2,
\end{align*}
where $b_2$ is a constant depending only on $k$. The last inequality follows from the assumption that $t \ge \sqrt{\paren{k-1} \log m}$, and the fact that $t \ge \log t$. Therefore,
\begin{equation*}
\E\Brac{\exp\paren{hY_i}}^m \le \exp\paren{\frac{k-2}{2} \log m + 2\paren{k-2} k \log t + b_3},
\end{equation*}
where $b_3$ is a constant depending only on $k$. Therefore,
\begin{equation*}
\Pr\Brac{\sum Y_i / \sqrt{m} \ge t} \le \exp\paren{\frac{k-2}{2} \log m + 2\paren{k-2} k \log t + b_3} \cdot m ^{-k + 2} \cdot t^{-2 \paren{k-1} k} \le_k m^{-k/2+1} t^{-2k},
\end{equation*}
and we are done.
\end{proof}

\begin{corollary}[Non-uniform Tail Bound for $k$-th Moment Bounded by $1$ - large $t$]
Let $k\ge 3$. Assume $X$ is a distribution with mean $0$ and $k$-th moment bounded by $1$. Then for any $t \geq \sqrt{\paren{k -1} \log m}$, we have
\begin{equation*}
\Pr\Brac{\sum X_i / \sqrt{m} \ge t} \le_k m^{-k/2+1} t^{-k},
\end{equation*}
and consequently, for any $t \geq \sqrt{\frac{\paren{k -1} \log m}{m}}$, we have
\begin{equation*}
\Pr\Brac{\sum X_i / m \ge t} \le_k m^{-k +1} t^{-k}.
\end{equation*}
\end{corollary}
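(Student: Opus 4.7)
\textbf{Proof proposal for Corollary~\ref{cor:non-uniform-tail-bound}.}

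The plan is to apply the Non-Uniform Berry-Esseen bound just proved, and control the residual tail term $m \Pr[X \ge r m^{1/2} t]$ via a direct moment inequality. Concretely, by Markov's inequality applied to $|X|^k$, together with the hypothesis that the $k$-th moment of $X$ is bounded by $1$, we have
\[
\Pr\Brac{X \ge r m^{1/2} t} \le \Pr\Brac{\Abs{X}^k \ge \Paren{r m^{1/2} t}^k} \le \frac{1}{r^k m^{k/2} t^k},
\]
so $m \Pr[X \ge r m^{1/2} t] \le \frac{1}{r^k} \cdot m^{-k/2+1} t^{-k} \le_k m^{-k/2+1} t^{-k}$, since $r = 1/(2(k-1)k)$ depends only on $k$.

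Next, I would observe that for $t \ge \sqrt{(k-1)\log m} \ge 1$ (which holds whenever $m \ge e$ and $k \ge 3$; the small-$m$ case is handled by absorbing a constant factor depending only on $k$), we have $t^{-2k} \le t^{-k}$. Hence the first term in the Non-Uniform Berry-Esseen bound is also at most $m^{-k/2+1} t^{-k}$. Adding the two contributions gives
\[
\Pr\Brac{\sum X_i / \sqrt{m} \ge t} \le_k m^{-k/2+1} t^{-k},
\]
which is the first claim.

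For the second claim, I would simply reparameterize: for $t \ge \sqrt{(k-1)\log m / m}$, set $t' \coloneqq t\sqrt{m}$ so that $t' \ge \sqrt{(k-1)\log m}$, and the event $\{\sum X_i / m \ge t\}$ is the same as $\{\sum X_i / \sqrt{m} \ge t'\}$. Plugging into the first claim,
\[
\Pr\Brac{\sum X_i / m \ge t} \le_k m^{-k/2+1} (t\sqrt{m})^{-k} = m^{-k+1} t^{-k},
\]
as required. There is no real obstacle here: the only nontrivial ingredient is the Non-Uniform Berry-Esseen bound (already proved), and the rest is Markov plus a monotonicity observation $t^{-2k} \le t^{-k}$ for $t \ge 1$.
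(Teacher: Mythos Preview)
Your proposal is correct and matches the paper's approach exactly: the paper's proof is a one-line reference to the preceding theorem together with the Markov bound $\Pr[X \ge r m^{1/2} t] \le r^{-k} m^{-k/2} t^{-k}$, and you have simply spelled out the remaining trivial steps (that $t^{-2k} \le t^{-k}$ for $t \ge 1$, and the rescaling $t' = t\sqrt{m}$ for the second claim).
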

\begin{proof}
Follows from the previous theorem and the fact that $\Pr\Brac{X \ge rm^{1/2} t} \le r^{-k} m^{-k/2} t^{-k}$.
\end{proof}

\section{Direct Proof of Proposition~\ref{prop:lb_gaussians_ulp}}
\label{sec:direct_proof_fing}

In this appendix, we focus on proving Proposition~\ref{prop:lb_gaussians_ulp} directly using fingerprinting.
To do that, we will use the general lemma for exponential families from~\cite{KamathMS22}, which we state here (in a slightly simplified form):

\begin{proposition}
\label{prop:lower_bound}
Let $p_{\eta}$ be a distribution over $S \subseteq \bR^d$ belonging to an exponential family $\cE\Paren{T, h}$ with natural parameter vector $\eta \in \cH \subseteq \bR^k$.
Also, let $\eta^{\Paren{1}}, \eta^{\Paren{2}} \in \cH$ and let $I_j \coloneqq \left[\eta^{\Paren{1}}_j, \eta^{\Paren{2}}_j\right], \forall j \in \left[k\right]$ be a collection of intervals and $R \coloneqq \eta^{\Paren{2}} - \eta^{\Paren{1}}, m \coloneqq \frac{\eta^{\Paren{1}} + \eta^{\Paren{2}}}{2}$ be the corresponding width and midpoint vectors, respectively.
Assume that $\bigotimes\limits_{j \in \brac{k}} I_j \subseteq \cH$ and $\eta \sim \cU\Paren{\bigotimes\limits_{j \in \brac{k}} I_j}$.
Moreover, assume that we have a dataset $X \sim \Paren{p_{\eta}^{\otimes n} \middle| \eta}$ and an independently drawn point $X_i' \sim \Paren{p_{\eta} \middle| \eta}$ and $X_{\sim i}$ denotes the dataset where $X_i$ has been replaced with $X_i'$.
Finally, let $M \colon S^n \to \bigotimes\limits_{j \in \brac{k}} \left[\pm \frac{R_j}{2}\right]$ be an $\Paren{\eps, \delta}$-DP mechanism with $\eps \in \brac{0, 1}, \delta \geq 0$ and $\underset{X, M}{\mathbb{E}}\left[\left\|M\Paren{X} - \Paren{\eta - m}\right\|_2^2\right] \le \alpha^2 \le \frac{\left\|R\right\|_2^2}{24}$.
Then, for any $T_0 > 0$, it holds that:
\[
    n \left\{2 \delta T_0 + 2 \alpha \eps \underset{\eta}{\mathbb{E}}\left[\sqrt{\left\|\Sigma_T\right\|_2}\right] + 2 \underset{\eta}{\mathbb{E}}\left[\int\limits_{T_0}^{\infty} \underset{X_i}{\mathbb{P}}\left[\left\|T\Paren{X_i} - \mu_T\right\|_2 > \frac{4 t}{\left\|R\right\|_{\infty}^3 \sqrt{k}}\right] \, dt\right]\right\} \geq \frac{\left\|R\right\|_2^2}{24}. \label{eq:main_ineq}
\]
\end{proposition}

We note now that $\cN\Paren{\mu, \mathbb{I}}^{\otimes m}$ is an exponential family.
Indeed, the density of $\cN\Paren{\mu, \mathbb{I}}^{\otimes m}$ is:
\[
    \frac{1}{\Paren{2 \pi}^{\frac{m d}{2}}} e^{- \frac{1}{2} \sum\limits_{i \in [m]} \left\|x_i - \mu\right\|_2^2},
\]
which can be written in the form $h(x) e^{\langle \eta, T(x) \rangle - Z(\eta)}$ for $x \coloneqq (x_1, \dots, x_m) \in \Paren{\mathbb{R}^d}^m$ with:
\begin{align*}
       h\paren{x} &= \frac{1}{\Paren{2 \pi}^{\frac{m d}{2}}} e^{- \frac{1}{2} \sum\limits_{i \in [m]} \left\|x_i\right\|_2^2}, \\
       T\paren{x} &= \sum\limits_{i = 1}^m x_i, \\
             \eta &= \mu, \\
    Z\paren{\eta} &= \frac{m \left\|\eta\right\|_2^2}{2}.
\end{align*}

At this point, we recall the following lemma from~\cite{KamathMS22}.

\begin{lemma}
\label{lem:tail_prob_mean_ub1}[Lemma C.5 from~\cite{KamathMS22}].
Let $X \sim \cN\paren{\mu, \mathbb{I}}$.
Assuming that $T_0 \geq 2 d$, we have:
\[
    \int\limits_{T_0}^{\infty} \underset{X}{\mathbb{P}}\left[\left\|X - \mu\right\|_2 > \frac{t}{2 \sqrt{d}}\right] \, dt \le \sqrt{2 \pi d} e^{- \frac{\Paren{\sqrt{T_0^2 - 2 d^2} - \sqrt{2} d}^2}{8 d}}.
\]
\end{lemma}

Using this lemma, we prove the following result.

\begin{lemma}
\label{lem:tail_prob_mean_ub2}
Let $X \sim \cN\paren{\mu, \mathbb{I}}^{\otimes m}$.
Assuming that $T_0 \geq 2 m d$, we have:
\[
    \int\limits_{T_0}^{\infty} \underset{X}{\mathbb{P}}\left[\left\|T(X) - \mu_T\right\|_2 > \frac{t}{2 \sqrt{d}}\right] \, dt \le \sqrt{2 \pi d} e^{- \frac{\Paren{\sqrt{\Paren{\frac{T_0}{m}}^2 - 2 d^2} - \sqrt{2} d}^2}{8 d}}.
\]
\end{lemma}

\begin{proof}
We note that we have:
\begin{align*}    
    \int\limits_{T_0}^{\infty} \underset{X}{\mathbb{P}}\left[\left\|T(X) - \mu_T\right\|_2 > \frac{t}{2 \sqrt{d}}\right] \, dt
    &= \int\limits_{T_0}^{\infty} \underset{X_1, \dots, X_m}{\mathbb{P}}\left[\left\|\sum\limits_{i = 1}^m X_i - m \mu\right\|_2 > \frac{t}{2 \sqrt{d}}\right] \, dt \\
    &= \int\limits_{T_0}^{\infty} \underset{X \sim \cN(0, 1)}{\mathbb{P}}\left[\left\|X - \mu\right\|_2 > \frac{t}{2 m \sqrt{d}}\right] \, dt \\
    &= \int\limits_{\frac{T_0}{m}}^{\infty} \underset{X \sim \cN(0, 1)}{\mathbb{P}}\left[\left\|X - \mu\right\|_2 > \frac{t}{2 \sqrt{d}}\right] \, dt \\
    &\le \sqrt{2 \pi d} e^{- \frac{\Paren{\sqrt{\Paren{\frac{T_0}{m}}^2 - 2 d^2} - \sqrt{2} d}^2}{8 d}},
\end{align*}
where the last inequality follows directly from Lemma~\ref{lem:tail_prob_mean_ub1}.
\end{proof}

\begin{proof}[Proof of Proposition~\ref{prop:lb_gaussians_ulp}]
Assume that we have $\mu \sim \cU\Paren{[\pm 1]^d}$, and $X \sim \Paren{\Paren{\cN(\mu, \mathbb{I})^{\otimes m}}^{\otimes n} \middle| \mu}$.
Also, let $M \colon \Paren{\mathbb{R}^{m \times d}}^n \to [\pm 1]^d$ be an $\Paren{\eps, \delta}$-DP mechanism with:
\[
    \underset{X, M}{E}\left[\left\|M\paren{X} - \mu\right\|_2^2\right] \le \alpha^2 \le \frac{d}{6} = \frac{\|R\|_2^2}{24}.
\]
By Proposition~\ref{prop:lower_bound} for batches of Gaussian samples $\cN(\mu, \mathbb{I})^{\otimes m}$ with $\mu_j \in [\pm 1], \forall j \in [d]$, we get:
\[
    n \Paren{2 \delta T_0 + 2 \sqrt{m} \alpha \eps + 2 \sqrt{2 \pi d} e^{- \frac{\Paren{\sqrt{\Paren{\frac{T_0}{m}}^2 - 2 d^2} - \sqrt{2} d}^2}{8 d}}} \geq \frac{d}{6}. 
\]
The rest of the proof follows the same steps as the proof in Appendix C.2 of~\cite{KamathMS22}, so we do not repeat it here in its entirety.
The core idea is to show that, for $T_0 = 2 m \sqrt{d} \sqrt{\ln\Paren{\frac{1}{\delta}} + \Paren{\sqrt{\ln\Paren{\frac{1}{\delta}}} + \sqrt{d}}^2}$, and:
\[
    \delta \le \min\left\{\frac{1}{144 \sqrt{2} n m}, \frac{\sqrt{d}}{288 \sqrt{2} n m \sqrt{\ln\Paren{\frac{144 \sqrt{2} n m}{\sqrt{d}}}}}\right\},
\]
we have that:
\[
    \delta T_0 \geq 2 \sqrt{2 \pi d} e^{- \frac{\Paren{\sqrt{\Paren{\frac{T_0}{m}}^2 - 2 d^2} - \sqrt{2} d}^2}{8 d}} \text{ and } 3 n \delta T_0 \le \frac{d}{12}.
\]
\end{proof}

\end{document}